\title{Colorless Tasks and Extension-Based Proofs}
\author{Yusong Shi}{Department of Computer Science and Technology, Tsinghua University, China}{shiys20@mails.tsinghua.edu.cn}{}{}
\author{Weidong Liu}{Department of Computer Science and Technology, Tsinghua University, China \and Zhongguancun Laboratory, China}{liuwd@mail.tsinghua.edu.cn}{}{}
\authorrunning{Y. Shi and W. Liu} 
\keywords{Colorless tasks, Impossibility proofs, Extension-based proof}
\newtheorem{property}{Property}
\newcommand*{\FULL}{}
\begin{document}
\maketitle

\begin{abstract}
The concept of extension-based proofs models the idea of a valency argument, which is widely used in distributed computing. Extension-based proofs are limited in power: it has been shown that there is no extension-based proof of the impossibility of a wait-free protocol for $(n,k)$-set agreement among $n > k \geq 2$ processes. There are only a few tasks that have been proven to have no extension-based proof of the impossibility, since the techniques in these works are closely related to the specific task.

We give a necessary and sufficient condition for colorless tasks to have no extension-based proofs of the impossibility of wait-free protocols in the NIIS model. We introduce a general adversarial strategy decoupled from any concrete task specification. 
In this strategy, some properties of the chromatic subdivision that is widely used in distributed computing are proved. 
 
\end{abstract}

\newpage
  \small\tableofcontents
\newpage

\section{Introduction}
\label{sec:introduction}

The consensus task is one of the most important tasks in distributed computing. There are $n$ faulty processes, each of which starts with a private input, communicates with each other, and finally agrees on a single value. A wait-free protocol to solve the consensus task has to satisfy three properties. Every non-faulty process eventually decides some value (Termination). If all non-faulty processes have the same input value $v$, then any non-faulty process should output the value $v$ (Validity).
Every non-faulty process must output the same value (Agreement). One of the most important results in distributed computing, due to Fischer, Lynch, and Paterson \cite{Fischer85}, is that there is no deterministic protocol that solves the consensus task in the asynchronous message passing system. The key idea of their proof is called a valency argument, which proves the existence of an infinite execution in which no process terminates. 

The $(n, k)$-set agreement task, which is a generalization of the consensus task, was first proposed by Chaudhuri \cite{Chaudhuri93}. In this task, there are $n$ processes, each starting with an input in $\{0, 1, \cdots k\}$, where $1 \leq k < n$. Each process that does not crash must output the input value of some process that it has seen and, at most $k$ different values can be output by all processes collectively. The consensus task is a special case where $k = 1$.
The $(n, k)$-set agreement task was independently shown to have no wait-free protocol by Borowsky and Gafni \cite{Borowsky93}, Herlihy and Shavit \cite{Herlihy99}, and Saks and Zaharoglou \cite{Michael00}. Topological techniques were used to prove these results. Borowsky and Gafni used a powerful simulation method that allows $N$-process protocols to be executed by fewer processes in a resilient way. The paper by Saks and Zaharoglou constructed a topological structure that captures processors' local views of the whole system. The proof presented a relation between the $(n, k)$-set agreement task and Sperner's lemma or the Brouwer fixed point theorem. The paper by Herlihy and Shavit introduced a more general formalism based on topology to discuss computation in asynchronous distributed systems. They used topological structures, known as simplicial complexes, to model tasks and protocols. In this framework, they proved one of the most important theorems in distributed computing, the asynchronous computability theorem, which is a necessary and sufficient condition for a task to be solvable in shared memory model by a wait-free protocol. The impossibility of the $(n, k)$-set agreement task was also proved using only combinatorial techniques in \cite{Attiya02, Attiya11, Attiya12, Shi20}.

In \cite{Alistarh19}, Alistarh, Aspnes, Ellen, Gelashvili and Zhu pointed out the differences between valency arguments and combinatorial or topological techniques. In the proof by Fischer, Lynch and Paterson, an infinite execution can be constructed by extending an initial execution infinitely often. In contrast, in those proofs using combinatorial techniques, the existence of a bad execution is proved, but not explicitly constructed. \cite{Alistarh19} generalized this type of proof and called it an extension-based proof. An extension-based proof is defined as an interaction between a prover and a protocol that claims to solve a task. The prover tries to find out some errors in the protocol by submitting queries to the protocol. If the prover manages to do so, then the prover wins against the protocol.
If there exists a prover that can win against any protocol that claims to solve a task, we say that this task has an extension-based impossibility proof. The proof of the impossibility of consensus is an example of an extension-based proof. In the same paper, they showed that there are no extension-based proofs for the impossibility of a wait-free protocol for the $(n,k)$-set agreement in the non-uniform iterated immediate snapshot (NIIS) model. The same result was proved in the non-uniform iterated snapshot (NIS) model in the journal version \cite{Alistarh23}. Some tasks \cite{Alistarh21, Liu22} that are closely related to the set agreement task and 1-dimensional colorless tasks have also been shown to have no extension-based proofs.

Do other tasks also have no extension-based impossibility proofs? One possible way to extend existing results is via reductions. A reduction from task $T$ to task $S$ is to construct a protocol to solve task $T$ using a protocol that solves $S$. Brusse and Ellen \cite{Brusse21} gave the first result about reductions: It is possible to transform extension-based impossibility proofs if the reduction is limited to the use of only one instance of $S$, and extension-based proofs are augmented. Using this result, they proved that there are no (augmented) extension-based proofs of the impossibility of wait-free protocols in the NIS model for a number of other distributed computing problems.

Another way to generate new results is to find a condition that characterizes the tasks that have extension-based impossibility proofs. A task is specified by a tuple $(\mathcal{I}, \mathcal{O}, \Delta)$ \label{syb:task}. A protocol solves a task $(\mathcal{I}, \mathcal{O}, \Delta)$ if, starting with any input values in $\mathcal{I}$, processes decide on output values in $\mathcal{O}$ after communicating with each other for some steps according to the protocol, respecting the input/output relation $\Delta$. Both $\mathcal{I}$ and $\mathcal{O}$ are closed under containment, since processes are assumed to be faulty and may crash at any time. For example, in the 2-process ($p_{0}$ and $p_{1}$) binary consensus task, $\mathcal{I} = \{\{(p_{0}, 0), (p_{1}, 0)\}, \{(p_{0}, 0), (p_{1}, 1)\}, \allowbreak \{(p_{0}, 1), (p_{1}, 0)\}, \{(p_{0}, 1), (p_{1}, 1)\}, \{(p_{0}, 0)\}, \{(p_{0}, 1)\}, \{(p_{1}, 0)\}, \{(p_{1}, 1)\} \}$ and $\mathcal{O} = \{\{(p_{0}, 0), (p_{1}, 0)\}, \allowbreak \{(p_{0}, 1), (p_{1}, 1)\}, \allowbreak \{(p_{0}, 0)\}, \{(p_{1}, 1)\} \}$. The task specification is captured by $\Delta$, which maps each possible input vector to the allowed output vectors. If two processes $p_{0}$ and $p_{1}$ have input values $\{(p_{0}, 0), (p_{1}, 1)\}$ and do not crash, then their allowed output values $\Delta(\{(p_{0}, 0), (p_{1}, 1)\})$ are $\{\{(p_{0}, 0), (p_{1}, 0)\}, \{(p_{0}, 1), (p_{1}, 1)\} \}$. If the process $p_{0}$ has an input $0$ but the process $p_{1}$ crashes, its allowed output value $\Delta(\{(p_{0}, 0)\})$ is $\{(p_{0}, 0)\}$.  We can show that a task $(\mathcal{I}, \mathcal{O}, \Delta)$ has no extension-based proofs if we can design an adversarial strategy that can construct an adaptive protocol that wins against any extension-based prover.

In this paper, we focus on a subset of tasks called {\em colorless tasks}. A colorless task is defined only in terms of input and output values, without process ids. Many important tasks in distributed computing, including the consensus task, are colorless tasks. Note that in the previous work about colorless tasks(introduced in \cite{Bor01}, recently used in \cite{Attiya23_2}), colorless tasks are usually defined in a different way from tasks, and there are some consequences related to this definition \cite{Herlihy10, Herlihy17, Attiya23_2}. A colorless task is specified by a tuple $(\mathcal{I^{*}}, \mathcal{O^{*}}, \Delta^{*})$. \label{syb:colorless_task} $\mathcal{I^{*}}$ (or $\mathcal{O^{*}}$) contains only sets of input values (or output values), regardless of the number of processes involved and regardless of which process has a particular input or output value. As an example, for the n-process binary consensus task, $\mathcal{I^{*}} = \{\{0\}, \{1\}, \{0, 1\} \}$ and $\mathcal{O^{*}} = \{\{0\}, \{1\}\}$. The task specification is expressed as $\Delta^{*}(\{0\}) = \{0\}$, $\Delta^{*}(\{1\}) = \{1\}$ and $\Delta^{*}(\{0, 1\}) = \{\{0\}, \{1\}\}$. 

\subsection{The reason to use the colorless condition}
In this paper, {\em all} our discussions use the definition and related consequences of tasks rather than those specified for colorless tasks. Given a colorless task $(\mathcal{I^{*}}, \mathcal{O^{*}}, \Delta^{*})$, we can transform it into the form $(\mathcal{I}, \mathcal{O}, \Delta)$. The relations between the two definitions are discussed in more detail in Appendix \ref{appendix:transformation}. We will only use the format $(I, O, \Delta)$ in which process ids are included rather than $(\mathcal{I^{*}}, \mathcal{O^{*}}, \Delta^{*})$ that does not contain process ids.

So why do we talk about colorless tasks while adopting the form of general tasks? Part of our design needs a property (Property \ref{pro:colorless_property}) of the input/output relation $\Delta$.
\begin{property}
\label{pro:colorless_property}
In any possible execution, if a process is allowed to output a value $v$, then any other process that has seen a superset of the values seen by this process is also allowed to output the value $v$.
\end{property}
This property is intrinsic for colorless tasks. We regard colorless tasks as tasks having this property(i.e. a subset of tasks). In fact, all our discussions work for all tasks that satisfy this property, which is a superset of colorless tasks.

\subsection{Our contributions}
In this paper, we show which types of colorless tasks whose impossibility to have wait-free protocols can only be proved using combinatorial arguments but not valency arguments. In other words, we give a necessary and sufficient condition for a colorless task to have no extension-based impossibility proofs. Using this condition, we design an adversarial strategy that can win against any extension-based prover for those tasks. This is the first result to generalize the adversarial strategy proposed for set-agreement \cite{Alistarh19} to any colorless task, an improvement to previous results which are specified for some concrete tasks \cite{Alistarh19, Alistarh21} or 1-dimensional colorless tasks \cite{Attiya23_2}. Finally, we illustrate the results introduced in this paper by applying them to some colorless tasks.


\ifdefined\FULL
This paper will be organized as follows. Since we will use the results of \cite{Herlihy99}, the combinatorial topology and models used in this paper are briefly introduced in Section \ref{sec:preliminaries}. Section \ref{sec:related_work} is about related work. Section \ref{sec:motivation_and_summary} describes the motivation behind our necessary and sufficient condition. In Section \ref{sec:preparations}, we introduce the key concepts used in our adversarial strategy. In Section \ref{sec:nccondition} we give a necessary and sufficient condition for a colorless task to have no extension-based proofs. In Section \ref{sec:applications}, we apply the results in this paper to some colorless tasks. We derive some results similar to those in \cite{Attiya23_2}. Finally, Section \ref{sec:conclusions} summarizes our results and presents some potential extensions to our work. For readers who are not concerned with technical details, Section \ref{sec:motivation_and_summary} summarizes our ideas and main results in Sections \ref{sec:preparations} and Section \ref{sec:nccondition}. These readers can skip Sections \ref{sec:preparations} and Section \ref{sec:nccondition} and directly advance to the application and conclusion parts. There are many symbols used in this paper, we list the important ones in Appendix \ref{appendix:symbols}. 

\else
This paper will be organized as follows. Since we will use the results of \cite{Herlihy99}, the combinatorial topology and models used in this paper are briefly introduced in Section \ref{sec:preliminaries}. Section \ref{sec:related_work} is about related work. Section \ref{sec:motivation_and_summary} describes a high-level idea behind our adversarial strategy. The details of our strategy are shown in Sections 5 and 6 of the extended paper. Finally, Section \ref{sec:conclusions} summarizes our results and presents some potential extensions to our work. 
\fi

\section{Preliminaries}
\label{sec:preliminaries}

\subsection{NIIS models}
\label{sec:preliminaries:NIIS_model}
In this paper, we consider the non-uniform iterated immediate snapshot (NIIS) \label{syb:NIIS} model introduced by Hoest and Shavit\cite{Hoest06}.

An {\em snapshot object} consists of an array and supports two types of operation: write and read. A process with id $i$ writes a value to the $i$-th cell of the array, or reads a snapshot of the array.
Similarly, an {\em immediate snapshot} (IS) object was introduced by Borowsky and Gafni in \cite{Borowsky93_IS}. An {\em IS object} \label{syb:IS} consists of an array and supports only one type of operation, called a {\em writeread} operation, where a process with id $i$ writes a value to the $i$-th cell of the array and returns a snapshot of the array immediately following the write. The writeread operations performed to some IS object by different processes are said to be concurrent if all snapshots occur after all writes to the array are finished.

The NIIS model assumes an unbounded sequence of IS objects $IS_{1}, IS_{2} \cdots$. $(n + 1)$ \label{syb:n_plus_1} sequential threads of control, called {\em processes}, $p_{0}, p_{1} \ldots p_{n}$ \label{syb:processes}, communicate through IS objects to solve decision tasks. Note that in the following sections we assume $n + 1$ processes rather than $n$ processes in Section \ref{sec:introduction} or previous work to simplify our notations. The set of all processes is denoted by $\Pi$ \label{syb:pi}.  A {\em protocol} is a distributed program to solve a task.
In any execution of a protocol in the NIIS model, each process $p_{i}$ performs a writeread operation on each IS object starting from $IS_{1}$. Initially, $p_{i}$'s state contains its identifier $i$ and its input value. Each time $p_{i}$ performs a writeread operation on some IS object $IS_{j}$ using its current state $s_{i}$ as argument, and sets its current state $s_{i}$ to its identifier $i$ and the response of its writeread operation. Then $p_{i}$ consults a map $\delta$ \label{syb:delta} to determine whether it should terminate and output a value. If $\delta(s_{i}) \neq \perp$, $p_{i}$ outputs $\delta(s_{i})$ and terminates. Otherwise, it continues to access this next IS object.  Therefore, each {\em NIIS protocol} is determined by a decision map $\delta$ from a local state to output values or $\perp$. Note that in the execution of an NIIS protocol, different processes may terminate after accessing different numbers of IS objects.

A {\em configuration} $C$ \label{syb:C} consists of the contents of each shared object and the state of each process. However, since each process remembers its entire history and only process $p_{i}$ can write to the $i$-th component of each IS object, a configuration is fully determined by the states of processes in this configuration. An initial configuration consists of the input values and process ids of all processes. A process is {\em active} in a configuration if it has not terminated. A configuration is terminated if all processes have terminated.

A schedule describes the order in which processes take steps. A {\em scheduler} repeatedly chooses a set of processes that are poised to perform writeread operations on the same IS object concurrently. A {\em schedule} $\alpha$ \label{syb:alpha} is an ordered sequence of sets of processes chosen by the scheduler. Given a configuration $C$ and a set $P$ of processes that are poised to access the same IS object, the {\em resulting configuration} $CP$ is the configuration reached from $C$ by having each process whose identifier is in the set write its current state to the IS object and then having all of these processes read the IS object. Let $C$ be a reachable configuration in which all active processes have accessed the same number of IS objects. For any set $P$ of processes, a {\em $P$-only 1-round} schedule from $C$ is an ordered partition of processes in $P$ that are active in $C$. A protocol is wait-free if there is no infinite schedule from any initial configuration. A {\em $P$-only r-round} schedule from $C$ is a schedule $\alpha_{1}\alpha_{2} \cdots \alpha_{r}$ such that each $\alpha_{i}$ is a $P$-only 1-round schedule from $C\alpha_{1}\cdots\alpha_{i - 1}$. A {\em full $r$-round schedule} from $C$ is a $P$-only r-round schedule from $C$ where $P = \Pi$. 

The NIIS model is known to be equivalent to the standard shared memory model for deterministic, wait-free compututation. This means that for a task, if there is a deterministic, wait-free protocol in one model, then there is a deterministic, wait-free  protocol in the other model.

\subsection{Topological concepts}
An {\em (abstract) simplex} is the set of all subsets of some finite set. 
There is a natural geometric interpretation of an (abstract) simplex. In this paper, we use the two definitions interchangeably. A vertex $\vec{v}$ is a point in Euclidean space. A set $\{ \vec{v}_0, \ldots \vec{v}_n \}$  of vertices is affinely independent if and only if the vectors $\{ \vec{v}_1 - \vec{v}_0, \ldots \vec{v}_n - \vec{v}_0 \}$ are linearly independent. An $n$-simplex $S$ \label{syb:simplex} spanned by $\{ \vec{v}_0, \ldots \vec{v}_n \}$ is defined to be the set of all points x such that $x = \sum_{i = 0}^{n}t_{i}\vec{v}_{i}$ where $\sum_{i = 0}^{n}t_{i} = 1$ and $t_{i} \geq 0$ for all $i$. The {\em dimension} of such a simplex $S$ is defined as $n$. Any simplex $T$ spanned by a subset of $\{ \vec{v}_0, \ldots \vec{v}_n \}$ is called a {\em face} of $S$. Faces of a simplex $S$ different from $S$ are called proper faces of $S$. 

An {\em (abstract) simplicial complex} \label{syb:complex} is a finite collection $\mathcal{K}$ of sets that is closed under subset: for any set $S \in \mathcal{K}$, if $S^{'} \subseteq S$, then $S^{'} \in \mathcal{K}$. The {\em dimension} \label{syb:dim} of a complex $\mathcal{K}$, denoted by dim($\mathcal{K}$), is the highest dimension of its simplices. A simplex $S$ in $\mathcal{K}$ is a {\em facet} if it is not a proper face of any simplex in $\mathcal{K}$. The dimension of $\mathcal{K}$ is the maximum dimension of any of its facets.
If $\mathcal{L}$ is a subcollection of simplices in $\mathcal{K}$ that is also closed under containment and intersection, then $\mathcal{L}$ is a complex called a {\em subcomplex} of $\mathcal{K}$. The set of simplices of $\mathcal{K}$ of dimension at most $\ell$ is a subcomplex of $\mathcal{K}$, called the $\ell{-}skeleton$ of $\mathcal{K}$, denoted by $skel^{\ell}(\mathcal{K})$. For example, elements of $skel^{0}(\mathcal{K})$ are just the vertices of $\mathcal{K}$. 

There are two standard constructions that characterize the neighborhood of a vertex or a simplex: the star and the link. The {\em star} of a simplex $ S \in \mathcal{K}$, denoted as $St(S, \mathcal{K})$, consists of all simplices $S^{'}$ that contain $S$ and all simplices included in such a simplex $S^{'}$. The link of a simplex $S \in \mathcal{K}$ \label{syb:link}, denoted by $lk(S, \mathcal{K})$, is the subcomplex of $\mathcal{K}$ consisting of all simplices in $St(S, \mathcal{K})$ that do not share common vertices with $S$.

Now, we define a way to join simplices. Let $S$ be the simplex spanned by $(\vec{s}_{0}, \ldots \vec{s}_{p})$ and $T$ be the simplex spanned by $(\vec{t}_{0}, \ldots \vec{t}_{q})$. Then the {\em joining} of $S$ and $T$, \label{syb:joining} denoted as $S * T$, is the simplex constructed from the set of vertices $(\vec{s}_{0}, \ldots \vec{s}_{p}, \vec{t}_{0}, \ldots \vec{t}_{q})$. Note that our definition is slightly different from the usual definition of joining two simplices which requires the vertices of $(\vec{s}_{0}, \ldots \vec{s}_{p}, \vec{t}_{0}, \ldots \vec{t}_{q})$ to be affinely independent. We allow a vertex to appear in both simplices $S$ and $T$ to simplify some expressions.

Let $\mathcal{K}$ and $\mathcal{L}$ be complexes, possibly of different dimensions.
A {\em vertex map} $\mu: skel^{0}(\mathcal{K}) \rightarrow skel^{0}(\mathcal{L}) $ is a map that takes the vertices of $\mathcal{K}$ to the vertices of $\mathcal{L}$. {\em A simplicial map} is a vertex map that carries each simplex of $\mathcal{K}$ to a simplex of $\mathcal{L}$.
A simplicial map $\mu:\mathcal{K} \rightarrow \mathcal{L}$ is {\em non-collapsing} if it preserves dimension, that is, for all $S \in \mathcal{K}: dim(\mu(S)) = dim(S)$. An important non-collapsing simplicial map is a {\em coloring} of an $n$-dimensional complex $\mathcal{K}$ that labels the vertices of the complex with a value in ${0,1,\ldots, n}$ (denoting a process id) so that no two neighboring vertices have the same color. A {\em chromatic complex} or {\em colored complex} $(\mathcal{K}, \zeta_{\mathcal{K}})$ is a complex $\mathcal{K}$ together with a coloring $\zeta_{\mathcal{K}}$. A simplicial map between colored complexes is {\em color-preserving} if it maps each vertex to a vertex of the same color.

A geometric complex $\sigma(\mathcal{K})$ is a {\em subdivision} of a geometric complex $\mathcal{K}$ if 
\begin{itemize}
    \item[-] Each simplex of $\sigma(\mathcal{K})$ is contained in a simplex of $\mathcal{K}$;
    \item[-] Each simplex of $\mathcal{K}$ is the union of finitely many simplices of $\sigma(\mathcal{K})$.
\end{itemize}
Let $S$ be a simplex in a subdivision $\sigma(\mathcal{K})$ of $\mathcal{K}$. {\em The carrier} of $S$ in $\mathcal{K}$ \label{syb:carrier}, denoted by $carrier(S,\allowbreak \mathcal{K})$, is the smallest simplex $T$ of $\mathcal{K}$ such that $S$ is contained in $T$.
A chromatic complex $(\sigma(\mathcal{K}), \zeta_{\sigma(\mathcal{K})})$ is a {\em chromatic subdivision} of $(\mathcal{K}, \zeta_{\mathcal{K}})$ if $\sigma(\mathcal{K})$ is a subdivision of $\mathcal{K}$ and for all $S \in \sigma(\mathcal{K})$, $\zeta_{\sigma(\mathcal{K})}(S)$ $\subseteq$ $\zeta_{\mathcal{K}}(carrier(S, \mathcal{K}))$. The {\em standard chromatic subdivision} is the most important type of chromatic subdivision that we will use. Let $(\mathcal{K}, \zeta_{\mathcal{K}})$ be a chromatic simplicial complex. Its {\em standard chromatic subdivision} \label{syb:Ch} $Ch(\mathcal{K})$ is the simplicial complex whose vertices have the form $(i, S_{i})$, where $i \in \{0, \ldots n\}$, $S_{i}$ is a non-empty face of some simplex $S \in \mathcal{K}$, and $i \in \zeta_{\mathcal{K}}(S_{i})$. $(k + 1)$ vertices $((i_{0}, S_{i_{0}}), \ldots , (i_{k}, S_{i_{k}}))$, where each $i_{j}$ is an index in $\{0, \ldots n\}$, forms a simplex of $Ch(\mathcal{K})$ if and only if

\begin{itemize}
    \item[-] There exists a permutation map $M$ from $\{i_{0}, \ldots i_{k}\}$ to $\{i_{0}, \ldots i_{k}\}$ such that $S_{M(i_{0})} \subseteq \ldots \subseteq S_{M(i_{k})}$; 
    \item[-] For $0 \leq i_{j_{1}}, i_{j_{2}} \leq n$, if $i_{j_{1}} \in \zeta_{\mathcal{K}}(S_{i_{j_{2}}})$, then $S_{i_{j_{1}}} \subseteq S_{i_{j_{2}}}$.
\end{itemize}
and to make the subdivision chromatic, we define the coloring $\zeta_{Ch(\mathcal{K})}$ as $\zeta_{Ch(\mathcal{K})}(i, S_{i}) = i$. We give an example of the chromatic subdivision in Figure \ref{img:introductions:standard_chromatic_subdivisions}.

\begin{figure}[h]
  \centering
  \includegraphics[width=0.5\linewidth]{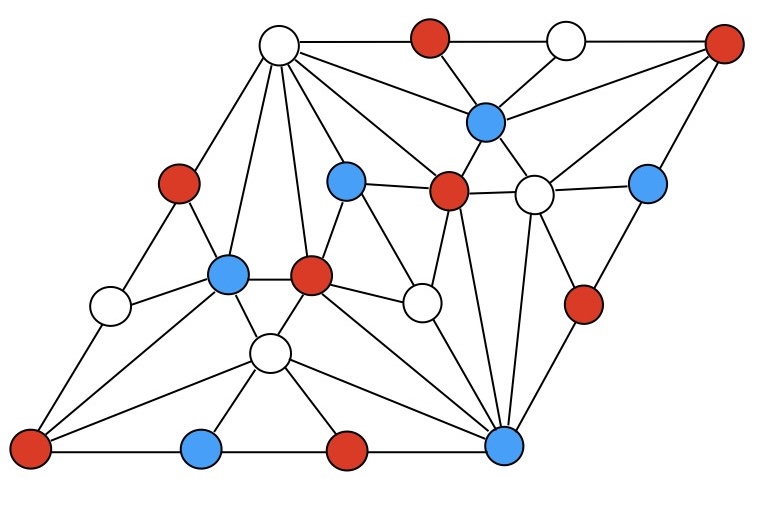}
  \caption{The standard chromatic subdivision of a 2-dimensional complex}
  \label{img:introductions:standard_chromatic_subdivisions}
\end{figure}

\subsection{Topological representation of tasks}
For a (colored) task $(\mathcal{I}, \mathcal{O}, \Delta)$, all possible input or output values can be represented by a simplicial complex, called an {\em input complex} $\mathcal{I}$ or an {\em output complex} $\mathcal{O}$. Each vertex $\vec{s}$ of these simplices is labeled with a process id and a value that are denoted by $ids(\vec{v})$ and $vals(\vec{s})$, respectively. An input assignment for a set $P$ of processes, represented by a simplex, is a set of pairs $\{(p_{j},v_{j})\}$, where each process $p_{j} \in P$ appears exactly once, but the input values $v_j$ need not be distinct. For a simplex $S$, $ids(S)$ is defined as $\{ p_{j} | (p_{j}, v_{j}) \in S\}$ and $vals(S)$ is defined as $\{ v_{j} | (p_{j}, v_{j}) \in S\}$.
Given two simplicial complexes $\mathcal{K}$ and $\mathcal{L}$, a {\em carrier map} $\Phi$ from $\mathcal{K}$ to $\mathcal{L}$ takes each simplex $S \in \mathcal{K}$ to a subcomplex $\Phi(S) $ of $\mathcal{L}$ such that for all $S, S^{'} \in \mathcal{K}$ such that $S \subseteq S^{'}$, we have $\Phi(S) \subseteq \Phi(S^{'})$.
The {\em topological task specification} corresponding to the task specification $\Delta$ is defined as a carrier map that carries each simplex $S$ of the input complex to a subcomplex of the output complex, which contains all valid output values when the input values are represented by $S$. We represent a decision task by a tuple $(\mathcal{I}, \mathcal{O}, \Delta)$ consisting of an input complex $\mathcal{I}$, an output complex $\mathcal{O}$, and a carrier map $\Delta$. 

In this paper, we are particularly interested in a subset of tasks called {\em colorless tasks}. A colorless input assignment is defined by an input assignment by discarding the process names. Note that the different occurrences of each input value do not matter. A colorless output assignment is defined by analogy with (colorless) input assignments. Similarly, all possible colorless input or output values can be represented by a simplicial complex, called an {\em colorless input complex} $\mathcal{I}^{*}$ or an {\em colorless output complex} $\mathcal{O}^{*}$. A colorless task $(\mathcal{I}^{*}, \mathcal{O}^{*}, \Delta^{*})$ is characterized by a colorless input complex $\mathcal{I}^{*}$, a colorless output complex $\mathcal{O}^{*}$, and a relation $\Delta^{*}$ that maps each simplex $\sigma \in \mathcal{I}^{*}$ to a subcomplex $\Delta^{*}(\sigma)$ of $\mathcal{O}^{*}$ such that $dim(\Delta^{*}(\sigma)) \leq dim(\sigma)$ and for all $S, S^{'} \in \mathcal{I^{*}}$ where $S \subseteq S^{'}$, we have $\Delta^{*}(S) \subseteq \Delta^{*}(S^{'})$. Colorless tasks are a central family of tasks in distributed computing. The consensus task, the set agreement task, and the approximate agreement task are examples of colorless tasks. 

As we mentioned earlier, we regard general tasks as a subset of colored tasks and will use the form of colored tasks. It has been shown that a colorless task defined in the form $(\mathcal{I}^{*}, \mathcal{O}^{*}, \Delta^{*})$ can be transformed into a colored form $(\mathcal{I}, \mathcal{O}, \Delta)$. We give an implementation in Appendix \ref{appendix:transformation}.

\subsection{Topological representation of an NIIS protocol}
In the NIIS model, a reachable configuration $C$ of a protocol $\delta$ is represented by a simplex $S$. The vertices of the simplex are the states of the processes in the configuration. Given a protocol $\delta$, if each process in $C$ is active, the {\em non-uniform chromatic subdivision} of $S$, denoted by $\chi(S)$ \label{syb:chi}, is defined to be the standard chromatic subdivision of $S$. Otherwise, let $T$ be the set of terminated vertices and let $\overline{T}$ be the simplex spanned by active vertices. The non-uniform chromatic subdivision of $S$ is defined as the complex, each facet of which contains the vertices of $T$ and the vertices of a facet in the standard chromatic subdivision of $\overline{T}$. The {\em non-uniform chromatic subdivision of a complex $\mathcal{K}$} is defined as $\cup_{S \in \mathcal{K}} \chi(S)$. In the example of Figure \ref{img:introductions:non_uniform_chromatic_subdivisions}, all processes are active in the configuration represented by the 2-simplex $S_{1}$. In the configuration represented by the 2-simplex $S_{2}$, the process represented by the red color is terminated, but the two other processes are active.

\begin{figure}[h]
  \centering
  \includegraphics[width=\linewidth]{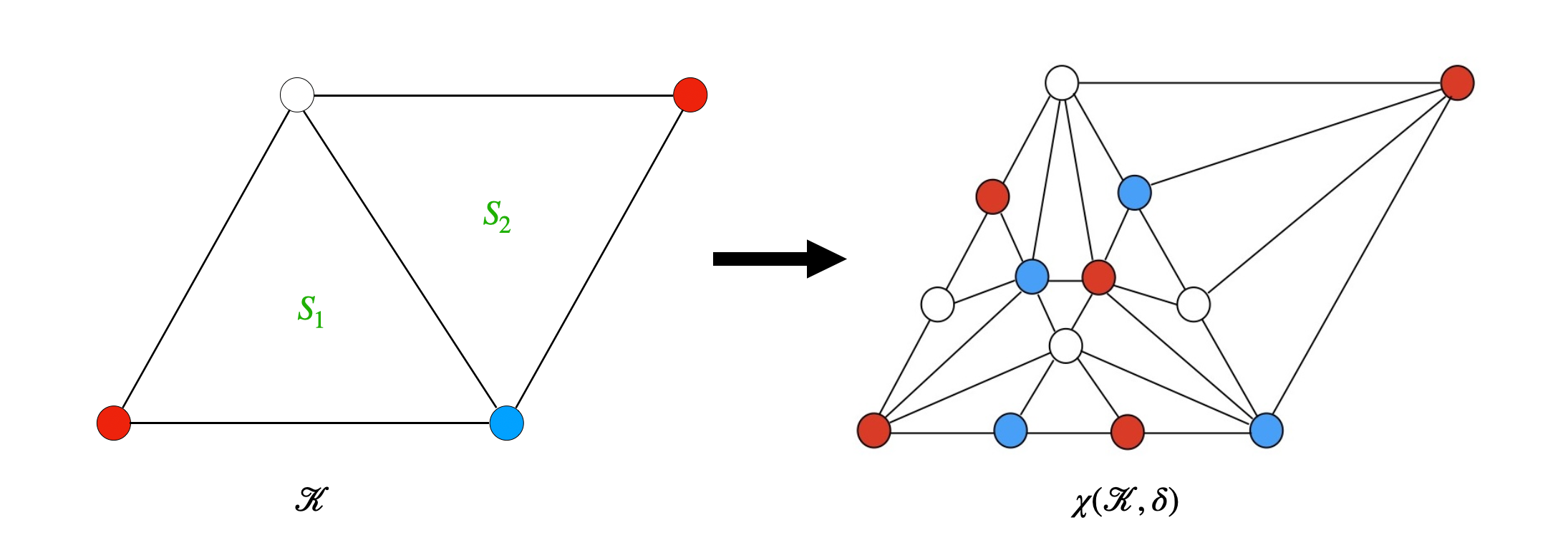}
  \caption{A non-uniform chromatic subdivision of a 2-dimensional complex}
  \label{img:introductions:non_uniform_chromatic_subdivisions}
\end{figure}

The {\em iterated standard chromatic subdivision} or the {\em iterated non-uniform chromatic subdivision}, denoted by $Ch^{n}$ or $\chi^{n}$, is the result of iterating the standard chromatic subdivision or the non-uniform chromatic subdivision n times, i.e., $Ch^{n}(\mathcal{K}) = Ch(Ch^{n  - 1}(\mathcal{K}))$ and $\chi^{n}(\mathcal{K}) = \chi(\chi^{n  - 1}(\mathcal{K}), \delta)$.

Like tasks, protocols can be represented in terms of combinatorial topology. The {\em $i$-th protocol complex} consists of all simplices, represents configurations that are reachable from some initial configuration by a $i$-round schedule. The {\em $i$-th execution map} is a carrier map that carries each initial configuration to all configurations reached from it in the $i$-th protocol complex. A {\em protocol} is represented by $(\mathcal{I}, \mathcal{P}, \Xi)$ and a simplicial map $\delta:\mathcal{P} \rightarrow \mathcal{O}$ where $\mathcal{I}$ is the input complex, $\mathcal{P}$ is the $i$-th protocol complex, and $\Xi$ is the $i$-th execution map, for some non-negative integer $i$. We say that a protocol $(\mathcal{I}, \mathcal{P}, \Xi)$ {\em solves a task} $(\mathcal{I}, \mathcal{O}, \Delta)$ if $\delta(\Xi(s^{k}))$ is in $\Delta(s^{k})$ for each $s^{k} \in \mathcal{I}$. 

Hoest and Shavit \cite{Hoest06} showed that the $i$-th protocol complex of an NIIS protocol is equal to $\chi^{i}(\mathcal{I}, \delta)$, where $\chi$ is constructed from the NIIS protocol as described above. Therefore, the map $\delta$ of an NIIS protocol is a function from $\chi^{i}(\mathcal{I}, \delta)$ to $\mathcal{O}$ for some iterated non-uniform chromatic subdivision $\chi^{i}$. This is equal to our definition of $\delta$ in Section \ref{sec:preliminaries:NIIS_model} if we let $\delta$ return $\perp$ when the input value is not a state in $\chi^{i}(\mathcal{I}, \delta)$: If a process has a state represented by a vertex $v$ in $\chi^{i}(\mathcal{I}, \delta)$, it will terminate its execution and output $\delta(v)$. Otherwise, a process has a state represented by a vertex $v^{'}$ in $\chi^{i^{'}}(\mathcal{I}, \delta)$ for some $i^{'} < i$ and is still active, because a terminated vertex in $\chi^{i^{'}}(\mathcal{I}, \delta)$ will also be in $\chi^{i}(\mathcal{I}, \delta)$. Therefore, $\delta(v^{'}) = \perp$ and the process will continue its execution.

Now we introduce a core concept in this paper, which is a variant of the concept defined above in the NIIS model. Let $U$ be any simplex in $\mathcal{I}$. \label{syb:U} A \emph{partial protocol $\delta_{U}$} with respect to $U$ \label{syb:delta_U} specifies whether a process should output a value(and which output if so) in each configuration reached from an initial configuration that contains $U$, by a schedule in which $ids(U)$ is the first set of processes. A partial protocol with respect to $U$ is {\em wait-free} if it has no infinite execution in which $ids(U)$ is the first set of processes to take a step.
For convenience, we require that no process terminates in a partial protocol before taking at least one step.

The i-th {\em protocol complex of a partial protocol $\delta_{U}$ with respect to $U$} is defined as follows. 
\begin{itemize}
\item
$\mathbb{F}_{0}(U)$ is the set of all simplices in $\mathcal{I}$ that contain $U$. 
\item
$\mathbb{F}_{1}(U)$ is the subcomplex of $\chi(\mathbb{F}_{0}(U),\delta_{U})$ consisting of all simplices representing configurations reachable via 1-round schedules in which the processes in $ids(U)$ have the input values $vals(U)$ and $ids(U)$ is the first set of processes to take a step.
\item
For $i \geq 1$,  $\mathbb{F}_{i+1}(U) = \chi(\mathbb{F}_{i}(U), \delta_{U})$
consists of all simplices representing configurations reachable via $(i+1)$-round schedules in which the processes in $ids(U)$ have the input values $vals(U)$ and $ids(U)$ is the first set of processes to take a step.
\end{itemize}
\label{syb:mathbb_F_U}

\begin{figure}[h]
  \centering
  \includegraphics[width=\linewidth]{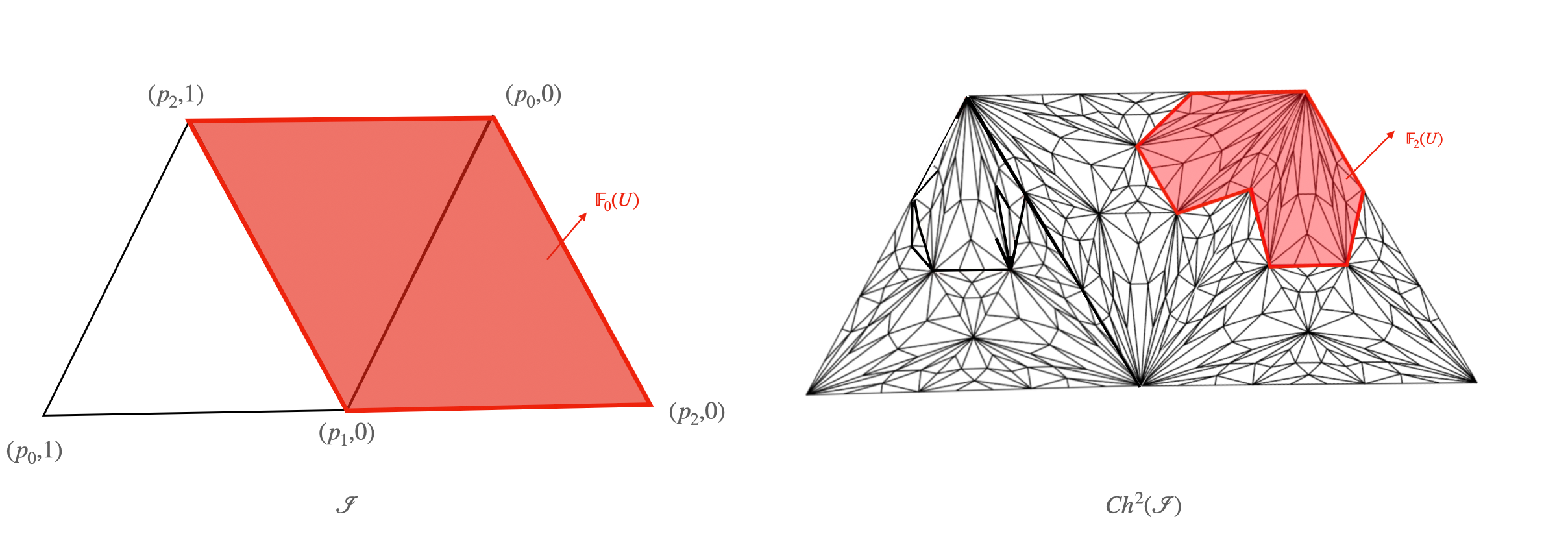}
  \caption{The protocol complexes $\mathbb{F}_{0}(U)$ and $\mathbb{F}_{2}(U)$ with respect to $U = \{(p_{0}, 0)\}$}
  \label{img:introduction:protocol_complex}
\end{figure}

In Figure 3, we give an example in which $U = \{(p_0, 0)\}$ and no processes terminate in the first two rounds.

Similarly, a partial protocol with respect to $U$ can be represented topologically as $(\mathbb{F}_{0}(U), \mathbb{F}_{i}(U), \Xi)$. The $i$-th {\em execution map} $\Xi$ is a carrier map that carries each initial configuration in $\mathbb{F}_{0}(U)$ to all configurations reached from it in $\mathbb{F}_{i}(U)$.
We say that a partial protocol $\delta_{U}$ with respect to $U$ satisfies the task specification $\Delta$ if $\delta_{U}(\Xi(s^{k}))$ is in $\Delta(s^{k})$ for each $s^{k} \in \mathcal{I}$ where $\Xi(s^{k})$ is not empty.

Let $[U_{1}, U_{2} \cdots U_{k}]$ be an ordered sequence of simplices in $\mathcal{I}$, such that $ids(U_{1}), ids(U_{2}) \cdots ids(U_{k})$ is a prefix of a full $r$-round schedule $\alpha$ for some integer $r$ and all $U_{i}$ are subsimplices of some $n$-simplex $s^{n}$ in $\mathcal{I}$. Let $U_{sum}$ be the simplex in $\mathcal{I}$ whose ids are in $\cup_{1}^{k}(ids(U_{i}))$ such that each process has the same input value as in $s^{n}$. A \emph{partial protocol with respect to $[U_{1}, U_{2} \cdots U_{k}]$}, denoted by $\delta_{[U_{1}, U_{2} \cdots U_{k}]}$, \label{syb:delta_U_1_2} specifies the state of each process after each execution starting from each initial configuration that contains $U_{sum}$ by a schedule in which $ids(U_{i})$ is the $i$-th set of processes to take a step. The i-th {\em protocol complex of a partial protocol with respect to $[U_{1}, U_{2} \cdots U_{k}]$ } \label{syb:mathbb_F_U_1_2} is defined in the same way as above.

\begin{itemize}
\item
$\mathbb{F}_{0}([U_{1}, U_{2} \cdots U_{k}])$ is the set of all simplices in $\mathcal{I}$ that contain $U_{sum}$.
\item
For $1 \leq i \leq r$, $\mathbb{F}_{i}([U_{1}, U_{2} \cdots U_{k}])$ is the subcomplex of $\chi(\mathbb{F}_{0}([U_{1}, U_{2} \cdots U_{k}]), \delta_{[U_{1}, U_{2} \cdots U_{k}]})$ consisting of all simplices representing configurations reachable via schedules in which the processes in $ids(U_{i})$ have the input values $vals(U_{i})$ and $ids(U_{i})$ is the $i$-th set of processes to take a step for each $1 \leq i \leq k^{'}$, where $U_{1}, U_{2} \cdots U_{k^{'}}$ is the longest prefix of the first full $i$-round schedule of $\alpha$.
\item
For $i > r$,  $\mathbb{F}_{i+1}([U_{1}, U_{2} \cdots U_{k}]) = \chi(\mathbb{F}_{i}([U_{1}, U_{2} \cdots U_{k}]), \delta_{[U_{1}, U_{2} \cdots U_{k}]})$.
\end{itemize}

Herlihy and Shavit\cite{Herlihy99} proved the asynchronous computability theorem, a necessary and sufficient condition for a task to be solvable in a wait-free manner by asynchronous processes that communicate by reading and writing a shared memory. 
\begin{theorem}[Asynchronous computability theorem]
A decision task $(\mathcal{I}, \mathcal{O}, \Delta)$ has a wait-free protocol in the read-write memory model if and only if there exists a chromatic subdivision $\sigma$ of $\mathcal{I}$ and a color-preserving simplicial map $\mu:\sigma(\mathcal{I}) \rightarrow \mathcal{O}$ such that for each simplex S in $\sigma(\mathcal{I})$, $\mu(S) \in \Delta(carrier(S, \mathcal{I}))$.
\end{theorem}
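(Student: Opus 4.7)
The plan is to prove both directions of this classical theorem of Herlihy and Shavit, working in the NIIS model (which the preliminaries note is equivalent to the read-write shared memory model for deterministic wait-free computation). The necessity direction ($\Rightarrow$) is essentially structural and follows from the representation of NIIS protocols by iterated chromatic subdivisions, while the sufficiency direction ($\Leftarrow$) requires a constructive simulation argument. The genuinely interesting work sits in sufficiency.

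For necessity, I would start from a wait-free protocol $\delta$ solving $(\mathcal{I}, \mathcal{O}, \Delta)$ in NIIS. By Hoest and Shavit's characterization, the reachable configurations after $i$ rounds form the subdivision $\chi^{i}(\mathcal{I}, \delta)$. Wait-freedom implies that for every initial simplex $s^k \in \mathcal{I}$ there is some bound $i(s^k)$ after which all processes in $s^k$ have terminated along every schedule from $s^k$; taking $i$ large enough uniformly on $\mathcal{I}$ (or passing to the colimit, organized by carriers), the collection of all terminal states yields a chromatic subdivision $\sigma(\mathcal{I})$ of $\mathcal{I}$, where each simplex $S \in \sigma(\mathcal{I})$ is carried by exactly the initial simplex whose processes and input values it encodes. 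Setting $\mu(S) = \delta(S)$, the correctness of the protocol ($\delta(\Xi(s^k)) \in \Delta(s^k)$) immediately gives $\mu(S) \in \Delta(carrier(S, \mathcal{I}))$, and $\mu$ is color-preserving because $\delta$ preserves process ids.

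For sufficiency, given $\sigma$ and $\mu$ satisfying the conditions, I would construct an NIIS protocol as follows. The core step is a simulation lemma: for any chromatic subdivision $\sigma(\mathcal{I})$ of $\mathcal{I}$, there exists an integer $N$ and a color-preserving simplicial map $\phi: Ch^N(\mathcal{I}) \to \sigma(\mathcal{I})$ such that for every $S \in Ch^N(\mathcal{I})$, $carrier(\phi(S), \mathcal{I}) \subseteq carrier(S, \mathcal{I})$. Assuming this lemma, define the protocol to run $N$ rounds of immediate snapshot and then output $(\mu \circ \phi)(v)$ on local state $v$. The composition is a color-preserving simplicial map $Ch^N(\mathcal{I}) \to \mathcal{O}$ whose carrier-compatibility (combined with the carrier condition on $\mu$) ensures that every output simplex lies in $\Delta$ of the correct input simplex; termination in $N$ rounds ensures wait-freedom.

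The main obstacle is the simulation lemma, a chromatic analog of the simplicial approximation theorem. The idea is that the mesh of $Ch^n(\mathcal{I})$ shrinks geometrically in $n$ (with a precise contraction ratio depending on the dimension), so for $N$ sufficiently large every simplex of $Ch^N(\mathcal{I})$ lies inside the star of some vertex of $\sigma(\mathcal{I})$ of the correct color. One then defines $\phi$ vertex-wise by sending a colored vertex $(i, *)$ of $Ch^N(\mathcal{I})$ to a vertex of $\sigma(\mathcal{I})$ of color $i$ whose open star contains the geometric realization of $(i, *)$ — selecting among valid choices in a way that preserves carriers. The delicate point, and where most of the technical effort lies, is verifying that this vertex assignment actually sends every simplex of $Ch^N(\mathcal{I})$ to a simplex of $\sigma(\mathcal{I})$ while remaining color-preserving; this uses the specific combinatorial structure of $Ch$ that forces the vertices of each output facet to come from a common chain of nested faces, together with a careful induction on dimension to arrange the choices consistently across the complex.
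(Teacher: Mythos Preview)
The paper does not prove this theorem. It is stated in the preliminaries as a classical result of Herlihy and Shavit \cite{Herlihy99} and is simply invoked later (e.g., to derive Lemma~\ref{the:partial_info_theorem} from Lemma~\ref{the:partial_info_lemma}). There is therefore nothing in the paper to compare your proposal against.

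That said, your sketch is a faithful outline of the standard proof: necessity via the identification of the NIIS protocol complex with an iterated chromatic subdivision, and sufficiency via a chromatic simplicial-approximation lemma that lets $Ch^N(\mathcal{I})$ map carrier-compatibly into any prescribed chromatic subdivision $\sigma(\mathcal{I})$. The place where your sketch is thin is exactly where the real work lies in the literature: arranging the color-preserving approximation $\phi$ so that it is simultaneously simplicial and carrier-respecting is not a direct consequence of mesh-shrinking alone, and the clean arguments for this (e.g., Kozlov's combinatorial proof, or the original Herlihy--Shavit argument via span and convergence) require substantially more than the one paragraph you devote to it. But as a plan it is correct, and it is the route the cited reference takes.
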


\section{Related work}
\label{sec:related_work}

\subsection{Extension-based proofs}
In \cite{Alistarh19}, Alistarh, Aspnes, Ellen, Gelashvili and Zhu formally defined the class of {\em extension-based proofs} that generalizes the impossibility proof techniques used in the FLP impossibility result. The proof of impossibility is modeled as an interaction between a prover and an NIIS protocol that claims to solve the task.
The protocol is defined by a map $\delta$ from the process states to the output values or a special value $\perp$. We do not use this notation $\Delta$ from \cite{Alistarh19} to denote this map, since $\Delta$ is also commonly used in the topological representation of distributed tasks. Instead, in this paper, we will use the lowercase $\delta$ to represent this map. 
The prover asks queries to learn information about the protocol, such as the $\delta$ values of the states of processes in some reached configuration, in an effort to find a violation of the task specification or construct an infinite execution.

If there exists a prover that can defeat any protocol that claims to solve the task, we say that the task has an extension-based impossibility proof. But if for each prover, an adversary can adaptively design a protocol based on the queries made by the prover such that the prover cannot win in the interaction, there is no extension-based proof for the impossibility of the task. The adaptive protocol constructed by the adversary will be referred to as an adversarial protocol as in \cite{Alistarh19}.

There are three classes of extension-based proofs (restricted extension-based proofs, extension-based proofs, and augmented extension-based proofs) defined in \cite{Alistarh19, Brusse21}. We will first introduce the restricted extension-based proof, which is the weakest one. The interaction between the prover and a protocol proceeds in phases. In each phase $\varphi$ the prover starts with a finite schedule $\alpha(\varphi)$ and a set of configurations $\mathcal{A}(\varphi)$ \label{syb:initial_configurations_of_phase_varphi} reached from some initial configurations of the task by $\alpha(\varphi)$ \label{syb:initial_schedule_of_phase_varphi} that only differ in the input values of processes that are not in the schedule $\alpha$. For $\varphi = 1$, the schedule $\alpha(1)$ is the empty schedule and $\mathcal{A}(1)$ contains all initial configurations. The set of configurations reached in phase $\varphi$ is denoted by $\mathcal{A}^{'}(\varphi)$ \label{syb:reached_configurations_in_phase_varphi} which is empty at the beginning of phase $\varphi$. The prover queries the protocol by choosing a configuration $C \in \mathcal{A}(\varphi) \bigcup \mathcal{A}^{'}(\varphi)$ and a set of processes $P$ that are poised to apply writeread to the same IS object in $C$. The protocol replies with the value of $\delta$ of each process in $P$ in the resulting configuration $C^{'}$ and the prover adds the configuration $C^{'}$ to $\mathcal{A}^{'}(\varphi)$.  A chain of queries is a sequence of queries (finite or infinite) such that if $(C_{i}, P_{i})$ and $(C_{i+1}, P_{i+1})$ are consecutive queries in the chain, then $C_{i+1}$ is the configuration resulting from scheduling $P_{i}$ from $C_{i}$. The prover is allowed to construct finitely many chains of queries in a phase.

When the output values from the response of a query are not valid output values, we say that the prover finds a violation of the task specification. For example, if a prover finds $(k + 1)$ different values(excluding $\perp$) in the response of a query to a protocol that claims to solve the $(n, k)$-set agreement task, the protocol must admit that it cannot solve the $(n, k)$-set agreement task. Similarly, if a prover can construct an infinite chain of queries or infinite phases, the protocol must admit that it is not wait-free. In this case, we say that the prover finds a liveness issue.

If the prover does not find a violation or construct an infinite execution, it must end the phase and choose some configuration $C^{'} \in \mathcal{A}^{'}(\varphi)$. Let $\alpha^{'}$ be the schedule such that $C^{'}$ is reached from some configuration in $\mathcal{A}(\varphi)$. The schedule $\alpha(\varphi)\alpha^{'}$ will be used as the initial schedule in the next phase. In other words, the prover sets $\alpha(\varphi + 1) = \alpha(\varphi)\alpha^{'}$. Suppose that the configuration $C^{'}$ is reached from an initial configuration $C_{ini}$. Then $\mathcal{A}(\varphi + 1)$ consists of all configurations reached by the schedule $\alpha(\varphi + 1)$ from initial configurations that only differ from $C_{ini}$ by the input values of those processes that do not appear in $\alpha(\varphi + 1)$.

At the start of a phase, if every process has terminated in all configurations of $\mathcal{A}(\varphi)$,  no extension will be possible, indicating the end of the interaction and failure of the prover. But if the prover manages to construct an infinite chain of queries or an infinite number of phases, the prover wins.

Extension-based proofs allow for an additional type of query. The prover performs output queries by choosing a configuration $C \in \mathcal{A}(\varphi) \bigcup \mathcal{A}^{'}(\varphi)$, a set of processes $P$ that are poised to access the same IS object, and a value $y \in \{0, 1,..., k\}$. If there is a $P$-only schedule from $C$ that results in a configuration in which a process in $P$ outputs $y$, the protocol will return some such schedule. Otherwise, the protocol returns $NULL$. 
Augmented extension-based proofs adopt a more general query, called an assignment query. An assignment query consists of a configuration $C \in \mathcal{A}(\varphi) \bigcup \mathcal{A}^{'}(\varphi)$, a set of processes $P$, and an assignment function $f$ from a subset $Q$ of $P$ to the output values. If there exists a $P$-only schedule from $C$ that results in a configuration in which the output value of each process $q$ in $Q$ is $f(q)$, the protocol will return some such schedule. Otherwise, the protocol will return $NULL$. It is shown that an output query $(C, P, y)$ can be simulated by a sequence of assignment queries $(C, P, f_{q})$ for each process $q$ in \cite{Brusse21}. Therefore, we will discuss only augmented extension-based proofs in subsequent sections.


\subsection{Limitations of extension-based proofs}

In \cite{Alistarh19}, Alistarh, Aspnes, Ellen, Gelashvili and Zhu proved that there are no extension-based proofs of the impossibility of solving the $(n, k)$-set agreement task in the NIIS model. They designed an adversarial strategy that constructs an adaptive protocol according to the queries made by a prover. A key observation is that the prover cannot learn all the information about the protocol during phase 1, since it is only allowed to make a finite number of queries. The adversary designs a protocol in which the output values of some (but not all) configurations are defined in phase 1. For comparison, a real protocol defines the output values of all possible configurations. Things will change when the prover has to end phase 1, since the configurations that can be reached in later interactions are now limited. In \cite{Alistarh19} it is proved that after phase one of the interaction, the adversary can defeat any extension-based prover, since it can now define a partial protocol for the reachable configurations in the rest of the interaction. This is why a partially specified protocol can exist even if the task actually does not have a wait-free protocol. This observation is essential for understanding extension-based impossibility proofs for a task, as we shall discuss later.

Alistarh, Ellen, and Rybicki \cite{Alistarh21} proved that there is no extension-based proof of the impossibility of solving the approximate agreement task on the 4-cycle in the NIS model. They proposed a new strategy to construct a partial protocol after phase 1.

\section{Motivation and summary}
\label{sec:motivation_and_summary}
In this section, we give a description of the motivation behind Section \ref{sec:nccondition}, in which we give a necessary and sufficient condition for a colorless task defined by $(\mathcal{I}, \mathcal{O}, \Delta)$ to have no extension-based proofs.

The $(n, k)$-set agreement task is the first task that was shown to have no extension-based impossibility proofs. As shown in \cite{Alistarh19}, given any extension-based prover, the adversary will pretend to have a protocol for the $(n, k)$-set agreement task during phase 1 of the interaction. But after the prover chooses a schedule at the end of phase 1, the adversary can assign a valid output value to each undefined configuration that the prover can reach in the later phases. In other words, the adversary has a partial protocol compatible with the existing assigned values that satisfies the task specification of the $(n, k)$-set agreement after phase 1.

Before we present our idea behind this necessary and sufficient condition, it is better to explain how an adversary wins against any extension-based prover in our terminology. 
We divide the adversarial strategy into two parts: In this first part, the adversary adaptively defines a protocol in response to any specific prover's queries during the first $r$ phases. In the second part, the adversary designs a partial protocol after the end of phase $r$ so that the prover is doomed to lose. By assumption, the task $(\mathcal{I}, \mathcal{O}, \Delta)$ cannot be solved by a wait-free protocol (otherwise it is meaningless to talk about extension-based proofs.). However, there may be a protocol when restricted to a limited set of executions. This is exactly what happens after the end of each phase. When the prover chooses a configuration $C^{'}$ in $\mathcal{A}^{'}(\varphi)$ and ends a phase $\varphi$, any possible execution in later phases is reached from some initial configuration by a schedule that is an extension of $\alpha(\varphi + 1)$. Suppose that $C^{'}$ is reached from an initial configuration $C$. These limited possible configurations can be represented by a protocol complex of a partial protocol with respect to $[U_{1}, U_{2} \cdots U_{k}]$, where $U_{i}$ is a subsimplex of the $n$-simplex representing $C$ and $ids(U_{i})$ is the $i$-th set of processes in $\alpha(\varphi + 1)$. The assignment of a valid value of $\delta$ to each vertex in this protocol complex is equivalent to building a partial protocol with respect to $[U_{1}, U_{2} \cdots U_{k}]$.

\subsection{The definition of finalization}
If the adversary can prevent the prover from finding a problem in the first $r$ phases and construct a partial protocol after phase $r$, no matter what queries the prover makes during the first r phases and which configurations the prover has chosen at the end of the first r phases, we say that the adversary can {\em finalize after phase $r$}.
We start by showing that the adversary can win against any extension-based prover, if and only if the adversary can finalize after phase $r$ for some positive integer r. 

Suppose that the adversary can finalize after some phase $r$. Then, eventually, the prover chooses a configuration at the end of some phase $r^{'} > r$ in which every process has terminated. The prover loses in the next phase.

Suppose that the adversary cannot finalize after phase $r$ for each integer $r$. This means that the prover is able to choose a configuration $C^{'}$ that is reached from some initial configuration $C$ by the schedule $\alpha(r + 1)$ at the end of phase $r$ for each integer $r$, such that there is no partial protocol with respect to $[U_{1}, U_{2} \cdots U_{r}]$, where $U_{i}$ is a subsimplex of the $n$-simplex representing $C$ and $ids(U_{i})$ is the $i$-th set of processes in $\alpha(r + 1)$. 
So, at the end of phase $r$, the prover can choose a configuration such that $\alpha(r + 1) = \{U_{1}, U_{2} \cdots U_{r}\}$. In phase $r + 1$, there exists a configuration in $\mathcal{A}(r + 1)$ in which some processes are not terminated. By definition, the protocol wins when at the beginning of some phase $r + 1$, every process has terminated in every configuration in $\mathcal{A}(r + 1)$. This means that the protocol constructed by the adversary cannot win.

Then we consider the case $r = 1$, i.e. a necessary and sufficient condition for a colorless task $(\mathcal{I}, \mathcal{O}, \Delta)$ to have an adversary that can win against any restricted extension-based prover by constructing a partial protocol after the end of phase 1. Most of the techniques used in the proof for larger values of r are introduced in the proof of this case, which are in Sections \ref{sec:preparations} - \ref{sec:nccondition:finalization_after_phase_1}. Then in Section \ref{sec:nccondition:augmented_extension_based_proofs}, we show that assignment queries do not give the prover more power. Therefore, the necessary and sufficient condition works for all versions of extension-based proofs.
Finally, in Section \ref{sec:nccondition:finalization_after_phase_r}, we consider the case $r > 1$. We will show that $r$ is at most $n + 1$(the number of processes). When $1 < r \leq n + 1$, we can use the techniques in case $r = 1$ to prove a similar condition.

So, the most important part of this paper is to find a necessary and sufficient condition for a colorless task $(\mathcal{I}, \mathcal{O}, \Delta)$ to have an adversary that can construct a partial protocol after phase 1. It is hard to find a necessary and sufficient condition directly since what we have is just a tuple of topological structures. We start by finding a necessary condition, make it stronger, and then show that this condition is also sufficient.

\subsection{Finalization after phase 1}
\label{sec:motivation_and_summary:finalization_after_phase_1}

\subsubsection{A necessary condition}
At first, we consider a special case where no queries are submitted by the prover during phase 1, which means that the adversary has not decided the output values of any configuration before the end of phase 1.  
A necessary condition for finalization after phase 1 is that no matter which configuration the prover chooses at the end of phase 1, the adversary can construct a partial protocol. In other words, for a task $(\mathcal{I}, \mathcal{O}, \Delta)$, for each simplex $U$ in $\mathcal{I}$, there exists a partial protocol with respect to $U$. 

What is the difference between a real protocol and a set of partial protocols? We know that there is no real wait-free protocol for the $(n, k)$-set agreement task. But as shown in \cite{Alistarh19}, there exists a set of partial protocols, each of which corresponds to a simplex in $\mathcal{I}$. A real protocol implies a set of partial protocols, but not vice versa. An important observation is that some configurations may be shared by the protocol complex with respect to $U$ and the protocol complex with respect to $U^{'} \neq U$. These configurations can be assigned different output values by different partial protocols, which is not allowed in a real protocol.

So, how can the adversary construct a partial protocol with respect to any simplex $U \in \mathcal{I}$? We now temporarily forget the existing assigned values and extension-based proofs. We will discuss a condition that characterizes the existence of a partial protocol with respect to each simplex $U$ in $\mathcal{I}$. We can use the asynchronous computability theorem to give this topological condition. Details are discussed in Section
\ifdefined\FULL
\ref{sec:preparations:general_tasks_with_partial_input_information} 
\else
5.1
\fi
in the extended paper. Let $U$ be a simplex representing the states of some processes in an initial configuration. 
Given a simplex $U$ in the input complex $\mathcal{I}$, for each facet of the subcomplex of $\mathcal{O}$ whose process ids are in $ids(U)$, we use it as a label to define a complex whose facets are the $n$-simplices in the output complex $\mathcal{O}$ that contain this facet. The output complex restricted to the simplex $U$, denoted by $\mathcal{O}_{U} = \{\mathcal{O}_{U}^{0}, \mathcal{O}_{U}^{1}, \cdots \mathcal{O}_{U}^{m}\}$, is defined as the disjoint union of the complexes whose labels are in $\Delta(U)$, if we associate each label with a unique index.

Intuitively, the existence of a partial protocol with respect to $U$ is equivalent to an assignment of output values to all configurations reached by schedules starting with processes in $ids(U)$ accessing the first IS object concurrently. We will show that we only have to consider the configurations reached by schedules starting with processes in $ids(U)$ running concurrently until termination (not just the first round). The output values of the processes in $ids(U)$ are represented by some facet of $\delta(U)$. The remaining part can be seen as a task of processes in $\Pi - ids(U)$ which has input complex $lk(U, \mathcal{I})$ and output complex $\mathcal{O}_{U}^{i}$ for some $i$.

\begin{theorem}[Theorem \ref{the:partial_info_theorem}]
    For a task $(\mathcal{I}, \mathcal{O}, \Delta)$, there exists a partial protocol with respect to any simplex in $\mathcal{I}$ if and only if for each simplex $U \in \mathcal{I}$, there exists $\mathcal{O}_{U}^{i}$ for some $0 \leq i \leq m$, a chromatic subdivision $\sigma$ of $lk(U, \mathcal{I})$ and a color-preserving simplicial map $\mu:\sigma(lk(U, \mathcal{I})) \rightarrow \mathcal{O}_{U}^{i}$,  such that for each simplex $S$ in $\sigma(lk(U, \mathcal{I}))$, $\mu(S) \in \Delta_{U}(carrier(S, lk(U, \mathcal{I})))$ where $\Delta_{U}$ carries a simplex $U^{'}$ $\subseteq$ $lk(U, \mathcal{I})$ to the complex $\Delta(U^{'} * U) \backslash U$, which is generated from $\Delta(U^{'} * U) $ by removing vertices whose ids are in $ids(U)$.
\end{theorem}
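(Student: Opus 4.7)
The plan is to reduce the equivalence, for each $U \in \mathcal{I}$ separately, to an application of the asynchronous computability theorem on a \emph{residual task}. Intuitively, a partial protocol with respect to $U$ first has the processes in $ids(U)$ run concurrently in repeated rounds until they all terminate, committing to outputs forming a facet $F_U \in \Delta(U)$ with $ids(F_U) = ids(U)$. This $F_U$ labels one component $\mathcal{O}_U^i$. Afterwards, the processes in $\Pi \setminus ids(U)$ face an ordinary wait-free task whose input complex is $lk(U, \mathcal{I})$, output complex is $\mathcal{O}_U^i$, and task specification is $\Delta_U$. The theorem then follows by applying the asynchronous computability theorem in both directions to this residual task.

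For the forward direction ($\Rightarrow$), given $\delta_U$, I consider the ``solo-burst'' execution in which only processes in $ids(U)$ take steps, always concurrently, until every one of them has terminated. Wait-freedom of $\delta_U$ makes this execution finite; since the scheduler is free to subsequently crash every process outside $ids(U)$, the output simplex $F_U$ produced by this burst must already satisfy $F_U \in \Delta(U)$, hence determines a label and a component $\mathcal{O}_U^i$. Now look at all executions that extend this solo-burst by scheduling the remaining processes freely. Because the $ids(U)$-vertices of the configurations are fixed once $ids(U)$ terminates, the resulting protocol complex on $\Pi \setminus ids(U)$ is an iterated standard chromatic subdivision of $lk(U, \mathcal{I})$, and the restriction of $\delta_U$ (projected to non-$ids(U)$ vertices) is a wait-free solution to $(lk(U, \mathcal{I}), \mathcal{O}_U^i, \Delta_U)$. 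The asynchronous computability theorem then supplies the desired $\sigma$ and $\mu$.

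For the backward direction ($\Leftarrow$), the asynchronous computability theorem applied to $(lk(U, \mathcal{I}), \mathcal{O}_U^i, \Delta_U)$ produces a wait-free NIIS protocol $\delta'$ for $\Pi \setminus ids(U)$ (realized by iterating the standard chromatic subdivision sufficiently many times to refine $\sigma$, then composing with $\mu$). I assemble $\delta_U$ in two layers: each $p_j \in ids(U)$ decides its designated vertex of $F_U$ as soon as it has observed all of $U$, which occurs after the mandated first concurrent writeread; each $p_k \in \Pi \setminus ids(U)$ runs $\delta'$, using the portion of its snapshots with ids outside $ids(U)$ as input. Wait-freedom and validity then follow: $F_U \in \Delta(U)$ handles the case where remaining processes crash, and the hypothesis $\mu(S) \in \Delta_U(carrier(S, lk(U,\mathcal{I})))$ combined with Property~\ref{pro:colorless_property} ensures $\mu(S) \cup F_U \in \Delta(carrier(S, lk(U,\mathcal{I})) * U)$ for every terminal simplex reached.

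The main obstacle will be handling schedules in which steps of $ids(U)$ and $\Pi \setminus ids(U)$ are interleaved rather than cleanly separated into the clean ``burst-then-others'' pattern used above. Property~\ref{pro:colorless_property} is crucial here: it lets processes in $ids(U)$ commit to the $F_U$-values on the basis of their small view (only $U$), knowing that these values remain valid when combined with whatever the later-scheduled processes output, and conversely it lets the remaining processes treat the $ids(U)$-entries in their snapshots as part of a fixed initial view regardless of when those entries appeared. A secondary technicality is matching the abstract chromatic subdivision $\sigma$ from the asynchronous computability theorem to an iterated non-uniform chromatic subdivision realized by an actual NIIS execution; this follows from the standard realization argument in~\cite{Herlihy99}.
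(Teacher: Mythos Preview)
Your proposal is correct and follows the paper's approach exactly: reduce the question to solvability of the residual task $(lk(U,\mathcal{I}),\,\mathcal{O}_U^i\setminus U,\,\Delta_U)$ via the solo-burst execution and the layered construction (this is the paper's Lemma~\ref{the:partial_info_lemma}), and then invoke the asynchronous computability theorem. The only difference is that you invoke Property~\ref{pro:colorless_property} for the sufficiency direction whereas the paper's proof of Lemma~\ref{the:partial_info_lemma} does not, and the paper later remarks (Section~\ref{sec:nccondition:finalization_after_phase_1}) that the colorless assumption is unused up to that point; your caution here is harmless, but you may wish to double-check whether the appeal is strictly necessary.
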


\subsubsection{An enhanced necessary condition}
The existence of a set of partial protocols is a necessary (but not necessarily sufficient) condition for finalization after phase 1. Note that processes may not terminate after the same round in such a partial protocol.  However, we can construct a new set of partial protocols by postponing the termination of processes that terminate earlier. We give an explicit construction in Section \ref{sec:preparations:canonical_neighbor} to ensure that all these partial protocols terminate after the same round $r_{m}$ \label{syb:r_m}. In this case, the union of $\mathbb{F}_{r_{m}}(U)$ for all $U \in \mathcal{I}$ is equal to $Ch^{r_{m}}(\mathcal{I})$. The standard chromatic subdivision has some good properties that we will use to build our adversarial strategy.

Now, we return to extension-based proofs. A set of partial protocols, each of which corresponds to a simplex $U$ in $\mathcal{I}$, is not enough. In the protocol complex of a partial protocol, the output values of some configurations may already be determined during the interaction of phase 1. For two simplices $U_{1}$ and $U_{2}$ in $\mathcal{I}$ and each simplex $S$ in $\mathbb{F}_{1}(U_{1}) \cap \mathbb{F}_{1}(U_{2})$, we consider the configuration,  denoted by $CEN(S)$ \label{syb:CEN_S}, reached from $S$ via a schedule that repeats the set of processes $ids(S)$ until all processes in $ids(S)$ terminate. We say that two partial protocols $\delta_{U_{1}}$ and $\delta_{U_{2}}$ are {\em compatible} if the output values of $CEN(S)$ are the same under $\delta_{U_1}$ and $\delta_{U_2}$ for each possible $S$. A set of partial protocols is {\em compatible} if any two partial protocols are compatible. An enhanced necessary condition for finalization after phase 1 is that the set of partial protocols is compatible. This can be proved by having the prover submit a chain of queries to ask the output values of $CEN(S)$ in phase 1. At the end of phase 1, the prover can choose $\alpha(2)$ to be $U_{1}$ or $U_{2}$, which means that the partial protocol with respect to $U_{1}$ or $U_{2}$, constructed by the adversary at the end of phase 1, has to assign the same output values to $CEN(S)$. An assumption here is that task $(\mathcal{I}, \mathcal{O}, \Delta)$ has finite initial configurations.

The high-level idea of our argument is as follows: we only require that the output value of {\em one} configuration ($CEN(S)$ in our construction) reached from $S$ is consistent with different partial protocols. A real protocol that solves the task can also be divided into a set of partial protocols. But in that case {\em every} configuration reached from $S$ should be consistent. Details are given in Section
\ifdefined\FULL
\ref{sec:nccondition:glue_protocols}
\else
6.1
\fi in the extended paper.

\subsubsection{The enhanced condition is also sufficient}
\label{sec:motivation_and_summary:finalization_after_phase_1:enhanced_condition_is_sufficient}

Now we prove that a task $(\mathcal{I}, \mathcal{O}, \Delta)$ has a set of compatible partial protocols $\{\delta_{U} | {U} \in \mathcal{I}\}$ then the adversary can always finalize after phase 1. We say that a partial protocol {\em has a label $U$} if it is a partial protocol with respect to $U$. Note that each partial protocol is an NIIS protocol in which processes may terminate after accessing different numbers of IS objects. We can assume without loss of generality that all processes terminate after the same round $r_{m}$ in all partial protocols (by postponing the termination of processes that terminate earlier). 

We redesign the adversarial strategy in \cite{Alistarh19}. So similarly, we use an infinite sequence of complexes $S^{0}, S^{1} ...$ \label{syb:complexes} and an integer $t$ (current complex) \label{syb:t} to represent the adaptive protocol, in which $S^{0} = \mathcal{I}$. Once the adversary has defined $\delta$ for each vertex in $S^{t}$, it performs a non-uniform chromatic subdivision of $S^{t}$ and constructs $S^{t+1} = \chi(S^{t}, \delta)$. After that, the adversary increments $t$.

Our adversary maintains several invariants in the interaction with an extension-based prover, which generalizes the invariants in \cite{Alistarh19}.
The first invariant is kept: For each $0 \leq r < t$ and each vertex $v \in S^{r}$ , $\delta(v)$ is defined. If $v$ is a vertex in $S^{t}$, then $\delta(v)$ is undefined or $\delta(v) \neq \perp$. If a vertex $s$ represents the state of a process in a configuration that has been reached by the prover, then $\delta(v)$ is defined. The second invariant is about the safety of the adaptive protocol. The output values defined by the adaptive protocol will not violate the task specification. To achieve this, the adversary defines the $\delta$ values using the output values obtained from the set of partial protocols. The third invariant is about configurations whose output values are obtained from different partial protocols. Since we now have a set of compatible partial protocols, our third invariant requires that the active distance between a configuration terminated with output values given by $\delta_{U_{1}}$ and a configuration terminated with output values given by $\delta_{U_{2}}$, where $U_{2} \neq U_{1}$ is at least 3. Here, the active distance is defined as the minimal number of edges between non-terminated vertices in any path from a subgraph to another subgraph. 

For a task $(\mathcal{I}, \mathcal{O}, \Delta)$, what we have is just a compatible set of partial protocols. Before any query submitted by the prover, the adversary sets $\delta(v) = \perp$ for each vertex v in the complexes $S^{0}, S^{1} ... S^{r_{m}}$, where $S^{i} = Ch(S^{i - 1})$ for each $i \leq r_{m}$ (note that we use uniform chromatic subdivision here, and have to use the non-uniform chromatic subdivision in later complexes). Suppose that the invariants are satisfied immediately prior to a query $(C, P)$ by the prover, where $C$ is a configuration previously reached by the prover and $P$ is a set of active processes in $C$
poised to access the same immediate snapshot object. There is an integer $0 \leq r \leq t$ such that the states of the processes in $P$ correspond to a simplex in $S^{r}$ by the first invariant. Since each process in $P$ is still active, $\delta(v) = \perp$ for each vertex $v$ that represents the state of a process in $C$, which means $0 \leq r \leq t - 1$ by the first invariant.  If $0 \leq r < t - 1$, then $\delta$ is already defined for the resulting configuration according to the first invariant. If $r = t - 1$, the adversary may have to decide the $\delta$ values of the vertices in the resulting configuration. If $\delta(v)$ is set to $\perp$ for each vertex $v$ in the resulting configuration, then the invariant (1) is maintained. But if $\delta(v)$ is set to a value that is not $\perp$ for some vertex $v$ in the resulting configuration, the adversary sets $S^{t + 1} = \chi(S^{t}, \delta)$ and increments $t$, in order to maintain the invariant (1).

The remaining discussions are about how to decide $\delta$ for a vertex in $S^{r}$ where $r > r_{m}$ while maintaining the second and third invariants. We start with a simple strategy: Each terminated vertex has a simplex $U$ of $\mathcal{I}$ as its label, indicating which partial protocol its $\delta$ value is from. If the adversary has to define $\delta$ for a {\em terminated} vertex $v$ in the complex $S^{r}$ where $r > r_{m}$, where $v$ is reached from some $n$-simplex $s^{n}$ in $S^{r_{m}}$, the adversary can use the value of $\delta_{U}(v^{'})$ where $s^{n} \in \mathbb{F}_{r_{m}}(U)$ and $v^{'}$ is the vertex of $s^{n}$ with the same process id as $v$. The output value of $v$ is fully determined by the simplex $U$. We say that the vertex $v$ has a possible label $U$. Note that some vertex may have multiple possible labels. Suppose that the adversary is going to decide on the assigned value for a vertex $v$ in $S^{t}$ ($t$ is the current complex). If $v$ has a possible label $U$ and the active distance between $v$ and any configuration terminated with a different label is at least 3, the adversary terminates $v$ with the label $U$. Otherwise, the adversary sets $\delta(v) = \perp$. 

A problem with our first strategy is that the prover can construct an infinite execution. Let $s^{k}$ be a shared simplex of a $n$-simplex $s^{n}_{1}$ of $\mathbb{F}_{1}(U_{1})$ and an $n$-simplex $s^{n}_{2}$ of $\mathbb{F}_{1}(U_{2})$. The prover can submit a chain of queries, as a result of which all processes in $ids(s^{k})$ terminate in some configuration reached from $s^{k}$ via a $ids(s^{k})$-only schedule $\beta$. If this configuration is terminated with the label $U_{1}$, consider the vertex that represents the state of some process $p \notin ids(s^{k})$ in the configuration reached from $s^{n}_{2}$ by $\beta \cdot \{p\}\cdot \{p\} \cdots$. This vertex will not be terminated by the adversary using our first strategy. So a special rule is needed to assign an output value to this type of vertex to avoid infinite executions.

We introduce the following rule: If a vertex $v$ in $\mathbb{F}_{t}(U)$, but not in $\mathbb{F}_{t}(U^{'})$, is adjacent to a terminated vertex with the label $U^{'}$, the adversary still terminates $v$ with the label $U^{'}$. But since $\delta_{U^{'}}$ is not defined directly for $v$, how can the adversary choose the output value of $v$? We use a canonical neighbor, which is the key concept of this paper, to handle this problem. If an $n$-simplex $s^{n}$ in $Ch^{r_{m}}(\mathcal{I})$ is not in $\mathbb{F}_{r_{m}}(U^{'})$, but shares a simplex $s_{s}$ with $\mathbb{F}_{r_{m}}(U^{'})$, we define an $n$-simplex in $\mathbb{F}_{r_{m}}(U^{'})$ as the canonical neighbor of $s^{n}$ with the label $U^{'}$, denoted by $N(s^{n}, U^{'})$ \label{syb:canonical_neigbhor}. The {\em canonical neighbor} $N(s^{n}, U^{'})$ of a simplex $s^{n}$ with the label $U^{'}$ is defined to have four requirements:

\begin{itemize}
    \item [1)] $N(s^{n}, U^{'})$ is in $\mathbb{F}_{r_{m}}(U^{'})$.
    \item [2)] $N(s^{n}, U^{'})$ contains the simplex $s_{s}$. (This is why we call it a neighbor.) 
    \item [3)] For each vertex $v \in s^{n} - s_{s}$, $carrier(v, \mathcal{I}) = carrier(v^{'}, \mathcal{I})$, where $v^{'}$ is the vertex with the same process id as $v$ in $N(s^{n}, U^{'})$.
    \item [4)] If $s^{n}_{1}$ and $s^{n}_{2}$ are $n$-simplices in $Ch^{r_{m}}(\mathcal{I})$ that have canonical neighbors with label $U^{'}$ (Note that $s^{n}_{1}$ and $s^{n}_{2}$ are not necessarily in the same $\mathbb{F}_{r_{m}}(U)$.), and share a vertex with the process id $p$, then $N(s^{n}_{1}, U^{'})$ and $N(s^{n}_{2}, U^{'})$ share a vertex with the process id $p$.
\end{itemize}

Suppose that a vertex $v$ is reached from some $n$-simplex $s^{n}$ in $S^{r_{m}}$. When the vertex $v$ is terminated with the label $U^{'}$ and $U^{'}$ is not a possible label of $v$, the adversary sets $\delta(v)$ as $\delta_{U^{'}}(v^{'})$ where $v^{'}$ is the vertex in $N(s^{n}, U^{'})$ having the same process id as $v$. The output values given by this special rule will not cause a safety problem, since $v^{'}$ and $v$ have the same carrier in $\mathcal{I}$. In other words, $v^{'}$ and $v$ see the same set of input values, which means that an output value for one vertex is a valid output value for the other vertex. The fourth requirement of canonical neighbors ensures that there is no conflict of the value assigned to some vertex $v$ when $v$ can be reached from multiple n-simplices in $Ch^{r_{m}}(\mathcal{I})$. We provide an algorithm for calculating canonical neighbors in Section
\ifdefined\FULL
\ref{sec:preparations:canonical_neighbor}
\else
5.2
\fi of the extended paper. An example, where $r_{m} = 2$, is given in Figure \ref{img:motivations_and_summary:canonical_neighbor}.

\begin{figure}[h]
  \centering
  \includegraphics[width=\linewidth]{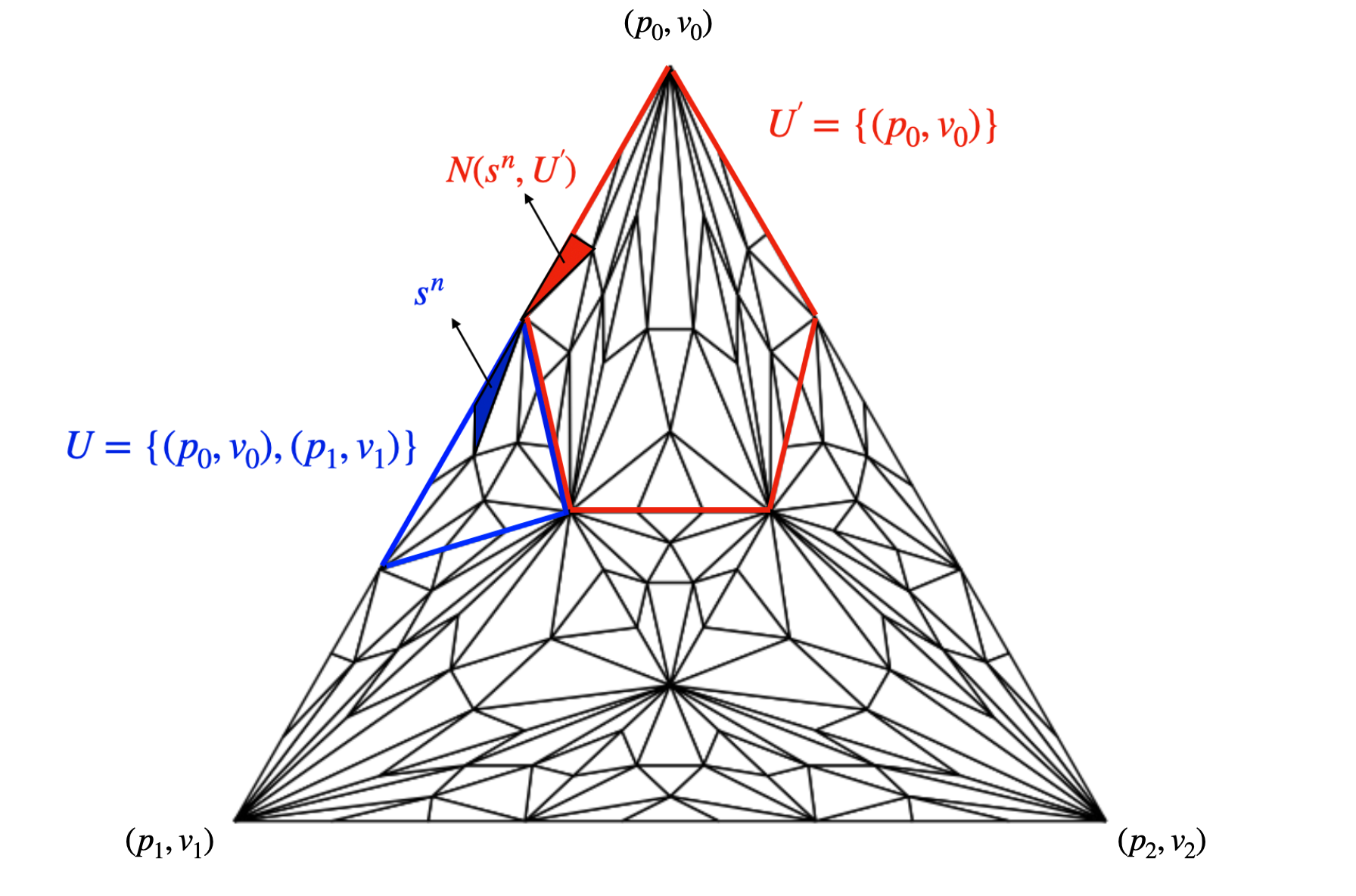}
  \caption{An example of our implementation of canonical neighbors when $r_{m} = 2$}
  \label{img:motivations_and_summary:canonical_neighbor}
\end{figure}


We emphasize that {\em not every} simplex in $Ch^{r_{m}}(\mathcal{I})$ has a canonical neighbor with the label $U^{'}$. Only those simplices that intersect with $\mathbb{F}_{r_{m}}(U^{'})$ have canonical neighbors with the label $U^{'}$. This means that we have to impose a restriction on the usage of the special rule that defines the $\delta$ value of a vertex using the canonical neighbor: If the adversary uses this rule in $S^{t}$, then it can use this rule after $r_{s}$ \label{syb:r_s} rounds of non-uniform chromatic subdivision of $S^{t}$. 

The result of this restriction is that the adversary will not terminate a vertex in the subdivision of an $n$-simplex in $Ch^{r_{m}}(\mathcal{I})$ that has no canonical neighbor with the label $U^{'}$, with the label $U^{'}$. 
A proof of why this restriction works is given in Section
\ifdefined\FULL
\ref{sec:nccondition:adversarial_strategy}
\else
6.2
\fi of the extended paper. 

Using canonical neighbors and a restriction on the special rule, the adversary sets $\delta$ values for some vertices in $S^{t}$ while leaving other vertices undefined. Any prover cannot find a violation of the task specification. And the prover cannot construct an infinite execution in phase 1, as proven in Section \ref{sec:nccondition:prover_not_win_in_first_phase}. Therefore, the prover cannot win in phase 1.

When the prover chooses a configuration $C^{'}$ that is reached from some initial configuration $C$ at the end of phase 1, the schedule $\alpha(2)$ that can be assumed to contain only one set of processes without loss of generality. Let $U$ be a simplex in $\mathcal{I}$ such that $ids(U)$ is equal to the set of processes in $\alpha(2)$ and $val(U)$ are obtained from the input values of these processes in the configuration $C$. We now present our adversarial strategy that constructs a partial protocol with respect to $U$. Note that this constructed partial protocol is different from $\delta_{U}$ since in phase 1, the adversary defines the $\delta$ values of some configurations with output values given by $\delta_{U^{'}}$ in the subdivision of $n$-simplices in $\mathbb{F}_{r_{m}}(U)$ that have an intersection with $\mathbb{F}_{r_{m}}(U^{'})$. How can the adversary define the $\delta$ values of the remaining undefined vertices in $\mathbb{F}_{t}(U)$? $\delta_{U}(s^{n})$ and $\delta_{U^{'}}(N(s^{n}, U^{'}))$ do not necessarily have the same output value for even one process. But we know that $\delta_{U}$ and $\delta_{U^{'}}$ are assumed to be compatible, i.e. the output of a simplex $CEN(S)$ in $\mathbb{F}_{r_{m}}(U) \cap \mathbb{F}_{r_{m}}(U^{'})$ is identical using $\delta_{U}$ or $\delta_{U^{'}}$. 

In order to give a motivation behind our adversarial strategy, we consider a simple case where a set $S$ of vertices terminated with the label $U^{'}$ is in the subdivision of a $n$-simplex $s^{n} \in \mathbb{F}_{r_{m}}(U)$. We say that a vertex of $S$ is in the outermost layer of $S$ if it is adjacent to some undefined vertices. The adversary will terminate all the vertices adjacent to the outermost layer using some output values, after which these newly terminated vertices become the outermost layer of $S$. The adversary will repeat this addition many times until the output values of the outermost layer of $S$ are $\delta_{U}(s^{n})$. Due to the restriction that we have imposed on the special rule, $s^{n}$ has an intersection $s_{s}$ with $\mathbb{F}_{r_{m}}(U^{'})$. The adversary can construct a $dim(CEN(S))$-dimensional path $P_{1} = s_{0}, s_{1} \cdots s_{e}$ in $\mathbb{F}_{r_{m}}(U)$ from $s^{n}$ to $CEN(S)$ for some simplex $S \in \mathbb{F}_{1}(U) \cap \mathbb{F}_{1}(U^{'})$, where $s_{0}$ is some subsimplex of $s^{n}$ and $s_{e}$ is $CEN(S)$. We can prove that, in the configuration represented by $s^{n}$, the processes in $ids(s_{0})$ have the minimum carrier in $\mathcal{I}$ and each process has the same carrier in $\mathcal{I}$ in the configurations representing $s_{0}, s_{1} \cdots s_{e}$. The simplex $S \in \mathbb{F}_{1}(U)$ is chosen according to some rule using $s^{n}$ as an argument. The adversary then constructs the same path $P_{2} = s_{e}, s_{e - 1} \cdots s_{0}$ from $CEN(S)$ to $s^{n}$. The details are given in Section \ref{sec:nccondition:finalization_after_phase_1}. Note that the dimension of $s^{n}$ is $n$, while the dimension of $CEN(S)$ is less than $n$. 

More formally, the adversary {\em adds a layer of $k$-dimensional output} $\{(p_{0}, o_{0}), (p_{1}, o_{1}) \cdots (p_{k}, o_{k})\}$ to $S$ in some $n$-simplices of $Ch^{r_{m}}(\mathcal{I})$ (Note that these $n$-simplices are in the complex $S^{r_{m}}$ rather than $S^{t}$.) if given a vertex $v$ (in $S^{t}$) of the process id $p$ adjacent to $S$ in the subdivision of these $n$-simplices, the adversary sets $\delta(v) = o_{i}$ if $p = p_{i}$ for some $i$. Otherwise, the adversary sets $\delta(v)$ as $o_{i}$ where $i$ is chosen according to some rule. After the adversary adds a layer of output $\{(p_{0}, o_{0}), (p_{1}, o_{1}) \cdots (p_{k}, o_{k})\}$ to $S$, the outermost layer of $S$ in the subdivision of these $n$-simplices consists of the output values in $\{o_{0}, o_{1} \cdots o_{k}\}$. 
The adversary will subdivide the complex $S^{t}$ enough times and then add layers of output values $\delta_{U^{'}}(s_{0}), \delta_{U^{'}}(s_{1}) \cdots \delta_{U^{'}}(s_{e}) = \delta_{U}(s_{e}), \delta_{U}(s_{e - 1}) \cdots \delta_{U}(s_{0})$ to the set $S$. The colorless condition is used here since the dimension of $s_{i}$ is less than $n$. Finally, the output values of the outermost layer of $S$ consist of the values in $\delta_{U}(s^{n})$. The adversary simply defines the $\delta$ value of each undefined vertex $v$ in the subdivision of $s^{n}$ using $\delta_{U}(v^{'})$, where $v^{'}$ is the vertex of $s^{n}$ having the same process id as $v$. So the adversary has constructed a partial protocol with respect to $U$. 

We give an example of the path $s_{0}, s_{1} \cdots s_{e}$ in Figure \ref{img:motivations_and_summary:example}, where $r_{m} = 3$. Let $U = \{(p_{0}, v_{0})\}$, and $U^{'} = \{(p_{0}, v_{0}), (p_{1}, v_{1}), (p_{2}, v_{2})\}$. Suppose that a set of vertices in $S^{t}$ is contained in the subdivision of a 2-simplex $s^{2} \in \mathbb{F}_{3}(U)$, then the adversary chooses $S$ as the 1-simplex in $\mathbb{F}_{1}(U) \cap \mathbb{F}_{1}(U^{'})$. There exists a path $s_{0}, s_{1} \cdots s_{e}$ in the subdivision of $S$, where $s_{0}$ is a subsimplex of $s^{2}$ and $s_{3}$ is equal to $CEN(S)$. 

Note that in the above discussion we discuss a simple case. Generally, a set $S$ terminated with the label $U^{'}$ may be in the subdivision of multiple $n$-simplices in $\mathbb{F}_{r_{m}}(U)$. A general discussion is given in Section \ref{sec:nccondition:finalization_after_phase_1}.

\begin{figure}[h]
  \centering
  \includegraphics[width=\linewidth]{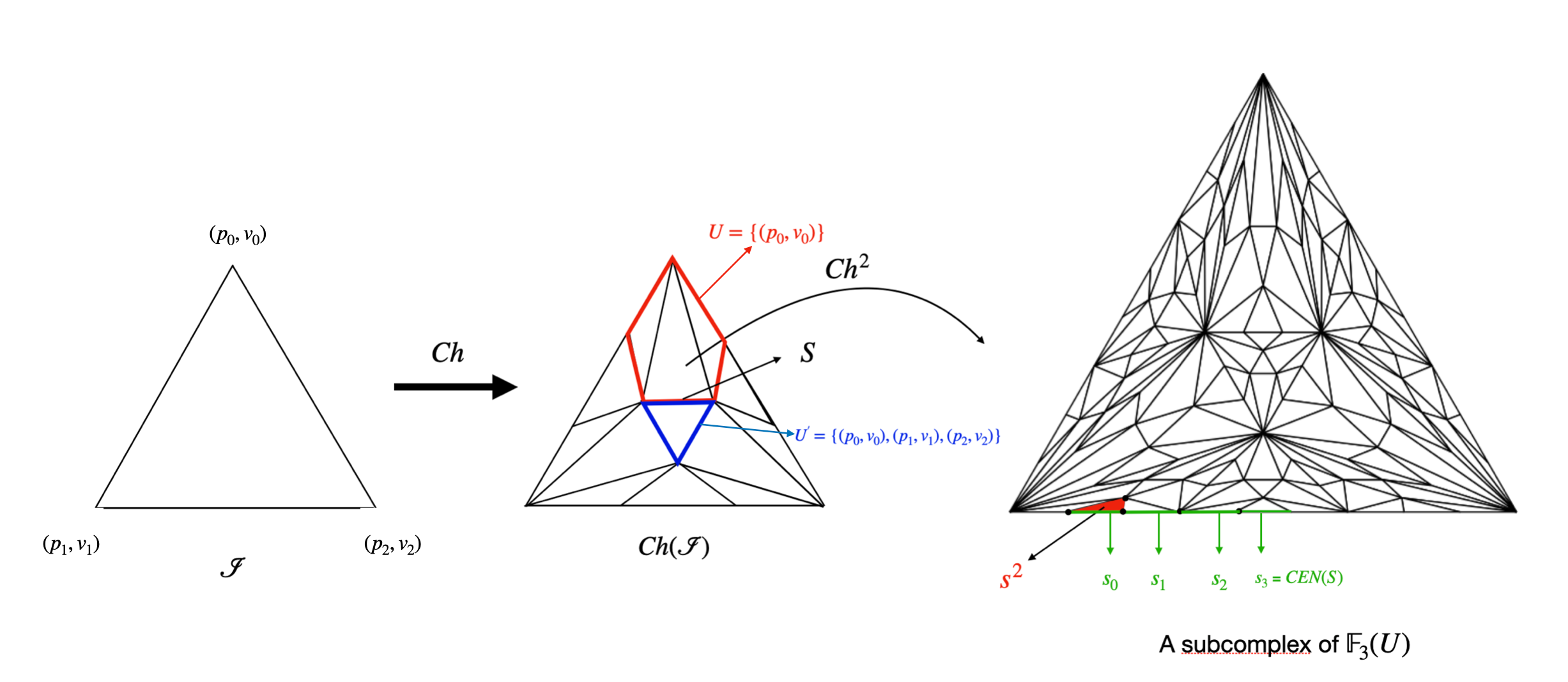}
  \caption{An example of $s_{0}, s_{1} \cdots s_{e}$ when $r_{m} = 3$}
  \label{img:motivations_and_summary:example}
\end{figure}

\begin{theorem}[Theorem \ref{the:adversary_finalize_after_the_first_round_theorem}]
For a colorless task $(\mathcal{I}, \mathcal{O}, \Delta)$, there exists an adversary that can finalize after the first round to win against any restricted extension-based prover if and only if there exists a compatible set of partial protocols, each of which corresponds to a simplex $U \in \mathcal{I}$.
\end{theorem}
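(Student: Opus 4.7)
This direction is almost immediate from the definition of finalization together with a single prover strategy. For each $U\in\mathcal{I}$, consider the prover that submits no queries in phase $1$ and simply chooses $\alpha(2)$ so that $ids(U)$ is the first set to take a step with the input values $vals(U)$. If the adversary finalizes after phase $1$, it must produce a partial protocol with respect to $U$; call it $\delta_U$. To see compatibility, fix $U_1,U_2\in\mathcal{I}$ and any $S\in \mathbb{F}_1(U_1)\cap\mathbb{F}_1(U_2)$. Since $\mathcal{I}$ has only finitely many initial configurations, a prover can, in phase $1$, submit a finite chain of queries that repeatedly schedules $ids(S)$ starting from $S$ until every process in $ids(S)$ terminates, thus forcing the adversary to commit to the output value of $CEN(S)$ during phase $1$. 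Because the prover may then choose either $U_1$ or $U_2$ as $\alpha(2)$, the two partial protocols $\delta_{U_1}$ and $\delta_{U_2}$ that the adversary must be able to produce at the end of phase $1$ are both forced to agree with this already-committed value at $CEN(S)$, which is precisely the definition of compatibility.

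\textbf{Sufficiency, construction.} Assume a compatible set $\{\delta_U\}_{U\in\mathcal{I}}$. Using the postponement construction referred to in Section~5.2, we may assume every $\delta_U$ terminates after the same round $r_m$, so that $\bigcup_{U\in\mathcal{I}}\mathbb{F}_{r_m}(U)=Ch^{r_m}(\mathcal{I})$. The adversary maintains the sequence $S^0,S^1,\ldots$ with $S^0=\mathcal{I}$ and $S^{t+1}=\chi(S^t,\delta)$, together with the three invariants listed in Section~\ref{sec:motivation_and_summary:finalization_after_phase_1:enhanced_condition_is_sufficient}: every vertex of $S^r$ for $r<t$ has $\delta$ defined; reached vertices are defined; safety; and active distance at least three between terminated regions with different labels. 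Each terminated vertex carries a label $U\in\mathcal{I}$ recording which $\delta_U$ supplied its value; a vertex $v$ in the subdivision of $s^n\in\mathbb{F}_{r_m}(U)$ has $U$ as a possible label. When the adversary is forced to terminate $v$ with a non-possible label $U'$, it resorts to the canonical neighbor $N(s^n,U')$ and sets $\delta(v):=\delta_{U'}(v')$ where $v'$ is the vertex of $N(s^n,U')$ of the same process id. The four requirements on canonical neighbors ensure this is safe (same carrier in $\mathcal{I}$) and globally consistent when $v$ lies in multiple subdivisions. To keep the spread of this special rule under control, the adversary imposes that between two uses of the rule at least $r_s$ rounds of subdivision must elapse.

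\textbf{Sufficiency, finalization.} When the prover terminates phase $1$ by picking a configuration that fixes $\alpha(2)$ corresponding to a simplex $U\in\mathcal{I}$, the adversary must extend its partial assignment to a full partial protocol with respect to $U$. The difficulty is that subdivisions of $n$-simplices $s^n\in\mathbb{F}_{r_m}(U)$ can already contain foreign terminated regions with labels $U'\neq U$ whose values come from $\delta_{U'}$ rather than $\delta_U$. For each such region $S$ the adversary selects, via the chosen-by-rule procedure, a simplex $T\in\mathbb{F}_1(U)\cap\mathbb{F}_1(U')$ together with a $\dim(CEN(T))$-dimensional path $s_0,s_1,\ldots,s_e$ in $\mathbb{F}_{r_m}(U)$ where $s_0$ is a subsimplex of $s^n$ and $s_e=CEN(T)$, arranged so that each process in $ids(s_0)$ has the same carrier in $\mathcal{I}$ along the entire path. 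After subdividing sufficiently many times, the adversary adds successive layers of output values $\delta_{U'}(s_0),\delta_{U'}(s_1),\ldots,\delta_{U'}(s_e)$, invokes compatibility at $CEN(T)$ to switch books, and then adds layers $\delta_U(s_e),\delta_U(s_{e-1}),\ldots,\delta_U(s_0)$ so that the outermost layer carries exactly the values prescribed by $\delta_U(s^n)$. The remaining undefined vertices in the subdivision of $s^n$ are then populated directly from $\delta_U$.

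\textbf{Main obstacle.} The bulk of the work lies in showing that the layer-adding procedure can be executed simultaneously for many foreign regions without creating label conflicts, while preserving the three invariants and never violating $\Delta$. The colorless property of the task is essential: every intermediate $s_i$ has dimension strictly smaller than $n$, so its value under $\delta_{U'}$ or $\delta_U$ remains a valid output on any face of $s^n$ seeing the same inputs, which is what legitimises the layer-by-layer extension. Complementing this, one must verify that during phase $1$ itself no infinite chain of queries can be built; this is where the $r_s$-spacing restriction on the canonical-neighbor rule pays off and is the content to be established in Section~\ref{sec:nccondition:prover_not_win_in_first_phase}.
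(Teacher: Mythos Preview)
Your proposal faithfully mirrors the paper's approach in both directions: necessity via forcing the adversary to commit to $CEN(S)$ values before the end of phase~1, and sufficiency via the canonical-neighbor machinery with the $r_s$-spacing restriction followed by the layer-adding path through $CEN(T)$, correctly flagging the multi-region finalization (the paper's four-stage procedure in Section~\ref{sec:nccondition:finalization_after_phase_1}) and the no-infinite-chain lemma as the technical core. One small clarification on necessity: the $\delta_U$'s you extract from the \emph{no-query} prover need not agree at $CEN(S)$ with the partial protocols arising from the \emph{CEN-querying} prover, so to obtain a single compatible family you should have one prover first query all finitely many $CEN(S)$ and only then choose $U$; this is how the paper argues it (somewhat implicitly) in Section~\ref{sec:nccondition:glue_protocols}.
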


\subsection{Different versions of extension-based proofs are equivalent in power}
Recall that there are multiple definitions of extension-based proofs according to the types of queries allowed during the interaction. We have consider the weakest version of extension-based proofs. Now we show that the assignment queries,  which are allowed in the strongest definition of extension-based proofs, will not give a prover more power. 

Since the adversary adds some layers to some set of vertices at the end of phase 1, the adversary has to consider these output values in its response to an assignment query. In a subdivision of an $n$-simplex $s^{n}$ in $Ch^{r_{m}}(\mathcal{I})$, the output value of a set of vertices is $\delta_{U}$ or $\delta_{U^{'}}(N(s^{n}, U^{'}))$ during phase 1. The adversary uses some other output values(i.e. the output values of $\delta_{U^{'}}(s_{1}), \delta_{U^{'}}(s_{2}) \cdots$) after the end of phase 1 for each other label $U^{'}$ to fill the gap between a configuration with output $\delta_{U^{'}}(N(s^{n}, U^{'}))$ and a configuration with output $\delta_{U}(s^{n})$.

When an assignment query is submitted to the adversary asking whether there is a configuration whose output values are those introduced after the end of phase 1. A key observation is that any such output value is $\delta_{U^{'}}(s_{i})$ or $\delta_{U^{'}}(s_{i})$, where $s_{i}$  are connected to a subsimplex $s_{0}$ of $s^{n}$.  The adversary starts with a configuration whose output values are $\delta_{U^{'}}(N(s^{n}, U^{'}))$. Then it adds layers of output values, just as it does after the end of phase 1 such that a configuration is assigned with the queried output value. All terminated vertices involved in these newly added layers are assigned the label $U^{'}$. The adversary returns this configuration that is terminated with required output value.

However, the correctness of our adversarial strategy is based on the condition that any vertex assigned with the label $U^{'}$ uses the output value of $\delta_{U^{'}}(N(s^{n}, U^{'}))$. The adversary will "hide" the newly added layers by adding these layers in reverse order, so that the outermost layer consists of output values of $\delta_{U^{'}}(N(s^{n}, U^{'}))$. Therefore, assignment queries can be integrated into our adversarial strategy, which means that different types of extension-based prover are equivalent in power. The details are given in Section \ref{sec:nccondition:augmented_extension_based_proofs}.

\begin{theorem}[Theorem \ref{the:assignment_queries_no_more_power}]
Assignment queries do not give an extension-based prover more power in its interaction with a protocol.
\end{theorem}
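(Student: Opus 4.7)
The plan is to reduce augmented extension-based proofs to the restricted setting already handled by the adversarial strategy, by showing that every assignment query can be simulated on the fly without altering the externally-visible evolution of the adaptive protocol. For a query $(C,P,f)$, if every vertex reachable from $C$ via a $P$-only schedule already has $\delta$ defined, the adversary simply scans the existing values and either returns a witnessing schedule or NULL. The substantive case is when $f$ demands output values that could only have been produced by the ``gap-filling'' layers $\delta_{U'}(s_0),\delta_{U'}(s_1),\ldots,\delta_{U}(s_0)$ that the adversary introduces during finalization of phase 1. By the structural observation recorded just before the theorem, every such value equals $\delta_{U'}(s_i)$ or $\delta_{U}(s_i)$ along one of the paths used to bridge $\delta_{U'}(N(s^n,U'))$ with $\delta_{U}(s^n)$ in the subdivision of some $n$-simplex $s^n \in \mathbb{F}_{r_m}(U)$.

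First, I would have the adversary locate a region in $S^t$ that lies in the subdivision of such an $s^n$ and is already terminated with label $U'$ and output values $\delta_{U'}(N(s^n,U'))$; call this the canonical boundary. The adversary then runs the same layer-adding procedure used in finalization, starting from the canonical boundary and inserting exactly the prefix of $\delta_{U'}(s_0),\delta_{U'}(s_1),\ldots$ needed to realize $f$ at some configuration reachable from $C$ by a $P$-only schedule. That schedule is returned as the answer to the query. If no simulation can realize $f$ (precisely when $f$ matches neither an already-defined value nor any of the $\delta_{U'}(s_i)$ or $\delta_{U}(s_i)$ available along the admissible paths) the adversary returns NULL.

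The hard part is to avoid corrupting the three invariants that drive the restricted-case analysis, and in particular the active-distance-at-least-$3$ invariant between regions carrying distinct labels: naively added intermediate layers would become visible to later writeread queries and either collide with a neighboring $U''$-labeled region or let the prover string together an infinite execution. The remedy, anticipated in the paragraph preceding the theorem, is to add the simulated layers in reverse order so that the \emph{outermost} layer of the modified region still displays exactly $\delta_{U'}(N(s^n,U'))$ while the requested values $\delta_{U'}(s_i)$ are buried inside. With this hiding convention the externally visible labeling of $S^t$ is unchanged from what the restricted adversary would produce, and all subsequent queries see the same complex they would have seen had the assignment query never been asked.

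Finally, I would verify the invariants in order. Safety is immediate because every newly assigned value is some $\delta_{U'}(s_i)$ taken from the compatible set of partial protocols and, by Property \ref{pro:colorless_property}, is a valid output for the carrier of the corresponding vertex. The distance-$3$ invariant is preserved because the outer boundaries of every $U'$-labeled region are identical to those produced by the restricted strategy. The phase-1 termination arguments from Section \ref{sec:nccondition:prover_not_win_in_first_phase} transfer verbatim, since the adaptive protocol evolves exactly as before and a NULL response introduces no new terminated vertices. Consequently, a prover armed with assignment queries can extract no more information than a restricted prover, which proves the theorem.
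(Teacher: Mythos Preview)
Your proposal follows the paper's approach and correctly identifies the key ``hiding'' trick: wrap the gap-filling layers in reverse order so the outermost shell still displays $\delta_{U'}(N(s^n,U'))$, allowing the invariants to treat the region as if it carried only the standard label output. One minor point: you assume the adversary can locate an \emph{already} terminated region to seed the layers, but in the paper's Case~3 (where every vertex of $\tau$ and of $A_\tau$ is undefined) no such region exists and the adversary must first create the seed simplex $s_{int}$ before any layering can begin.

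The more substantive gap is your treatment of NULL. You say the adversary returns NULL when $f$ matches none of the admissible values $\delta_{U'}(s_i),\delta_U(s_i)$, and you note that a NULL response creates no new terminated vertices. But that is not the issue: you must argue that no \emph{future} query---writeread or assignment---can ever terminate a simplex reachable from $C$ by a $P$-only schedule with values satisfying $f$, since that would contradict the committed NULL answer. The paper secures this through its exhaustive case analysis (Cases~2, 3.4, 4.5) together with Lemma~\ref{the:glue_protocols:prover_fails:terminated_vertices_not_reach}, showing that the only values the adversary will ever place in the relevant subdivision are drawn from the enumerated set $POV(\tau,\cdot)$; without that closure argument, the NULL branch of your reduction is unjustified and the theorem is not yet proved.
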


\subsection{Finalization after phase r}
\label{sec:motivation_and_summary:finalization_after_phase_r}
In the discussions of finalization after phase 1, the adversary uses a compatible set of partial protocols to pretend that it has a protocol for some colorless task. In fact, this procedure can continue.
Let $U_{1}$ and $U_{2} \subset s^{n} - U_{1}$, where $s^{n}$ is an $n$-simplex in $\mathcal{I}$ that contains $U_{1}$. 
The adversary can use a set of partial protocols with respect to $[U_{1}, U_{2,0}], [U_{1}, U_{2,1}] \cdots$ to pretend to have a partial protocol with respect to $U_{1}$. 

Will this type of divisions have an end? We can show that it is not necessary to divide a partial protocol when its protocol complex $\mathbb{F}_{1}([U_{1}, U_{2} ... U_{k}])$ , where $ids(U_{1}), ids(U_{2}) \cdots ids(U_{k})$ is a full 1-round schedule, consists of only one simplex representing a configuration reachable via a 1-round schedule $ids(U_{1}), ids(U_{2}) ... ids(U_{k})$ from the initial configuration $U_{1} * U_{2} \cdots U_{k}$. This is proven by showing that there is a set of "smaller" compatible partial protocols $[U_{1}, U_{2} ... U_{k}, U_{k + 1}]$ for each subsimplex $U_{k + 1}$ of $U_{1} * U_{2} \cdots * U_{k}$ if and only if there is a partial protocol with respect to $[U_{1}, U_{2} ... U_{k}]$.

Therefore, the phase $r$ is at most the number of processes $n + 1$. If we want to find a necessary and sufficient condition for finalization after phase $r$, where $1 < r \leq n + 1$, all techniques in finalization after phase 1 can be used (although we need to make some slight changes). 
So, the finalization after phase $r$ has many similarities with the finalization after phase 1.

\begin{theorem}[Theorem \ref{the:adversary_finalize_theorem}]
For a colorless task$(\mathcal{I}, \mathcal{O}, \Delta)$, there exists an adversary that can win against any extension-based prover if and only if there exists a partial protocol with respect to $[U_{1}, \allowbreak ... U_{k}]$ for each possible sequence $[U_{1}, \allowbreak ... U_{k}]$, where $ids(U_{1}), ids(U_{2}) \cdots ids(U_{k})$ is a full 1-round schedule, and all partial protocols are compatible.
\end{theorem}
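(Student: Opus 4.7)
The plan is to lift the phase-$1$ argument (Theorem \ref{the:adversary_finalize_after_the_first_round_theorem}) to arbitrary phase $r$ via an induction on the depth of the sequence $[U_{1},\ldots,U_{k}]$, using the fact that this depth is bounded. First I would formalize the equivalence already sketched in Section \ref{sec:motivation_and_summary}: the adversary wins against every extension-based prover if and only if the adversary can finalize after phase $r$ for some positive integer $r$. I would then bound $r$ by $n+1$ by observing that once $ids(U_{1}) \cup \cdots \cup ids(U_{k}) = \Pi$, the set $\mathbb{F}_{1}([U_{1},\ldots,U_{k}])$ contains only one simplex (the configuration produced by the fixed full $1$-round schedule from the initial configuration $U_{1}\ast\cdots\ast U_{k}$), so no further adversarial splitting is required for this sequence. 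Because each $ids(U_{i})$ is a nonempty block of the full $1$-round schedule, the length $k$ of any indexing sequence is at most $n+1$; this supplies the induction parameter.

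For the ``only if'' direction, suppose the adversary can finalize after phase $r$. For any sequence $[U_{1},\ldots,U_{k}]$ whose $ids$-blocks comprise a full $1$-round schedule, I would let the prover force this sequence to appear as a prefix of $\alpha(r+1)$ by choosing configurations at the ends of phases $1,\ldots,r$; the adversary's responses induce, by definition, a partial protocol $\delta_{[U_{1},\ldots,U_{k}]}$. Compatibility across sequences then follows from the same adversarial trick used in Section \ref{sec:motivation_and_summary:finalization_after_phase_1}: the prover can submit a chain of queries to expose the output value of the centered configuration $CEN(S)$ for each shared simplex $S$, and only choose which sequence to commit to at the very end of the appropriate phase; consistency of the forced answer under both sequences yields compatibility.

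For the ``if'' direction, I would proceed by induction on the ``level'' $\ell = (n+1) - |\{U_{1},\ldots,U_{k}\}|$. The base case $\ell = 0$ is trivial because the protocol complex is a single simplex. For the inductive step, the hypothesis gives, for every fixed prefix $[U_{1},\ldots,U_{k}]$, a compatible family $\{\delta_{[U_{1},\ldots,U_{k},U_{k+1}]}\}$ indexed by admissible $U_{k+1}$. I would replay the phase-$1$ construction of Section \ref{sec:nccondition:finalization_after_phase_1} relative to this prefix: adaptively build a sequence of complexes whose initial layers are $\mathbb{F}_{r_{m}}([U_{1},\ldots,U_{k}])$, maintain the three invariants (first-invariant well-definedness, safety with respect to $\Delta$, and active-distance at least $3$ between regions terminated under distinct labels $U_{k+1}$), and use canonical neighbors $N(s^{n},U_{k+1})$ within $\mathbb{F}_{r_{m}}([U_{1},\ldots,U_{k}])$ to assign $\delta$ when a vertex sits in a simplex whose label is not among its possible labels. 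Compatibility guarantees a $CEN$-vertex bridging any two labels, so at phase end the gap between $\delta_{[U_{1},\ldots,U_{k},U_{k+1}]}(N(s^{n},U_{k+1}))$ and the target $\delta_{[U_{1},\ldots,U_{k},U'_{k+1}]}(s^{n})$ can be sealed layer-by-layer along a $dim(CEN)$-dimensional path, invoking the colorless property exactly where the path simplices have dimension strictly less than $n$. The assignment-query extension is handled as in Section \ref{sec:nccondition:augmented_extension_based_proofs}: add the bridging layers in reverse order so that the outermost layer remains $\delta_{[U_{1},\ldots,U_{k},U_{k+1}]}(N(s^{n},U_{k+1}))$, preserving the ``canonical neighbor'' invariant that assignment queries rely upon. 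Induction on $\ell$ then yields a global strategy.

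The main obstacle I anticipate is the canonical-neighbor machinery at higher phases: the four defining properties of $N(s^{n},U')$ (containment of the shared simplex, preservation of carriers, and the global consistency under shared process ids) must now be established inside $\mathbb{F}_{r_{m}}([U_{1},\ldots,U_{k}])$ rather than inside $Ch^{r_{m}}(\mathcal{I})$. Since this subcomplex is a non-uniform chromatic subdivision rather than the uniform one, I would first prove a structural lemma showing that within the star of any full-round schedule $[U_{1},\ldots,U_{k}]$, the relevant neighborhood of $\mathbb{F}_{r_{m}}([U_{1},\ldots,U_{k},U_{k+1}])$ is simplicially isomorphic (as colored complexes) to the phase-$1$ picture used in Section \ref{sec:preparations:canonical_neighbor}, so the canonical-neighbor algorithm there transfers verbatim. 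Once this reduction is in place, the safety argument, the infinite-execution argument (no infinite chain in any single phase), and the finalization argument all carry over with only notational adjustments.
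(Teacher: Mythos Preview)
Your overall architecture matches the paper's: bound the phase by $n+1$, reduce to the full-schedule case, and replay the phase-$1$ machinery relative to a prefix $[U_{1},\ldots,U_{k}]$. The ``only if'' direction and the boundedness argument are essentially as in the paper.

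There is, however, a concrete gap in your ``if'' direction. You anticipate the right obstacle (the canonical-neighbor construction inside $\mathbb{F}_{r_{m}}([U_{1},\ldots,U_{k}])$) but your proposed resolution is wrong. You claim that the relevant neighborhood is simplicially isomorphic, as a colored complex, to the phase-$1$ picture, so that the canonical-neighbor algorithm of Section~\ref{sec:preparations:canonical_neighbor} ``transfers verbatim.'' The paper shows explicitly that it does not. The protocol complex $\mathbb{F}_{1}([U_{1},\ldots,U_{k}])$ is the join of a cone simplex $s^{top}$ (carrying $ids(U_{1})\cup\cdots\cup ids(U_{k})$) with a chromatic subdivision of the remaining processes, and the vertices of $s^{top}$ cannot play the role that intersection vertices play in Lemma~\ref{the:canonical_neighbor_existence_lemma}: they do not eliminate the mismatch of seen-input-sets after one more round. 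Consequently the third requirement of the canonical neighbor (carrier preservation for non-intersection vertices) can genuinely fail at higher phases; the paper gives a 3-process counterexample (their Figure~\ref{img:canonical_neighbor_with_cone}). No isomorphism lemma will rescue this, because the cone part breaks the symmetry that the phase-$1$ proof of Lemma~\ref{the:canonical_neighbor_existence_lemma} exploits.

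The paper's fix is an additional rule, the \emph{carrier rule}: when a non-intersection vertex $v$ would acquire a strictly larger carrier under the tentative canonical neighbor, terminate $v$ not with the corresponding vertex of the neighbor but with $\delta_{[U_{1},U_{2}']}(v'')$ for some vertex $v''$ of $s^{top}$ lying in $carrier(v, Ch^{r_{m}}(\mathcal{I}))$. This is safe precisely because of the colorless property (every process has already seen the inputs of $U_{1}$), and the paper checks separately that it causes no conflicts across adjacent $n$-simplices. One must also rebuild the canonical neighbor with a composite label $[U_{1},\ldots,U_{k},U_{k+1}]$ by first passing through the canonical neighbor with label $[U_{1},\ldots,U_{k}]$ and then, if that lands in the wrong sub-protocol-complex, taking a second canonical neighbor inside $\mathbb{F}_{r_{m}}([U_{1},\ldots,U_{k}])$; the no-conflict property of this two-step construction requires its own argument. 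Your plan omits both the carrier rule and this composite-label construction, so the ``replay phase-$1$ verbatim'' step would fail at exactly the point you flagged.
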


\ifdefined\FULL

\section{Preparations}
\label{sec:preparations}

\subsection{Partial protocols with respect to simplices in $\mathcal{I}$}
\label{sec:preparations:general_tasks_with_partial_input_information}
In this section, we give a necessary condition for the finalization after phase 1.  At that time, the prover has chosen a configuration $C^{'}$ $\in$ $\mathcal{A}^{'}(1)$ and let $\alpha(2)$ denote the schedule from some initial configuration $C \in \mathcal{A}(1)$ to $C^{'}$. All initial configurations in phase 2, denoted by $\mathcal{A}(2)$, are the configurations reached by performing $\alpha(2)$ from the initial configurations that differ only from $C$ by input values of processes not in the schedule $\alpha(2)$. 
Consider the schedule $\alpha^{''}$ consisting only of the first set of processes in $\alpha(2)$, if the adversary can finalize in this case, then there must be a finalization for the original schedule $\alpha(2)$ because all the configurations reached from a configuration $C$ by the schedule $\alpha(2)$ are contained in the configurations reached from $C$ by $\alpha(2)$. Therefore, we can assume, without loss of generality, that the schedule $\alpha(2)$ contains only one set of processes. 

If the adversary can finalize after phase 1, then for each simplex $U \in \mathcal{I}$, it can set $\delta$ values for the vertices in $\mathbb{F}_{t}(U)$ such that $\delta$ values will not violate the task specification. In the proof of non-existence of extension-based impossibility proofs for $(n, k)$-set agreement in \cite{Alistarh19}, this is simply done by setting $\delta(v) = a$ ($a$ is some value in $vals(U)$) for each undefined vertex. At most 2 different values will be output in any output configuration, which is ensured by some additional properties maintained by their adversarial protocol at the end of phase 1. 

Consider the $n$-simplices in $\mathbb{F}_{1}(U)$. Note that $n$-simplices can be reached from multiple initial configurations in $\mathcal{I}$. A necessary condition for the finalization after phase 1 is that there is a partial protocol with respect to each $U \in \mathcal{I}$. This is not a sufficient condition, since the output values of some configurations in $\mathbb{F}_{t}(U)$ have been determined in phase 1.

Before any process not in $ids(U)$ can perform an operation, processes in $ids(U)$ will writeread the first IS object concurrently using their input values. Therefore, each process will read the input values of all processes in $ids(U)$ from the return value of the operation on the first IS object. We define some notations. Given a simplex $U$ in the input complex $\mathcal{I}$, for each facet of the subcomplex of $\mathcal{O}$ whose process ids are in $ids(U)$, we use it as a label to define a complex whose facets are the $n$-simplices in the output complex $\mathcal{O}$ that contain this facet. The output complex restricted to the simplex $U$, denoted by $\mathcal{O}_{U} = \{\mathcal{O}_{U}^{0}, \mathcal{O}_{U}^{1}, \cdots \mathcal{O}_{U}^{m}\}$, is defined as the disjoint union of the complexes whose labels are in $\Delta(U)$, if we associate each label with a unique index. 
The notation $\mathcal{O}_{U}\backslash U$ is defined as a simplicial complex generated by removing vertices whose ids are in $ids(U)$ from $\mathcal{O}_{U}$.

\begin{lemma}
\label{the:partial_info_lemma}
    For a task $(\mathcal{I}, \mathcal{O}, \Delta)$, there exists a partial protocol with respect to any simplex in $\mathcal{I}$ if and only if for each simplex $U \in \mathcal{I}$, there exists $\mathcal{O}_{U}^{i}$ for some $0 \leq i \leq m$, such that task $(lk(U, \mathcal{I})$, $\mathcal{O}_{U}^{i} \backslash U$, $\Delta_{U})$ has a protocol, where $\Delta_{U}$ carries a simplex $U^{'}$ $\subseteq$ $lk(U, \mathcal{I})$ to the complex $\Delta(U^{'} * U) \backslash U$, which is generated from $\Delta(U^{'} * U) $ by removing vertices whose ids are in $ids(U)$.
\end{lemma}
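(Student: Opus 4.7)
The plan is to prove each direction by relating partial protocols and reduced-task protocols through the asynchronous computability theorem (ACT). For $(\Leftarrow)$, we compose the reduced task's protocol with a fixed initial commitment of the $ids(U)$-processes; for $(\Rightarrow)$, we extract such a commitment from the given partial protocol and then project its ACT certificate.

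For $(\Leftarrow)$, assume that for every $U \in \mathcal{I}$ there is a wait-free protocol $\pi_U$ for $(lk(U, \mathcal{I}), \mathcal{O}_U^i \backslash U, \Delta_U)$ for some $i$, and let $F_i = \{(p_j, o_j) : j \in ids(U)\}$ be the facet of $\mathcal{O}$ that labels $\mathcal{O}_U^i$. Construct $\delta_U$ as follows. In the first round, all processes in $ids(U)$ writeread $IS_1$ concurrently (as required by the definition of a partial protocol), and each $p_j \in ids(U)$ commits to the output $o_j$; processes in $\Pi - ids(U)$ then read $IS_1$, observe $U$, and simulate $\pi_U$ on the remaining IS objects using their inputs $V \in lk(U, \mathcal{I})$. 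Wait-freedom is inherited from $\pi_U$. For any initial $n$-simplex $s^n \supseteq U$ with $V = s^n \backslash U$, the $\Pi - ids(U)$-outputs form a simplex $T \in \Delta_U(V) \cap (\mathcal{O}_U^i \backslash U)$, and therefore $T \cup F_i$ is an $n$-simplex of $\mathcal{O}_U^i$ lying in $\Delta(V \ast U) = \Delta(s^n)$, matching the task specification.

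For $(\Rightarrow)$, suppose $\delta_U$ is a partial protocol with respect to $U$. Consider the ``solo'' execution that starts from the initial configuration $U$ (only processes in $ids(U)$ present) and in which these processes run concurrently in every round. Wait-freedom forces termination with some output simplex $F$; since each such process sees exactly $U$ throughout, the task specification yields $F \in \Delta(U)$, and because $F$ has one vertex per process in $ids(U)$ it is a facet of the subcomplex of $\mathcal{O}$ with ids in $ids(U)$ and hence labels some $\mathcal{O}_U^i$. It then suffices, by ACT, to exhibit a chromatic subdivision $\sigma$ of $lk(U, \mathcal{I})$ and a color-preserving simplicial map $\mu' : \sigma(lk(U, \mathcal{I})) \to \mathcal{O}_U^i \backslash U$ satisfying the $\Delta_U$ carrier condition. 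Apply ACT to $\delta_U$ itself to obtain a chromatic subdivision of $\mathbb{F}_1(U)$ and a color-preserving map $\mu$ to $\mathcal{O}$ satisfying the $\Delta$ carrier condition. The structural observation is that $\mathbb{F}_1(U)$ is chromatically isomorphic to the join of a single simplex $V_U$ carrying the post-first-round states of the $ids(U)$-processes with $Ch(lk(U, \mathcal{I}))$; restricting the subdivision and $\mu$ to the $(\Pi - ids(U))$-colored component, and using that the solo-execution output $F$ fixes $\mu$ on $V_U$, yields the desired $\sigma$ and $\mu'$. The $\Delta$ carrier condition for $\mu$ translates to the $\Delta_U$ carrier condition for $\mu'$ via the identity $\Delta_U(V) = \Delta(V \ast U) \backslash U$.

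The main obstacle will be the topological decomposition in the $(\Rightarrow)$ direction: rigorously identifying $\mathbb{F}_1(U)$ with the indicated join, showing that iterated subdivisions consistent with $\delta_U$ restrict cleanly to chromatic subdivisions of $lk(U, \mathcal{I})$, and verifying that the facet $F$ obtained from the solo execution is globally compatible with $\mu$. The colorless property (Property~\ref{pro:colorless_property}) is essential here, ensuring that any output value valid for a process that has seen only $U$ remains valid when that process later sees a superset, so that forcing the $ids(U)$-processes to commit to $F$ is consistent with the remainder of $\delta_U$.
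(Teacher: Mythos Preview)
Your $(\Leftarrow)$ direction is fine and essentially matches the paper. The $(\Rightarrow)$ direction, however, is both overcomplicated and has a real gap, and the paper's argument is far simpler.

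The paper does not go through ACT at all for this lemma. It argues operationally: run the processes in $ids(U)$ by themselves until they terminate, obtaining a configuration $C_1$ in which their outputs form some facet $F \in \Delta(U)$ (hence index $i$). Now define a protocol for $(lk(U,\mathcal{I}),\mathcal{O}_U^i\backslash U,\Delta_U)$ directly by simulation: for any schedule $\alpha$ of the processes in $\Pi\setminus ids(U)$ from an initial configuration $V\in lk(U,\mathcal{I})$, output whatever $\delta_U$ outputs in the configuration reached from $C_1$ by $\alpha$. Because $\delta_U$ satisfies the task specification and the $ids(U)$-processes are already terminated with $F$ in $C_1$, the resulting outputs for $\Pi\setminus ids(U)$ lie in $\Delta(V*U)\backslash U \cap \mathcal{O}_U^i\backslash U$. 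That is the whole argument.

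Your route, by contrast, applies ACT to $\delta_U$ and then tries to restrict the resulting subdivision and map to the $(\Pi\setminus ids(U))$-colored part. Two things go wrong here. First, restricting a chromatic subdivision of $V_U * Ch(lk(U,\mathcal{I}))$ to a subset of colors does not in general yield a chromatic subdivision of $Ch(lk(U,\mathcal{I}))$; you would need a separate argument for this. Second, and more seriously, your claim that ``the solo-execution output $F$ fixes $\mu$ on $V_U$'' is false as stated: in $\delta_U$, the processes in $ids(U)$ may well see other processes in later rounds and output something different from $F$. So the $(\Pi\setminus ids(U))$-restriction of the output simplex need not lie in $\mathcal{O}_U^i\backslash U$; it may land in $\mathcal{O}_U^j\backslash U$ for varying $j$. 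You try to patch this with the colorless property, but Property~\ref{pro:colorless_property} only says $F$ would be a \emph{valid} output for those processes, not that $\delta_U$ actually produces $F$; it does not let you rewrite $\mu$. The paper's simulation-from-$C_1$ approach sidesteps both issues entirely, and indeed does not use colorlessness at all (the paper is explicit that the colorless assumption is first used much later).
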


\begin{proof}
We first prove necessity. Suppose that there is a partial protocol with respect to $U$. Consider the execution in which only processes in $ids(U)$ run and decide. The output values of $ids(U)$ must be in $\Delta(U)$, and therefore be associated with some index $i$. We denote this configuration by $C_{1}$. At this time, active processes are those not in $ids(U)$. Let all the remaining processes run and decide. The output values of the processes not in $ids(U)$ will be simplices in $\mathcal{O}_{U}^{i} \backslash U$.

Let $C_{0}$ be an initial configuration of the protocol $(lk(U, \mathcal{I})$, $\mathcal{O}_{U}^{i} \backslash U$, $\Delta_{U})$. We can construct a protocol for the task $(lk(U, \mathcal{I})$, $\mathcal{O}_{U}^{i} \backslash U$, $\Delta_{U})$ as follows: For any schedule $\alpha$ from $C_{0}$ consisting of processes in $\Pi - ids(U)$, we can return the output values of processes in $\Pi - ids(U)$ given by the partial protocol with respect to $U$, in the configuration reached from the configuration $C_{1}$ by $\alpha$.

Next, we prove sufficiency. If there exists a protocol $P$ for the task $(lk(U, \mathcal{I}), \mathcal{O}_{U}^{i}\backslash U, \Delta_{U})$ for some index i, we can construct a partial protocol by letting the processes in $U$ decide the label values associated with the index $i$ after accessing the first IS object, while the processes not in $U$ execute the protocol $P$ and simply ignore the information from the processes in $U$. The output values of an n-dimensional simplex are included in $\mathcal{O}_{U}^{i}$. Therefore, the constructed protocol is a valid partial protocol with respect to $U$.
\end{proof}

For each simplex $U \in \mathcal{I}$, we can determine the output values of the processes in $U$ before the protocol starts, and this is why we define $\mathcal{O}_{U}$. Other processes can execute a protocol and ignore any information from processes in $U$. These output values of the processes in $U$ have to be the same for all initial configurations. By this lemma and the asynchronous computability theorem, we can obtain the following theorem.

\begin{lemma}
\label{the:partial_info_theorem}
    For a task $(\mathcal{I}, \mathcal{O}, \Delta)$,  there exists a partial protocol with respect to any simplex in $\mathcal{I}$ if and only if for each simplex $U \in \mathcal{I}$, there exists $\mathcal{O}_{U}^{i}$ for some $0 \leq i \leq m$, a chromatic subdivision $\sigma$ of $lk(U, \mathcal{I})$ and a color-preserving simplicial map $\mu:\sigma(lk(U, \mathcal{I})) \rightarrow \mathcal{O}_{U}^{i}$,  such that for each simplex $S$ in $\sigma(lk(U, \mathcal{I}))$, $\mu(S) \in \Delta_{U}(carrier(S, lk(U, \mathcal{I})))$ where $\Delta_{U}$ carries a simplex $U^{'}$ $\subseteq$ $lk(U, \mathcal{I})$ to the complex $\Delta(U^{'} * U) \backslash U$, which is generated from $\Delta(U^{'} * U) $ by removing vertices whose ids are in $ids(U)$.
\end{lemma}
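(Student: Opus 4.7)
The plan is to combine Lemma~\ref{the:partial_info_lemma} with the Asynchronous Computability Theorem (ACT). Lemma~\ref{the:partial_info_lemma} already reduces the existence of partial protocols with respect to every $U \in \mathcal{I}$ to the existence of ordinary wait-free protocols for the family of tasks $(lk(U, \mathcal{I}), \mathcal{O}_{U}^{i} \backslash U, \Delta_{U})$, one for each $U$ and some choice of index $i = i(U)$. Since the ACT is a necessary and sufficient condition for the wait-free solvability of any task, applying it to each of these tasks immediately yields the topological condition in the theorem statement.

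I would carry out the steps in this order. First, fix $U \in \mathcal{I}$ and verify that $(lk(U, \mathcal{I}), \mathcal{O}_{U}^{i} \backslash U, \Delta_{U})$ is a well-formed task in the sense required by the ACT: $lk(U, \mathcal{I})$ is a simplicial complex by definition; $\mathcal{O}_{U}^{i} \backslash U$ is a simplicial complex because removing a fixed set of vertices preserves closure under containment; and $\Delta_{U}$, defined by $\Delta_{U}(U') = \Delta(U' * U) \backslash U$, is a carrier map from $lk(U, \mathcal{I})$ to $\mathcal{O}_{U}^{i} \backslash U$ because $\Delta$ is a carrier map (monotonicity in $U'$ is inherited), and because for $U' \in lk(U, \mathcal{I})$ the simplex $U' * U$ lies in $\mathcal{I}$ so $\Delta(U' * U) \subseteq \mathcal{O}$, and by definition of $\mathcal{O}_{U}^{i}$ the output assignments compatible with the chosen facet of $\Delta(U)$ lie in $\mathcal{O}_{U}^{i}$.

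Second, I would invoke the ACT in both directions. For the ``only if'' direction, given a partial protocol with respect to each $U$, Lemma~\ref{the:partial_info_lemma} gives a wait-free protocol for $(lk(U, \mathcal{I}), \mathcal{O}_{U}^{i} \backslash U, \Delta_{U})$, and then the ACT supplies the required chromatic subdivision $\sigma$ of $lk(U, \mathcal{I})$ together with a color-preserving simplicial map $\mu: \sigma(lk(U, \mathcal{I})) \to \mathcal{O}_{U}^{i} \backslash U$ satisfying $\mu(S) \in \Delta_{U}(carrier(S, lk(U, \mathcal{I})))$ for every $S$; composing with the inclusion $\mathcal{O}_{U}^{i} \backslash U \hookrightarrow \mathcal{O}_{U}^{i}$ gives exactly the map asserted in the theorem. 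For the ``if'' direction, given $\sigma$ and $\mu$ as in the theorem, the carrier condition forces $\mu(S)$ to lie in $\Delta(\cdot * U) \backslash U$ and hence in $\mathcal{O}_{U}^{i} \backslash U$, so $\mu$ is really a map into $\mathcal{O}_{U}^{i} \backslash U$; the ACT then produces a wait-free protocol for $(lk(U, \mathcal{I}), \mathcal{O}_{U}^{i} \backslash U, \Delta_{U})$, and Lemma~\ref{the:partial_info_lemma} converts this into a partial protocol with respect to $U$.

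There is no substantial obstacle here; the argument is a two-step composition of Lemma~\ref{the:partial_info_lemma} with a black-box application of the ACT. The only minor point of care, which I would spell out, is the compatibility between the codomain $\mathcal{O}_{U}^{i}$ in the statement and the codomain $\mathcal{O}_{U}^{i} \backslash U$ arising from Lemma~\ref{the:partial_info_lemma}: the carrier condition $\mu(S) \in \Delta_{U}(carrier(S, lk(U, \mathcal{I})))$ automatically places $\mu(S)$ inside $\mathcal{O}_{U}^{i} \backslash U$, so the two formulations are equivalent and no additional verification is needed.
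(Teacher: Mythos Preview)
Your proposal is correct and matches the paper's approach exactly: the paper derives this lemma in one line from Lemma~\ref{the:partial_info_lemma} combined with the Asynchronous Computability Theorem, which is precisely the two-step composition you describe. Your additional care about well-formedness of the task and the codomain issue ($\mathcal{O}_{U}^{i}$ versus $\mathcal{O}_{U}^{i}\backslash U$) is more than the paper spells out, but entirely in line with its intent.
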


If a task $(\mathcal{I}, \mathcal{O}, \Delta)$ does not have extension-based impossibility proofs because the adversary can finalize after phase 1, then it must satisfy the conditions presented in Theorem \ref{the:partial_info_theorem}. Therefore, there exists a partial protocol with respect to $U$ which is denoted as $\delta_{U}$ in our subsequent discussions. In the example of $(n, 2)$-set agreement, letting $a$ denote some input value in $val(U)$, we can construct a color-preserving simplicial map $\mu$ from $lk(U, \mathcal{I})$ to $\mathcal{O}_{U}^{i}$ whose label is a tuple consisting of only value $a$. The simplicial map $\mu$ carries each vertex in $lk(U, \mathcal{I})$ to a vertex with the same color and value $a$. Since each vertex has seen the input value of $a$, this output value will not cause any violation of the task specification. Note that this condition is only a necessary one(not a sufficient one) for finalization after the end of phase 1. The consensus task that has an extension-based proof in \cite{Alistarh19} also satisfies the conditions of Theorem \ref{the:partial_info_theorem}. We will soon discuss their differences, which is a good example of how to understand our sufficient condition.

We have shown that the schedule $\alpha(2)$ chosen at the end of phase one can be assumed without loss of generality to contain only one set of processes. Furthermore, it can be assumed that this set of processes has only one process if we discuss only the existence of partial protocols. Suppose that $U_{1}$ is a subsimplex of $U_{2}$, and there is a partial protocol $\delta_{U_{1}}$, then $\delta_{U_{1}}$ can be applied as a partial protocol with respect to $U_{2}$. We can define the output values of the configuration reached by the schedule $\{U_{2}\}\gamma$ as the output values of the configuration reached by the schedule $\{U_{1}\}\{U_{2} - U_{1}\}\gamma$ given by the partial protocol $\delta_{U_{1}}$. Another observation is that $\mathbb{F}_{0}(U_{2})$ is a subset of $\mathbb{F}_{0}(U_{1})$.

\begin{theorem}
\label{the:partial_info_theorem_2}
    For a task $(\mathcal{I}, \mathcal{O}, \Delta)$, there exists a partial protocol with respect to any simplex in $\mathcal{I}$ if and only if for each vertex $v \in \mathcal{I}$, there exists $\mathcal{O}_{v}^{i}$ for some $0 \leq i \leq m$, such that task $(lk(v, \mathcal{I})$, $\mathcal{O}_{v}^{i} \backslash v$, $\Delta_{v})$ has a protocol, where $\Delta_{v}$ carries a simplex $\tau$ $\subseteq$ $lk(v, \mathcal{I})$ to complex $\Delta(\tau * v) \backslash v$, which is generated from $\Delta(\tau * v) $ by removing the vertices of $ids(v)$.
\end{theorem}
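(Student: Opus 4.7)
The theorem is a strengthening of Lemma~\ref{the:partial_info_lemma}: the same conclusion (existence of partial protocols for every simplex) is derived from a strictly weaker hypothesis (the topological condition is required only on vertices rather than on every simplex of $\mathcal{I}$). My plan is to prove it as a corollary of Lemma~\ref{the:partial_info_lemma} combined with the ``subsimplex upgrading'' observation made in the paragraph immediately preceding the statement.

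The forward direction is immediate: vertices are a special case of simplices, so if partial protocols exist for every $U\in\mathcal{I}$, then the necessity half of Lemma~\ref{the:partial_info_lemma}, applied at each vertex $v\in\mathcal{I}$ viewed as a $0$-simplex, yields the stated topological condition on the link-task $(lk(v,\mathcal{I}), \mathcal{O}_v^i\setminus v, \Delta_v)$. For the reverse direction, I would first invoke the sufficiency half of Lemma~\ref{the:partial_info_lemma} vertex by vertex to obtain, for each $v\in\mathcal{I}$, a partial protocol $\delta_v$ with respect to $v$. The remaining step is to upgrade this vertex-indexed family to a partial protocol $\delta_U$ for an arbitrary simplex $U\in\mathcal{I}$. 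Fix any vertex $v$ of $U$ and define $\delta_U$ by schedule rewriting: for every schedule $\{U\}\gamma$ starting from an initial configuration $C_0$ containing $U$, declare the output of $\delta_U$ at the resulting configuration to be the output that $\delta_v$ produces on the schedule $\{v\}\{U\setminus v\}\gamma$ starting from the same $C_0$. Because NIIS configurations are uniquely determined by their schedules, this prescription is well-defined.

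Finally, I would verify that $\delta_U$ is a valid wait-free partial protocol. Wait-freeness transfers verbatim: any infinite execution of $\delta_U$ on $\{U\}\gamma$ would pull back to an infinite execution of $\delta_v$ on $\{v\}\{U\setminus v\}\gamma$, contradicting the wait-freeness of $\delta_v$. For validity with respect to $\Delta$, both schedules start from the same $C_0$, hence the same input simplex $s^n\supseteq U$ and the same carrier $\Delta(s^n)$ govern both; since the joint output produced by $\delta_v$ lies in $\Delta(s^n)$, the transferred joint output of $\delta_U$ does too. The colorless condition (Property~\ref{pro:colorless_property}) is what guarantees that per-process outputs originally issued by $\delta_v$ after a process had seen only a partial set of inputs (e.g., $v$ seeing only $vals(v)$ in the first round of the rewritten schedule) remain valid under $\delta_U$, where that same process will have seen the larger set $vals(U)$ all at once. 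The main obstacle I anticipate is careful bookkeeping at the level of protocol complexes: states reached by $\{U\}\gamma$ and by $\{v\}\{U\setminus v\}\gamma$ are formally different simplices, lying in different protocol complexes $\mathbb{F}(U)$ versus $\mathbb{F}(v)$, and one must check that the correspondence behaves consistently across terminating and non-terminating configurations, across the shifted IS-object indices that each process touches, and across the different initial configurations that share a given $\{U\}\gamma$-reachable suffix.
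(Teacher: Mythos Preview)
Your proposal is correct and follows exactly the paper's approach: the theorem is presented as an immediate consequence of Lemma~\ref{the:partial_info_lemma} together with the schedule-rewriting observation in the preceding paragraph (a partial protocol $\delta_{U_1}$ with $U_1\subseteq U_2$ induces one for $U_2$ via $\{U_2\}\gamma \mapsto \{U_1\}\{U_2\setminus U_1\}\gamma$), and the paper offers no further argument.

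One small correction: you do not need Property~\ref{pro:colorless_property} here. Validity is checked against input simplices via the carrier map, and since both schedules start from the same $s^k\supseteq U$, the outputs produced by $\delta_v$ already lie in $\Delta(s^k)$; monotonicity of $\Delta$ (as a carrier map) handles the case where $v$ outputs after seeing only $vals(v)$. The paper confirms this explicitly later in Section~\ref{sec:nccondition:finalization_after_phase_1}, noting that the colorless condition has not been used up to that point.
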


In other words, a partial protocol with respect to $U_{1}$ can be transformed into a partial protocol with respect to $U_{2}$ where $U_{1}$ is a subsimplex of $U_{2}$. 

\subsection{Canonical neighbor}
\label{sec:preparations:canonical_neighbor}
Suppose that the condition of Theorem \ref{the:partial_info_theorem} is satisfied. This means that for each simplex $U \in \mathcal{I}$, there is a partial protocol with respect to $U$. We say that this partial protocol has the label $U$. For each partial protocol $\delta_{U}$, there exists a round number $r(U)$ such that any execution of $\delta_{U}$ terminates before the round $r(U)$. We can choose the largest round of termination $r_{m}$ and set all the executions of all partial protocols to end in round $r_{m}$ by having all processes defer their termination to round $r_{m}$ and output the same values as in the original partial protocols. 

First, we define the {\em protocol label} of an $n$-simplex. For each $n$-simplex $s^{n}$ in $Ch^{r_{m}}(\mathcal{I})$, we can define a unique simplex $U$ as the protocol label for $s^{n}$, where $s^{n}$ is reached from some $n$-simplex in $\mathbb{F}_{1}(U)$. 
Let $s^{k}$ be a $k$-simplex in $Ch^{r_{m}}(\mathcal{I})$. The {\em possible labels} of $s^{k}$ are defined as protocol labels of those $n$-simplices containing $s^{k}$. Note that there is at least one possible label for each $k$-simplex. If a $k$-simplex is in the interior of some $\mathbb{F}_{r_{m}}(U)$, then it has only one possible label. But for a $k$-simplex at the intersection of $\mathbb{F}_{r_{m}}(U)$ and $\mathbb{F}_{r_{m}}(U^{'})$, it has multiple possible labels.  

Let $s^{k}$ be a $k$-simplex in $S^{t}$, where $t > r_{m}$. Note that $s^{k}$ is not a simplex in $S^{r_{m}}$, but a simplex in the subdivision of $S^{r_{m}}$. {\em Possible labels} of $s^{k}$ are defined as possible labels of $carrier(s^{k}, Ch^{r_{m}}(\mathcal{I}))$. A possible label of $s^{k2}$ is not necessarily a possible label of $s^{k}$, where $s^{k2}$ is a subsimplex of $s^{k}$.
Note that if $s^{k}$ is terminated with one of its possible labels, then there will be no violation of the task specification, since each output configuration is obtained from some partial protocol. Our definition of possible labels has some properties. For example, if each vertex of a simplex has some possible label $U$, then this simplex will also have a possible label $U$. 

We must assume that $r_{m} \geq 2$ for the next definition. This assumption is without loss of generality since, if $r_{m} = 1$, the termination of each partial protocol can be postponed by one round, as stated above. If an $n$-simplex $s^{n}$ in $\mathbb{F}_{r_{m}}(U)$ has an intersection with $\mathbb{F}_{r_{m}}(U^{'})$ for some label $U^{'}$, the vertices of $s^{n}$ can be sorted into two types: intersection vertices $V(s^{n}, U^{'})$ and non-intersection vertices. In other words, the intersection vertices have possible labels $U$ and $U^{'}$. 

Let $C$ be an initial configuration represented by an $n$-simplex $\sigma \in \mathcal{I}$. We restrict our view to the chromatic subdivision of $\sigma$. For an $n$-simplex $s^{n}_{1}$ in $\mathbb{F}_{r_{m}}(U)$, if $s^{n}_{1}$ has an intersection with $\mathbb{F}_{r_{m}}(U^{'})$, we define the {\em canonical neighbor} $N(s^{n}_{1}, U^{'})$ of $s^{n}_{1}$ with label $U^{'}$ as an $n$-simplex $s^{n}_{2}$ such that
\begin{itemize}
\item[1)]$s^{n}_{2}$ is an $n$-simplex in $\mathbb{F}_{r_{m}}(U^{'})$.
\item[2)]For each $v \in V(s^{n}_{1}, U^{'})$, the vertex $v$ is in $N(s^{n}_{1}, U^{'})$.
\item[3)] For each non-intersection vertex $v$ and $ids(v) = p$, there exists a vertex $v^{'}$ of $s^{n}_{2}$ with the process id $p$ such that $carrier(v, \sigma) = carrier(v^{'}, \sigma)$.
\item[4)]If an $n$-simplex $s^{n}_{3}$ (not necessarily in $\mathbb{F}_{r_{m}}(U)$) shares a vertex of the process id $p$ with $s^{n}_{1}$, $N(s^{n}_{1}, U^{'})$ shares a vertex with the process id $p$ with $N(s^{n}_{3}, U^{'})$.
\end{itemize}

\begin{lemma}
\label{the:carrier_of_union}
Let $s^{k}$ be a $k$-simplex in a subdivision of $\sigma$. Then $carrier(s^{k}, \sigma) = \bigcup\limits_{v \in s^{k}} carrier(v, \sigma)$. 
\end{lemma}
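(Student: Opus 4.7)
The plan is to prove the equality as a set (or equivalently vertex-set) identity via two inclusions, using only the minimality property in the definition of the carrier together with the fact that a geometric simplex is the convex hull of its vertices.

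For the inclusion $\bigcup_{v \in s^k} carrier(v, \sigma) \subseteq carrier(s^k, \sigma)$, I would argue as follows. Let $T = carrier(s^k, \sigma)$. Since $s^k \subseteq T$ and each vertex $v$ of $s^k$ lies in $s^k$, we have $v \in T$. Because $T$ is a simplex of $\sigma$ containing $v$, the minimality in the definition of $carrier(v, \sigma)$ yields $carrier(v, \sigma) \subseteq T$. Taking the union over all vertices of $s^k$ gives the desired inclusion. In particular the union, viewed as a union of faces of the single simplex $T$, is itself a face of $T$, so it may legitimately be regarded as a simplex of $\sigma$.

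For the reverse inclusion $carrier(s^k, \sigma) \subseteq \bigcup_{v \in s^k} carrier(v, \sigma)$, let $T' = \bigcup_{v \in s^k} carrier(v, \sigma)$, which by the previous paragraph is a face of $T$ hence a simplex of $\sigma$. For every vertex $v$ of $s^k$, clearly $v \in carrier(v, \sigma) \subseteq T'$, so all vertices of $s^k$ lie in $T'$. Since a geometric simplex is the convex hull of its vertices and $T'$ is convex, this forces $s^k \subseteq T'$. By the minimality in the definition of $carrier(s^k, \sigma)$, we conclude $T \subseteq T'$, completing the proof.

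The argument is essentially unwinding definitions, and there is no substantive obstacle; the only point requiring slight care is the implicit identification between the abstract set-of-vertices view and the geometric convex-hull view of simplices, which is exactly what lets us pass between ``$v \in T$'' as a vertex statement and ``$s^k \subseteq T'$'' as a convex-hull statement. This identification is already sanctioned by the paper's stated convention of using the two definitions interchangeably.
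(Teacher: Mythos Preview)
Your proof is correct. Both your argument and the paper's establish the two inclusions via the geometric convex-hull picture, and the inclusion $carrier(s^k,\sigma)\subseteq T'$ is handled identically in both (all vertices of $s^k$ lie in $T'$, hence so does their convex hull, hence minimality applies). The difference is in the reverse inclusion $\bigcup_v carrier(v,\sigma)\subseteq carrier(s^k,\sigma)$: you obtain it immediately from the minimality of each $carrier(v,\sigma)$, whereas the paper takes an interior point $x$ of $s^k$ (all barycentric coordinates strictly positive), expands it in the vertex set $V=\bigcup_v carrier(v,\sigma)$, and observes that every coefficient in that expansion is strictly positive, so the smallest face of $\sigma$ containing $x$---and therefore containing $s^k$---is all of $\tau$. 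Your route is shorter and purely order-theoretic; the paper's route is more computational but makes explicit the geometric fact that the relative interior of $s^k$ meets the open face $\tau$.
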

\begin{proof}

We use the geometric definition of simplices.
Let $\tau$ be the subsimplex of $\sigma$ whose set $V$ of vertices is $\bigcup\limits_{v \in s^{k}} carrier(v, \sigma)$. Each point $x$ in $s^{k}$ can be expressed as a combination of vertices of $s^{k}$, that is, $x = \sum_{i = 0}^{k}e_{i}v_{i}$ where $\sum_{i = 0}^{k}e_{i} = 1$ and $e_{i} \geq 0$ for all i. At the same time, each vertex $v_{i}$ of $s^{k}$ can be expressed as a combination of vertices of $carrier(v_{i}, \sigma)$, i.e. $v_{i} = \sum_{i = 0}^{k^{'}}e^{'}_{i}v^{'}_{i}$ where $\sum_{i = 0}^{k^{'}}e^{'}_{i} = 1$, $v^{'}_{i}$ are in $V$ and $e^{'}_{i} > 0$ for all i. Therefore, each point $x$ in $s^{k}$ can be expressed as a combination of vertices of $V$, i.e., $carrier(s^{k}, \sigma)$ is a subset of $\tau$.

We can show that $carrier(s^{k}, \sigma)$ is not a proper subset of $\tau$. Consider a vertex $x$ in $s^{k}$ such that each $e_{i} > 0$. When $x$ is expressed as a combination of vertices in $V$, the coefficient of each vertex $v_{i}$ is greater than 0. So $carrier(s^{k}, \sigma)$ is equal to $\tau$.

\end{proof}

By the definition of canonical neighbors and Lemma \ref{the:carrier_of_union}, each subsimplex of an $n$-simplex has the same carrier as the corresponding subsimplex of its canonical neighbor with any label, and we have the following lemma.

\begin{lemma}
\label{the:canonical_neighbor_no_safety_issue}
For an $n$-simplex $s^{n}$ in $\mathbb{F}_{r_{m}}(U)$, the output $\delta_{U^{'}}(N(s^{n}, U^{'}))$ is a valid output for configurations in the subdivision of $s^{n}$ that does not violate the carrier map.
\end{lemma}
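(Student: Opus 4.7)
My plan is to reduce the claim to a statement about carriers in $\sigma$: if $s^n$ and $N(s^n,U')$ carry the same input information then $\delta_{U'}$ safely transfers, and further subdivision is handled by monotonicity together with the colorless property.

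First I would establish a vertex-level correspondence. Each vertex $v$ of $s^n$ has, by construction of the canonical neighbor, a matching vertex $v'$ of $N(s^n,U')$ with $ids(v)=ids(v')$ and $carrier(v,\sigma)=carrier(v',\sigma)$. For intersection vertices this is property (2), which gives $v=v'$ directly, so the equality of carriers is trivial. For non-intersection vertices the equality of carriers is exactly property (3). This pairing is a bijection between the vertex sets of the two $n$-simplices.

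Next I would lift this bijection to the full face poset using Lemma \ref{the:carrier_of_union}. For any face $T\subseteq s^n$, let $T'\subseteq N(s^n,U')$ be the face formed by the matching vertices. Then
\[
carrier(T,\sigma)=\bigcup_{v\in T}carrier(v,\sigma)=\bigcup_{v'\in T'}carrier(v',\sigma)=carrier(T',\sigma),
\]
and because $\sigma\in\mathcal{I}$ these equal the respective $\mathcal{I}$-carriers. In particular $carrier(s^n,\mathcal{I})=carrier(N(s^n,U'),\mathcal{I})=\sigma$. By property (1) the simplex $N(s^n,U')$ lies in $\mathbb{F}_{r_m}(U')$, so $\delta_{U'}$ is defined on it and, because $\delta_{U'}$ satisfies the task specification, $\delta_{U'}(T')\in\Delta(carrier(T',\mathcal{I}))=\Delta(carrier(T,\mathcal{I}))$. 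Since the proposed assignment to $T$ is exactly $\delta_{U'}(T')$, the carrier map is respected on every face of $s^n$ itself.

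Finally I would promote the conclusion to an arbitrary simplex $F$ in the further subdivision of $s^n$. For each vertex $w\in F$ with process id $p$, the state $w$ extends the state of the process-$p$ vertex $v$ of $s^n$ in the iterated (immediate) snapshot model, so $carrier(w,\mathcal{I})\supseteq carrier(v,\mathcal{I})=carrier(v',\mathcal{I})$ for the matching vertex $v'$. The colorless Property 1 thus guarantees that $\delta_{U'}(v')$ is a valid output value for $w$. Combined over the vertices of $F$, the induced output simplex is a face of $\delta_{U'}(N(s^n,U'))\in\Delta(\sigma)$, and monotonicity of $\Delta$ together with vertex-wise validity places this face inside $\Delta(carrier(F,\mathcal{I}))$. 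The hard part I expect is precisely this last step: extracting from the vertex-wise colorless guarantee that the joint output simplex, and not just each individual output value, lies in $\Delta(carrier(F,\mathcal{I}))$ rather than merely in the larger subcomplex $\Delta(\sigma)$; this is where the colorless hypothesis on $(\mathcal{I},\mathcal{O},\Delta)$, stated as Property \ref{pro:colorless_property}, is essential.
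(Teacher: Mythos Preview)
Your first two steps are exactly the paper's proof: per-vertex carrier equality from requirements (2) and (3) of the canonical neighbor, lifted to all faces by Lemma~\ref{the:carrier_of_union}. The paper's proof is literally the single sentence preceding the lemma statement.

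Where you diverge is your third step. You invoke Property~\ref{pro:colorless_property} and flag the passage from vertex-wise validity to simplex validity as the hard part. This is not needed, and the paper later makes this explicit (beginning of Section~\ref{sec:nccondition:finalization_after_phase_1}: ``Until now, we have not used the condition that the task $(\mathcal{I},\mathcal{O},\Delta)$ is colorless''). The clean argument for a simplex $F$ in the further subdivision of $s^n$ is purely simplicial. The output assigned to $F$ is $\delta_{U'}(T'_F)$, where $T'_F$ is the face of $N(s^n,U')$ with $ids(T'_F)=ids(F)$. Since $\delta_{U'}$ respects $\Delta$, we have $\delta_{U'}(T'_F)\in\Delta(carrier(T'_F,\mathcal{I}))=\Delta(carrier(T_F,\mathcal{I}))$, with $T_F$ the matching face of $s^n$. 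Now the subdivision is \emph{chromatic}: for each $w\in F$ with color $p$, the color-$p$ vertex $v_p$ of $s^n$ lies in $carrier(w,Ch^{r_m}(\mathcal{I}))$, hence $T_F\subseteq carrier(F,Ch^{r_m}(\mathcal{I}))$. A second application of Lemma~\ref{the:carrier_of_union} gives $carrier(T_F,\mathcal{I})\subseteq carrier(F,\mathcal{I})$, and monotonicity of $\Delta$ yields $\delta_{U'}(T'_F)\in\Delta(carrier(F,\mathcal{I}))$. No appeal to the colorless property, and no gap between vertex-wise and joint validity, because you never leave the world of simplicial maps and carrier containment.
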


We now give an implementation of the canonical neighbor. Let $C$ be an input configuration represented by a $n$-simplex $\sigma$. Each simplex in $Ch^{r}(\sigma)$ represents a configuration reached from $C$ via a full $r$-round schedule. Each round is a partition of the set of all processes, denoted by $\Pi$. A vertex in $Ch^{r}(\sigma)$ is uniquely determined by the set of processes that it reads in each partition of $\Pi$. So, if some process $p$ reads the same set of processes in each partition of $\Pi$ in two configurations in $Ch(\sigma)$, then the vertex with the process id $p$ is shared by two configurations. We will use this result in later proofs.

A notation from \cite[Chapter 10.2]{Herlihy13} is adopted in the next proof:
\begin{itemize}
    \item $(\mathbb{F}_{1} \downarrow P)(U)$ is the subcomplex generated from $\mathbb{F}_1(U)$ by removing every vertex whose process id is in $ids(P)$, for a simplex $P$.
    \item For $i \geq 1$, $(\mathbb{F}_{i+1} \downarrow P)(U) = \chi((\mathbb{F}_{i} \downarrow P)(U), \delta)$ consists of all simplices representing configurations reachable via $(\Pi - P)$-only $i$-round schedules from configurations in $(\mathbb{F}_{1} \downarrow P)(U)$.
\end{itemize}

\begin{lemma}
\label{the:canonical_neighbor_existence_lemma}
    For an $n$-simplex $s^{n}$ in $Ch^{r_{m}}(\sigma)$ that has an intersection with $\mathbb{F}_{r_{m}}(U)$, there exists an canonical neighbor of $s^{n}$ with the label $U$ if $r_{m} \geq 2$ that satisfies the first, second and third conditions.
\end{lemma}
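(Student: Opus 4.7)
The plan is to construct the canonical neighbor $N(s^n, U)$ explicitly by specifying a full $r_m$-round schedule whose resulting $n$-simplex satisfies conditions (1), (2), and (3). I would use the combinatorial representation in which each vertex of $Ch^{r_m}(\sigma)$ is uniquely identified by its process id together with its read history $(R_1, R_2, \ldots, R_{r_m})$, where $R_i$ is the set of processes whose round-$i$ state it reads (as recalled in the notation preceding the lemma). Under this representation, a vertex with process id $p$ lies in some $n$-simplex of $\mathbb{F}_{r_m}(U)$ exactly when $R_1 = ids(U)$ (if $p \in ids(U)$) or $ids(U) \subseteq R_1$ (if $p \notin ids(U)$), since the first block of any schedule contributing to $\mathbb{F}_{r_m}(U)$ is $ids(U)$. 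This gives a clean criterion for membership in $V(s^n, U)$.

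Write $\pi = ids(U)$ and let $(P_1, P_2, \ldots, P_k)$ denote $s^n$'s round-1 partition. The first step is to construct the round-1 partition of the canonical neighbor by placing $\pi$ as the first block and then appending the blocks $P_1 \setminus \pi, P_2 \setminus \pi, \ldots, P_k \setminus \pi$ in order, discarding any empty sets. By design, every process in $\pi$ reads exactly $\pi$ in round 1, matching its read in $s^n$ whenever it is an intersection vertex, while any process $p \notin \pi$ reads $\pi \cup (R_1^p \setminus \pi)$; in particular, intersection vertices with $p \notin \pi$ retain their round-1 reads. For rounds 2 through $r_m$ I would start from the partitions of $s^n$ and adjust them round by round so that (a) the full read histories of intersection vertices are reproduced and (b) the transitive closure of reads (equivalently, by Lemma \ref{the:carrier_of_union}, the carrier in $\sigma$) of each non-intersection vertex matches its value in $s^n$. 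The hypothesis $r_m \geq 2$ is used precisely here, since it provides at least one round beyond the re-routed round 1 in which to make corrections.

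Conditions (1) and (2) then follow relatively directly: (1) holds because the first block of round 1 is $\pi$, and (2) follows from a short case analysis on whether $p \in \pi$, showing that intersection vertices have identical read histories in $s^n$ and in the constructed $T^n$. The main obstacle is establishing (3). Forcing every process in $\pi$ to read all of $\pi$ in round 1 may cause the constructed vertex to see input values that were not visible to the corresponding vertex of $s^n$, and this can, in principle, enlarge the carrier. The technical crux of the proof is therefore to design the later rounds so that any such excess is absorbed and the transitive closure of reads in $T^n$ agrees with that in $s^n$ for every non-intersection vertex. I would approach this by induction on the round number, at each step exhibiting a choice of partition that preserves the carriers, with careful bookkeeping of the intermediate transitive closures $T_i^{'p}$ to ensure nothing is over-counted. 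Verification of (3) then amounts to checking, vertex by vertex, that $T_{r_m}^{'p} = T_{r_m}^p$ holds under the chosen schedule.
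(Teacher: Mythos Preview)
Your round-1 construction is correct and in fact coincides with the paper's (once you unfold the case analysis on $U_1,U_2,D$). The gap is in your treatment of rounds $2,\ldots,r_m$.

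You propose to ``adjust'' the later-round partitions so that the excess introduced in round 1 is ``absorbed.'' But information is monotone: once a process has (transitively) seen an input value, no choice of later partition can remove it from the carrier. So there is no way to \emph{design} later rounds to shrink an enlarged carrier; if the round-1 rerouting truly enlarged some non-intersection vertex's carrier, no inductive choice of partitions would repair it. Moreover, altering rounds $2,\ldots,r_m$ would also endanger condition (2), since intersection vertices must have \emph{identical} read histories, not merely identical carriers.

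The actual mechanism, which your sketch does not identify, is that no adjustment is needed at all: the paper keeps rounds $2,\ldots,r_m$ exactly as in $s^n$. The point is that the hypothesis ``$s^n$ intersects $\mathbb{F}_{r_m}(U)$'' (with $r_m\geq 2$) forces the \emph{first block of round 2} to contain an intersection vertex. That vertex already saw, in round 1, all of $ids(U_1*U_2)\cup D$ (in the paper's notation). Hence after round 2 \emph{every} process, in both $s^n$ and the candidate neighbor, has seen all of $ids(U_1*U_2)\cup D$. Since the round-1 discrepancy between the two schedules is confined to which subset of $ids(U_1*U_2)\cup D$ a process saw, it becomes invisible at the level of carriers from round 2 onward. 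This is what $r_m\geq 2$ buys---not an extra round in which to make corrections, but the structural fact that the intersection hypothesis propagates one round deeper and forces round 2 to begin with a process that already neutralizes the discrepancy.

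So: keep your round-1 construction, drop the idea of adjusting later rounds, and instead argue directly that reusing $s^n$'s schedule $\gamma$ for rounds $\geq 2$ yields equal carriers, using the observation above about the first block of round 2.
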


\begin{proof}
For two simplices $U_{1}$ and $U_{2}$ that are subsimplices of $\sigma$, we can assume that $|ids(U_{1})| \leq |ids(U_{2})|$. Lemma 10.4.4 in \cite{Herlihy13} states that, for any $i$, the intersection $\mathbb{F}_{i}(U_{1}) \bigcap \allowbreak \mathbb{F}_{i}(U_{2})$ equals $(\mathbb{F}_{i} \downarrow W)(U_{1} * U_{2})$ where $W = U_{1}$ if $U_{1}$ is a subsimplex of $ U_{2}$, otherwise $W$ is the joining $U_{1} * U_{2}$ of $U_{1}$ and $U_{2}$. Recall that the joining of two simplices is defined as a simplex such that $ids(U_{1} * U_{2}) = ids(U_{1}) \cup ids(U_{1})$. We will discuss the two cases respectively.

We first discuss the case where $U_{1}$ is not a subsimplex of $U_{2}$. Consider the standard chromatic subdivision $Ch^{1}(\sigma)$. For any two simplices $U_{1}$ and $U_{2}$ that satisfy $|ids(U_{1})| \leq |ids(U_{2})|$, the intersection $\mathbb{F}_{1}(U_{1}) \bigcap \allowbreak \mathbb{F}_{1}(U_{2})$ is $(\mathbb{F}_{1} \downarrow U_{1} * U_{2})(U_{1} * U_{2})$. The intersection contains all the configurations reached by some $(\Pi - ids(U_{1} * U_{2}))$-only schedule when the input values of $ids(U_{1} * U_{2})$ have been read by all the processes in $(\Pi - ids(U_{1} * U_{2}))$. In other words, it is a simplicial complex consisting of $(|\Pi - ids(U_{1} * U_{2})|)$-simplices and their subsimplices, and each $(|\Pi - ids(U_{1} * U_{2})|)$-simplex corresponds to an ordered partition of processes in $\Pi - ids(U_{1} * U_{2})$.  

In $\mathbb{F}_{1}(U_{1})$, each n-simplex corresponds to an ordered partition of the processes in $(\Pi - ids(U_{1}))$. If an $n$-simplex representing some configuration $C_{1}$ in $\mathbb{F}_{1}(U_{1})$ contains a $|\Pi - ids(U_{1} * U_{2})|$-simplex at the intersection, then its schedule(from $\sigma$ to $C_{1}$) can be expressed as a concatenation of $ids(U_{1})$, an ordered partition of $ids(U_{2}) - ids(U_{1})$ and an ordered partition of $(\Pi - ids(U_{1} * U_{2}))$, or a special case where the last process set of the former partition is merged with the first process set of the latter partition. We can choose the n-simplex representing the configuration $C_{2}$ reached from $C$ by the schedule generated by concatenating $ids(U_{2})$, $ids(U_{1}) - ids(U_{2})$ and the same ordered partition of $(\Pi - ids(U_{1} * U_{2}))$. 
For any $n$-simplex in $\mathbb{F}_{r_{m}}(U_{1})$ representing some configuration $C_{3}$ reached from $C_{1}$ by a schedule $\gamma$ and having an intersection with $\mathbb{F}_{r_{m}}(U_{2})$, we choose the $n$-simplex representing some configuration $C_{4}$ reached from $C_{2}$ by the same schedule $\gamma$ as its canonical neighbor with the label $U_{2}$. Therefore, the $n$-simplex representing $C_{4}$ is in $\mathbb{F}_{r_{m}}(U_{2})$, and the first requirement of the canonical neighbor is satisfied. 

The carrier of each vertex of a configuration reached from $C$ depends only on the set of processes it has seen(directly or indirectly). The schedules from $C$ to $C_{3}$ and to $C_{4}$ only differ in the first partition of $\Pi$. In the first partition of $\Pi$, only a process in $ids(U_{1} * U_{2})$ can see a different set of input values in $C_{3}$ and $C_{4}$. If a process $p \in ids(U_{1} * U_{2})$ sees the input values of a subset of $ids(U_{1} * U_{2})$ in $C_{1}$, then $p$ sees the input values of a different subset of $ids(U_{1} * U_{2})$ in $C_{2}$. 

Consider the schedule $\gamma$ from $C_{1}$ to $C_{3}$. Since the $n$-simplex representing $C_{3}$ shares some vertices, whose process ids $P$ are a subset of $(\Pi - ids(U_{1} * U_{2}))$, with $\mathbb{F}_{r_{m}}(U_{2})$, $\gamma$ can be expressed as $\gamma_{1}\gamma_{2}$ where $\gamma_{1}$ is a $P$-only schedule and $\gamma_{2}$ does not contain any process in $P$. Therefore, the $n$-simplex representing $C_{4}$ contains all vertices of $C_{3}$ whose ids are in $P$. The second requirement of the canonical neighbor is satisfied. The shared vertices of $C_{3}$ and $C_{4}$ have the same carrier. However, it is possible that the vertices with process ids in $ids(U_{1} * U_{2})$ have different carriers in $C_{3}$ and $C_{4}$. This is exactly what happens to the configurations $C_{3} = C_{1}$ and $C_{4} = C_{2}$ when $r_{m} = 1$. But the mismatch will disappear after one more subdivision. In $Ch^{2}(\sigma)$, if an n-simplex $C_{3}$ reached from $C_{1}$ has an intersection with $\mathbb{F}_{2}(U_{2})$, then the second partition of $\Pi$ in the schedule from $C$ to $C_{3}$ must begin with a set of processes that contains some process $p$ in $P$. Since $P$ is a subset of $(\Pi - ids(U_{1} * U_{2}))$, this process $p$ has already seen input values of $ids(U_{1} * U_{2})$ in the first partition of $\Pi$. So each process in $C_{3}$ has seen the input values of $ids(U_{1} * U_{2})$. Therefore, the carrier of each vertex in $C_{3}$ will be the same as the carrier of the corresponding vertex in $C_{4}$ as it sees the same set of input values. The third requirement of the canonical neighbor is satisfied. Now we have shown that $C_{4}$ is the canonical neighbor of $C_{3}$ with the label $U_{2}$ that satisfies the first, second and third conditions. 

But in $Ch^{1}(\sigma)$, the n-simplices in $\mathbb{F}_{1}(U_{1})$ having an intersection with $\mathbb{F}_{1}(U_{2})$ are not limited to the $n$-simplices we have discussed. There could be an $n$-simplex representing $C_{1}$ that contains only a subsimplex of those $(|\Pi - U_{1} * U_{2}|)$-simplices at the intersection. The schedule from $C$ to $C_{1}$ can be expressed as a concatenation of $ids(U_{1})$, an ordered partition of $(ids(U_{2}) - ids(U_{1}))\bigcup D$ and an ordered partition of $(\Pi - ids(U_{1} * U_{2}) - D)$ for some subset $D$ of $(\Pi - ids(U_{1} * U_{2}))$. Consider the last set of processes $U_{l}$ of the ordered partition of $(ids(U_{2}) - ids(U_{1}))\bigcup D$. It must contain some process in $(ids(U_{2}) - ids(U_{1}))$, and let $D_{l}$ be the processes in $U_{l}$ that are also in $D$. The vertices with ids in $(\Pi - ids(U_{1} * U_{2}) - D) \bigcup D_{l}$ are the intersection vertices. The limitation of an ordered partition of a set $S$ to a subset $S^{'} \subset S$ is defined as an ordered partition of $S^{'}$ generated by removing all processes not in $S^{'}$ from the original ordered partition. We can choose the $n$-simplex $C_{2}$ reached from $C$ by the schedule generated by concatenating $ids(U_{2})$, $ids(U_{1}) - ids(U_{2})$, the limitation of the same ordered partition of $(ids(U_{2}) - ids(U_{1}))\bigcup D$ to $D$ and the same ordered partition of $(\Pi - ids(U_{1}) \cup ids(U_{2}) - D)$. For any $n$-simplex $C_{3}$ reached from $C_{1}$ by a schedule $\gamma$ that has an intersection with $\mathbb{F}_{r_{m}}(U_{2})$, we choose the $n$-simplex $C_{4}$ reached from $C_{2}$ by the same schedule $\gamma$ as its canonical neighbor. The first requirement of the canonical neighbor is satisfied. 

Only processes in $ids(U_{1}) \bigcup ids(U_{2}) \bigcup D - D_{l}$ can see a different set of input values after changing the first partition of $\Pi$. If a process $p \in ids(U_{1}) \bigcup ids(U_{2}) \bigcup D - D_{l}$ sees the input values of a subset of $ids(U_{1}) \bigcup ids(U_{2}) \bigcup D$ in $C_{1}$, then $p$ sees the input values of a different subset of $ids(U_{1}) \bigcup ids(U_{2}) \bigcup D$ in $C_{2}$. 

Let $P$ be the process ids of the vertices shared by $C_{3}$ and $\mathbb{F}_{r_{m}}(U_{2})$. Then $P$ is a subset of $(\Pi - ids(U_{1}) \bigcup ids(U_{2}) - D + D_{l})$. The schedule $\gamma$ can be expressed as $\gamma_{1}\gamma_{2}$ where $\gamma_{1}$ is a $P$-only schedule and $\gamma_{2}$ does not contain any process in $P$. The second requirement of the canonical neighbor is satisfied. 
In $Ch^{2}(\sigma)$, if an n-simplex $C_{3}$ reached from some $C_{1}$ has an intersection with $\mathbb{F}_{2}(U_{2})$, then the second partition of $\Pi$ in the schedule from $C$ to $C_{3}$ must begin with a set of processes that contains some process in $P$ that has already seen the input values of $ids(U_{1}) \bigcup ids(U_{2}) \bigcup D$. Therefore, the mismatch of input values that a process in $(\Pi - ids(U_{1}) \bigcup ids(U_{2}) - D + D_{l})$ sees in $C_{1}$ and $C_{2}$ disappears.  The third requirement of the canonical neighbor is satisfied. We have discussed all n-simplices in $\mathbb{F}_{r_{m}}(U_{1})$ having an intersection with $\mathbb{F}_{r_{m}}(U_{2})$ and constructed a canonical neighbor for each of them. In fact, the simplices discussed in the previous paragraphs belong to a special case where $D$ is an empty set. By symmetry, we can also find a canonical neighbor for all n-simplices in $\mathbb{F}_{r_{m}}(U_{2})$ that has an intersection with $\mathbb{F}_{r_{m}}(U_{1})$.

Then we will prove the case that $U_{1}$ is a subsimplex of $U_{2}$. The proof techniques are almost identical to those in the former case. The intersection $\mathbb{F}_{1}(U_{1}) \bigcap \allowbreak \mathbb{F}_{1}(U_{2})$ is $(\mathbb{F}_{1} \downarrow U_{1})(U_{1} * U_{2})$. In $Ch^{1}(\sigma)$, the schedule of an n-simplex representing $C_{1}$ in $\mathbb{F}_{1}(U_{1})$ having some vertices at the intersection can be expressed as a concatenation of $ids(U_{1})$, an ordered partition of $(ids(U_{2}) - ids(U_{1}))\bigcup D$ and an ordered partition of $(\Pi - ids(U_{2}) - D)$ for some subset $D$ of $(\Pi -  ids(U_{2}))$. We can choose the $n$-simplex representing $C_{2}$ reached by the schedule generated by concatenating $ids(U_{2})$, the limitation of the same ordered partition of $(ids(U_{2}) - ids(U_{1}))\bigcup D$ to $D$ and the same ordered partition of $(\Pi - ids(U_{2}) - D)$. For any $n$-simplex representing a configuration $C_{3}$ reached from $C_{1}$ by a schedule $\gamma$, the $n$-simplex representing $C_{4}$ reached from $C_{2}$ by the same schedule $\gamma$ is chosen as its canonical neighbor with the label $U_{2}$. Three requirements of a canonical neighbor can be checked, as in the former case. 

Finally, we prove the case that $U_{2}$ is a subsimplex of $U_{1}$. In fact, we can use the same construction. If the schedule from $C$ to $C_{1}$ is a concatenation of $ids(U_{1})$ and an ordered partition of $\Pi - ids(U_{1})$, then the schedule from $C$ to $C_{2}$ is $ids(U_{2})$, $ids(U_{1}) - ids(U_{2})$ and the same ordered partition of $\Pi - ids(U_{1})$.
\end{proof}

According to Lemma \ref{the:canonical_neighbor_existence_lemma}, for an n-simplex $s^{n}$ in $Ch^{r_{m}}(\mathcal{I})$ that has an intersection with $\mathbb{F}_{r_{m}}(U)$ , there exists a canonical neighbor of $s^{n}$ with the label $U$ satisfying the first, second and third conditions if $r_{m} \geq 2$.
If we have to terminate some vertices with the label $U$ in a subdivision of $s^{n}$, we can always find its canonical neighbor and use $\delta_{U}(N(s^{n}, U))$ as the terminated values without violating the task specification. But an important question remains to be answered: are these terminated values well-defined? In fact, this is equal to the fourth condition in the definition of canonical neighbor. We have to prove that our construction satisfies the fourth condition.

\begin{lemma}
\label{the:canonical_neighbor_common_boundary_lemma}
    If two $n$-simplices in $Ch^{r_{m}}(\sigma)$ both have canonical neighbors with the label $U_{2}$ and share some vertex $v$ with the process id $p$, then the two vertices of the id $p$ in their canonical neighbors with the label $U_{2}$ will have the same output value when using partial protocol $\delta_{U_{2}}$.
\end{lemma}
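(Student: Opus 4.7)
I would prove the lemma by directly analyzing the explicit construction of canonical neighbors from Lemma \ref{the:canonical_neighbor_existence_lemma}. Let $s_1^n \in \mathbb{F}_{r_m}(U_1)$ and $s_3^n \in \mathbb{F}_{r_m}(U_3)$ (possibly with $U_1 = U_3$) share a vertex $v$ carrying process id $p$, and put $s_2^n = N(s_1^n, U_2)$ and $s_4^n = N(s_3^n, U_2)$. Since $\delta_{U_2}$ maps a process state to its output value, it suffices to show that the vertex for $p$ in $s_2^n$ and the vertex for $p$ in $s_4^n$ are the same vertex of $Ch^{r_m}(\sigma)$; equal vertices yield equal outputs immediately.

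The key observation is that a vertex of $Ch^{r_m}(\sigma)$ with id $p$ is uniquely determined by the sequence of subsets of $\Pi$ that $p$ reads in each of the $r_m$ rounds. The construction in Lemma \ref{the:canonical_neighbor_existence_lemma} rewrites only the first ordered partition of $\Pi$ and leaves rounds $2, \ldots, r_m$ untouched. So the plan reduces to (i) showing that $p$'s round-$1$ read coincides in $s_2^n$ and $s_4^n$, and then (ii) propagating this agreement forward by induction on the round number, using that the later schedule segments are identical and that any other process whose state $p$ eventually observes can be handled by the same two-step reduction applied to that process.

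For step (i), I would split into two cases. If $p \in ids(U_2)$, the rearrangement places $ids(U_2)$ as the initial set in both modified partitions, so $p$ reads exactly $ids(U_2)$ in both $s_2^n$ and $s_4^n$. If $p \notin ids(U_2)$, I would use the hypothesis that $v$ is shared: $p$'s original round-$1$ read in $s_1^n$ equals its round-$1$ read in $s_3^n$, which pins down $p$'s position relative to the blocks $ids(U_1)$, $(ids(U_2) - ids(U_1)) \cup D_1$, and the suffix $(\Pi - ids(U_1 \cup U_2) - D_1)$ of the original first partition of $s_1^n$ (and symmetrically for $s_3^n$ with $U_3, D_3$). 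Applying the rearrangement rule from Lemma \ref{the:canonical_neighbor_existence_lemma}, $p$'s new round-$1$ read in both $s_2^n$ and $s_4^n$ works out to the same set.

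The main obstacle lies in this non-$ids(U_2)$ case: the parameters $U_1, U_3, D_1, D_3$ may genuinely differ between the two constructions, so one must argue that these differences are invisible to $p$, i.e., they only permute processes that $p$ does not read. The key insight is that the portion of the first partition visible to $p$ is already fully encoded in $p$'s original round-$1$ read, and the canonical rearrangement preserves the relative ordering of that visible portion while merely shifting $ids(U_2)$ to the front. With that identification, (i) reduces to a routine bookkeeping check on the four sub-blocks of the first partition, step (ii) follows by a clean induction on the round number, and the lemma is established.
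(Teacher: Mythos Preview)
Your approach is essentially the paper's own: reduce to showing the $p$-vertices of the two canonical neighbors coincide, verify the round-$1$ read sets agree by a case analysis on $p$'s position relative to the blocks $ids(U_2)$, $ids(U_1)$, $D$ (the paper splits into the four sub-cases you allude to, and further into whether $C_1, C_1'$ lie in the same $\mathbb{F}_1(U_1)$), and then lift to rounds $2,\ldots,r_m$ using that those schedules are unchanged and indistinguishable to $p$. One small caution: the later schedules $\gamma,\gamma'$ are not literally identical, only indistinguishable to $p$ (and to every process in $p$'s causal past), so the round-$1$ case analysis must be applied to every such process---which is exactly what your final clause about ``any other process whose state $p$ eventually observes'' is acknowledging, and what the paper's line ``if any vertex is shared by $C_1, C_1'$ then the corresponding vertex is shared by $C_2, C_2'$'' is asserting.
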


\begin{proof}

Suppose that the two $n$-simplices representing $C_{3}$ and $C_{3}^{'}$ are reached from the two $n$-simplices representing $C_{1}$ and $C_{1}^{'}$ in $Ch^{1}(\sigma)$ by the schedules $\gamma$ and $\gamma^{'}$, respectively. Note that the $n$-simplices representing $C_{1}$ and $C_{1}^{'}$ may not be in the same subcomplex $\mathbb{F}_{1}(U_{1})$. We will use the notations of the previous proof to simplify the understanding. Because the vertex $v$ is shared by the two $n$-simplices representing $C_{3}$ and $C_{3}^{'}$, the process $p$ will see the same set of processes in each partition of $\Pi$.

If $C_{1}$ and $C_{1}^{'}$ are represented by the same $n$-simplex in $\mathbb{F}_{1}(U_{1})$, then using the strategy in the proof of Lemma \ref{the:canonical_neighbor_existence_lemma}, the canonical neighbors of the two $n$-simplices representing $C_{3}$ and $C_{3}^{'}$ are reached from the same $n$-simplex in $\mathbb{F}_{1}(U_{2})$ by the schedules $\gamma$ and $\gamma^{'}$, respectively. Since the vertex $v$ is shared by the two $n$-simplices representing $C_{3}$ and $C_{3}^{'}$, their canonical neighbors will also share the vertex having the process id $p$ in our construction. The output values are given by the same protocol and are therefore identical.

Suppose that two simplices representing $C_{1}$ and $C_{1}^{'}$ are not the same but still in the same subcomplex $\mathbb{F}_{1}(U_{1})$. 
Consider the case where $U_{1}$ is not a subsimplex of $U_{2}$ and vice versa. The schedule from $C$ to $C_{1}$ can be expressed as a concatenation of $ids(U_{1})$, an ordered partition of $(ids(U_{2}) - ids(U_{1}))\bigcup D$ and an ordered partition of $(\Pi - ids(U_{1}) \bigcup ids(U_{2}) - D)$, for some subset $D$ of $(\Pi - ids(U_{1}) \bigcup ids(U_{2}))$. The schedule from $C$ to $C_{1}^{'}$ can also be presented for some subset $D^{'}$. 
The $n$-simplex representing $C_{2}$ is defined as the $n$-simplex reached from $C$ by the schedule generated by concatenating $ids(U_{2})$, $ids(U_{1}) - ids(U_{2})$, the limitation of the same ordered partition of $(ids(U_{2}) - ids(U_{1}))\bigcup D$ to $D$ and the same ordered partition of $(\Pi - ids(U_{1}) \bigcup ids(U_{2}) - D)$. And the $n$-simplex representing $C_{2}^{'}$ is defined in the same way. We will show that the canonical neighbors reached from $C_{2}$ and $C_{2}^{'}$ share the vertex with the process id $p$.

We need to analyze the first partition of $\Pi$. Suppose that the vertex with process id $p$ is shared by $C_{1}$ and $C_{1}^{'}$. There are four possible cases, depending on the set to which the process $p$ belongs.
\begin{itemize}
    \item[1)] If $p \in ids(U_{2})$, then the process $p$ will see the input values of the processes in $U_{2}$ in $C_{2}$ and $C_{2}^{'}$. 
    \item[2)] If $p \in ids(U_{1}) - ids(U_{2})$, then the process $p$ will see the input values of the processes in $ids(U_{1}) + ids(U_{2})$ in $C_{2}$ and $C_{2}^{'}$.
    \item[3)] If $p \in D \bigcup D^{'}$, the vertex $v$ in $C_{1}$ and $C_{1}^{'}$ will see the same subset of $\Pi - ids(U_{1}) \bigcup ids(U_{2})$, and the same subset of $ids(U_{1}) \bigcup ids(U_{2})$ since it is a shared vertex of $C_{1}$ and $C_{1}^{'}$ by assumption. By construction, the process $p$ will see $ids(U_{1}) \bigcup ids(U_{2})$(not a subset of it) and the same subset of $D \bigcup D^{'}$ in $C_{2}$ and $C_{2}^{'}$.  
    \item[4)] If $p \in \Pi - ids(U_{1}) \bigcup ids(U_{2}) - D \bigcup D^{'}$, then the adjustment of the first partition of $\Pi$ will not affect the set of processes that $p$ sees. 
\end{itemize}
In each case, the sets of processes that $p$ will see in $C_{2}$ and $C_{2}^{'}$ are the same. If any vertex is shared by the simplices representing $C_{1}$ and $C_{1}^{'}$, the simplices representing $C_{2}$ and $C_{2}^{'}$ will share the vertex with the same process id. Recall that the schedules $\gamma$ from $C_{1}$ to $C_{3}$ and $\gamma^{'}$ from $C_{1}^{'}$ to $C_{3}^{'}$ are indistinguishable to the process $p$. The canonical neighbors of the simplices representing $C_{3}$ and $C_{3}^{'}$ are reached from $C_{2}$ and $C_{2}^{'}$ by $\gamma$ and $\gamma^{'}$, respectively. Therefore, the canonical neighbors of the simplices representing $C_{3}$ and $C_{3}^{'}$ share the vertex with the process id $p$.

Suppose that $U_{1}$ is a subsimplex of $U_{2}$.
If $C_{1}$ and $C_{1}^{'}$ are in $\mathbb{F}_{1}(U_{1})$. The schedule from $C$ to $C_{1}$ can be expressed as a concatenation of $ids(U_{1})$, an ordered partition of $(ids(U_{2}) - ids(U_{1}))\bigcup D$ and an ordered partition of $(\Pi - ids(U_{2}) - ids(D)$, for some subset $D$ of $(\Pi - ids(U_{2}))$. We can use almost the same proof techniques here as in the previous paragraph. We can discuss different cases of adjustment of the first partition and prove that the simplices representing $C_{2}$ and $C_{2}^{'}$ will share the vertex with the process id $p$. A detailed proof is omitted to avoid redundancy. 
In fact, the proof here can be included in the proof presented in the previous paragraphs as a special case where $U_{1}$ is a subsimplex of $U_{2}$. 

The case that $U_{2}$ is a subsimplex of $U_{1}$ can also be seen as a special case. This observation can simplify the remaining proof.

Suppose that the simplices representing $C_{1}$ and $C_{1}^{'}$ are not in the same subcomplex $\mathbb{F}_{1}(U_{1})$. We will denote their first set of processes(and their input values) as $U_{1}$ and $U_{1}^{'}$. Then the schedule from $C$ to $C_{1}$ is expressed as above, and the schedule from $C$ to $C_{1}^{'}$ can be expressed as a concatenation of $ids(U_{1}^{'})$, an ordered partition of $(ids(U_{2}) - ids(U_{1}^{'}))\bigcup D^{'}$ and an ordered partition of $(\Pi - ids(U_{1}^{'}) \bigcup ids(U_{2}) - D^{'})$, for some subset $D^{'}$ of $(\Pi - ids(U_{1}^{'}) \bigcup ids(U_{2}))$. If the vertex $v$ with the process id $p$ is shared by the simplices representing $C_{1}$ and $C_{1}^{'}$, then the process $p$ is in $\Pi - ids(U_{1}) \bigcap ids(U_{1}^{'})$, and has seen the input values of $ids(U_{1}) \bigcup ids(U_{1}^{'})$ in $C_{1}$ and $C_{1}^{'}$. The process $p$ sees the same set of processes in $C_{1}$ and $C_{1}^{'}$, and we will argue that it sees the same set of processes in $C_{2}$ and $C_{2}^{'}$. 

Consider the set of processes that $p$ will see in $C_{2}$ (compared with the set of processes that $p$ will see in $C_{1}$): 
\begin{itemize}
    \item[1)] If $p \in ids(U_{2})$, then the process $p$ will see the input values of $ids(U_{2})$ in $C_{2}$ and $C_{2}^{'}$.
    \item[2)] If $p \in ids(U_{1}) - ids(U_{2})$, it sees the input values of $ids(U_{1})$ in $C_{1}$ and sees the input values of $ids(U_{1}) \bigcup ids(U_{2})$ in $C_{2}$. The additional processes that $p$ sees are those in $ids(U_{2}) - ids(U_{1})$.
    \item[3)] If $p \in ids(D_{1})$, the new processes that process $p$ will see are those in $ids(U_{2})$ that it does not see in $C_{1}$.
    \item[4)] If $p \in \Pi - ids(U_{1}) \bigcup ids(U_{2}) - D_{1}$, no additional information is obtained by the process $p$. 
\end{itemize}

In case 1 where $p \in ids(U_{2})$, $p$ will see $ids(U_{2})$ in both $C_{2}$ and $C_{2}^{'}$. In cases 2, 3 and 4 where $p \notin ids(U_{2})$, the additional processes are those in $ids(U_{2})$ that process $p$ does not see in $C_{1}$ and $C_{1}^{'}$. Therefore, the process $p$ will see the same set of processes in $C_{2}$ and $C_{2}^{'}$. If any vertex is shared by the simplices representing $C_{1}$ and $C_{1}^{'}$, the simplices representing $C_{2}$ and $C_{2}^{'}$ will share the vertex with the same process id. The schedule $\gamma$ from $C_{1}$ to $C_{3}$ and $\gamma^{'}$ from $C_{1}^{'}$ to $C_{3}^{'}$ are indistinguishable to the process $p$. Therefore, the schedule $\gamma$ from $C_{2}$ to $C_{4}$ and $\gamma^{'}$ from $C_{2}^{'}$ to $C_{4}^{'}$ are indistinguishable to the process $p$. The canonical neighbors of the simplices representing $C_{3}$ and $C_{3}^{'}$ will share the vertex with the process id $p$.

\end{proof}

Finally, we need to consider the case that two $n$-simplices in $Ch^{r_{m}}(\mathcal{I})$ are not reached from a single initial configuration $C$.

\begin{lemma}
\label{the:canonical_neighbor_common_boundary_lemma_2}
    If two $n$-simplices in $Ch^{r_{m}}(\sigma_{1})$ and $Ch^{r_{m}}(\sigma_{2})$ both have a canonical neighbor with label $U_{2}$ and share some vertex $v$ with the process id $p$, then the two vertices in their canonical neighbors having the process id $p$ will have the same output value when using partial protocol $\delta_{U_{2}}$.
\end{lemma}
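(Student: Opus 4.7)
The plan is to reduce to the single-configuration setting of Lemma \ref{the:canonical_neighbor_common_boundary_lemma} by exploiting the fact that the shared vertex $v$ forces process $p$ to have identical local states in the two executions. First I would observe that since $v$ (with process id $p$) is shared between the $n$-simplex representing $C_3 \in Ch^{r_{m}}(\sigma_{1})$ and the $n$-simplex representing $C_3' \in Ch^{r_{m}}(\sigma_{2})$, process $p$ must read from the same set of processes in each partition of $\Pi$ along both schedules, and must see identical input values for those processes; otherwise its local state would differ and the vertices could not be equal. In particular, $\sigma_{1}$ and $\sigma_{2}$ agree on all processes that $p$ directly or indirectly reads in the schedules leading to $C_3$ and $C_3'$.

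Next I would introduce an auxiliary $n$-simplex $\sigma_{3} \in \mathcal{I}$ whose input values agree with $\sigma_{1}$ on the processes read by $p$ in $C_3$ (equivalently, agree with $\sigma_{2}$ on the processes read by $p$ in $C_3'$); the remaining inputs can be set arbitrarily. Applying the schedules that produced $C_3$ and $C_3'$ from $\sigma_{1}$ and $\sigma_{2}$ inside $Ch^{r_{m}}(\sigma_{3})$ yields simplices $\tilde C_3, \tilde C_3' \in Ch^{r_{m}}(\sigma_{3})$ whose vertex with process id $p$ coincides with $v$, because $p$'s local view depends only on the input values it sees. The canonical neighbor construction in Lemma \ref{the:canonical_neighbor_existence_lemma} is determined purely by the schedule, so applying it to $\tilde C_3$ and $\tilde C_3'$ inside $Ch^{r_{m}}(\sigma_{3})$ produces canonical neighbors $\tilde C_4$ and $\tilde C_4'$; by the same local-view preservation argument, the vertex with process id $p$ in $\tilde C_4$ is indistinguishable to $p$ from the vertex with process id $p$ in $C_4$, and similarly for $\tilde C_4'$ and $C_4'$.

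Finally, Lemma \ref{the:canonical_neighbor_common_boundary_lemma} applied to $\tilde C_3$ and $\tilde C_3'$ inside $Ch^{r_{m}}(\sigma_{3})$ gives that the vertices of $\tilde C_4$ and $\tilde C_4'$ with process id $p$ receive the same value under $\delta_{U_{2}}$, and because $\delta_{U_{2}}$ is a deterministic function of a process's local state, the outputs at $p$ in $C_4$ and $C_4'$ match those in $\tilde C_4$ and $\tilde C_4'$. Chaining these equalities yields the lemma. The main obstacle I anticipate is the careful bookkeeping in the second step: one must verify in each of the four cases of the first-partition analysis from the proof of Lemma \ref{the:canonical_neighbor_existence_lemma} (namely $p \in ids(U_{2})$, $p \in ids(U_{1}) \setminus ids(U_{2})$, $p \in D$, and $p \in \Pi - ids(U_{1}) \cup ids(U_{2}) - D$) that the set of processes and input values read by $p$ in the canonical neighbor depends only on the portion of the initial configuration inside $p$'s view, so that switching the underlying initial configuration from $\sigma_{1}$ to $\sigma_{3}$ (and from $\sigma_{2}$ to $\sigma_{3}$) does not alter $p$'s local state in the resulting neighbor.
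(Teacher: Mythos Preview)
Your approach is correct but takes a longer path than the paper's. The paper observes directly that since $v$ lies in both $Ch^{r_m}(\sigma_1)$ and $Ch^{r_m}(\sigma_2)$, it must lie in $Ch^{r_m}(s_b)$ where $s_b=\sigma_1\cap\sigma_2$, and hence $carrier(v,\sigma_1)=carrier(v,\sigma_2)=s_c\subseteq s_b$. Requirement~3 of the canonical neighbor then forces the vertex with id $p$ in each canonical neighbor to also have carrier $s_c$, so both of those vertices lie in $Ch^{r_m}(s_c)$. Since the canonical-neighbor construction only reshuffles the first partition in a way that depends on the schedule alone (and $p$'s resulting state only depends on the inputs inside $s_c$, where $\sigma_1$ and $\sigma_2$ agree), the two vertices coincide. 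That is the entire proof.

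Your auxiliary $\sigma_3$ is not really needed: taking $\sigma_3=\sigma_1$ makes $\tilde C_3=C_3$, $\tilde C_4=C_4$, and your argument collapses to exactly the carrier-locality observation above for the pair $\tilde C_4',C_4'$. In particular, the ``main obstacle'' you anticipate---the four-case bookkeeping from the first-partition analysis---is already absorbed by requirement~3 (carrier preservation), which was established in Lemma~\ref{the:canonical_neighbor_existence_lemma}; you do not need to redo it. What your detour buys is a formal reduction to Lemma~\ref{the:canonical_neighbor_common_boundary_lemma}, which is a legitimate organizational choice, but the paper's direct argument is shorter and makes the locality of the construction more transparent.
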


\begin{proof}
    Let $s_{b}$ be the largest common simplex of $\sigma_{1}$ and $\sigma_{2}$.  The vertex $v$ is in $Ch^{r_{m}}(s_{b})$. So $carrier(v, \sigma_{1})$ and $carrier(v, \sigma_{2})$ are the same simplex $s_{c}$, which is a subsimplex of $s_{b}$. Therefore, the two vertices with the process id $p$ of two canonical neighbors are both in $Ch^{r_{m}}(s_{c})$. By the symmetry of our construction, we know that the two vertices are the same and have the same output value when using the partial protocol $\delta_{U_{2}}$.
\end{proof}

We summarize the above lemmas into a theorem:
\begin{theorem}
    
\label{the:canonical_neighbor_existence_theorem}
    For an $n$-simplex $s^{n}$ in $Ch^{r_{m}}(\mathcal{I})$ that has an intersection with $\mathbb{F}_{r_{m}}(U)$, there exists an canonical neighbor $N(s^{n}, U)$ of $s^{n}$ with the label $U$ if $r_{m} \geq 2$.
\end{theorem}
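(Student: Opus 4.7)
The plan is to treat this theorem as the consolidation of the three preceding lemmas rather than as a fresh argument. Lemma \ref{the:canonical_neighbor_existence_lemma} already supplies, for each fixed initial $n$-simplex $\sigma \in \mathcal{I}$ and each $n$-simplex $s^{n} \in Ch^{r_m}(\sigma)$ that meets $\mathbb{F}_{r_m}(U)$, an explicit candidate in $\mathbb{F}_{r_m}(U)$ that verifies the first three requirements of the canonical-neighbor definition. The construction rewrites the schedule from $\sigma$ to $s^{n}$ by replacing its first block of processes with $ids(U)$ followed by the leftover processes of the original first block, while keeping the rest of the schedule (and the subsequent partitions) unchanged; the hypothesis $r_m \geq 2$ is exactly what lets the second round propagate the new input views to every process, restoring equality of carriers for non-intersection vertices.

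My first step would be to apply Lemma \ref{the:canonical_neighbor_existence_lemma} pointwise, yielding a definition of $N(s^{n}, U)$ for every candidate $s^{n}$. Second, I would invoke Lemma \ref{the:canonical_neighbor_common_boundary_lemma} to discharge requirement (4) within a single $Ch^{r_m}(\sigma)$: whenever two $n$-simplices share a vertex with process id $p$, the rewriting recipe is shown to produce canonical neighbors that still share the vertex with id $p$, via a case analysis on whether $p$ lies in $ids(U)$, $ids(U_1) - ids(U)$, the auxiliary set $D$, or the rest of $\Pi$. Third, I would invoke Lemma \ref{the:canonical_neighbor_common_boundary_lemma_2} to cover $n$-simplices drawn from $Ch^{r_m}(\sigma_1)$ and $Ch^{r_m}(\sigma_2)$ for distinct initial configurations: any shared vertex must lie in $Ch^{r_m}(\sigma_1 \cap \sigma_2)$, and the symmetry of the rewriting recipe on the common face forces the two canonical neighbors to coincide on that vertex.

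The main obstacle, as is already exposed in Lemma \ref{the:canonical_neighbor_existence_lemma}, is requirement (3): non-intersection vertices of $s^{n}$ must have exactly the same carrier in $\mathcal{I}$ as their counterparts in $N(s^{n}, U)$. This property genuinely fails after one round, because a process in $ids(U_1) \cup ids(U)$ can read a strictly different subset of input values after the first block is swapped. The remedy, which the theorem bakes into the hypothesis $r_m \geq 2$, combines Lemma \ref{the:carrier_of_union} with the observation that the very condition ``$s^{n}$ intersects $\mathbb{F}_{r_m}(U)$'' forces at least one process outside $ids(U_1) \cup ids(U)$ to appear in the first block of the second round, which in turn makes every process read the full set $ids(U_1) \cup ids(U)$ by the end of round two; all carriers then agree and the three local requirements hold. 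Once this is in place, gluing the constructions across different $n$-simplices and different initial configurations is exactly what the two common-boundary lemmas accomplish, and the theorem follows.
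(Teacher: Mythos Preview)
Your proposal is correct and matches the paper's approach exactly: the paper explicitly introduces this theorem with ``We summarize the above lemmas into a theorem,'' treating it as the consolidation of Lemmas \ref{the:canonical_neighbor_existence_lemma}, \ref{the:canonical_neighbor_common_boundary_lemma}, and \ref{the:canonical_neighbor_common_boundary_lemma_2}, precisely the three ingredients you identify. Your reading of how the lemmas split the four requirements (existence with conditions 1--3 from the first, condition 4 within a single $\sigma$ from the second, condition 4 across distinct initial configurations from the third) and of the role of $r_m \geq 2$ is accurate.
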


\subsection{Active distance}
Active distance is an essential tool in our adversarial strategy. The active distance between subgraphs $A$ and $B$ of a graph $G$ is defined to be the minimal number of edges between non-terminated vertices in any path from $A$ to $B$ in $G$. An important property is that the active distance between $\chi(A)$ and $\chi(B)$ in the non-uniform chromatic subdivision of $G$ is double the active distance between $A$ and $B$ in $G$. This property allows the adversary to subdivide a graph finitely to make certain subgraphs distant enough from each other.

\section{A necessary and sufficient condition to have no extension-based impossibility proofs}
\label{sec:nccondition}

\subsection{An enhanced necessary condition}
\label{sec:nccondition:glue_protocols}

In Section \ref{sec:preparations}, we give a necessary condition for finalization after phase 1. The conditions of Theorem \ref{the:partial_info_theorem} imply a partial protocol for each subcomplex $\mathbb{F}_{1}(U)$ of $Ch(\mathcal{I})$. But these partial protocols are not irrelevant, since the output values of some configurations should be the same. Taking as an example the initial configuration of the consensus task of three processes with input values $(0, 1, 2)$, there is a partial protocol for each subcomplex $\mathbb{F}_{1}(U)$ where $U$ is a 0-simplex whose process id is $p$. As shown in Figure \ref{img:3_processes_consensus_task}, the adversary can assign the input value of $p$ to each vertex. But in phase 1, the prover can submit a chain of queries corresponding to the schedule $\{p_{0}\}, \{p_{1}, p_{2}\}, \{p_{1}\}, \{p_{1}\}, ....$ where the process $p_{1}$ is scheduled until it outputs a value $b$. If the output value $b$ is 0, then the protocol for $\mathbb{F}_{1}(\{p_{2}\})$ is not possible when the prover chooses the schedule $\{p_{2}\}$ at the end of phase 1. If the output value $b$ is 2, then the protocol for $\mathbb{F}_{1}(\{p_{0}\})$ is not possible when the prover chooses the schedule $\{p_{0}\}$ at the end of phase 1. From this example, we know that there exist some requirements on the set of partial protocols. Will these partial protocols have to give the same output for each configuration on the boundary? The answer is no since if so, there will be a protocol for the task $(\mathcal{I}, \mathcal{O}, \Delta)$, which contradicts the definition of extension-based impossibility proofs. We will start with the simplest type, i.e. restricted extension-based proofs.

\begin{figure}[h]
  \centering
  \includegraphics[width=\linewidth]{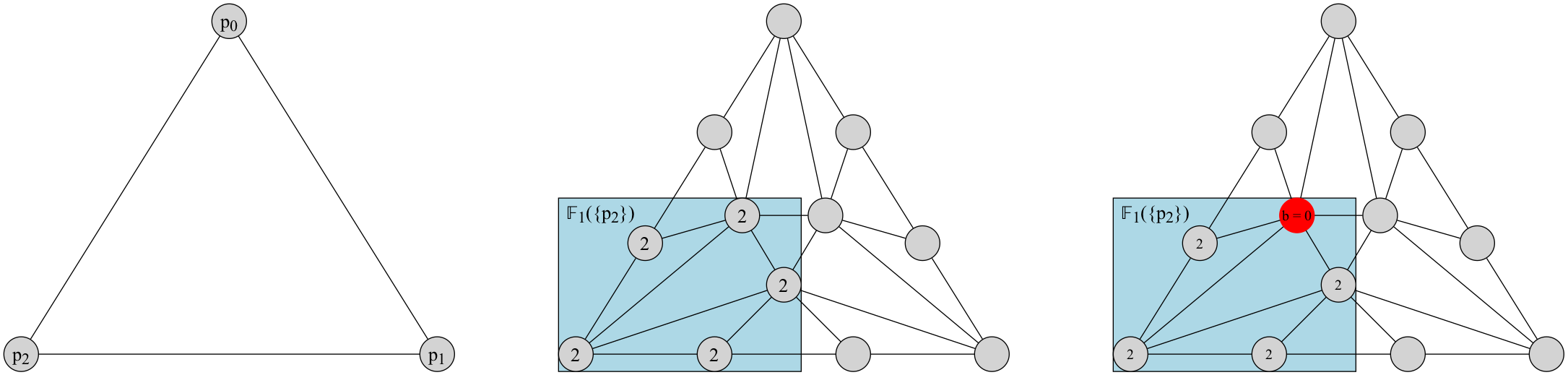}
  \caption{The 3-processes consensus task}
  \label{img:3_processes_consensus_task}
\end{figure}

First, we will discuss the most basic requirement for the set of partial protocols. For each simplex $s^{k}$ in
$\mathbb{F}_{1}(U_{1}) \cap \mathbb{F}_{1}(U_{2})$, $CEN(s^{k})$ is defined as the configuration reached from $s^{k}$ via a schedule that repeats the set of processes $ids(s^{k})$ until all processes in $ids(s^{k})$ terminate. We say that two partial protocols $\delta_{U_{1}}$ and $\delta_{U_{2}}$ are compatible if the output values of $CEN(s^{k})$ are the same for each possible $s^{k}$. A set of partial protocols is compatible if any two partial protocols are compatible. We can prove that the set of partial protocols should be compatible. Therefore, this is a necessary condition for finalization after phase 1.

Let $s^{k}$ be any such $k$-simplex in the standard chromatic subdivision of $C$, where $C$ is an initial configuration. In phase 1, the prover can submit a chain of queries so that the output of $CEN(s^{k})$ is returned by the protocol. Since the choice of $s^{k}$ is finite, there is only a finite number of chains of queries. The prover can end phase 1 by choosing any set of processes as the schedule, which means that any partial protocol for $\mathbb{F}_{1}(U)$ must adopt the existing output values for $CEN(s^{k})$. This is why the adversary for the consensus task cannot finalize after phase 1 as what the adversary for the $(n, 2)$-set agreement task can do: there exists a partial protocol with respect to each $U \in \mathcal{I}$, but some intersection simplices cannot be compatible.

We shall now prove that this requirement is sufficient for an adversary to finalize after phase 1, which is one of the main results in this paper: the adversary can win against any restricted extension-based prover after phase 1 if there is a partial protocol with respect to each $U$ and the set of partial protocols is compatible. Like in previous work \cite{Alistarh19, Brusse21}, we will design an adversary that can win against any extension-based prover. The adversary maintains a partial specification $\delta$ of a protocol and some invariants. All these previous sets of invariants are closely related to some specific task, such as the $(n, k)$-set agreement task.

\subsection{Adversarial strategy in phase 1}
\label{sec:nccondition:adversarial_strategy}

We define an adversarial strategy so that, before and after each query made by the prover in phase 1, the adversary is able to maintain some invariants. The adversary maintains a partial specification of $\delta$ (the protocol it is adaptively constructing) and an integer $t \leq 0$. The integer $t$ represents the number of non-uniform chromatic subdivisions of the input complex, $S^{0}$, that it has performed. The specification of $\delta$ is defined using a sequence of complexes indexed by an integer $t$, and each complex $S^{t}$ represents all the configurations that can be reached from the initial configurations by a $t$-round schedule. When the prover submits a query in the first phase, the adversary may increment the value of $t$ and set $\delta$ values for some vertices so that it can respond to the prover while maintaining the invariants. Note that our invariant (1) is the same as the invariant (1) in \cite{Alistarh19}, and our adversarial strategy is inspired by theirs. Invariants will use the concepts defined in Section \ref{sec:preparations}. 
\begin{itemize}
    \item[(1)] For each $0 \leq r < t$ and each vertex $v \in S^{r}$ , $\delta(v)$ is defined. If $v$ is a vertex in $S^{t}$, then $\delta(v)$ is undefined or $\delta(v) \neq \perp$. If $s$ is the state of a process in a configuration that has been reached by the prover, then $s$ is a vertex in $S^{r}$, for some $0 \leq r \leq t$, and $\delta(s)$ is defined.
    \item[(2)] Let $s^{k_{1}}$ be a $k_{1}$-simplex representing a terminated configuration in $S^{t}$. Suppose that $s^{k_{1}}$ is in the subdivision of some $n$-simplex $s^{n}$ of $\mathbb{F}_{r_{m}}(U)$. Then $s^{k_{1}}$ is given a simplex $U^{'} \in \mathcal{I}$ as its label indicating which partial protocol its $\delta$ value is from. If $U = U^{'}$, then $\delta(s^{k})$ is defined as the limitation of $\delta_{U}(s^{n})$ to $ids(s^{k})$ (that is, removing the output values of all processes not in $ids(s^{k})$). If $U \neq U^{'}$, then $\delta(s^{k})$ is defined as the limitation of $\delta_{U^{'}}(N(s^{n}, U^{'}))$ to $ids(s^{k})$ where $N(s^{n}, U^{'})$ is the canonical neighbor of $s^{n}$ with the label $U^{'}$, and there exists a $k_{2}$-simplex $s^{k_{2}}$ in $\mathbb{F}_{t}(U^{'})$ terminated with the label $U^{'}$ such that there is a path consisting of terminated vertices with the label $U^{'}$ from $s^{k_{1}}$ to $s^{k_{2}}$.
    \item[(3)] Let $s^{k_{1}}$ and $s^{k_{2}}$ be two terminated simplices with different labels in $S^{t}$. The active distance between $s^{k_{1}}$ and $s^{k_{2}}$ in $S^{t}$ is at least 3.
\end{itemize}
The idea behind the second invariant is that the canonical neighbor with the label $U^{'}$ is only defined for simplices in $Ch^{r_{m}}(\mathcal{I})$ that have an intersection with $\mathbb{F}_{r_{m}}(U^{'})$. Therefore, we require that the simplex $s^{k}$ be connected to some simplex of $\mathbb{F}_{t}(U^{'})$. We will prove if such a path exists, the simplex $s^{n}$ where $s^{k}$ is from will have an intersection with $\mathbb{F}_{r_{m}}(U^{'})$ (i.e. have a canonical neighbor with the label $U^{'}$ by Theorem \ref{the:canonical_neighbor_existence_theorem} ).

Initially, the adversary sets $\delta(v) = \perp$ for each vertex $v \in S^{0}$. Then it subdivides $S^{0}$ to construct $S^{1}$ and set $t = 1$. To keep invariant (2) true, repeat this subdivision $r_{m} + r_{a}$ times and set $t = r_{m} + r_{a} + 1$, for some constant number $r_{a} \geq 1$. The $\delta$ values of all vertices in $S^{r}$ where $r \leq r_{m} + r_{a}$ will be assigned the value $\perp$, and all vertices in $S^{r_{m} + r_{a} + 1}$ will be undefined. Before the first query, the prover has only reached the initial configurations and there are no defined vertices. Therefore, the invariants are satisfied at the beginning of phase 1.

Now suppose that the invariants are satisfied immediately prior to a query $(C, P)$ by the prover, where $C$ is a configuration previously reached by the prover and $P$ is a set of active processes in $C$ poised to access the same immediate snapshot object. There is an integer $0 \leq r \leq t$ such that the states of the processes in $P$ correspond to a simplex in $S^{r}$ by the first invariant. Since the processes in $P$ are still active in $C$, $\delta(v) = \perp$ for each such vertex $v$ and $r < t$. Let $\sigma$ be the simplex in $S^{r + 1}$ whose vertices represent the states of processes in $P$ in the resulting configuration.

If $0 \leq r < t - 1$, then $\delta$ is already defined for $\sigma$ in $S^{r  + 1}$ according to invariant (1). The adversary does not do anything and returns the defined values of $\delta$. Since $\delta$ has not changed, the invariants are still valid. 

If $r = t - 1$, then $\sigma$ in $S^{t}$ may contain some undefined vertices. The adversary has to define the $\delta$ values for these undefined vertices. For each undefined vertex $v$ having a process id $p$ in $\sigma$, there are three assignment rules:
\begin{itemize}
    \item[1)] If there is label $U$ for the vertex $v$ such that $v$ is in some subdivision of some $n$-simplex $s^{n}$ of $\mathbb{F}_{r_{m}}(U)$ and the active distance between $v$ and any terminated configuration with a different label is at least 3, the adversary will give $v$ the label $U$ and set $\delta(v) = \delta_{U}(v^{'})$, where $v^{'}$ is the vertex of $s^{n}$ with the same process id as $v$. 
    \item[2)] If there is some vertex that is adjacent to $v$ and already terminated with the label $U^{'}$, and the active distance between $v$ and any terminated configuration with a label different from $U^{'}$ is at least 3, the adversary can terminate the vertex $v$ with the label $U^{'}$ although $v$ is not a vertex in $\mathbb{F}_{t}(U^{'})$. The adversary sets $\delta(v) = \delta_{U^{'}}(v^{'})$, where $v^{'}$ is the vertex of $N(s^{n}, U^{'})$ with the same process id as $v$. We claim that the canonical neighbor of $s^{n}$ with the label $U^{'}$ is always defined. 
    \item[3)] Otherwise, the adversary will set $\delta(v) = \perp$. 
\end{itemize}

We introduce rule (2) to avoid an infinite chain of queries. However, this rule cannot be used frequently, since only $n$-simplices that have an intersection with $\mathbb{F}_{r_{m}}(U^{'})$ will have a canonical neighbor with the label $U^{'}$. We impose a restriction on this rule: If rule (2) is applied to a vertex in the $n_{1}$-th complex $S^{n_{1}}$, then it can only be used again in the $n_{2}$-th complex $S^{n_{2}}$ if $n_{2} - n_{1} >= r_{s}$ for some constant number $r_{s}$. Using this restriction, we prove the claim in rule (2). A geometric proof of Lemma \ref{the:glue_protocols:prover_fails:no_abuse} is given in Appendix \ref{appendix:geometric_proof}.

\begin{lemma}
\label{the:glue_protocols:prover_fails:terminated_vertices_not_reach}
If a $k$-simplex $s^{k}$ in $S^{t}$ has an active distance of at least 1 to every vertex that is terminated with some label $U^{'}$ and has an active distance of at least 2 to every undefined vertex in $\mathbb{F}_{t}(U^{'})$, then any vertex of $\chi^{t^{'} - t}(s^{k}, \delta)$ in $S^{t^{'}}$, where $t^{'} > t$, will not be terminated with the label $U^{'}$.
\end{lemma}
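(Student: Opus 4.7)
The plan is to prove by induction on $t^* - t$ for $t \leq t^* \leq t'$ the stronger statement that, in $S^{t^*}$, the active distance from $\chi^{t^*-t}(s^k, \delta)$ to every terminated-$U'$ vertex is at least $1$ and to every undefined vertex of $\mathbb{F}_{t^*}(U')$ is at least $2$. The conclusion then follows because active distance $\geq 1$ from a terminated-$U'$ vertex forbids such a vertex from lying inside $\chi^{t^*-t}(s^k)$ itself. The base case $t^* = t$ is exactly the lemma's hypothesis. For the inductive step, decompose the transition $S^{t^*} \to S^{t^*+1}$ into (i) the non-uniform chromatic subdivision and (ii) the new rule (1) or rule (2) terminations introduced by the adversary at stage $t^*+1$. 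Part (i) is handled by the doubling property of active distance (Section \ref{sec:preparations}), which upgrades the bounds to $\geq 2$ and $\geq 4$ with respect to \emph{pre-existing} terminated and undefined vertices.

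The real work is in part (ii). A rule-(1) termination requires the new terminated vertex $v$ to lie in $\mathbb{F}_{t^*+1}(U') = \mathbb{F}_{r_m}(U')$. Since $\delta_{U'}$ terminates all processes by round $r_m$, the vertex set of $\mathbb{F}_{r_m}(U')$ stabilizes from stage $r_m$ onward, so $v$ was already a vertex of $S^t$; if $v$ additionally lies in the geometric region of $s^k$, it must be a vertex of $s^k$ itself, because no other vertex of $S^t$ can sit in the open interior of a simplex of $S^t$. But then the hypothesis forbids $v$: it cannot be undefined in $\mathbb{F}_t(U')$ (that would give distance $0 < 2$), and it cannot already be terminated with label $U'$ (distance $<1$); if it is terminated with some other label, it cannot be re-terminated at stage $t^*+1$. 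For rule (2), the new terminated vertex $v$ is adjacent to an existing terminated-$U'$ vertex $w$; by the doubling just established for part (i), $w$ is at active distance $\geq 2$ from $\chi^{t^*+1-t}(s^k)$, so $v$, at distance $\leq 1$ from $w$, lies at distance $\geq 1$, preserving the first invariant. The $r_s$-frequency restriction on rule (2) is essential here, since it guarantees at most one new rule-(2) termination per stage and thereby precludes any cascade that could eat into the $2 \to 1$ slack and collapse the bound. The second invariant is preserved automatically, as new terminations only shrink the undefined set while the doubled bound of $\geq 4$ comfortably exceeds $2$.

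The main obstacle is coordinating parts (i) and (ii): a freshly rule-(2)-terminated vertex can in principle approach $\chi^{t^*+1-t}(s^k)$ more closely than the doubled bound would naively predict. Resolving this requires both the doubling of distance that the subdivision provides and the sparsity of rule (2) enforced by the $r_s$-restriction — neither suffices alone. Once these two ingredients are combined, the inductive invariant is stable from one stage to the next, and the lemma follows.
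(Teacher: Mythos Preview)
Your inductive-invariant approach is a natural idea, but the rule-(1) case contains a genuine error that breaks the proof. You claim that a vertex $v$ newly terminated by rule~(1) with label $U'$ must lie in $\mathbb{F}_{t^{*}+1}(U') = \mathbb{F}_{r_m}(U')$ and hence was already a vertex of $S^{t}$. The equality is false: it confuses the partial protocol $\delta_{U'}$ (whose protocol complex does stabilize at round $r_m$) with the adversary's protocol $\delta$. The adversary sets $\delta = \perp$ for all vertices through round $r_m + r_a$, so the non-uniform subdivisions it performs are in fact uniform, and $\mathbb{F}_{t^{*}}(U')$ --- the subcomplex of $S^{t^{*}}$ occupying the geometric region of $\mathbb{F}_{r_m}(U')$ --- strictly \emph{grows} with $t^{*}$. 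A rule-(1) vertex $v$ at stage $t^{*}+1$ is a vertex of $S^{t^{*}+1}$ that need not be a vertex of $S^{t}$ at all, so your argument that $v$ would have to coincide with a vertex of $s^{k}$ collapses. The correct leverage for rule~(1) is the doubled invariant~(B): $v$ was undefined in $\mathbb{F}_{t^{*}+1}(U')$, so $d(s'', v) \geq 4$ just after subdivision. But even granting this, you face a second gap: at this point in the paper rule~(1) carries \emph{no} $r_s$-restriction (that is only introduced later, in Section~\ref{sec:nccondition:finalization_after_phase_1:stage3}), so many rule-(1) terminations can occur at a single stage, and each one deactivates edges and can lower active distances throughout the complex. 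Your argument handles only a single rule-(2) termination via the triangle inequality; the cumulative effect of unrestricted rule-(1) terminations on your invariants (A) and (B) is never analyzed and is not obviously bounded.

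The paper's proof avoids both issues by a different route: it argues by contradiction on a shortest path $v_0,\ldots,v_n$ of terminated-$U'$ vertices from (the subdivision of) $Q(U') = \mathbb{F}_t(U') \cup \{\text{terminated-}U'\}$ to (the subdivision of) $s^k$. Minimality forces $v_{n-2}$ and $v_{n-1}$ to lie outside $\mathbb{F}(U')$'s subdivision, hence both are rule-(2) terminated, so their termination stages differ by at least $r_s$. This gap gives enough subdivision between the two terminations to make it impossible for $v_{n-1}$ to be simultaneously adjacent to $v_{n-2}$ (which is not in the $1$-neighborhood $V$ of $\chi^{t_1-t}(s^k)$) and to $v_n \in \chi^{t'-t}(s^k)$. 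The shortest-path framing is what lets the paper sidestep tracking cumulative distance degradation from rule-(1) terminations.
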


\begin{proof}
Let $Q(U^{'})$ be the subcomplex of $S^{t}$ consisting of $\mathbb{F}_{t}(U^{'})$ and all simplices that are terminated with the label $U^{'}$. 

Suppose that some vertex of $\chi^{t^{'} - t}(s^{k}, \delta)$ in $S^{t^{'}}$, for some $t^{'} > t$, is terminated with the label $U^{'}$. Then there is a finite path $v_{0}, v_{1}, .... v_{n}$ consisting of terminated vertices with the label $U^{'}$ from $\chi^{t^{'} - t}(Q(U^{'}), \delta)$ to $\chi^{t^{'} - t}(s^{k}, \delta)$. We assume this path $v_{0}, v_{1}, .... v_{n}$ is the shortest path that the adversary can construct from $\chi^{t^{'} - t}(Q(U^{'}), \delta)$ to $\chi^{t^{'} - t}(s^{k}, \delta)$ for some $t^{'}$ when processing all possible queries submitted by a prover. The length of the path $v_{0}, v_{1} \cdots v_{n}$ is more than 2, since the active distance is the number of edges between non-terminated vertices.

When the vertex $v_{n - 2}$ is terminated with the label $U^{'}$ in $S^{t_{1}}$ where $t \leq t_{1} < t^{'}$, $v_{n - 2}$ is not in the set $V$ of vertices, each of which is adjacent to $\chi^{t_{1} - t}(s^{k}, \delta)$. This can be proved as follows: If a vertex of $\chi^{t_{1} - t}(s^{k}, \delta)$ is terminated with some label $U^{''} \neq U^{'}$, then $v_{n - 2}$ cannot be adjacent to this vertex by invariant (3). If $v_{n - 2}$ is adjacent to an undefined vertex $v_{u}$ of $\chi^{t_{1} - t}(s^{k}, \delta)$, then the adversary can terminate $v_{u}$ in $\chi^{t_{2} - t}(s^{k}, \delta)$ for some $t_{2} \geq t_{1}$ (just consider what the adversary does when an extension-based prover submits a chain of queries to ask the $\delta$ value of $v_{u}$). This means a shorter path from some subdivision of $Q(U^{'})$ to some subdivision of $s^{k}$, a contradiction to the assumption that $v_{0}, v_{1}, .... v_{n}$ is the shortest path.

Therefore, in the complex $S^{t^{'}}$, $v_{n - 1}$ is in $\chi^{t^{'} - t_{1}}(V, \delta)$ and $v_{n}$ is in $\chi^{t^{'} - t}(s^{k}, \delta)$. When $v_{n - 1}$ is terminated in the complex $S^{t_{3}}$, where $t_{3} - t_{1} \geq r_{s}$, the length of a path consisting of undefined vertices between $\chi^{t_{3} - t_{1}}(V, \delta)$ and $\chi^{t_{3} - t}(s^{k}, \delta)$ is greater than 1, contradicting the assumption that $v_{n - 1}$ is adjacent to $v_{n}$ in $S^{t^{'}}$. Note that we do not use the active distance here, since the active distance between $V$ and $\chi^{t_{1} - t}(s^{k}, \delta)$ can be 0. Our concern is the paths in which each vertex is undefined. Therefore, no vertex of $\chi^{t^{'} - t}(s^{k}, \delta)$ in $S^{t^{'}}$, for each $t^{'} > t$, is terminated with the label $U^{'}$.

\end{proof}

\begin{lemma}
\label{the:glue_protocols:prover_fails:no_abuse}
    If a vertex $v$ is terminated by rule (2) with label $U^{'}$ and $v$ is in the subdivision of an $n$-simplex $s^{n}$ of $Ch^{r_{m}}(\mathcal{I})$, then $s^{n}$ has a canonical neighbor with label $U^{'}$.
\end{lemma}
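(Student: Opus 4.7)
The plan is to argue by contradiction. I would suppose $s^n$ has no canonical neighbor with label $U'$, which by the construction of canonical neighbors in Section~\ref{sec:preparations:canonical_neighbor} and Theorem~\ref{the:canonical_neighbor_existence_theorem} is equivalent to $s^n \cap \mathbb{F}_{r_m}(U') = \emptyset$: the two subcomplexes of $Ch^{r_m}(\mathcal{I})$ share no vertex, and consequently their graph distance in the 1-skeleton is at least $1$.

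I would then apply Lemma~\ref{the:glue_protocols:prover_fails:terminated_vertices_not_reach} at the adversary's initial setup time $t_0 = r_m + r_a + 1$, before any query has been processed. At this time every vertex of $S^{t_0}$ has $\delta$ value $\perp$ or is undefined and no vertex is yet terminated, so the non-uniform chromatic subdivision coincides with the standard one: $\mathbb{F}_{t_0}(U') = \chi^{r_a+1}(\mathbb{F}_{r_m}(U'))$ and the subdivision of $s^n$ at time $t_0$ is $\chi^{r_a+1}(s^n)$. Since no vertex is terminated, active distance equals graph distance, and applying the doubling property of active distance under chromatic subdivision across $r_a+1 \ge 2$ rounds yields an active distance of at least $2$ between $\chi^{r_a+1}(s^n)$ and $\mathbb{F}_{t_0}(U')$. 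Hence the two hypotheses of Lemma~\ref{the:glue_protocols:prover_fails:terminated_vertices_not_reach} hold at time $t_0$ with $s^k = \chi^{r_a+1}(s^n)$: active distance to terminated $U'$-vertices is vacuously at least $1$, and active distance to undefined vertices in $\mathbb{F}_{t_0}(U')$ is at least $2$.

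Lemma~\ref{the:glue_protocols:prover_fails:terminated_vertices_not_reach} then guarantees that no vertex of $\chi^{t - t_0}(s^k, \delta)$, which is exactly the subdivision of $s^n$ in $S^t$ for any $t > t_0$, is ever terminated with label $U'$. But at the moment rule (2) is invoked, $v$ lies in this very subdivision of $s^n$ and is being assigned label $U'$, a contradiction. Therefore $s^n$ must intersect $\mathbb{F}_{r_m}(U')$, and by Theorem~\ref{the:canonical_neighbor_existence_theorem} it has a canonical neighbor with label $U'$.

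The main obstacle in formalizing the plan will be pinning down the doubling property in exactly the shape needed here: we start from two subcomplexes of $Ch^{r_m}(\mathcal{I})$ that are merely disjoint (graph distance $\ge 1$), and we must track their images through $r_a+1$ initial rounds—where all $\delta$ values are $\perp$ and so the non-uniform subdivision behaves as the standard $Ch$—confirming that no vertex of $\chi^{r_a+1}(s^n)$ ever coincides with or lies adjacent (in the 1-skeleton) to a vertex of $\chi^{r_a+1}(\mathbb{F}_{r_m}(U'))$. This is precisely the role of the constant $r_a \ge 1$ baked into the initial setup, and it is what ultimately licenses the application of Lemma~\ref{the:glue_protocols:prover_fails:terminated_vertices_not_reach}.
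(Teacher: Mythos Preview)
Your proposal is correct and follows essentially the same approach as the paper's main proof: both observe that an $n$-simplex with no intersection with $\mathbb{F}_{r_m}(U')$ is, after the initial $\perp$-rounds, at active distance at least $2$ from $\mathbb{F}_{r_m+r_a}(U')$, and then invoke Lemma~\ref{the:glue_protocols:prover_fails:terminated_vertices_not_reach}. The only cosmetic differences are that the paper works with the entire subcomplex $Q$ of all non-intersecting $n$-simplices at once (rather than a single $s^n$) and applies the lemma at level $r_m+r_a$ rather than $r_m+r_a+1$; the paper also records in an appendix an alternative geometric argument via mesh-shrinking, but its primary proof matches yours.
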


\begin{proof}
Let $Q$ be the subcomplex of $Ch^{r_{m}}(\mathcal{I})$, which consists of all $n$-simplices that do not have an intersection with $\mathbb{F}_{r_{m}}(U^{'})$. In the complex $S^{r_{m} + r_{a}}$, $\delta(v)$ is defined as $\perp$ for each vertex $v$. Therefore, the active distance between $\chi^{r_{a}}(Q, \delta)$ and $\mathbb{F}_{r_{m} + r_{a}}(U^{'})$ is greater than 2. According to Lemma \ref{the:glue_protocols:prover_fails:terminated_vertices_not_reach}, no vertex in $\chi^{t^{'}}(Q, \delta)$, where $t^{'} > r_{m} + r_{a} $, is terminated with the label $U^{'}$.

\end{proof}

In our definition of canonical neighbors, only an $n$-simplex $s^{n}$ that shares some vertices with $\mathbb{F}_{U^{'}}$ has a canonical neighbor with the label $U^{'}$. But if a vertex $v$ in $\mathbb{F}_{t}(U)$ is in the subdivision an $n$-simplex that does not have a canonical neighbor with the label $U^{'}$ and is terminated using the second assignment rule, then its $\delta$ value is illegal. Lemma \ref{the:glue_protocols:prover_fails:no_abuse} shows that this situation is impossible. The vertex $v$ terminated by rule (2) will only be in the subdivision of $n$-simplices in $Ch^{r_{m}}(\mathcal{I})$ that have a canonical neighbor with the label $U^{'}$.

If the adversary sets $\delta(v) \neq \perp$ for each vertex $v \in \sigma$, invariant (1) holds. Otherwise, some vertex terminates with the value $\perp$, and the requirement for $S^{t}$ is not satisfied. The adversary will define $\delta(v) = \perp$ for each vertex $v \in S^{t}$ where $\delta(v)$ is undefined, subdivide $S^{t}$ to create $S^{t+1}$ and increment $t$. Invariant (1) continues to hold after the subdivision. 

If two vertices are connected by a terminated path, they will have the same active distance to any other terminated configuration. Invariant (3) holds since newly terminated vertices will be checked in rules (1) and (2), and a subdivision doubles the active distances between terminated configurations with different labels. 
Invariant (2) is more complicated to analyze. It is affected only when $r = t - 1$ and some undefined vertices in $S^{t}$ are terminated, that is, given some values other than $\perp$. By the induction hypothesis, we only have to discuss the simplex $\sigma_{n}$ consisting of all newly terminated vertices. If all vertices are terminated by rule (1), then invariant (2) still holds. Otherwise, some vertex of $\sigma_{n}$ is assigned by rule (2) with the label $U^{'}$. By construction, we know that there is an adjacent vertex $v_{ter}$ that is terminated with the label $U^{'}$ before this query. Using the induction hypothesis, there exists an $k_{2}$-simplex $s^{k_{2}}$ terminated with the label $U^{'}$ such that there is a path consisting of terminated vertices from $v_{ter}$ to $s^{k_{2}}$. There is also a path consisting of terminated vertices from $\sigma_{n}$ to $s^{k_{2}}$. Invariant (2) continues to hold.

Therefore, the adversary can maintain all invariants after the prover submits a query in the first phase.

\subsection{The prover cannot win in the first phase}
\label{sec:nccondition:prover_not_win_in_first_phase}
There are only two ways the prover can win in phase 1. First, the prover can find a violation of the task specification, i.e. some terminated configuration should not be assigned with its output values.

For each $n$-simplex $s^{n}$ in $Ch^{r_{m}}(C)$ , the outputs in a non-uniform chromatic subdivision of it will be $\delta_{U}(s^{n})$ or $\delta_{U^{'}}(N(s^{n}, U^{'}))$ for some label $U^{'}$. In both cases, there is no violation of the task specification. 
$\delta_{U}(s^{n})$ will not cause any problem since $\delta_{U}$ is assumed to be a partial protocol for $\mathbb{F}(U)$.
We have shown in Lemma \ref{the:canonical_neighbor_no_safety_issue} that if a $k$-simplex $s^{k}$ reached from some $n$-simplex $s^{n}$ in $\mathbb{F}_{r_{m}}(U)$ is assigned with the limitation of $\delta_{U^{'}}(N(s^{n}, U^{'}))$ to $ids(s^{k})$, then it will not cause any violation of the task specification. Furthermore, there is no conflict of output values at the intersection of different $n$-simplices according to Lemma \ref{the:canonical_neighbor_common_boundary_lemma}, \ref{the:canonical_neighbor_common_boundary_lemma_2} and \ref{the:glue_protocols:prover_fails:no_abuse}. Therefore, the prover cannot find a violation of the task specification during phase 1.

\begin{lemma}
\label{the:restrict_ebf_no_violation_of_task_specification}
    Each terminated configuration in phase 1 will not violate the task specification.
\end{lemma}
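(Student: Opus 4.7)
The plan is to show, for each terminated simplex $s^k$ the prover has reached in phase 1, that the output vector read off from $\delta$ restricted to $ids(s^k)$ lies in $\Delta(carrier(s^k, \mathcal{I}))$. The first step is to invoke invariant (3) to argue that all vertices of $s^k$ share a single label $U'$: adjacent terminated vertices sit at active distance zero, while invariant (3) forces any pair of terminated vertices with distinct labels to be at active distance at least $3$. Invariant (2) then tells me that every vertex $v \in s^k$ of process id $p$ was assigned $\delta(v) = \delta_{U'}(v')$, where $v'$ is either the $p$-vertex of the $n$-simplex $s^n \in \mathbb{F}_{r_{m}}(U')$ whose subdivision contains $v$ (rule (1)), or the $p$-vertex of the canonical neighbor $N(s^n, U')$ of the $n$-simplex $s^n \in \mathbb{F}_{r_{m}}(U)$ whose subdivision contains $v$, for some $U \neq U'$ (rule (2)).

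Next I would dispose of the two rules separately. Under rule (1) the assigned outputs at the vertices of $s^k$ are restrictions of $\delta_{U'}(s^n)$ to $ids(s^k)$; since $\delta_{U'}$ is by hypothesis a partial protocol for $\mathbb{F}(U')$ respecting $\Delta$, this restriction automatically lies in the carrier map's image at the relevant input simplex. Under rule (2) I invoke Lemma~\ref{the:canonical_neighbor_no_safety_issue} to conclude that $\delta_{U'}(N(s^n, U'))$ is a valid output for every configuration in the subdivision of $s^n$; the reason encoded in the canonical-neighbor definition is that each non-intersection vertex of $s^n$ shares its carrier in $\mathcal{I}$ with the corresponding vertex of $N(s^n, U')$, so the two $n$-simplices present the same set of input values and any output legal for one is legal for the other.

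The remaining technical issue is consistency. A terminated simplex $s^k$ may lie in the subdivision of several $n$-simplices of $Ch^{r_{m}}(\mathcal{I})$ at once, and its vertices may be assigned from canonical neighbors drawn from distinct $n$-simplices by a mixture of rules (1) and (2). Here I would appeal to Lemmas~\ref{the:canonical_neighbor_common_boundary_lemma} and~\ref{the:canonical_neighbor_common_boundary_lemma_2}, which guarantee that whenever two $n$-simplices (whether inside a single initial $n$-simplex of $\mathcal{I}$ or across two of them) share a vertex of process id $p$, their canonical neighbors with label $U'$ also share a $p$-vertex and hence yield a single $\delta_{U'}$ value for $p$. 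A symmetric remark covers agreement between rule-(1) and rule-(2) assignments at intersection vertices, because $\delta_{U'}(s^n)$ and $\delta_{U'}(N(s^n, U'))$ coincide on $V(s^n, U')$ by requirement (2) of the canonical neighbor. Finally, Lemma~\ref{the:glue_protocols:prover_fails:no_abuse} certifies that rule (2) is never invoked on an $n$-simplex that lacks a canonical neighbor with label $U'$, so every $N(s^n, U')$ referenced above is well-defined.

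The step I expect to be the main obstacle is this gluing: validity of the task specification is a statement about the whole output vector on $s^k$, not about individual outputs, and the pieces may genuinely come from distinct canonical neighbors. Once Lemma~\ref{the:canonical_neighbor_no_safety_issue} has supplied per-vertex safety, everything reduces to well-definedness on overlaps, and the four-way agreement condition built into the canonical-neighbor definition---together with Lemmas~\ref{the:canonical_neighbor_common_boundary_lemma},~\ref{the:canonical_neighbor_common_boundary_lemma_2}, and~\ref{the:glue_protocols:prover_fails:no_abuse}---is precisely what was designed to make this work, so the proof itself should be a short invocation of these results rather than a new combinatorial argument.
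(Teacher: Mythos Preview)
Your proposal is correct and follows essentially the same approach as the paper: both argue that outputs come either from $\delta_{U}(s^{n})$ (valid because $\delta_{U}$ is a partial protocol) or from $\delta_{U'}(N(s^{n},U'))$ (valid by Lemma~\ref{the:canonical_neighbor_no_safety_issue}), and both handle consistency across overlapping $n$-simplices via Lemmas~\ref{the:canonical_neighbor_common_boundary_lemma}, \ref{the:canonical_neighbor_common_boundary_lemma_2}, and~\ref{the:glue_protocols:prover_fails:no_abuse}. Your explicit invocation of invariant~(3) to pin down a single label, and your remark on agreement between rule-(1) and rule-(2) assignments at intersection vertices, add a bit more detail than the paper's terse version, but the structure is the same.
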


Second, the prover can win by constructing an infinite chain of queries. But this is also impossible.
\begin{lemma}
\label{the:restrict_ebf_finite_chain_of_queries_lemma}
     Every chain of queries in phase 1 is finite.
\end{lemma}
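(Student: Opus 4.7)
The plan is to argue by contradiction: assume a chain of queries $(C_0,P_0),(C_1,P_1),\ldots$ in phase~1 is infinite. Then some process $p$ must appear in infinitely many $P_i$, for otherwise every scheduled process would eventually terminate and the chain would end. Restricting to the cofinal subsequence of queries that schedule $p$, let $v_i$ denote the vertex of $\sigma_i$ carrying $p$'s new state. Since the chain never terminates, each $v_i$ is left undefined by the adversary, so neither rule~(1) nor rule~(2) applies to it.

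First I would stabilize the carrier of $v_i$ in $Ch^{r_m}(\mathcal{I})$. Past the initial $r_m+r_a+1$ setup subdivisions, $v_i$ lies in a deeper subdivision of a unique $n$-simplex of $Ch^{r_m}(\mathcal{I})$, so its carrier $\tau_i$ in $Ch^{r_m}(\mathcal{I})$ is well defined. Since the set of input values that $p$ has read is monotonically non-decreasing along the chain, Lemma~\ref{the:carrier_of_union} together with the structure of $Ch^{r_m}(\mathcal{I})$ implies that $(\tau_i)$ is a non-increasing sequence of simplices in a finite complex, and therefore eventually constant at some $\tau$. I would then fix any possible label $U$ of $\tau$, which is automatically a possible label of every sufficiently late $v_i$.

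Next I would exploit the distance-doubling property of the non-uniform chromatic subdivision: for any two disjoint subgraphs of non-terminated vertices, one subdivision doubles the active distance between their images. Consequently, any fixed finite collection of terminated configurations with labels different from $U$ is pushed beyond active distance~$2$ from $v_i$ after finitely many further subdivisions. Since phase~1 begins with no terminated vertices at all, the only way rule~(1) with label $U$ can perpetually fail at $v_i$ is for the adversary to keep creating new terminations with labels $\neq U$ close to the chain's trajectory.

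The main obstacle is therefore to rule out the indefinite creation of such ``bad'' terminations near the chain. Here I would invoke invariants~(2)-(3), Theorem~\ref{the:canonical_neighbor_existence_theorem}, and the $r_s$-round cooldown on rule~(2): every label assigned by rule~(1) inside a subdivision of a supersimplex of $\tau$ must already be a possible label of that supersimplex in $Ch^{r_m}(\mathcal{I})$, and rule~(2) cannot propagate spurious labels arbitrarily deep because the forced gap between its applications gives the doubling of active distance time to dominate the slow seepage of label-$U'$ vertices toward the chain. A careful count then bounds the number of bad terminations within active distance~$\leq 2$ of $v_i$, so after finitely many additional subdivisions rule~(1) with label $U$ must fire at $v_i$, contradicting $\delta(v_i)=\perp$ and establishing the lemma.
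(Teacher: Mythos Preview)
Your argument has a real gap in the last two paragraphs. First a minor slip: carriers in $Ch^{r_m}(\mathcal{I})$ are non-\emph{decreasing} along the chain (as $p$ sees more, the union in Lemma~\ref{the:carrier_of_union} grows), not non-increasing; the sequence still stabilizes in a finite poset, so this alone is not fatal. The substantive problem is your handling of the case where $v_i$ remains within active distance~$2$ of a region terminated with some label $U'\neq U$. You try to rule this out: the $r_s$-cooldown plus distance doubling should prevent $U'$-labeled vertices from ``seeping'' toward the chain. But the doubling property concerns the active distance between \emph{fixed} subgraphs under subdivision, whereas $v_i$ itself is moving with~$i$; nothing stops the chain's trajectory from staying adjacent to an existing $U'$-region, and the cooldown on rule~(2) is irrelevant to that. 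By committing up front to a single label $U$ and insisting that rule~(1) must eventually fire with that label, you have set yourself a harder problem than necessary, and the ``careful count'' you defer to is exactly where the real work lies.

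The paper's proof sidesteps this by a direct case split on the simplex $\sigma_1$ formed by the \emph{entire} set $P$ of infinitely-scheduled processes (not a single $p$). After enough subdivisions the active distance between differently labeled terminated configurations is at least~$6$. If $\sigma_1$ has active distance $\geq 3$ from every terminated configuration, rule~(1) fires immediately. Otherwise $\sigma_1$ is close to configurations of a \emph{unique} label $U_1$ and far from all others; tracking $\sigma_2,\sigma_3,\ldots$ through the next few rounds one shows that the distance to every $U_2\neq U_1$ region stays $\geq 3$, and then either some vertex of $\sigma_i$ lies in $\mathbb{F}_t(U_1)$ (rule~(1) with label $U_1$), or the $\sigma_i$ stay adjacent to the $U_1$-region and rule~(2) with label $U_1$ fires once its cooldown expires, or $\sigma_i$ drifts away and rule~(1) fires with some possible label. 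The key point your sketch misses is that proximity to a $U_1$-region is not an obstruction to termination but a \emph{trigger} for rule~(2); the correct conclusion is that one of the two rules fires, not that rule~(1) with your preselected $U$ does.
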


\begin{proof}
We will use similar proof techniques as in \cite{Alistarh19}. Assume that there is an infinite chain of queries $(C_{j}, P_{j})$ and let $P$ be the set of processes that are scheduled infinitely often. Then, there exists a round number $j_{0} \geq 0$ such that, for all $j \geq j_{0}$ , $P_{j} \subseteq P$. Let $t_{0} \geq 1$ be the value of $t$ held by the adversary immediately prior to query $(C_{j}, P_{j})$. By invariant (1), for any $t > t_{0}$, no process has accessed $S^{t}$ in $C_{j_{0}}$ and, during this chain of queries, only processes in $P$ access $S^{t}$. 

Consider the first $j_{1} \geq j_{0}$ such that each process in $P_{j_{1}}$ is poised to access $S^{t_{0} + k}$ in $C_{j_{1}}$ where $t_{0} + k \geq r_{m}$ for some $k \geq 1$. The states of the processes in $P_{j_{1}}$ in $C_{j_{1} + 1}$ correspond to a simplex $\sigma_{1}$ in $S^{t_{0} + k}$. Since no process in $P$ will terminate, the adversary will subdivide $S^{t_{0}}$ k times to construct $S^{t_{0} + k}$. By the property of active distance, the active distance between two terminated configurations with different labels is at least 6 in $S^{t_{0} + k}$.

1) Suppose that $\sigma_{1}$ has an active distance of at least 3 to any terminated configuration. The adversary will then set the $\delta$ values of $\sigma_{1}$ as the limitation of $\delta_{U}(s^{n})$ to $ids(\sigma_{1})$, where $\sigma_{1}$ is in the subdivision of $s^{n}$ of $Ch^{r_{m}}(\mathcal{I})$. Therefore, the processes in $\sigma_{1}$ will terminate, which is a contradiction.

2) Suppose that $\sigma_{1}$ has an active distance of less than 3 to a set of terminated configurations $\{\sigma_{U_{1}}\}$ with some label $U_{1}$. Note that $U_{1}$ is unique since the active distance between a configuration terminated with label $U_{1}$ and a configuration terminated with label $U_{2} \neq U_{1}$ is at least 6. In other words, $\sigma_{1}$ has an active distance of at least 3 to a terminated configuration with label $U_{2} \neq U_{1}$. Consider the first $j_{2} > j_{1}$ such that each process in $P_{j_{2}}$ is poised to access $S^{t_{0} + k + 1}$  in $C_{j_{2}}$. The set of states of the processes in $P_{j_{2}}$ in $C_{j_{2} + 1}$ corresponds to a simplex $\sigma_{2}$ in $S^{t_{0} + k + 1}$. Let $P^{'}$ be the set of processes that have accessed $S^{t_{0} + k}$ in $C_{j_{2}}$ . Since each process in $P_{j_{1}} \cup P_{j_{2}}$ has already accessed $S^{t_{0} + k}$, $P_{j_{1}} \cup P_{j_{2}} \subseteq P^{'}$. Therefore, the states of $P^{'}$ in $C_{j_{2}}$ form a simplex $\sigma_{1}^{'}$ in $S^{t_{0} + k}$ and $\sigma_{1} \subseteq \sigma_{1}^{'}$. Let $U_{2}$ be a label different from $U_{1}$. In $S^{t_{0} + k}$, the active distance between $\sigma_{1}^{'}$ and any terminated configuration $\sigma_{U_{2}}$ with label $U_{2}$ is at least 2, since the diameter of $\sigma_{1}^{'}$ is only 1 and the active distance between $\sigma_{1}$ and $\sigma_{U_{2}}$ is at least 3. In the complex $S^{t_{0} + k + 1} = \chi(S^{t_{0} + k})$, the active distance between $\chi(\sigma_{1}^{'})$ and any terminated configuration with the label $U_{2}$ is at least 4. 

If some vertex in $\sigma_{2}$ is in $\mathbb{F}_{t}(U_{1})$, then this vertex will be terminated with the label $U_{1}$. Hence the adversary defines $\delta(v) \neq \perp$ after query $(C_{j_{2}} , P_{j_{2}})$, i.e. some process in $P_{j_{2}} \subseteq P$ terminates and this is a contradiction.

Otherwise, no vertex in $\sigma_{2}$ is in $\mathbb{F}_{t}(U_{1})$. Consider the first $j_{3} > j_{2}$ such that each process in $P_{j_{3}}$ is poised to access $S^{t_{0} + k + 2}$ in $C_{j_{3}}$. Let $P^{''}$ be the set of processes that have accessed $S_{t_{0} + k + 1}$ in $C_{j_{3}}$, and the states of $P^{''}$ in $C_{j_{3}}$ form a simplex $\sigma_{2}^{'}$ in $S_{t_{0} + k + 1}$ and $\sigma_{2} \subseteq \sigma_{2}^{'}$. The active distance between $\sigma_{2}^{'}$ and any terminated configuration with the label $U_{2}$ is at least 3. In the complex $S^{t_{0} + k + 2} = \chi(S^{t_{0} + k + 1})$, the active distance between $\chi(\sigma_{2}^{'})$ and any terminated configuration with the label $U_{2}$ is at least 6. 
This remains true for any simplex $\sigma_{i}^{'}$ where $i \geq 3$, if we define $\sigma_{i}^{'}$ in the same way. Therefore, the active distance between $\chi(\sigma_{i}^{'})$ and terminated configurations with the label $U_{2}$ is always at least 3 (part of the conditions to use the second assignment rule).

Let $\sigma_{2}^{''}$ be the subsimplex of $\sigma_{2}^{'}$ whose process ids are $P_{j_{3}}$.
If the active distance between $\sigma_{2}^{''}$ and each terminated configuration $\sigma_{U_{1}}$ with label $U_{1}$ is greater than 0, then after several rounds of subdivisions, the active distance between $\sigma_{i}$ and $\{\sigma_{U_{1}}\}$ will be at least 3. The simplex $\sigma_{i}$ will be assigned with some label $U$, where $\sigma_{i}$ is in $\mathbb{F}_{t}(U)$, which is a contradiction.

Otherwise, the active distance between $\sigma_{2}^{''}$ and some terminated configurations $\sigma_{U_{1}}$ with the label $U_{1}$ is 0. There must be some vertex $v$ of $\sigma_{2}^{'}$ adjacent to a terminated vertex in $\sigma_{U_{1}}$. If assignment rule (2) is allowed to be used in the complex $S^{t_{0} + k + 2}$, $v$ can be terminated with the label $U_{1}$ after query $(C_{j_{3}} , P_{j_{3}})$. If rule (2) is not allowed in $S^{t_{0} + k + 2}$, we can continue the analysis in $S^{t_{0} + k + 3}$, $S^{t_{0} + k + 4}$, etc. Since $r_{s}$(the number of complexes that the adversary has to skip before using rule (2) again) is a constant number, this sequence will be finite and some process in $P$ will be terminated in $S^{t_{0} + k + k^{'}}$ for some $k^{'}$, which is a contradiction. 
\end{proof}

\subsection{Finalization after phase 1}
\label{sec:nccondition:finalization_after_phase_1}

The prover must end phase 1, since it cannot win in phase 1, by choosing a configuration $C^{'} \in \mathcal{A}^{'}(1)$. Let $C$ be an initial configuration from which $C^{'}$ is reached. Let the simplex $U$ be the simplex in $\mathcal{I}$ that has process ids as those of the first process set in $\alpha(2)$ and has input values of these processes in the initial configuration $C$. From now on, the adversary will focus on the subcomplex $\mathbb{F}_{t}(U)$.

By invariant (1), each vertex in $S^{t}$ is undefined or defined with an output value in $\delta$. If a $k$-simplex $s^{k}$ representing a terminated configuration is in the subdivision of an $n$-simplex $s^{n}$ of $\mathbb{F}_{t}(U)$, its $\delta$ value is either the limitation of $\delta_{U}(s^{n})$ to $ids(s^{k})$, or the limitation of $\delta_{U^{'}}(N(s^{n}, U^{'}))$ to $ids(s^{k})$ for some simplex $U^{'}$. Note that there can exist multiple labels different from $U$ simultaneously.

The complex $S^{t}$ has some undefined vertices. We will prove that the adversary can finalize by assigning a value to each undefined vertex in a non-uniform chromatic subdivision of $\mathbb{F}_{t}(U)$. Each configuration reached in phase 2 and later phases will be contained in this subdivision. Then the adversary can answer all queries submitted by the prover in the subsequent interaction. As in \cite{Alistarh19}, the prover will be forced to choose a terminal configuration at the end of some future phase and lose in the next phase.
Let $s^{k}$ be an $k$-simplex of $\mathbb{F}_{1}(U) \cap \mathbb{F}_{1}(U^{'})$ in $Ch(\mathcal{I})$, and $CEN(s^{k})$ be the $k$-simplex in $\mathbb{F}_{r_{m}}(U) \cap \mathbb{F}_{r_{m}}(U^{'})$ reached from $s^{k}$ via a schedule that repeats the set of processes $ids(s^{k})$. 
The output of $CEN(s^{k})$ is the same by the partial protocols $\delta_{U}$ and $\delta_{U^{'}}$ since $\delta_{U}$ and $\delta_{U^{'}}$ are compatible by assumption. The high-level idea of this section is to use these shared outputs to connect the output values $\delta_{U}(s^{n})$ and $\delta_{U^{'}}(N(s^{n}, U^{'}))$ in the subdivision of each $n$-simplex $s^{n}$ of $\mathbb{F}_{r_{m}}(U)$ to build a partial protocol for $\mathbb{F}_{t}(U)$.

Until now, we have not used the condition that the task $(\mathcal{I}, \mathcal{O}, \Delta)$ is colorless. Our adversarial strategy until now can be directly extended to colored tasks. The following discussions are based on the condition that the task is colorless as defined in Section \ref{sec:preliminaries}. Let $s^{n}$ be some $n$-simplex in $Ch^{r_{m}}(\sigma)$ where $\sigma$ represents an initial configuration. Recall that the colorless condition implies a property: for some subsimplex $s^{k}$ of $s^{n}$ and each other vertex whose id is not in $ids(s^{k})$ has an equal or larger carrier than $carrier(s^{k})$, each vertex not in $ids(s^{k})$ can use some values in $\delta_{U}(s^{k})$ or $\delta_{U^{'}}(s^{k})$ as its output value. 

For each simplex $U^{'}$, there is a complex $Q_{1}(U, U^{'}) = \mathbb{F}_{1}(U) \cap \mathbb{F}_{1}(U^{'})$ \label{syb:Q} consisting of all the intersection simplices in $Ch^{1}(\mathcal{I})$. Similarly, we define $Q_{i}(U, U^{'}) = \mathbb{F}_{i}(U) \cap \mathbb{F}_{i}(U^{'})$ for each integer $i$. For example, when $U_{1} \nsubseteq U_{2}$, $Q_{1}(U, U^{'}) = \mathbb{F}_{1}(U) \cap \mathbb{F}_{1}(U^{'})$ equals $(\mathbb{F} \downarrow U * U^{'})(U * U^{'})$.
Let $s^{n}$ be an $n$-simplex in $\mathbb{F}_{r_{m}}(U)$ that has an intersection with $\mathbb{F}_{r_{m}}(U^{'})$. Let $SHA(s^{n}, U^{'})$ \label{syb:SHA} be the simplex shared by $s^{n}$ and $Q_{r_{m}}(U, U^{'})$ that contains all vertices of $s^{n}$ that are in $\mathbb{F}(U^{'})$. 
For each simplex $s^{k}$ in $Q_{r_{m}}(U, U^{'})$, we define the {\em minimum carrier} of $s^{k}$ \label{syb:mc}, denoted as $mc(s^{k}, Q_{1}(U, U^{'}))$, as $\cap_{v \in s^{k}} carrier(v, Q_{1}(U, U^{'}))$. An example is given in Figure \ref{img:categories_of_subdivision}.

\begin{figure}[h]
  \centering
  \includegraphics[width=\linewidth]{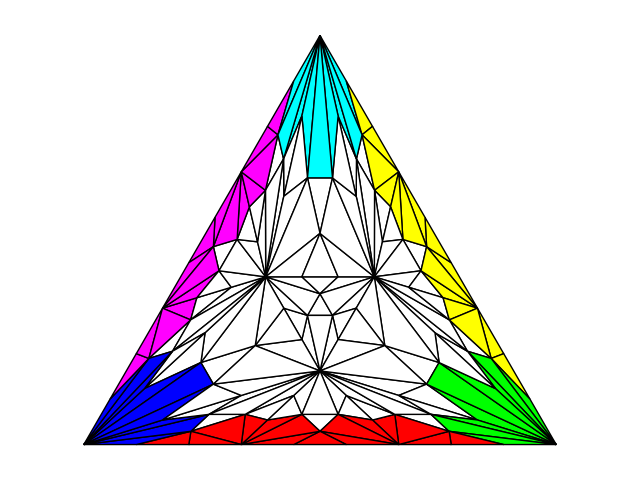}
  \caption{2-simplices with the same minimum carrier in $Ch^{2}(s^{2})$}
  \label{img:categories_of_subdivision}
\end{figure}

We say that an $n$-simplex in $\mathbb{F}_{r_{m}}(U)$ that has an intersection with $\mathbb{F}_{r_{m}}(U^{'})$ belongs to a {\em category $Cat(s^{k})$} whose label is a simplex $s^{k}$ in $Q_{1}(U, U^{'})$ if $mc(SHA(s^{n}, U^{'}), Q_{1}(U, U^{'}))$ is $s^{k}$. The projection of $s^{n}$ with the label $U^{'}$, denoted by $Prj(s^{n}, U^{'})$  \label{syb:prj}, is defined as the subsimplex of $s^{n}$ whose process ids are in $ids(s^{k})$, where $s^{k}$ is the label of the category that contains $s^{n}$. If $s^{k}$ is a facet of $Q_{1}(U, U^{'})$, $Prj(s^{n}, U^{'})$ = $SHA(s^{n}, U^{'})$. If $s^{k}$ is not a facet, $Prj(s^{n}, U^{'})$ is a subsimplex of $SHA(s^{n}, U^{'})$. A property is that if we consider the processes in the configuration $s^{n}$, processes in $Prj(s^{n}, U^{'})$ have the minimum carrier in $\mathcal{I}$, since in the last full round schedule of the full $r_{m}$-round from some initial configuration to $s^{n}$, processes in $Prj(s^{n}, U^{'})$ are the first set of processes to take a step.


Consider the complex $S^{t}$ and a set $S$ of vertices in $S^{t}$. We say that the adversary {\em adds a layer of $k$-dimensional output} $\{(p_{0}, o_{0}), (p_{1}, o_{1}) \cdots (p_{k}, o_{k})\}$ to $S$ in some $n$-simplices of $Ch^{r_{m}}(\mathcal{I})$(note that these $n$-simplices are in the complex $S^{r_{m}}$ rather than $S^{t}$) if given a vertex $v$ (in $S^{t}$) of the process id $p$ adjacent to $S$ in the subdivision of these $n$-simplices, the adversary sets $\delta(v) = o_{i}$ if $p = p_{i}$ for some $i$. Otherwise, let $P_{c}$ be the set of processes that are both in the ids of $carrier(v, Ch^{r_{m}}(\mathcal{I}))$ and $\{p_{0}, p_{1} \cdots p_{k}\}$, the adversary sets $\delta(v)$ as $o_{i}$ where $p_{i}$ is an arbitrary process in $P_{c}$. After the adversary adds a layer of output $\{(p_{0}, o_{0}), (p_{1}, o_{1}) \cdots (p_{k}, o_{k})\}$ to $S$, the outermost layer of $S$ in the subdivision of these $n$-simplices consists of the output values in $\{o_{0}, o_{1} \cdots o_{k}\}$.

To give readers an overview of what the adversary does, we present the simplest case where a set of vertices terminated with $U^{'}$ is in the subdivision of only one $n$-simplex.
If a set $S$ (in $S^{t}$) of terminated vertices with the label $U^{'}$ is totally contained in the subdivision of some $n$-simplex $s^{n}$ in $\mathbb{F}_{r_{m}}(U)$, and $s^{n}$ belongs to the category with the label $s^{k} \in \mathbb{F}_{1}(U) \cap \mathbb{F}_{1}(U^{'})$, then the adversary will add layers of output such that the output values of the outermost layer of $S$ change from $\delta_{U^{'}}(s^{n})$ to $\delta_{U^{'}}(Prj(s^{n}, U^{'}))$, then to $\delta_{U^{'}}(CEN(s^{k})) = \delta_{U}(CEN(s^{k}))$, finally to $\delta_{U}(Prj(s^{n}, U^{'}))$. Since $\delta_{U}(Prj(s^{n}, U^{'}))$ is a subset of $\delta_{U}(s^{n})$, we can say the output values of the outermost layer are in $\delta_{U}(s^{n})$ directly. The adversary then simply defines each undefined vertex $v$ in $\mathbb{F}_{t}(U)$ with the output value given by $\delta_{U}(v^{'})$ where $v^{'}$ is the vertex with the same process id as $v$ of the $n$-simplex in $\mathbb{F}_{r_{m}}(U)$ whose subdivision contains $v^{'}$.

Note that the example in the last paragraph is not a common situation. A set $S$ of vertices can be in the subdivision of multiple $n$-simplices of $\mathbb{F}_{r_{m}}(U)$. In this case, the added layers to $S$ will be more complicated due to possible conflicts. The details will be given in Section \ref{sec:nccondition:finalization_after_phase_1:stage3}.


The adversary constructs this partial protocol in four stages. In stage 1 and stage 2, the adversary makes some preparations for adding layers of outputs. In stage 3, the adversary adds enough layers to those terminated sets with a label $U^{'} \neq U$ such that the output values of the outermost layer are $\delta_{U}(Prj(s^{n}, U^{'}))$. In stage 4, the undefined vertices of $\mathbb{F}_{t}(U)$ will be assigned with the output values of $\delta_{U}(s^{n})$. Before presenting these details, we have to introduce some topological concepts that justify the addition of layers such that the output values of the outermost layer of $S$ change from $\delta_{U^{'}}(Prj(s^{n}, U^{'}))$ to $\delta_{U^{'}}(CEN(s^{k}))$.

\subsubsection{Some topological concepts}
\label{sec:nccondition:finalization_after_phase_1:concepts}
We use the definition in \cite{Spanier1966}. A $k$-dimensional simplicial complex is said to be {\em homogeneously $k$-dimensional} if every simplex is a face of some $n$-simplex of the complex. An n-dimensional {\em pseudomanifold} is defined to be a simplicial complex $\mathcal{K}$ such that
\begin{itemize}
    \item[(a)] $\mathcal{K}$ is homogeneously n-dimensional.
    \item[(b)] Every $(k - 1)$-simplex of $\mathcal{K}$ is the face of at most two $k$-simplices of $\mathcal{K}$.
    \item[(c)] If $s$ and $s^{'}$ are $k$-simplices of $\mathcal{K}$, there is a finite sequence $s = s_{1}, s_{2},\cdots, s_{m} = s^{'}$ of $k$-simplices of $\mathcal{K}$ such that $s_{i}$ and $s_{i+1}$ have an $(k - 1)$-simplex in common for $1 \leq i < m$.
\end{itemize}
There is a relation denoted by $s \sim s^{'}$ in the set of n-simplices of a pseudomanifold defined by $s \sim s^{'}$ if and only if there exists a finite sequence $s = s_{1}, s_{2},\cdots, s_{m} = s^{'}$ such that $s_{i}$ and $s_{i+1}$ have an $(n - 1)$-simplex in common, as in condition (c). Obviously, this relation is a reflexive, symmetric, and transitive relation, and therefore an equivalence relation. An important property proved in \cite{Spanier1966} is that any subdivision of a pseudomanifold is also a pseudomanifold. 
 
The second concept is about a subcomplex of the $i$-th chromatic subdivision of a $k$-simplex. We define the non-boundary complex of $Ch^{i}(s^{k})$ as the simplicial complex consisting of those simplices in $Ch^{i}(s^{k})$, each vertex of which has a carrier $s^{k}$. This non-boundary complex is denoted by $Ch\_int^{i}(s^{k})$. We want to sort the n-simplices of a non-boundary complex into a sequence. The following lemma proves that the non-boundary complex is a pseudomanifold which justifies this idea.

\begin{lemma}
\label{the:internal_simplices_of_a_subdivision_forms_a_pseudomanifold}
    The non-boundary complex of any iterated chromatic subdivision of a $k$-simplex is a pseudomanifold.
\end{lemma}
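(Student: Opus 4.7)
The plan is to verify the three axioms of a pseudomanifold for $Ch\_int^i(s^k)$, exploiting the fact that the ambient complex $Ch^i(s^k)$ is itself a $k$-dimensional pseudomanifold: it is a subdivision of the $k$-simplex $s^k$, and subdivisions preserve the pseudomanifold property (as quoted in the excerpt from Spanier). Axiom (b), that every $(k{-}1)$-simplex of the subcomplex is the face of at most two of its $k$-simplices, transfers immediately by restriction, since passing to a subcomplex can only shrink the coface set of each face.

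For axiom (a), homogeneity, I would proceed by induction on $i$. The base case $i = 1$ is immediate: the only interior vertices of $Ch(s^k)$ are $(0, s^k), \ldots, (k, s^k)$, and they collectively span a single $k$-simplex whose faces make up all of $Ch\_int^1(s^k)$. For the inductive step, let $\sigma \in Ch\_int^i(s^k)$ have vertices $(j, S_j)$, and let $\tau = \mathrm{carrier}(\sigma, Ch^{i-1}(s^k))$. The chain condition in the definition of $Ch$ forces $\tau$ to be the maximal $S_j$, so since every $S_j$ has carrier $s^k$ in $s^k$, the same holds for $\tau$. Extend $\tau$ to some $k$-simplex $\tau^*$ of $Ch^{i-1}(s^k)$ using homogeneity of the ambient pseudomanifold; the carrier of $\tau^*$ in $s^k$ remains $s^k$. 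I would then complete $\sigma$ to a $k$-simplex inside $Ch(\tau^*) \subseteq Ch^i(s^k)$ by assigning, to each missing color $j'$, a vertex $(j', T_{j'})$ where $T_{j'}$ is chosen as the smallest $S_j$ of $\sigma$ containing color $j'$ (if any such exists) and as $\tau^*$ otherwise. A case analysis using the chain condition and color-compatibility condition of $Ch(\mathcal{K})$ confirms that these $k{+}1$ vertices form a valid simplex and each $T_{j'}$ still has carrier $s^k$, placing the extended $k$-simplex in $Ch\_int^i(s^k)$.

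Axiom (c), connectivity, follows by a similar induction. Two interior $k$-simplices of $Ch^i(s^k)$ have $k$-dimensional carriers $\tau_1, \tau_2$ in $Ch^{i-1}(s^k)$ — and $k$-simplices of $Ch^{i-1}(s^k)$ automatically have carrier $s^k$ because they are top-dimensional — which may be joined through a chain of $k$-simplices of the ambient pseudomanifold sharing $(k{-}1)$-faces. Within each $Ch(\tau^{(l)})$ of the chain the base-case construction supplies a local path of interior $k$-simplices, and I would stitch these local paths across the shared $(k{-}1)$-faces; whenever a shared $(k{-}1)$-face happens to lie on $\partial s^k$, a detour through neighboring $k$-simplices of $Ch^{i-1}(s^k)$ meeting in the interior reroutes the chain, using topological connectedness of $\mathrm{int}(s^k)$.

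The main obstacle will be the case analysis for the inductive step of (a): the color-compatibility condition \emph{``$j_1 \in \zeta(T_{j_2})$ implies $T_{j_1} \subseteq T_{j_2}$''} must be checked for every pair $(j_1, j_2)$ involving the newly added $T_{j'}$'s, and showing that the greedy choice is simultaneously consistent with the chain condition on the pre-existing $S_j$'s and with the interiority requirement $\mathrm{carrier}(T_{j'}, s^k) = s^k$ requires careful bookkeeping. A secondary difficulty is the detour argument in (c), which must ensure that any rerouting around boundary $(k{-}1)$-faces remains inside the non-boundary subcomplex throughout.
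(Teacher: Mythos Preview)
Your treatment of (a) and (b) is correct—indeed more explicit than the paper's, which leaves (a) largely implicit—and the greedy extension you propose (taking $T_{j'}$ to be the smallest $S_j$ containing colour $j'$, else $\tau^*$) does satisfy both the chain and colour-compatibility conditions under the case analysis you outline. Your worry about shared $(k{-}1)$-faces on $\partial s^k$ in (c) is also misplaced, for a pleasant reason: a $(k{-}1)$-simplex of $Ch^{i-1}(s^k)$ whose carrier is a proper face of $s^k$ is a boundary face of the $k$-ball and hence a face of \emph{exactly one} $k$-simplex, so it can never be the common face between consecutive $\tau^{(l)}$ and $\tau^{(l+1)}$ in a pseudomanifold chain. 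Every shared face $\rho$ automatically has carrier $s^k$, and no detour is needed.

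The genuine gap is the step you dispatch with ``the base-case construction supplies a local path of interior $k$-simplices.'' The base case supplies only the single central $k$-simplex of $Ch(\tau^{(l)})$; to stitch across a shared face $\rho$ you must connect it, \emph{within} $Ch(\tau^{(l)})$, to a $k$-simplex containing the central $(k{-}1)$-simplex of $Ch(\rho)$, and likewise connect the endpoints $\sigma_1^*, \sigma_2^*$ to the centrals of their respective $Ch(\tau_j)$. These pairs of $k$-simplices typically share only a vertex, not a $(k{-}1)$-face, and the chain between them in the pseudomanifold $Ch(\tau^{(l)})$ may pass through non-interior $k$-simplices whenever $\tau^{(l)}$ has vertices on $\partial s^k$—a situation you cannot exclude, since the carrier in $Ch^{i-1}(s^k)$ of an interior $k$-simplex of $Ch^i(s^k)$ need not itself lie in $Ch\_int^{i-1}(s^k)$. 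The paper handles exactly this point: letting $V$ be the vertices of $\tau^{(l)}$ with carrier $s^k$, it identifies the interior $k$-simplices of $Ch(\tau^{(l)})$ with ordered partitions whose first block meets $V$, and uses an explicit star-covering argument (the $NBS(S)$ and $N(v)$ constructions) to place them all in one equivalence class. That combinatorial argument is the technical core of the lemma, and your proposal does not supply it or a substitute.
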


\begin{proof}
Let $Ch^{n}(s^{k})$ be an iterated chromatic subdivision of a simplex $s^{k}$. We prove this by induction on the number $n$. If $n = 1$, then the non-boundary complex has only one $k$-simplex which is the central $k$-simplex of the chromatic subdivision, and therefore is a pseudomanifold.

Assume that the non-boundary complex of $Ch^{n-1}(s^{k})$ is a pseudomanifold. Then the chromatic subdivision of it will be a pseudomanifold by the property given by \cite{Spanier1966}. In other words, those simplices in the non-boundary complex of $Ch^{n}(s^{k})$ reached from some $k$-simplex in the non-boundary complex of $Ch^{n - 1}(s^{k})$ are in one equivalence class. Any remaining $k$-simplex is reached from some $k$-simplex of $Ch^{n - 1}(s^{k})$ that has some vertices on the boundary of $Ch^{n - 1}(s^{k})$.

Let $\sigma$ be a $k$-simplex of $Ch^{n - 1}(s^{k})$ that has some vertices on the boundary. We will show that the non-boundary simplices of $Ch^{n}(s^{k})$ reached from $\sigma$ will be in the same equivalence class. 
Let $V$ denote the set of vertices of $\sigma$ whose carriers are $s^{k}$. In other words, $V$ are the vertices of $\sigma$ that are not on the boundary of $Ch^{n - 1}(s^{k})$. Then there is a bijection $\rho$ between the partitions of $\Pi$ whose first process set contains some vertex in $V$ and the non-boundary simplices reached from $\sigma$.
For each set $S$ of vertices containing some vertex in $V$,  $NBS(S)$ is defined as $\{\rho(Par)\}$ where $Par$ is a partition of $\Pi$ and the first partition set is $S$. $NBS(S)$ is the joining of a $|S|$-simplex and the chromatic subdivision of a $(|\Pi| - |S|)$-simplex. Since the latter is a pseudomanifold, $NBS(S)$ is a pseudomanifold.
For each vertex $v \in V$, $N(v)$ is defined as the union of all $NBS(S)$ where $v \in S$, as shown in Figure \ref{img:internal_complex_proof}. The intersection between $NBS(\{v\})$ and $NBS(\{v\} \bigcup V^{'})$, for some set $V^{'}$ of vertices, is a simplicial complex whose dimension is $(k - 1)$ by Lemma 10.4.4 in \cite{Herlihy13}. Therefore, all $k$-simplices of $N(v)$ belong to the same equivalence class. Each $NBS(v)$, where $v \in V$, contains $NBS(V)$ and therefore belongs to the same equivalence class. The set $\{NBS(v)\}$ forms a covering of non-boundary simplices of $Ch^{n}(s^{k})$ reached from $\sigma$, which are therefore in the same equivalence class. 

\begin{figure}[h]
  \centering
  \includegraphics[width=\linewidth]{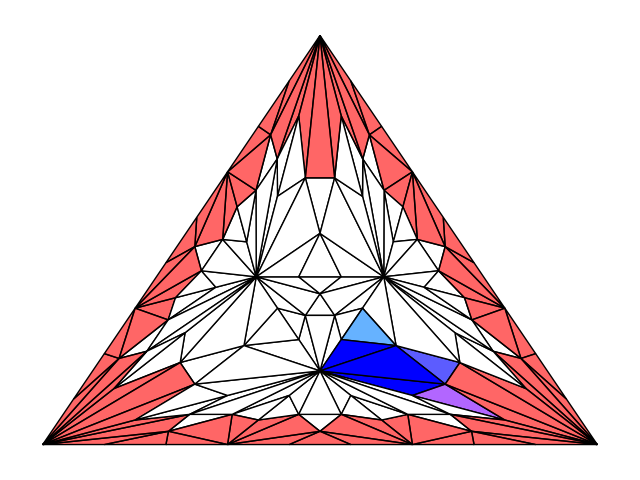}
  \caption{$N(v)$ in $Ch^{2}(s^{2})$}
  \label{img:internal_complex_proof}
\end{figure}

Then we show that the equivalence class is the same regardless of which $k$-simplex $\sigma$ of $Ch^{n-1}(s^{k})$ is chosen. All simplices in $Ch^{n - 1}(s^{k})$ form a pseudomanifold since $s^{k}$ is a pseudomanifold. If $s$ and $s^{'}$ are $k$-simplices of $Ch^{n - 1}(s^{k})$, by definition, there is a finite sequence $s = s_{1}, s_{2},\cdots, s_{m} = s^{'}$ of $k$-simplices of $Ch^{n - 1}(s^{k})$ such that $s_{i}$ and $s_{i+1}$ have an $(n - 1)$-face in common, for $1 \leq i < m$. After a subdivision of the common $(n - 1)$-face, the central simplex will be an internal $(n-1)$-face which is a common $(n-1)$-face of two non-boundary simplices reached from $s_{i}$ and $s_{i + 1}$ respectively. Therefore, the non-boundary simplices reached from $s_{i}$ and $s_{i + 1}$ are in the same equivalence class. We can get the same result for $s$ and $s^{'}$, which means that there exists one unique equivalence class for all non-boundary simplices of $Ch^{n}(s^{k})$.
\end{proof}

\subsubsection{Stage 1}
\label{sec:nccondition:finalization_after_phase_1:stage1}

In the following discussions, the adversary divides all vertices terminated with a label different from $U$ into sets in which each vertex has zero active distance to all other vertices in this set.

In the first stage, the adversary will subdivide $S^{t}$ and increment $t$ many times so that the layers added in later stages do not overlap and are far from $\chi^{t - r_{m}}(D, \delta)$, where $D$ is a subcomplex of $\mathbb{F}_{r_{m}}(U)$ consisting of $n$-simplices that do not have an intersection with $\mathbb{F}_{r_{m}}(U^{'})$. This is possible since the distance between different sets of vertices is at least 3 (by invariant 3 when two sets have different labels) or 1 (when two sets have the same label) and doubles every time the adversary subdivides $S^{t}$. There is an upper bound of the added layers, which is the number of facets in $Q_{r_{m}}(U, U^{'})$. After subdividing $S^{t}$ many times, the active distance between two sets of vertices is a large number. From now on, we will consider each set of vertices separately.

\subsubsection{Stage 2}
\label{sec:nccondition:finalization_after_phase_1:stage2}
In the second stage, the adversary will terminate the vertices adjacent to a terminated vertex $v$ with the label of $v$. The choice of label will be unique according to invariant (3). Note that we do not use the assignment rule (2) which may increase the number of vertices to terminate and therefore lead to an infinite sequence. The results assuming the usage of the assignment rule (2) will still be valid since the adversary will subdivide enough times in the first stage. The reasons to introduce this stage will be formulated in later discussions of finalization, and the discussion of assignment queries in Section \ref{sec:nccondition:augmented_extension_based_proofs}. 

\subsubsection{Stage 3}
\label{sec:nccondition:finalization_after_phase_1:stage3}

In this stage, the adversary starts to add layers of outputs to each set of vertices with a label $U^{'} \neq U$. Note that after enough subdivisions in stage 1, the adversary considers each label $U^{'} \neq U$ separately since the distance between a configuration terminated with the label $U^{'}$ and a configuration terminated with another label $U^{''} \neq U$ is large. For the $n$-simplices in $\mathbb{F}_{r_{m}}(U)$ that have an intersection with $\mathbb{F}_{r_{m}}(U^{'})$, we have divided these $n$-simplices into categories based on $mc(SHA(s^{n}, U^{'}), Q_{1}(U, U^{'}))$ in Section \ref{sec:nccondition:finalization_after_phase_1}. Each category has a simplex in $Q_{1}(U, U^{'})$ as its label.
Suppose that $Q_{1}(U, U^{'}) = \mathbb{F}_{1}(U) \cap \mathbb{F}_{1}(U^{'})$ has dimension $k$. Recall that a facet of a complex is a simplex with maximum dimension. In this section, the adversary starts with the $n$-simplices of $\mathbb{F}_{r_{m}}(U)$ in the category whose label is a facet ($k$-simplex) $s^{k}$ of $Q_{1}(U, U^{'})$ and considers the set of vertices with the label $U^{'}$ in the subdivision of these $n$-simplices. The projections of all $n$-simplices in this category form a pseudomanifold according to Lemma \ref{the:internal_simplices_of_a_subdivision_forms_a_pseudomanifold}. The $k$-simplices in the pseudomanifold can be organized into a sequence $s_{0}, s_{1}, ... s_{e}$ which ends with the $k$-simplex $CEN(s^{k})$ where $\delta_{U}(CEN(s^{k})) = \delta_{U^{'}}(CEN(s^{k}))$. If a set of vertices is in the subdivision of only one $n$-simplex $s^{n}$ of $\mathbb{F}_{r_{m}}(U)$, the adversary adds a layer of output $\delta_{U}(Prj(s^{n}, U^{'}))$ and then adds layers of outputs $\delta_{U^{'}}(s_{i + 1}), \delta_{U^{'}}(s_{i + 2}) ... \delta_{U^{'}}(s_{e})$ to this set, where $Prj(s^{n}, U^{'})$ is $i$-th element in the sequence. Then in reverse order, the adversary adds layers of outputs $\delta_{U}(s_{e}), \delta_{U}(s_{e - 1}) ... \delta_{U}(s_{i})$. The main difficulty of this section lies in the set of vertices that are in the subdivision of multiple $n$-simplices (which do not necessarily belong to one category).

The adversary then considers the set of vertices terminated with the label $U^{'}$ in the subdivision of $n$-simplices in a category whose label is a facet($(k-1)$-simplex) of the $(k-1)$-skeleton of $Q_{1}(U, U^{'})$. This ends when the category whose label is the $0$-skeleton of $Q_{1}(U, U^{'})$ has been considered.

The first step is to consider the category whose label is a facet($k$-simplex) $s^{k}$ of $Q_{1}(U, U^{'})$. By Lemma \ref{the:internal_simplices_of_a_subdivision_forms_a_pseudomanifold}, we know that the $k$-simplices, which are projections of $n$-simplices in this category, form a pseudomanifold. It is possible to enumerate these $k$-simplices in a finite sequence $s_{0}, s_{1},....,s_{i}, .... s_{e}$ and to find a sequence $s_{1}^{'},....,s_{i}^{'}, ....$ of $(k - 1)$-simplices such that for $i \geq 1$, $s_{i}^{'}$ is a face of $s_{i - 1}$ and also a face of $s_{i}$. Note that we only require every $k$-simplex to appear in this sequence, as a consequence of which a $k$-simplex may appear multiple times. The $k$-simplex $CEN(s^{k})$ whose output is assumed to be compatible when using $\delta_{U}$ and $\delta_{U^{'}}$ is chosen to be the last $k$-simplex $s_{e}$ of the sequence.

The adversary will maintain a finalization invariant after processing the $n$-simplices whose projections are in $s_{i}$: for each set of vertices terminated with the label $U^{'}$ in the subdivision of the $n$-simplices whose projections are in $s_{0} \cup s_{1} \cup s_{2}.... \cup s_{i}$, the outermost layer of the set is assigned with the output of $\delta_{U^{'}}(s_{i})$. 

Consider the $n$-simplices $s^{n}$ of $\mathbb{F}_{r_{m}}(U)$ whose projections $Prj(s^{n}, U^{'})$ are in $s_{0}$. The adversary adds a layer of $\delta_{U^{'}}(Prj(s^{n}, U^{'}))$ to each set terminated with the label $U^{'}$. In fact, this is the first step for any terminated set with the label $U^{'}$, which will be repeated many times in the following discussion. The dimension of $\delta_{U^{'}}(Prj(s^{n}, U^{'}))$ is at most $k$. Note that the $n$-simplices $s^{n}$ whose projections $Prj(s^{n}, U^{'})$ are a proper subsimplex of $s_{0}$ are included here, in the subdivision of which the adversary simply adds a layer of output $\delta_{U^{'}}(Prj(s^{n}, U^{'}))$ whose dimension is less than $k$. This added layer does not violate the task specification, as $carrier(Prj(s^{n}, U^{'}), Ch(\mathcal{I}))$ is a subsimplex of $carrier(v, Ch(\mathcal{I}))$ for each vertex $v$ of $s^{n}$. In other words, in the configuration represented by $s^{n}$ of $\mathbb{F}_{r_{m}}(U)$, if a process whose state is represented by a vertex in $Prj(s^{n}, U^{'})$ has seen the input value of some process $p$, then a process whose state is represented by a vertex in $s^{n} - Prj(s^{n}, U^{'})$ has seen the input value of $p$.

The adversary then considers the $n$-simplices whose projections are in $s_{1}$. The sets terminated with the label $U^{'}$ in the subdivision of $n$-simplices whose projections are in $s_{0} \cup s_{1}$ can be classified into three types: sets in $n$-simplices whose projections are in $s_{0}$, sets in $n$-simplices whose projections are in $s_{1}$, and sets in both types of $n$-simplices.

First, the adversary processes the sets in the subdivision of $n$-simplices whose projections are in $s_{0}$. The outermost layer of a set $S$ terminated with the label $U^{'}$ is assigned with the output $\delta_{U^{'}}(s_{0})$ by the finalization invariant. The adversary adds a layer of output $\delta_{U^{'}}(s_{1})$ to each set $S$. This is possible since $\delta_{U^{'}}(s_{0})$ is different from $\delta_{U^{'}}(s_{1})$ by the output value of a single process, denoted by $p$. Suppose that before the operation, $\sigma$ is a $n$-simplex with some vertices in this set $S$. If some vertex of $\sigma$ is assigned with the output value of the process $p$, then the adversary terminates the remaining vertices of $s^{n}$ with the output values of $\delta_{U^{'}}(s_{1}^{'})$. Otherwise, if $\sigma$ does not have any vertex terminated with the output value of $p$, then the adversary terminates the remaining vertices of $s^{n}$ with the output values of $\delta_{U^{'}}(s_{1})$. The outermost layer is assigned with the output values of $\delta_{U^{'}}(s_{1})$. Note that although we are assuming that the task is colorless, we try to minimize the usage of this condition. If a task is colorless, the adversary can add a layer of $s_{1}$ when $s_{0}$ and $s_{1}$ share only one vertex. We require that $s^{0}$ and $s^{1}$ share the $(k - 1)$ vertices since the same argument can be generalized to a task without the colorless condition.

Then we consider the sets in the $n$-simplices whose projections are in $s_{1}$. The adversary adds a layer of $\delta_{U^{'}}(s_{1})$ in the $n$-simplices whose projections are in $s_{1}$

The remaining sets are those sets in the subdivision of the $n$-simplices whose projections are in $s_{0}$ and the $n$-simplices whose projections are in $s_{1}$ at the same time. Again, the adversary adds a layer of $\delta_{U^{'}}(s_{1})$ in the $n$-simplices whose projections are in $s_{1}$. 
The adversary then adds a layer of $\delta_{U^{'}}(s_{1})$ in the $n$-simplices whose projections are in $s_{0} \cup s_{1}$. The outermost layer is assigned with the output values of $\delta_{U^{'}}(s_{1})$. The finalization invariant is maintained.

Suppose that the adversary has added layers to the set terminated with the label $U^{'}$ in the subdivision of the $n$-simplices whose projections are in $s_{0} \cup s_{1} \cup .... \cup s_{i - 1}$ and the finalization invariant is true. The adversary considers the sets in the subdivision of the $n$-simplices whose projections are in $s_{i}$, which are divided into three types as above.

First, the adversary looks back to the sets in the subdivision of the $n$-simplices whose projections are in $s_{0} \cup s_{1} \cup .... \cup s_{i - 1}$. Such a set can be in the subdivision of multiple $n$-simplices. The adversary adds a layer of $\delta_{U^{'}}(s_{i})$ in the $n$-simplices whose projections are in $s_{0} \cup s_{1} \cup .... \cup s_{i - 1}$, which is possible since the $\delta_{U^{'}}(s_{i - 1})$ is different from that of $\delta_{U^{'}}(s_{i})$ by the output value of only one vertex. Details have already been given in the previous discussion. 

The adversary then considers the sets totally in the subdivision of the $n$-simplices whose projections are in $s_{i}$. The adversary adds a layer of $\delta_{U^{'}}(s_{i})$ in the $n$-simplices whose projections are in $s_{i}$. In other words, the adversary will terminate the vertices adjacent to a terminated vertex with some value of $\delta_{U^{'}}(s_{i})$.

The main obstacle is to add layers to the sets at the intersection of the $n$-simplices whose projections are in $s_{i}$ and those whose projections are in $s_{0} \cup s_{1} \cup .... \cup s_{i - 1}$. Now if $s_{i}$ has appeared in the sequence before it (recall that we do not require that a $k$-simplex appear only once in the sequence), then the adversary will treat these sets as sets whose projections are in $s_{0} \cup s_{1} \cup .... \cup s_{i - 1}$. The adversary adds a layer of output value $\delta_{U^{'}}(s_{i})$.

If $s_{i}$ has not appeared in the sequence before it, then the adversary has added some layers to this set in the $n$-simplices whose projections are in $s_{0} \cup s_{1} \cup .... \cup s_{i - 1}$. The adversary needs to complete these layers in the $n$-simplices whose projections are in $s_{i}$. Consider the boundary between $n$-simplices whose projections are in $s_{0} \cup s_{1} \cup .... \cup s_{i - 1}$ and $n$-simplices whose projections are in $s_{i}$. In the general situation, the projection of this boundary is not restricted to $s_{i}^{'}$ as in the case of $s_{1}$ and this is the reason why the general discussion will be much more complicated. By the finalization invariant, for each terminated set with the label $U^{'}$ in the subdivision of $n$-simplices whose projections are in $s_{0} \cup s_{1} \cup .... \cup s_{i - 1}$, the outermost layer of it has the output values of $\delta_{U^{'}}(s_{i - 1})$.

If a set $S_{1}$ in the $n$-simplices whose projections are in $s_{i}$ shares some vertices with some previous set $S_{2}$, then these sets should be merged into a single set. The adversary has added some layers to $S_{2}$ in the $n$-simplices whose projections are in $s_{0} \cup s_{1} \cup .... \cup s_{i - 1}$. Now the adversary has to add some layers to $S_{1}$ in the $n$-simplices whose projections are in $s_{i}$. Suppose that after the adversary finishes stage 2, a vertex $v_{o}$ is on the boundary of an $n$-simplex $s^{n}$ whose projection is in $s_{i}$ and an $n$-simplex whose projection is in $s_{j}$ where $j < i$. Now the adversary has to terminate the vertices in the subdivision of $s^{n}$ reached from $v_{o}$. Will this operation cause any conflict of output values in these $n$-simplices whose projections are in $s_{i}$? Will the operation terminate more vertices on the boundary so that the adversary has to look back to $n$-simplices whose projections are in $s_{j}$ to fix the same problem? Before we present what the adversary will do, we have to analyze a property of the sets terminated with the label $U^{'}$. 

After the first stage, in which enough chromatic subdivisions are made, the sets in the subdivision of an $n$-simplex either have a large active distance to the boundary of the $n$-simplex or have a zero active distance. Sets of the former type have been discussed and will not affect the merging of sets. The sets near the boundary belong to the latter type, which means that after the second stage the set will contain some terminated vertices on the boundary, and the sets in the subdivision of different $n$-simplices should be merged into a single one. 

It is quite natural to think that the adversary can add layers to the merged set using the values of the existing added layers on the boundary. But this method does not work for our current adversarial strategy. Suppose that the dimension of $s_{i}$ is $k$. This idea will not work for sets containing vertices whose projections are in different $(k - 1)$-dimensional subsimplices of $s_{i}$. In other words, a set can be connected to many existing terminated sets, each of which is in the $n$-simplices whose projections are in $s_{j_{d}}$ where $j_{d} < i$. For example, this happens when each vertex in the subdivision of a facet of $Q_{r_{m}}(U, U^{'})$ is terminated with the label $U^{'}$, which is possible since $Q_{t}(U, U^{'})$ is a subcomplex $\mathbb{F}_{t}(U^{'})$ for each $t$, and is therefore not restricted (as the assignment rule (2)). In this situation, there will be multiple sequences of existing added layers on the boundary that cause a conflict of output values to be used. 

The solution to this disturbing problem is that each set $S_{1}$ in the $n$-simplices whose projections are in $s_{i}$ can be connected to at most one previous set $S_{2}$. To achieve this goal, we have to modify our adversarial strategy described in Section \ref{sec:nccondition:adversarial_strategy}. We place a restriction on the first assignment rule, just as we have done with the second assignment rule. Once the adversary terminates a vertex $v$ with the label $U$ where $v$ is in $\mathbb{F}_{t}(U)$, it can do it again after subdividing the current complex $r_{s}$ times. This restriction can only cause a liveness issue. However, as the proof of Lemma \ref{the:restrict_ebf_finite_chain_of_queries_lemma} shows, the restriction on the second assignment rule will not cause an infinite chain of queries. Due to the same reason, the restriction on the first assignment rule will not cause an infinite chain of queries, i.e., lead to a liveness problem. Perhaps we have to explain to the readers why this rule is not introduced in Section \ref{sec:nccondition:adversarial_strategy}. We impose a restriction on the second assignment rule, since the canonical neighbor with the label $U^{'}$ can only be defined for the $n$-simplices that have an intersection with $\mathbb{F}_{r_{m}}(U^{'})$. But this is not the case for those vertices terminated with $\delta_{U}$ when they are in $\mathbb{F}_{t}(U)$. Here, we introduce the same restriction for a completely different reason.

Let $S_{2}^{'}$ be a set of vertices after the second stage generated from $S_{2}$ by removing the added layers in stage 3. Let $S_{3}$ be a set of vertices that is shared by $S_{1}$ and $S_{2}^{'}$. The adversary will use the existing layers added to $S_{3}$ to terminate some vertices reached from $S_{1}$.

Given a set of vertices $S$, let $\mathcal{K}$ be the minimum subcomplex of $\mathbb{F}_{t}(U)$ whose subdivision contains $S$.
The projection of $S$ to $Q_{r_{m}}(U, U^{'})$ is defined as the subcomplex obtained from $\mathcal{K}$ by removing the vertices whose process ids are not in $ids(Q_{r_{m}}(U, U^{'}))$. 

If the projection of $S_{3}$ to $Q_{r_{m}}(U, U^{'})$ is within some $l$-dimensional ($l \leq k - 1$) subsimplex $s^{l}$ of $s_{i}$, the adversary will use the existing layers added to $S_{3}$. First, the adversary adds a layer of output values of $\delta_{U^{'}}(s^{l})$ (note that the first layer added here is not $\delta_{U^{'}}(s_{i})$ as in the above cases) to $S_{1}$ in the $n$-simplices whose projections are in $s_{i}$. The adversary can then add layers compatible with the existing layers on the boundary: for each vertex $v$ in $S_{1}$, the adversary terminates the vertices reached from $v$ by an undefined path whose length equals the number of existing layers. The concrete values are obtained from the existing layers added to $S_{3}$ at the boundary. In a high-level idea, we say that the adversary designates a previous $k$-simplex $s_{i^{'}}$, where $i^{'} < i$, for a set $S_{1}$ in the subdivision of an $n$-simplex $s_{i}$. This means that although $S_{1}$ is a set of vertices in the subdivision of $s^{n}$, the adversary sees it as a set of vertices in the subdivision of another $n$-simplex whose projection to $Q_{r_{m}}(U, U^{'})$ is $s_{i^{'}}$. The validity of these assigned values is shown in the following lemma.

In our adversarial strategy, any vertex $v_{1}$ in the subdivision of an $n$-simplex $s^{n}$ terminated with the label $U^{'}$ is connected to some vertex $v_{2}$ in $\mathbb{F}_{t}(U^{'})$(if there are multiple choices of $v_{2}$, choose the first $v_{2}$ in $\mathbb{F}_{t}(U^{'})$). The $n$-simplex $s^{n}$ can be seen as the joining of a simplex $s_{1} \in \mathbb{F}_{r_{m}}(U^{'})$ and a non-boundary simplex $s_{2} \notin \mathbb{F}_{r_{m}}(U^{'})$. 

\begin{lemma}
\label{the:relations_of_carrier}
$carrier(v_{1}, s^{n})$ is the joining of $carrier(v_{2}, s_{1})$ and some subsimplex of $s_{2}$, or a simplex containing this joining.
\end{lemma}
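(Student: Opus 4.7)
The plan is to reformulate the claim more conveniently and then induct on a path length. Since every face of $s^{n} = s_{1} * s_{2}$ decomposes uniquely as the joining of a subsimplex of $s_{1}$ and a subsimplex of $s_{2}$, the statement is equivalent to showing $carrier(v_{1}, s^{n}) \cap s_{1} \supseteq carrier(v_{2}, s_{1})$; the subsimplex $T$ of $s_{2}$ can then be taken to be $carrier(v_{1}, s^{n}) \cap s_{2}$. I will prove this reformulation by induction on the length $\ell$ of the shortest terminated-$U^{'}$ path from $v_{1}$ to $v_{2}$.

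For the base case $\ell = 0$, $v_{1} = v_{2} \in \mathbb{F}_{t}(U^{'})$ lies in the subdivision of $s^{n} \cap \mathbb{F}_{r_{m}}(U^{'}) = s_{1}$, so $carrier(v_{1}, s^{n}) = carrier(v_{2}, s_{1})$ and the claim is immediate. For the inductive step, I will let $u$ be the neighbor of $v_{1}$ on the path and let $\tilde{s}^{n} \in Ch^{r_{m}}(\mathcal{I})$ be an $n$-simplex whose closure contains the edge $\{v_{1}, u\}$; write $\tilde{s}^{n} = \tilde{s}_{1} * \tilde{s}_{2}$ as the analogous split with respect to $U^{'}$. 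A secondary induction on the subdivision level---combining the fact that adjacent vertices in a single standard chromatic subdivision $Ch(\tau)$ have nested carriers in $\tau$ (which is immediate from the ordered-carrier description of simplices in $Ch(\tau)$) with repeated application of Lemma~\ref{the:carrier_of_union}---shows that adjacent vertices in $Ch^{r}(\tau)$ always have nested carriers in $\tau$. In particular, $carrier(v_{1}, \tilde{s}^{n})$ and $carrier(u, \tilde{s}^{n})$ are nested.

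When $\tilde{s}^{n} = s^{n}$ and $carrier(u, s^{n}) \subseteq carrier(v_{1}, s^{n})$, applying the inductive hypothesis to $u$ inside $s^{n}$ immediately gives $carrier(v_{1}, s^{n}) \cap s_{1} \supseteq carrier(u, s^{n}) \cap s_{1} \supseteq carrier(v_{2}, s_{1})$. When $\tilde{s}^{n} \neq s^{n}$, the vertex $v_{1}$ lies on the common boundary $s^{n} \cap \tilde{s}^{n}$, so its carriers in the two $n$-simplices agree, while $v_{2} \in \mathbb{F}_{t}(U^{'})$ lies in the common subdivision of $s_{1} \cap \tilde{s}_{1}$ by the choice of $v_{2}$ as the first path vertex in $\mathbb{F}_{t}(U^{'})$; the inductive hypothesis applied inside $\tilde{s}^{n}$ then transfers back to $s^{n}$ unchanged.

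The hard part will be the remaining subcase where $\tilde{s}^{n} = s^{n}$ but $carrier(v_{1}, s^{n}) \subsetneq carrier(u, s^{n})$. Here $v_{1}$ must sit on the boundary of $s^{n}$ and therefore also lie in the subdivision of some neighboring $n$-simplex $\hat{s}^{n} \neq s^{n}$; I plan to handle this by running the cross-simplex argument above on the pair $(s^{n}, \hat{s}^{n})$ at $v_{1}$ and then rerouting the shortest path through $\hat{s}^{n}$. The key observation is that the shortest path from $v_{1}$ to $v_{2}$ must first enter $\mathbb{F}_{t}(U^{'})$ at a unique crossing vertex lying in the subdivision of $s_{1} \cap \hat{s}_{1}$, which forces $carrier(v_{2}, s_{1}) \subseteq s_{1} \cap \hat{s}_{1}$ and so makes the carrier bound obtained inside $\hat{s}^{n}$ valid inside $s^{n}$ as well.
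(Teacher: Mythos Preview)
Your reformulation and the easy cases of the induction are fine, and the nested-carrier observation for iterated chromatic subdivisions is correct. The genuine gap is the ``hard case'' $carrier(v_{1}, s^{n}) \subsetneq carrier(u, s^{n})$, and your proposed rerouting through a neighbouring $\hat{s}^{n}$ cannot close it. First, $v_{1}$ sitting on a proper face of $s^{n}$ does not guarantee the existence of such an $\hat{s}^{n}$; that face may lie on $\partial\mathcal{I}$. Second, and more importantly, the terminated path you are inducting along goes through $u$, which lies in the interior of $s^{n}$ in this case, so the path is \emph{not} a path inside $\hat{s}^{n}$ and you have nothing to apply the inductive hypothesis to there. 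Your ``key observation'' that $v_{2}$ must lie in the subdivision of $s_{1}\cap\hat{s}_{1}$ is unjustified: nothing in the combinatorics of nested carriers forces the first $\mathbb{F}_{t}(U')$-vertex on the path to land in that intersection.

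The deeper reason the induction stalls is that the lemma is \emph{not} a purely combinatorial fact about iterated chromatic subdivisions. It is false for an arbitrary path of vertices labelled $U'$; it holds only because of the adversary's constraints---the initial $r_{a}$ rounds in which everything is set to~$\perp$, and the $r_{s}$-round cooldown on assignment rule~(2)---which together bound how far a label-$U'$ terminated region can spread away from $\mathbb{F}_{t}(U')$. Your argument never invokes either constraint. The paper's proof is accordingly not inductive but a distance argument by contradiction: if $carrier(v_{1},s^{n})\cap s_{1}$ omitted some vertex of $carrier(v_{2},s_{1})$, then $v_{1}$ would lie in the subdivision of a face $s_{3}$ whose active distance in $S^{r_{m}+r_{a}}$ from the interior of $carrier(v_{2},s_{1})$ is at least~$2$, and Lemma~\ref{the:glue_protocols:prover_fails:terminated_vertices_not_reach} then forbids any label-$U'$ terminated path from $v_{2}$ ever reaching $v_{1}$. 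To repair your approach you would at minimum need to bring those adversarial constraints into the hard case, at which point you are essentially reproducing the paper's contradiction argument rather than a clean induction on path length.
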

\begin{proof}
We will prove this result by contradiction. Suppose that the carrier of $v_{1}$ in $s^{n}$ is the joining of any strict subsimplex of $carrier(v_{2}, s_{1})$ and $s_{2}$, denoted by $s_{3}$. There exists a path from $v_{1}$ to $v_{2}$, where $v_{1}$ is in the subdivision of $s_{3}$ and $v_{2}$ is not in the subdivision of $s_{3}$. In the complex $S^{r_{m} + r_{a}}$, $\delta(v)$ is defined as $\perp$ for each vertex $v$. Let $v$ be a vertex in the subdivision of $carrier(v_{2}, s_{1})$ but not in the subdivision of any subsimplex of $carrier(v_{2}, s_{1})$. The length of an undefined path from $v$ to a vertex in the subdivision of $s_{3}$ is a large number or 1. In the latter case, the path goes through the strict subsimplex of $carrier(v_{2}, s_{1})$ which contradicts the definition of $v_{2}$. By Lemma \ref{the:glue_protocols:prover_fails:terminated_vertices_not_reach}, any undefined path in $S^{t}$, where $t > r_{m} + r_{a}$, from a vertex in the subdivision of $carrier(v_{2}, s_{1})$ but not a subsimplex of $carrier(v_{2}, s_{1})$ to a vertex in the subdivision of $s_{3}$ cannot be terminated with the label $U^{'}$, which contradicts the fact that $v_{2}$ is connected to $v_{1}$ by a path consisting of terminated vertices.
\end{proof}

This lemma shows that if an output value is valid for a vertex $v_{2}$ in $\mathbb{F}_{t}(U^{'})$ terminated with the label $U^{'}$, then any vertex $v_{1}$ connected to $v_{2}$ can be terminated with the same output value. The output values of the existing added layers to $S_{3}$ are obtained from $\delta_{U^{'}}(s_{0}), \delta_{U^{'}}(s_{1}) \cdots \delta_{U^{'}}(s_{e})$ where $s_{i} \in \mathbb{F}_{r_{m}}(U^{'})$. In the subdivision of all $n$-simplices in the category with label $s^{k}$, the vertices with the minimum carrier in $Ch(\mathcal{I})$ contain vertices of $s_{0}, s_{1} ... s_{e}$. For each vertex $v$ in $S_{1}$, $carrier(s_{i}, Ch(\mathcal{I}))$ is a subsimplex of $carrier(v, Ch(\mathcal{I}))$. Therefore, terminating a vertex adjacent to $S_{1}$ using the output values from the existing added layers to $S_{3}$ does not violate the task specification.

Although we do not use the second assignment rule to add layers to the terminated sets(i.e. we do not subdivide the current complex each time we terminate a vertex with the label $U^{'}$), the number of added layers has an upper bound, which means that if enough subdivisions are made in the first stage, the added layers will also satisfy the above lemma. 
Another requirement for validity is that there is no conflict of newly assigned values. Layers added to a vertex in $S_{1}$ will not terminate a new vertex on the boundary. Suppose that after the second stage, $v_{0}$ is a terminated vertex on the boundary and $v_{1}$ is a terminated vertex not on the boundary. The two vertices are both undefined vertices after the first stage. Let $\sigma$ denote an undefined $n$-simplex containing $v_{0}$ and $v_{1}$ before stage 1. Note that at this time, $v_{1}$ may not exist in $S^{t}$. The simplex $\sigma$ will be subdivided sufficiently using the standard chromatic subdivision so that the active distance from $v_{1}$ to the boundary will be 1 or a large number. So, after the second stage where $v_{0}$ and $v_{1}$ are terminated, any path consisting of undefined vertices from $v_{1}$ to a vertex at the boundary will have a long length. We give a simple example to present our idea in Figure \ref{img:added_layers_no_reached}. When the adversary merges the sets on the boundary, the added layers to $v_{1}$ will not reach the boundary, which means that the adversary will not have to change the output values of the terminated vertices on the boundary or terminate some new vertices on the boundary. Therefore, the adversary will not alternate infinitely between different sides of the boundary, and it is valid for us to choose the form of induction to discuss sets terminated with the label $U^{'}$.

\begin{figure}[h]
  \centering
  \includegraphics[width=\linewidth]{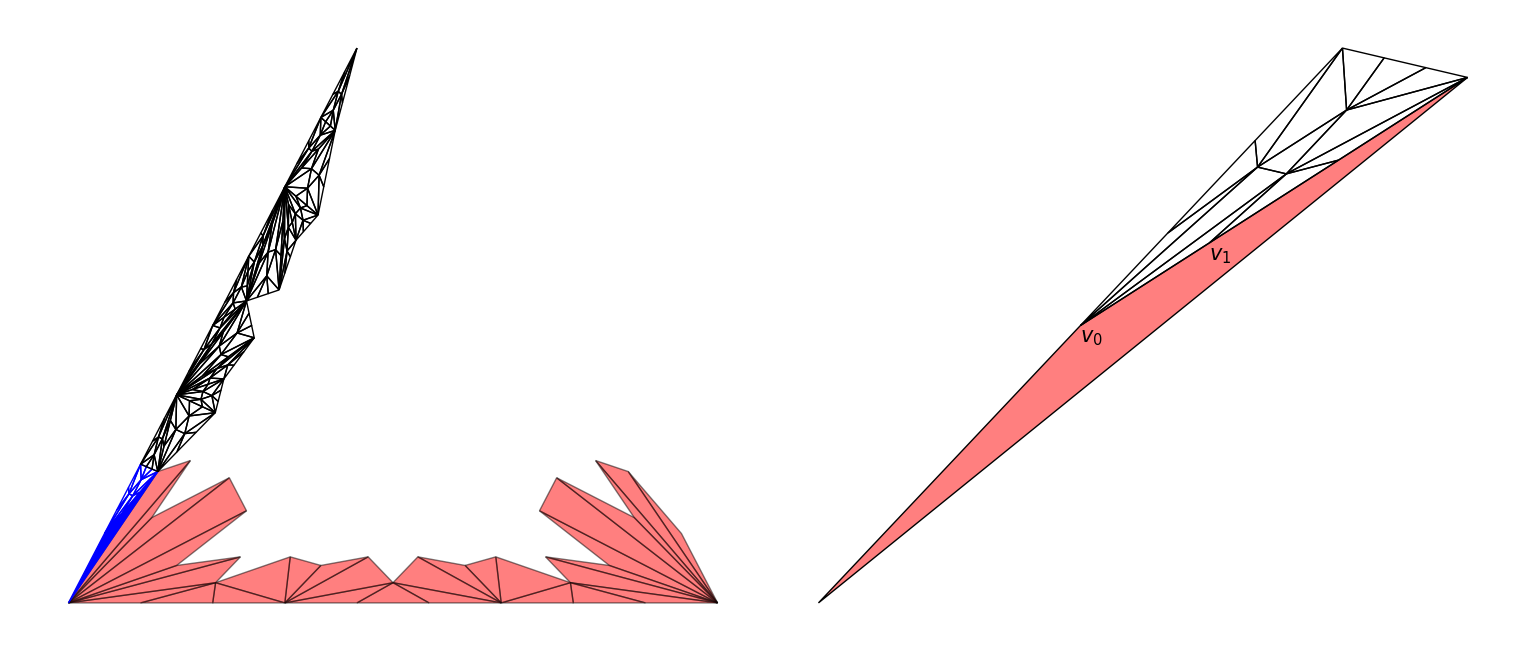}
  \caption{$v_{1}$ will have a large distance to the boundary}
  \label{img:added_layers_no_reached}
\end{figure}

A more complicated situation is that the projection of $S_{3}$ to $Q_{r_{m}}(U, U^{'})$ will be in multiple subsimplices of $s_{i}$, denoted by a complex $L$. The added layers to $S_{3}$ is not unique. For each simplex $s_{l}$ of $L$, there is a sequence of layers on the boundary added to the vertices whose projections are $s_{l}$. 

We first prove that each maximal simplex in $L$ contains some simplex $s^{j}$.

\begin{lemma}
\label{the:projection_is_an_open_star}
Let $s^{n}$ be an $n$-simplex in $\mathbb{F}_{r_{m}}(U)$, which is the joining of a subsimplex $s_{1} \in \mathbb{F}_{r_{m}}(U^{'})$ and a subsimplex $s_{2} \notin \mathbb{F}_{r_{m}}(U^{'})$. For each set terminated with the label $U^{'}$ after the second stage in the subdivision of $s^{n}$, there exists a terminated vertex whose projection to $Q_{r_{m}}(U, U^{'})$ is $s^{j}$ such that the projection to $Q_{r_{m}}(U, U^{'})$  of each terminated vertex in this set contains $s^{j}$.
\end{lemma}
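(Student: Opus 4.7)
The plan is to identify $s^j$ via Lemma \ref{the:relations_of_carrier} by tracing every vertex of $S$ back to a representative in $\mathbb{F}_t(U')$ and then extracting a common minimum carrier inside $s_1$.

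First, for every $v \in S$ I would invoke Lemma \ref{the:relations_of_carrier} to obtain a representative $v_2(v) \in S \cap \mathbb{F}_t(U') \subseteq \chi^{t-r_m}(s_1)$ that is connected to $v$ by a terminated path inside $S$, together with the relation $carrier(v, s^n) \supseteq carrier(v_2(v), s_1) * \tau(v)$ for some $\tau(v) \subseteq s_2$. Since $s_1$ is a subsimplex of $s^n \in \mathbb{F}_{r_m}(U)$ all of whose vertices also lie in $\mathbb{F}_{r_m}(U')$, closure under faces gives $s_1 \in Q_{r_m}(U, U')$ and hence $ids(s_1) \subseteq ids(Q_{r_m}(U, U'))$. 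Consequently, removing the non-$ids(Q_{r_m}(U, U'))$ ids from $carrier(v, s^n)$ preserves all of $carrier(v_2(v), s_1)$, so the projection of $v$ to $Q_{r_m}(U, U')$ contains $carrier(v_2(v), s_1)$. This reduces the lemma to showing that the family $\mathcal{C} := \{carrier(v_2(v), s_1) : v \in S\}$ admits a unique inclusion-minimum $s^j$; once such an $s^j$ is identified, any $v_2(v^*)$ attaining the minimum lies entirely inside $\chi^{t-r_m}(s_1)$, so its projection is exactly $s^j$, giving the desired terminated vertex.

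The main obstacle is justifying that $\mathcal{C}$ has a unique inclusion-minimum rather than several incomparable minimal elements. I would argue this by walking a terminated path in $S$ between two representatives $v_2$ and $v_2'$: for each pair of adjacent vertices in $\chi^{t-r_m}(s^n)$, the carriers in $s^n$ form a chain of subsimplices (the chain condition in the definition of $Ch(\mathcal{K})$ in Section \ref{sec:preliminaries}), so the $s_1$-parts of those carriers are nested as well. Whenever the path exits $\chi^{t-r_m}(s_1)$ through some vertex $u$ and later re-enters, I would apply Lemma \ref{the:relations_of_carrier} to $u$: its representative $v_2(u)$ lies in $\chi^{t-r_m}(s_1)$ and is reachable via terminated sub-paths from both endpoints of the detour, so $carrier(v_2(u), s_1)$ is comparable with the $s_1$-carriers on each side, carrying the comparability across the excursion. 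Propagating comparability along every such segment gives pairwise comparability throughout $\mathcal{C}$, and a finite totally ordered family has a unique inclusion-minimum $s^j \in s_1$, completing the proof.
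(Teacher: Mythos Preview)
Your reduction in the first paragraph is sound and matches the paper: invoking Lemma~\ref{the:relations_of_carrier} to pull each $v\in S$ back to a representative $v_2(v)$ in the subdivision of $s_1$ and observing that the projection of $v$ contains $carrier(v_2(v),s_1)$ is exactly the right move.

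The gap is in the second paragraph. Edge-wise comparability of carriers along a path does \emph{not} propagate to pairwise comparability of the whole family~$\mathcal{C}$: comparability is not transitive. A path in a chromatic subdivision can easily produce a sequence of carriers of the form $\{a,b\}\subseteq\{a,b,c\}\supseteq\{a,c\}\subseteq\{a,c,d\}\supseteq\{c,d\}$, so the local minima $\{a,b\}$ and $\{c,d\}$ are incomparable even though every adjacent pair is nested. Your ``carrying comparability across the excursion'' step via $v_2(u)$ does not help either, since Lemma~\ref{the:relations_of_carrier} only gives you \emph{some} representative connected to $u$; it says nothing about $v_2(u)$ being adjacent to $v_2(u')$ when $u$ and $u'$ are adjacent, so you cannot chain comparabilities of their $s_1$-carriers.

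The paper avoids this entirely. It argues by contradiction and uses a piece of the adversarial strategy you never invoke: the $r_s$-restriction on the \emph{first} assignment rule, introduced in the paragraphs just before the lemma specifically to make this work. With that restriction one has an analogue of Lemma~\ref{the:glue_protocols:prover_fails:terminated_vertices_not_reach}: if two subgraphs have active distance at least~$2$ at some stage, no terminated path can later connect their subdivisions. Applied to representatives $v_2,v_2'$ whose $s_1$-carriers share only $s_s$, the existence of a terminated $U'$-path between them forces, together with stage~2, a terminated vertex with projection exactly $s_s$---contradicting the assumption that $s_s$ is not attained. In short, the lemma leans on a dynamic property of how terminated sets grow under the modified rules, not on a static combinatorial fact about chains of carriers; your purely combinatorial route cannot close the argument.
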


\begin{proof}
Suppose that the projections of two terminated vertices $v_{1}$ and $v_{1}^{'}$ to $Q_{r_{m}}(U, U^{'})$ share a common simplex $s_{s}$ in $s_{1}$, and no terminated vertex has a projection $s_{s}$ to $Q_{r_{m}}(U, U^{'})$. By Lemma \ref{the:relations_of_carrier}, there exist $v_{2}$ and $v_{2}^{'}$ in the subdivision of $s_{1}$ that are connected to $v_{1}$ and $v_{1}^{'}$ respectively. 

By our new restriction on the first assignment rule, if the active distance between two subgraphs is at least 2 in $S^{t_{1}}$, then there is no terminated path from the subdivision of one subgraph to the subdivision of the other subgraph in $S^{t_{2}}$, where $t_{2} > t_{1}$. We can prove this using an argument similar to the proof of \ref{the:glue_protocols:prover_fails:terminated_vertices_not_reach}.

The two vertices $v_{2}$ and $v_{2}^{'}$ are connected by a terminated path with the label $U^{'}$, which is possible only when each vertex has zero active distance to a vertex whose projections to $Q_{r_{m}}(U, U^{'})$ is $s^{j}$. By our adversarial strategy, some vertices in the subdivision of $s_{s}$ will terminate in the second stage, which is a contradiction.
\end{proof}

In this part, we associate each vertex of $S_{1}$ with a simplex of $L$. For each vertex $v_{1}$ in $S_{1}$, there is a vertex $v_{2}$ in $\mathbb{F}_{r_{m}}(U^{'})$ that is connected to $v_{1}$. $v_{2}$ is connected to one simplex, denoted by $s_{v}$, in $L$ as each $v_{1}$ is connected to some vertex in $S_{3} \subset S_{1}$ ($v_{2}$ is the projection of $v_{1}$ and $L$ is the projection of $S_{3}$).

Suppose that the adversary designates a previous $k$-simplex $s_{i^{'}}$ for the set $S_{2}$ of vertices, where $i^{'} < i$. Similarly, the adversary designates $s_{i^{'}}$ for the set of vertices $S_{1}$. 
At first, the adversary terminates the vertices adjacent to $v_{1}$ in $S_{1}$ with the output $\delta_{U^{'}}(s_{v})$. The adversary then uses the sequence of layers added to $s_{v}$ to terminate the vertices adjacent to $v_{1}$. The adversary can terminate the vertices reached from $v_{1}$ by an undefined path whose length equals the number of existing layers added to $s_{v}$. Consider two vertices $v_{1}$ and $v_{1}^{'}$ such that $s_{v}$ and $s_{v}^{'}$ have a different carrier in $L$. Since there will be some simplex $s^{j}$ shared by $s_{v}$ and $s_{v}^{'}$ in $L$ by Lemma \ref{the:projection_is_an_open_star}, the layers added to the vertices whose projections are $s^{j}$ will serve as the intermediate part. Therefore, the newly terminated vertices will not have any conflict of output values, since the existing added layers on the boundary do not have a conflict of values. The output values of the outermost layer are those of $\delta_{U^{'}}(s_{i - 1})$. Finally, the adversary will add a layer of $\delta_{U^{'}}(s_{i})$. All the added layers mentioned above will not cause any violation of the task specification, according to Lemma \ref{the:relations_of_carrier}.

By induction, the adversary will process the last $k$-simplex in the sequence, i.e. the $k$-simplex that is assigned with the same output values by $\delta_{U}$ and $\delta_{U^{'}}$. By the finalization invariant, the output values of the outermost layer only contain these shared output values. From now on, these shared output values will be seen as output values with the label $U$ rather than $U^{'}$. 

The adversary will add more layers to connect the shared output values to $\delta_{U}(s^{n})$, dual to what it has done. This time, the adversary will reverse the sequence of $k$-simplices as $s_{e}, s_{e - 1},....,s_{i}, .... s_{0}$. Note that we do not keep an invariant (like the finalization invariant) here. When processing the $n$-simplices whose projections are in $s_{i}$, the adversary will add a layer of output $\delta_{U}(s_{i})$ to all sets in the $n$-simplices whose projections are in $s_{i} \cup s_{i - 1} ... \cup s_{0}$. Then the outermost layer of the set will be the output $\delta_{U}(s_{i})$, which is exactly what we want.

If a set is in the subdivision of $n$-simplices whose projections are both in $s_{e} \cup s_{e - 1},.... \cup s_{i + 1}$ and $s_{i} \cup s_{i - 1} ... \cup s_{0}$, the adversary only adds a layer in the $n$-simplices whose projections are in $s_{i} \cup s_{i - 1} ... \cup s_{0}$ but not in $(s_{e} \cup s_{e - 1},.... \cup s_{i + 1}) \cap (s_{i} \cup s_{i - 1} ... \cup s_{0})$.
Consider a set $S$ of vertices whose projections are in $s_{i - i^{'}}$. Let $S_{1}$ be the set of vertices in $S$ in $n$-simplices whose projections are in $s_{i - i^{'}}$ but not in $s_{i - i^{'}} \cap (s_{e} \cup s_{e - 1},.... \cup s_{i + 1})$, $S_{2}$ be the set of vertices in $S$ in $n$-simplices whose projections are in $s_{i - i^{'}} \cap (s_{e} \cup s_{e - 1},.... \cup s_{i + 1})$. The adversary adds a layer to $S_{1}$, but not to $S_{2}$. Later, when the adversary finishes the induction on $s_{i - i^{'}}$, the outermost layer of $S_{1}$ and $S_{2}$ consists of the output $\delta_{U}(s_{i - i^{'}})$. 

The adversary now processes the categories whose labels are facets of the $(k - 1)$-skeleton of $s^{k}$, denoted by $s^{k - 1}$. Each of these categories has a shared boundary with the category whose label is $s^{k}$. Before the adversary repeats what it has done to the category whose label is $s^{k}$, it has to deal with those vertices on the boundary with the $n$-simplices in the category whose label is $s^{k}$. Note that the boundary between different categories is not simply a simplex like the boundary of their projections in $s^{k}$, but contains each simplex whose projection is within the boundary of their projections.  Stage 1 and stage 2 are the same as in the previous category(including how many times the chromatic subdivisions are made). In fact, what we have done is to bring stage 1 and stage 2 of all categories forward to merge into a single stage 1 and stage 2.

Since the adversary has processed the category whose label is $s^{k}$, the subdivision of some $n$-simplices in the category whose label is $s^{k - 1}$ contains some added layers on the boundary. The adversary will designate some $k$-simplex in the category whose label is $s^{k}$ for such a set of vertices (just as the adversary does in the previous discussions). 
The adversary will add layers compatible with existing layers by assigning values to those undefined vertices reached from some terminated vertex after the second stage by an undefined path whose length equals the number of needed layers, similar to what the adversary has done before. A key observation is that for each terminated set in the subdivisions of $n$-simplices in the category whose label is $s^{k - 1}$, if it has some vertices $v_{1}$ on the boundary with the category whose label is $s^{k}$, then $carrier(v_{2}, Ch(\mathcal{I}))$ is $s^{k}$ rather than any subsimplex of $s^{k}$ where $v_{2}$ is a vertex in $\mathbb{F}_{r}(U^{'})$ connected to $v_{1}$. By Lemma \ref{the:relations_of_carrier}, the added layers will not violate the carrier map.

Now, these sets will be omitted, and the adversary uses the same techniques to the category with label $s^{k - 1}$. As a consequence, the outermost layer of the sets with the label $U^{'}$ consists of the output of $\delta_{U}$. A difference we must explicitly mention is that the adversary projects the $n$-simplices in $\mathbb{F}_{r_{m}}(U)$ to the label of the category, that is, $s^{k - 1}$ rather than $s^{k}$.

A general induction can be given. Suppose that the categories whose labels are facets of the $k - q + 1$-skeleton of $s^{k}$ for some $0 < q \leq k$ have been processed and the outermost layer of the sets terminated with the label $U^{'}$ in the subdivision of $n$-simplices of those categories will be assigned by the partial protocol $\delta_{U}$.
Let $s^{k - q} \in s^{k}$ be the label of some category. Then it has some shared boundary with some previous categories. For a set terminated with the label $U^{'}$ whose vertices are in the category whose label is $s^{k - q}$ and some previous categories, the adversary designates an $(k - k^{'})$-simplex in the category whose label has the highest dimension $(k - k^{'})$ for this set. After this, this set of vertices is omitted. The adversary then uses the techniques for the category whose label is $s^{k - q}$. No task specification will be violated, as we have shown for the category whose label is $s^{k-1}$. Until now, we have processed categories, each of which has a label of a subsimplex of a $k$-facet.


Therefore, after processing each possible category, the outermost layer of a set terminated with the label $U^{'}$ in the subdivision of an $n$-simplex $s^{n}$ consists of the output $\delta_{U}(s^{n})$.

\subsubsection{Stage 4}
\label{sec:nccondition:finalization_after_phase_1:stage4}

Finally, the adversary arrives at stage 4 where it fills the gap between a set terminated with the label $U$ and a set terminated with the label $U^{'}$. The adversary sets the output $\delta_{U}(s^{n})$ for each undefined vertex $v$ in $\chi^{t - r_{m}}(s^{n}, \delta)$ for each $n$-simplex $s^{n}$ in $\mathbb{F}_{r_{m}}(U)$. Each vertex in $\mathbb{F}_{t}(U)$ has been defined with an output value, which means that the adversary has constructed a partial protocol with respect to $U$. In phases $\varphi \geq 2$, the adversary will respond to queries according to this partial protocol and the prover cannot find any safety or liveness problem. Eventually, the prover will choose a configuration in which every process has terminated to end some phase and lose in the next phase.

We summarize the above results into a theorem. 
\begin{theorem}
\label{the:adversary_finalize_after_the_first_round_theorem}
For a colorless task$(\mathcal{I}, \mathcal{O}, \Delta)$, there exists an adversary that can finalize after the first round to win against any restricted extension-based prover if and only if there exists a partial protocol with respect to each simplex $U \in \mathcal{I}$ and all these partial protocols are compatible.
\end{theorem}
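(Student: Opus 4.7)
The plan is to prove the biconditional by handling each direction separately, with the forward direction being substantially more involved.

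For necessity, I would show that if the adversary can finalize after phase~1, then (a) for each simplex $U \in \mathcal{I}$ a partial protocol $\delta_U$ must exist, and (b) any two such partial protocols must be compatible. Part~(a) follows because the prover can end phase~1 by choosing a schedule whose first process-set is $ids(U)$; by hypothesis, the adversary must respond with a well-defined protocol on all configurations reachable thereafter, which is precisely a partial protocol with respect to $U$. For part~(b), I would use that in phase~1 the prover may submit finitely many chains of queries. For each $k$-simplex $s^k \in \mathbb{F}_1(U_1) \cap \mathbb{F}_1(U_2)$ the prover forces the output of $CEN(s^k)$ to be committed; the prover may then still choose either $U_1$ or $U_2$ as the head of $\alpha(2)$, so any partial protocol the adversary subsequently constructs must reproduce those committed outputs, yielding $\delta_{U_1}(CEN(s^k)) = \delta_{U_2}(CEN(s^k))$.

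For sufficiency, I would construct an adversary in terms of the sequence of complexes $S^0, S^1, \ldots$ while maintaining invariants (1)--(3) already stated. Initialize by setting $\delta(v) = \perp$ on $S^0, \ldots, S^{r_m + r_a}$ so invariant~(2) holds vacuously. To answer a query $(C, P)$ that creates a simplex in $S^t$, decide $\delta$ on each new vertex by trying in order assignment rule~(1) (use $\delta_U$ whenever the active-distance condition permits), then rule~(2) (use $\delta_{U'}(N(s^n, U'))$ when $v$ is adjacent to terminated $U'$-vertices, subject to the $r_s$-spacing restriction), otherwise leave $\delta(v) = \perp$. The safety of the terminated assignments rests on Lemmas~\ref{the:canonical_neighbor_no_safety_issue}, \ref{the:canonical_neighbor_common_boundary_lemma}, and~\ref{the:canonical_neighbor_common_boundary_lemma_2}, while liveness during phase~1 is established via the doubling-of-active-distance argument (Lemma~\ref{the:restrict_ebf_finite_chain_of_queries_lemma}): after enough subdivisions either rule~(1) or rule~(2) becomes applicable to any repeatedly scheduled process set.

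When the prover ends phase~1 by selecting $C' \in \mathcal{A}'(1)$ with $\alpha(2)$ having first process-set $ids(U)$, I would extend $\delta$ on all undefined vertices in $\mathbb{F}_t(U)$ to obtain a partial protocol with respect to $U$. This proceeds in four stages: Stage~1, subdivide $S^t$ enough times so that the finitely many layers added later to different terminated sets are far apart; Stage~2, terminate all vertices adjacent to each terminated set using its own label; Stage~3, for each label $U' \neq U$ and each category of $n$-simplices labeled by $s^k \in Q_1(U, U')$, enumerate the $k$-simplices of the non-boundary pseudomanifold (Lemma~\ref{the:internal_simplices_of_a_subdivision_forms_a_pseudomanifold}) as $s_0, s_1, \ldots, s_e = CEN(s^k)$ and successively add layers $\delta_{U'}(s_0), \delta_{U'}(s_1), \ldots, \delta_{U'}(s_e) = \delta_U(s_e), \ldots, \delta_U(Prj(s^n, U'))$; Stage~4, fill each remaining undefined vertex with $\delta_U(s^n)$.

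The hard part will be Stage~3, where layer propagation must remain consistent across $n$-simplices that share boundaries and across categories whose labels are sub-simplices of a common facet. Two obstacles arise. First, a terminated set may live in multiple $n$-simplices whose projections differ, so the adversary must designate a unique previous $k$-simplex per merged set and copy values from that specific boundary-layer sequence; Lemma~\ref{the:projection_is_an_open_star} guarantees that the relevant projections share a common subsimplex, preventing contradictory assignments. Second, the colorless condition must be invoked to legitimately overwrite the value of the one differing process when advancing from $\delta_{U'}(s_i)$ to $\delta_{U'}(s_{i+1})$. To avoid an infinite alternation between stages, I would impose an additional spacing restriction on rule~(1), analogous to that on rule~(2), ensuring that newly terminated vertices never reach back across a boundary already finalized; the Lemma~\ref{the:relations_of_carrier} carrier-compatibility property then certifies that every borrowed output value satisfies $\Delta$.
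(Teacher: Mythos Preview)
Your proposal is correct and follows essentially the same approach as the paper: the necessity argument via the prover committing $CEN(s^k)$ outputs and then freely choosing $U_1$ or $U_2$, and the sufficiency argument via the three invariants, the three assignment rules with the $r_s$-spacing restriction, and the four-stage finalization using the pseudomanifold path $s_0,\ldots,s_e=CEN(s^k)$ together with compatibility to switch from $\delta_{U'}$ to $\delta_U$. You have also correctly identified the need for the extra spacing restriction on rule~(1) and the role of Lemmas~\ref{the:relations_of_carrier} and~\ref{the:projection_is_an_open_star} in handling merged sets across category boundaries.
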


\subsection{Augmented extension-based proofs}
\label{sec:nccondition:augmented_extension_based_proofs}

Now, we consider augmented extension-based proofs that allow assignment queries $(C, P, f)$. The prover has to reply whether there exists a $P$-only schedule from the configuration $C$ such that the output values of $ Q \subseteq P $ satisfy the requirement of $f$.

In our current adversarial strategy, only the outputs $\delta_{U^{'}}(N(s^{n}, U^{'}))$ and $\delta_{U}(s^{n})$ are assigned during phase 1. The adversary now has to assign an output not equal to $\delta_{U}(s^{n})$ for any label $U$ to a simplex in the subdivision of $s^{n}$ for each $n$-simplex in $Ch^{r_{m}}(\mathcal{I})$ {\em during} phase 1. More specifically, the output values $((p_{0}, v_{0}), (p_{1}, v_{1}) \cdots (p_{m}, v_{m}))$ of $(m + 1)$ processes that the adversary uses after the end of phase 1 to fill the gap between $\delta_{U^{'}}(N(s^{n}, U^{'}))$ and $\delta_{U}(s^{n})$ must be assigned to some vertices in the subdivision of $s^{n}$ in some situation. This can be proved by having the prover submit an assignment query $(s^{n}, (p_{0}, p_{1} \cdots p_{m}), f: p_{i} \rightarrow v_{i})$. The main difficulty of this section is to integrate these output values into our adversarial strategy given in Section \ref{sec:nccondition:adversarial_strategy}, which assumes that the output value of a vertex in the subdivision of an $n$-simplex in $Ch^{r_{m}}(\mathcal{I})$ is determined by its label and its process id. We either have to maintain this assumption by defining some new labels for these new output values or modify this assumption. In this section, we adopt the latter method without redesigning most of our adversarial strategy, which means that assignment queries do not give the prover more power.

Let $s^{n}$ be an $n$-simplex in $\mathbb{F}_{r_{m}}(U)$. Let $s^{l}$ be any $l$-simplex in the subdivision of it. We define possible output values $POV(s^{l}, U^{'})$ with the label $U^{'}$ of $s^{l}$ as all possible output values of some $l$-simplex reached from $s^{l}$ terminated with the label $U^{'}$. By Lemma \ref{the:glue_protocols:prover_fails:terminated_vertices_not_reach}, only the simplex $s^{l}$ that has zero active distance to some vertex terminated with the label $U^{'}$ or some undefined vertex in $\mathbb{F}_{t}(U^{'})$ can have possible output values of the label $U^{'}$. For example, if the projection of $s^{l}$ to $Q_{t}(U, U^{'})$ is only in one category whose label is a $k$-simplex $s_{1}^{k}$, then $POV(s^{l}, U^{'})$ are the limitations to $ids(s^{l})$ of all $\delta_{U}(s_{2}^{k})$ and $\delta_{U^{'}}(s_{2}^{k})$ for all $s_{2}^{k}$ in $Ch\_int^{r_{m} - 1}(s_{1}^{k})$. In this section, we define the limitation of some output $O = \{(p_{0}, o_{0}), (p_{0}, o_{0}) \cdots (p_{i}, o_{i})\}$ to a set $P$ of processes as an assignment to each process $p \in P$ with an output value. If $p = p_{j}$ for some $0 \leq j \leq i$, then the process $p$ is assigned with the output $o_{j}$. Otherwise, the process $p$ is assigned an output $o_{j}$ for an arbitray $j$. By definition, if the adversary terminates $s^{l}$ with one of its possible output values, there is no violation of the task specification. Note that if the projections of $s^{l}$ to $Q_{r_{m}}(U, U^{'})$ are in multiple categories, the possible output values of $s^{l}$ are obtained from the category whose label has the highest dimension $(k - k^{'})$. This is because the adversary can designate a $(k - k^{'})$-simplex in this category for a set of vertices that contain $s^{l}$. An important property is that the space formed by all possible output values of $s^{l}$ is a connected space. In other words, each possible output value with the label $U^{'}$ can be connected to the output values of $(s^{l})$ given by $\delta_{U^{'}}$ by a path consisting of possible output values.

Suppose that the invariants hold before an assignment query $(C, P, f)$ in phase 1 and $C$ is some configuration reached from some simplex in $\mathbb{F}_{1}(U)$ for some simplex $U \in \mathcal{I}$. By invariant (1), we know that the states of processes in $P$ will correspond to a simplex in $S^{r}$, where $r \leq t$. If the round number $r$ is less than $r_{m}$, the result of a single assignment query $(C, P, f)$ can be generated from the results of a set of assignment queries $(C^{'}, P, f)$ where $C^{'}$ is a $k$-simplex reached from $C$ in $Ch^{r_{m}}(\mathcal{I})$. Therefore, we will only discuss the assignment queries where $r \geq r_{m}$ and the configuration $C$ that is reached from some $n$-simplex $s^{n}$ in $\mathbb{F}_{r_{m}}(U)$. Let $R^{t}$ denote the subcomplex of $S^{t}$ consisting of all simplices $s^{k}$ reached from this simplex by a $P$-only $(t - r)$-round schedule. We will discuss the $k$-simplices in $R^{t}$. To keep the notation simple, we define the canonical neighbor with the label $U$ of an $n$-simplex $s^{n}$ in $\mathbb{F}_{r_{m}}(U)$ as the $n$-simplex $s^{n}$ itself. 

Analysis is relatively straightforward when some vertices of a $k$-simplex $\tau$ in $R^{t}$ have already been terminated with some label $U_{e}$($U_{e}$ can be $U$ or some other label $U^{'}$). In this case, the output value of each configuration reached from $\tau$ is the limitation to $ids(\tau)$ of $\delta_{U^{e}}(N(s^{n}, U_{e}))$. Recall that in stage 2, the adversary terminates each vertex adjacent to a terminated vertex. Each vertex in the subdivision of $\tau$ is terminated with the label $U_{e}$ before phase 3, which means that we do not use the possible output values here. 
If there exists a $k$-simplex $\tau$ in $R^{t}$ that has some vertex $v$ terminated with some label $U_{e}$ and the limitation to $ids(\tau)$ of $\delta_{U_{e}}(N(s^{n}, U_{e}))$ satisfies $f$ where $\tau$ is in the subdivision of $s^{n} \in \mathbb{F}_{r_{m}}(U)$, then the adversary can submit a chain of queries from this $\tau$ to itself. By Lemma \ref{the:restrict_ebf_finite_chain_of_queries_lemma} this chain of queries must be finite and end with a $(k-1)$-simplex terminated with label $U_{e}$. And the chain of queries will not violate the invariants. The output values of the joining of this $(k-1)$-simplex and $v$ satisfy $f$. Because the adversary does not change $\delta$ and will not change the defined values once they have been set, the invariants will still hold.


If each $k$-simplex $\tau$ in $R^{t}$ has some vertices terminated with some label $U_{e}$ and the limitation to $ids(\tau)$ of $\delta_{U_{e}}(N(s^{n}, U_{e}))$ does not satisfy the requirement of $f$, then no $k$-simplex reached from $\tau$ will satisfy $f$. This is because in Section \ref{sec:nccondition:finalization_after_phase_1:stage2}, each vertex adjacent to a terminated vertex will be terminated with the same label. Each $k$-simplex reached from $\tau$ will terminate with the output $\delta_{U_{e}}(N(s^{n}, U_{e}))$.

Therefore, we will only discuss the $k$-simplices in $R^{t}$ that do not have a terminated vertex and $R^{t}$ contains at least one undefined $k$-simplex. For each undefined $k$-simplex  $\tau \in R^{t}$, let $A_{\tau}$ denote the subcomplex of $S^{t}$ that contains vertices at a distance of at most 1 to each vertex in $\tau$. In other words, $A_{\tau}$ consists of all n-simplices that contain $\tau$. There are several cases according to the status of the vertices of $A_{\tau}$. In cases 1 and 2, some vertices of $A_{\tau}$ have been terminated. These two situations are quite similar to the situation in which a vertex in $\tau$ is terminated. Therefore, the argument given in the last paragraph will be used again, and no possible output values are involved. In cases 3 and 4, all the vertices in $A_{\tau}$ are undefined. In case 3 where the surrounding vertices of $\tau$ are not defined, we will show that $\tau$ will be separated from other terminated sets. This means that the simplices in the subdivision of $\tau$ will not be connected to other terminated sets. But this does not mean that the possible output values we define are not used here. Consider the situation where a subsimplex of $\tau$ is in $\mathbb{F}_{t}(U^{in})$ for some label $U^{in}$. The output values of the configurations reached from $\tau$ can be in $POV(\tau, U^{in})$. In case 4, some vertex in $\tau$ is adjacent to a vertex terminated with the label $U_{e}$. Therefore, in addition to the possible output values with $U^{in}$, the possible output values with the label $U_{e}$ can also be assigned.

Case 1: There is an undefined $k$-simplex $\tau$ in $R^{t}$, some vertex in $A_{\tau}$ is terminated with the label $U_{e}$ and the limitation to $ids(\tau)$ of $\delta_{U_{e}}(N(s^{n}, U_{e}))$ satisfies $f$. 
The adversary will submit a chain of queries such that some $k$-simplex reached from $\tau$ will terminate with the output value that satisfies the function $f$. The adversary can return the schedule from configuration $C$ to this $k$-simplex. 

Case 2: There is an undefined $k$-simplex $\tau$ in $R^{t}$ and some vertex in $A_{\tau}$ is terminated with some label $U_{e}$, the limitation to $ids(\tau)$ of $\delta_{U_{e}}(N(s^{n}, U_{e}))$ does not satisfy $f$. Any simplex in a subdivision of $\tau$ will be adjacent to this terminated vertex. In Section \ref{sec:nccondition:finalization_after_phase_1:stage2}, each undefined vertex adjacent to a terminated vertex will be terminated with its label. Therefore, all configurations reached from the $k$-simplex $\tau$ will be terminated with the label $U_{e}$ and their output values will not satisfy the function $f$.

Case 3: There is an undefined $k$-simplex $\tau$ in $R^{t}$ such that every vertex in $A_{\tau}$ is undefined and all vertices adjacent to $\tau$ are undefined. 

Case 3.1: If $\tau$ is in $\mathbb{F}_{t}(U^{in})$ for some label $U^{in}$ and the limitation to $ids(\tau)$ of $\delta_{U^{in}}(N(s^{n}, U^{in}))$ satisfies $f$, the adversary can subdivide the current complex $S^{t}$ $r$ times, where $r = max(k * r_{s}, 3)$ (just to ensure that the adversary does not violate the restriction on assignment rules). We define the simplex $\rho = \{(i, \vec{\tau}):i \in Id(\tau)\}$ in $Ch(\tau)$ as the central simplex of the standard chromatic subdivision. Then the central simplex of $Ch^{r}(\tau)$ will have an active distance of at least 3 to all terminated configurations with other labels. The adversary can terminate this central simplex with the label $U^{in}$ and return the schedule from $C$ to this simplex.

Case 3.2: The discussions in case 3.1 are also true when only a subset $\tau^{'}$ of $\tau$ is in $\mathbb{F}_{t}(U^{in})$ for some label $U^{in}$. If the limitation to $ids(\tau)$ of $\delta_{U^{in}}(N(s^{n}, U^{in}))$ satisfies $f$, the adversary will still subdivide $S^{t}$ $r$ times, where $r = max(k * r_{s}, 3)$. In fact, any simplex in $Ch^{r}(\tau)$ will have an active distance of at least 3 to all terminated configurations with other labels. We will choose a simplex $s_{int}$ in $Ch^{r}(\tau)$ that has some subsimplex in the subdivision of $\tau^{'}$. The adversary can then terminate $s_{int}$ with the label $U^{in}$ and return the schedule from $C$ to this simplex.

Now, we have to discuss the output values introduced in the finalization after phase 1. 

Case 3.3: If some possible output value $O$ of $\tau$ with the label $U^{in}$ satisfies the function $f$, when a subset $\tau^{'}$ of $\tau$ is in $\mathbb{F}_{t}(U^{in})$ for some label $U^{in}$, then the adversary will construct a configuration whose output value is $O$. First, the adversary will construct a simplex $s_{int}$ terminated with the output $\delta_{U^{in}}(N(s^{n}, U^{in}))$, as in the last paragraph. By the property of possible output values, there is a path of output values $O_{0}, O_{1} \cdots O_{e}$ from $\delta_{U^{in}}(N(s^{n}, U^{in}))$ to the output $O$.

Let $R_{0}$ be a simplicial complex that contains only $s_{int}$. The adversary assigns the output values of $O_{1}$ to a simplex $s_{new}$ adjacent to $R_{0}$ in the subdivision of $\tau$ and sets $R_{i + 1} = s_{new}, i = i + 1$ such that the distance between $R_{i}$ and $R_{i - j}$ is $j - 1$ for each $j$.  This simplex is given the label $U^{in}$. Note that each terminated vertex should follow the restriction on the second assignment rule: The adversary subdivides the current complex many times before terminating one more vertex with the label $U^{in}$. The adversary repeats this procedure for $O_{2}, O_{3} \cdots O_{e} = O$ which means that a simplex $s_{d}$ in the subdivision of $\tau$ is assigned with an output $O$ satisfying $f$. These terminated simplices are far enough away from other terminated sets. But the adversary cannot stop here, since it has terminated some vertices with the label $U^{in}$ but not with the output $\delta_{U^{in}}(N(s^{n}, U^{in}))$, while this is the most basic assumption of our adversarial strategy. Let $Q$ be a simplicial complex that contains only $s_{d}$. The adversary hides this mismatch by assigning the output values of $O_{e - 1}$ to {\em each} simplex $s_{new}$ in the subdivision of $A_{\tau}$ adjacent to $Q$ and sets $Q = \cup s_{new}$. Repeat this procedure for $O_{e - 2}, O_{e - 3} \cdots O_{0}$. The adversary does not follow the restriction here (like in stage 2). The active distance between two sets terminated with different labels increases exponentially as the adversary subdivides the current complex when it constructs a path from $s_{int}$ to $s_{d}$. So, the active distance between a newly terminated simplex and a set terminated with a different label is at least 3. Note that when the adversary constructs $s_{d}$, it only assigns the output to {\em only one} simplex adjacent to $Q$ each time. This hiding is possible since each $n$-simplex containing $\tau$ is included in $A_{\tau}$. After this, the output values on the path from $s_{int}$ to $s_{d}$ will be sealed. If all the $n$-simplices in $A_{\tau}$ are in the subdivision of the $n$-simplices of one category, then the assignment of the simplices adjacent to $Q$ with $O_{e - 2}, O_{e - 1} \cdots O_{0}$ will not violate the task specification, as $O_{e - 2}, O_{e - 1} \cdots O_{0}$ are also possible output values for these simplices. The adversary will return the schedule from $C$ to the configuration $s_{d}$. If the $n$-simplices in $A_{\tau}$ are in the subdivision of the $n$-simplices of multiple categories, then the vertices in $Q$ will be designated with some $(k - k^{'})$-simplex in the category that has the highest dimension $k - k^{'}$. Therefore, a simplex adjacent to $Q$ will be designated with the same $(k - k^{'})$-simplex, which means that $O_{e - 2}, O_{e - 1} \cdots O_{0}$ will not violate the task specification. Since we have constructed a path from any simplex terminated with the label $U^{in}$ to a vertex in $\mathbb{F}_{t}(U^{in})$, invariant (2) remains true.

Case 3.4: For each label $U^{in}$ where a subset $\tau^{'}$ of $\tau$ is in $\mathbb{F}_{t}(U^{in})$, the limitations of $POV(\tau, U^{in})$ to $ids(\tau)$ do not satisfy the function $f$. Even if the projection to $ids(\tau)$ of $POV(\tau, U_{e})$ satisfies the function $f$, where no vertex of $\tau$ is in $\mathbb{F}_{t}(U_{e})$, the adversary cannot simply terminate some simplex reached from $\tau$ with the label $U_{e}$, since the adversary has to maintain invariant (2). However, it is also not safe to say that there will be no simplex reached from $\tau$ and terminated with such a label $U_{e}$ (i.e., that satisfies the function $f$). In previous work \cite{Alistarh19}, the adversary will terminate all vertices reached from $\tau$ with the limitation of $\delta_{U}(s^{n})$ to $ids(\tau)$ to eliminate ambiguity. If we adopt their method, it will look as follows: Since all vertices in $\tau$ are not adjacent to the terminated vertices, the active distance between $\tau$ and every terminated configuration is at least 1. The adversary can subdivide $S^{t}$ and increase $t$ many times so that all simplices reached from $\tau$ have an active distance of at least 3 to any terminated configuration. The adversary will terminate all vertices with the label $U$ to ensure that they do not terminate with any other label $U^{'}$ in the subsequent interaction. But this method does not work for our adversarial strategy, since it can generate a set that is too large to contain all vertices in the subdivision of a $k$-facet $s^{k} \in Q_{r_{m}}(U, U^{'})$. Recall that in Section \ref{sec:nccondition:finalization_after_phase_1}, we introduce a new rule to terminate a vertex with the output $\delta_{U}(s^{n})$. In fact, any simplex reached from $\tau$ will not terminate with the label $U^{'}$ is a direct consequence of Lemma \ref{the:glue_protocols:prover_fails:terminated_vertices_not_reach}.

Case 4: There is a $k$-simplex $\tau$ in $R^{t}$, every vertex in $A_{\tau}$ is undefined, but some vertex $v$ in $\tau$ is adjacent to a terminated vertex $v^{'}$ with the label $U_{e}$. Note that the vertex $v$ will not be in $A_{\tau}$ i.e. will not be adjacent to all vertices in $\tau$. By invariants, the labels of two possible terminated vertices $v^{'}$ will be the same, since the active distances between the configurations terminated with two different labels are at least 3. Cases 4.1 and 4.2 are the situations in which some output with the label $U_{e}$ satisfies the function $f$. Cases 4.3, 4.4 and 4.5 are situations in which no output with the label $U_{e}$ does.

Case 4.1: If the limitation of $\delta_{U_{e}}(N(s^{n}, U_{e}))$ to $ids(\tau)$ satisfies $f$, the adversary can submit a chain of queries corresponding to the solo execution of the vertex $v$. The vertex $v$ will be terminated with the label $U_{e}$ in later subdivisions of $\tau$. The adversary can terminate some vertices adjacent to $v$ in a further subdivision of $\tau$ with the label $U_{e}$ by submitting a chain of queries. This is possible since according to Lemma \ref{the:restrict_ebf_finite_chain_of_queries_lemma} this chain of queries must be finite and end with a $(k-1)$-simplex with label $U_{e}$. The adversary will return the schedule from $C$ to the configuration reached from $\tau$. Since we have constructed a path consisting of vertices terminated with the label $U_{e}$ to the resulting simplex, invariant (2) will remain true.

Case 4.2: If the label $U_{e}$ is not equal to $U$ and some possible output value to $\tau$ with the label $U_{e}$ satisfies $f$, the adversary can submit a chain of queries corresponding to the solo execution of the vertex $v$ that will be terminated with the label $U_{e}$. The adversary will then use the techniques in Case 3.3 to construct the configuration whose output satisfies $f$ and return it as a result.

Otherwise, the output of any configuration terminated with the label $U^{'}$ does not satisfy the function $f$. We have to return to the discussion of the label $U^{in}$ where some subset of $\tau$ is in $\mathbb{F}_{t}(U^{in})$, as in Case 3.

Case 4.3 (corresponding to Case 3.1): If $\tau$ is in $\mathbb{F}_{t}(N(s^{n}, U^{in}))$ for some label $U^{in}$ and the limitation of $\delta_{U^{in}}(s^{n})$ to $ids(\tau)$ satisfies $f$, the adversary can terminate a central simplex of $Ch^{r}(\tau)$ with the label $U$ for some $r$. The central simplex after several subdivisions will have an active distance of at least 3 to all configurations terminated with different labels. The adversary can return the schedule from $C$ to this simplex. 

Case 4.4 (corresponding to Case 3.2 and Case 3.3): Suppose that a subset $\tau^{'}$ of $\tau$ is in $\mathbb{F}_{t}(U^{in})$ and that the limitation of $\delta_{U^{in}}(N(s^{n}, U^{in}))$ or some output of $POV(\tau, U^{in})$ to $ids(\tau)$ satisfies the function $f$. A difference here from Case 3 is that all vertices in $\tau^{'}$ should not be adjacent to a single terminated vertex with some label $U_{e}$. Otherwise, it should be discussed in Case 4.5. The adversary can terminate some internal simplex $s_{int}$ reached from $\tau$ with the label $U^{in}$ just as in Case 3. Furthermore, if the limitation of some output $O$ of $POV(\tau, U^{in})$ to $ids(\tau)$ can satisfy the function $f$, then the adversary will first terminate an internal $k$-simplex $s_{int}$ with the output $\delta_{U^{in}}(N(s^{n}, U^{in}))$. By the property of possible output values, we can construct a path from $s_{int}$ to $s_{d}$ whose output values are $O$. The adversary has constructed a simplex whose output satisfies the function $f$. Again, it cannot stop here, since it has terminated some vertices with the label $U^{in}$, but not with the output $\delta_{U^{in}}(N(s^{n}, U^{in}))$. The operation of sealing the vertices whose outputs are not $\delta_{U^{in}}(N(s^{n}, U^{in}))$ will be repeated here.

Case 4.5 (corresponding to Case 3.4): For each label $U^{in}$ where a subset $\tau^{'}$ of $\tau$ is in $\mathbb{F}_{t}(U^{in})$, the limitation of any possible output value of $\tau$ with label $U^{in}$ does not satisfy the function $f$. Then, for the same reason as in Case 3.4, it is not possible to assign any configuration reached from the simplex $\tau$ with an output that satisfies the function $f$.

After checking each $k$-simplex in $R^{t}$, the adversary is able to answer the assignment query. If the adversary manages to find an simplex in the subdivision of some $k$-simplex $\tau$ whose output satisfies $f$, it can return the schedule from $C$ to it. But if each $k$-simplex $\tau \in R^{t}$ cannot reach a configuration whose output satisfies the function $f$, the adversary will return $NULL$. By the strategy described above, the result of assignment queries will not be violated. If an assignment query returns a schedule, the schedule will remain valid in the later interaction, since the adversary will not change $\delta(v)$ for each vertex $v$ once it has been set. On the other hand, if the assignment query returns $NULL$, we have already proved that any configuration reached from $C$ will not satisfy the function $f$.

\begin{lemma}
\label{the:augmented_ebf_no_violation_lemma}
No assignment queries made in phase 1 will be violated.
\end{lemma}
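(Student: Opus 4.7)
The plan is to verify the correctness of the case analysis introduced just before this lemma by checking, in each of the cases enumerated (Cases 1--4, with their subcases), the two required properties: (a) if the adversary returns a schedule, then every $\delta$-value it uses on the returned trajectory has been set permanently and will never be rewritten, so the returned configuration will still satisfy $f$ throughout the remainder of the interaction; and (b) if the adversary returns $NULL$, then no $P$-only schedule from $C$ can ever reach a configuration satisfying $f$, regardless of how the adversary later extends $\delta$. I would first reduce to the situation $r \ge r_m$ and $C$ reached from some $n$-simplex $s^n$ in $\mathbb{F}_{r_m}(U)$, as the introductory paragraphs already do, and then iterate through the $k$-simplices $\tau \in R^t$.

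The easy cases (1, 2, 3.1, 3.2, 4.1, 4.3) reduce to the argument already given for restricted extension-based proofs: the adversary either simulates a chain of queries terminating in a configuration whose $\delta$-values are determined by the label $U_e$ (or $U^{in}$) via $\delta_{U_e}(N(s^n,U_e))$, and then Lemma~\ref{the:restrict_ebf_finite_chain_of_queries_lemma} guarantees the chain is finite, or, in the negative subcases, invariant (3) together with the stage-2 rule that every vertex adjacent to a terminated vertex inherits that label forces every reachable configuration from $\tau$ to carry label $U_e$ with output $\delta_{U_e}(N(s^n,U_e))$, which by assumption fails $f$. The invariants (1)--(3) are preserved because no new $\delta$-values are set, only those already forced by the existing partial protocols.

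The substantive cases are 3.3 and 4.4, where the adversary must realize a possible output value $O \in POV(\tau,U^{in})$ that is not $\delta_{U^{in}}(N(s^n,U^{in}))$. Here I would verify the construction explicitly: (i) use the connectedness of the space of possible output values to exhibit a path $O_0,O_1,\ldots,O_e=O$ from $\delta_{U^{in}}(N(s^n,U^{in}))$ to $O$; (ii) walk this path by terminating successive simplices $R_0,R_1,\ldots,R_e$ along a single chain inside the subdivision of $\tau$, obeying the $r_s$-gap restriction on the second assignment rule so that Lemma~\ref{the:glue_protocols:prover_fails:no_abuse} still applies and each canonical neighbor is well defined; (iii) seal the chain by reversing the path on the surrounding complex $A_\tau$, thereby restoring the outermost layer to $\delta_{U^{in}}(N(s^n,U^{in}))$, so that invariant (2) (every label $U^{in}$ simplex is connected through terminated vertices to $\mathbb{F}_t(U^{in})$) is preserved. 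Validity of the assigned outputs follows from Lemma~\ref{the:relations_of_carrier} and the definition of $POV$ (and the designation argument when $A_\tau$ crosses categories).

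The main obstacle will be Cases 3.4 and 4.5, where the adversary returns $NULL$: here one must argue that no future extension of $\delta$ can produce a configuration reachable from $C$ whose output satisfies $f$. The key point is that Lemma~\ref{the:glue_protocols:prover_fails:terminated_vertices_not_reach} forbids any label $U_e$ from propagating into the subdivision of $\tau$ whenever $\tau$ has the requisite active-distance separation from configurations terminated with $U_e$ and from undefined vertices of $\mathbb{F}_t(U_e)$; combined with the observation that only labels $U^{in}$ with $\tau' \subseteq \tau$ in $\mathbb{F}_t(U^{in})$, plus labels $U_e$ already adjacent to $\tau$, can ever appear on simplices reached from $\tau$, one concludes that the set of possible output values for configurations reachable from $\tau$ coincides with $\bigcup_{U^{in}} POV(\tau,U^{in}) \cup \{\delta_{U_e}(N(s^n,U_e))\}$, none of which satisfy $f$ by hypothesis. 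Finally I would note that the ordering of the case checks matters only for efficiency, not for correctness, since the construction in the positive cases always produces a schedule whose $\delta$-values are fixed for the rest of the game, so permanence of the answer is immediate.
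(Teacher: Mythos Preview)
Your proposal is correct and follows essentially the same approach as the paper: the lemma is stated as a summary of the preceding case analysis, and the paper's justification is the paragraph immediately before the lemma (``if the adversary returns a schedule, the schedule will remain valid \ldots\ since the adversary will not change $\delta(v)$ once it has been set; if the adversary returns $NULL$, we have already proved that any configuration reached from $C$ will not satisfy $f$''), which is exactly what you unpack case by case. One small omission: you do not list Case~4.2 anywhere, but it is handled identically to Case~3.3 (construct $s_{int}$ via Case~4.1 and then walk the $POV$ path and seal), so your argument covers it.
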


Note that in all circumstances, the invariants will remain after the assignment query, and all terminated vertices with a label $U^{in}$ can be seen as terminated with the output $\delta_{U^{in}}(N(s^{n}, U^{in}))$. Furthermore, since the output of each simplex $\tau$ assigned in the execution of an assignment query is obtained from possible output values of $\tau$ , no task specification will be violated.

\begin{theorem}    
\label{the:assignment_queries_no_more_power}
Assignment queries do not give an extension-based prover more power in its interaction with a protocol.
\end{theorem}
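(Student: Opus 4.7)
The plan is to show that any augmented extension-based prover can be simulated by a restricted one in the sense that the adversarial strategy from Section~\ref{sec:nccondition:adversarial_strategy} survives the introduction of assignment queries, so by Theorem~\ref{the:adversary_finalize_after_the_first_round_theorem} the prover still cannot win. Concretely, I would reduce the correctness of the adversary against an assignment query $(C,P,f)$ to a case analysis on the simplex $\tau$ representing the $P$-states reached along candidate schedules from $C$, using the three invariants of the adaptive protocol as the inductive hypothesis; Lemma~\ref{the:augmented_ebf_no_violation_lemma} then yields the theorem.

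The first step is to reduce to the ``base'' situation where the query starts from an $n$-simplex $s^{n}\in\mathbb{F}_{r_m}(U)$ and $\tau$ lies in the subcomplex $R^{t}$ of $P$-only extensions, since earlier-round queries decompose into finitely many such base queries without loss of power. Next, for each undefined $\tau\in R^{t}$, I would split on (i) whether some vertex of $\tau$ is already terminated, (ii) whether some vertex in the star $A_\tau$ is terminated, and (iii) whether $\tau$ meets a subcomplex $\mathbb{F}_{t}(U^{\mathrm{in}})$ for some incumbent label. In the easy cases, the canonical output $\delta_{U_e}(N(s^{n},U_e))$ or $\delta_{U^{\mathrm{in}}}(N(s^{n},U^{\mathrm{in}}))$ already matches or provably cannot match $f$, and the adversary either simulates a solo extension by a chain of ordinary queries (using Lemma~\ref{the:restrict_ebf_finite_chain_of_queries_lemma} to guarantee termination) or returns $NULL$ by appealing to Lemma~\ref{the:glue_protocols:prover_fails:terminated_vertices_not_reach} to certify that no terminated configuration with a conflicting label is reachable.

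The main obstacle is the hard case where $f$ is satisfied only by some $O\in POV(\tau,U^{\mathrm{in}})$ that is \emph{different} from $\delta_{U^{\mathrm{in}}}(N(s^{n},U^{\mathrm{in}}))$. Here the strategy is to exploit the fact that the space of possible output values with a given label is path-connected: first subdivide $S^{t}$ enough times so that a central simplex $s_{\mathrm{int}}$ of $\chi^{r}(\tau)$ is at active distance $\geq 3$ from everything else, terminate it with the canonical value, then walk along a path $O_0,O_1,\dots,O_e=O$ of possible outputs, terminating a fresh simplex for each $O_i$ while respecting the $r_s$-round cooldown that guards the second assignment rule. To preserve invariant~(2), I then ``seal'' the excursion by adding the reverse sequence $O_{e-1},\dots,O_0$ on simplices in the star $A_\tau$ adjacent to the last terminated layer, so that from the outside every terminated vertex with label $U^{\mathrm{in}}$ is still connected to a vertex actually carrying the canonical output. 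Validity of each assigned output follows from Lemma~\ref{the:relations_of_carrier} together with the category-based projection argument from Section~\ref{sec:nccondition:finalization_after_phase_1:stage3}, so the carrier map is respected even when the $n$-simplices in $A_\tau$ cross several categories.

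Finally I would verify that after processing any assignment query the three invariants of Section~\ref{sec:nccondition:adversarial_strategy} continue to hold: invariant~(1) because every newly terminated vertex is assigned a non-$\perp$ value, invariant~(3) because the preliminary subdivisions boosted active distances to at least $3$, and invariant~(2) because the sealing step attaches each terminated simplex via a path of same-label terminated vertices back into $\mathbb{F}_{t}(U^{\mathrm{in}})$. Since the response to every assignment query is consistent with all future behavior of the adaptive protocol (Lemma~\ref{the:augmented_ebf_no_violation_lemma}), the prover augmented with assignment queries still fails to find a violation or an infinite execution, which is exactly the statement of Theorem~\ref{the:assignment_queries_no_more_power}. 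The delicate point throughout is controlling when the second assignment rule may be invoked, so the proof has to bookkeep the cooldown $r_s$ carefully and choose $r=\max(k\cdot r_s,3)$ for each sub-case to avoid introducing paths that would later force unwanted terminations.
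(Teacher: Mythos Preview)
Your proposal is correct and follows essentially the same approach as the paper: the same reduction to $r\geq r_m$, the same case analysis on whether $\tau$, $A_\tau$, or adjacent vertices are terminated, the same use of the path-connectedness of $POV(\tau,U^{\mathrm{in}})$ to build $s_{\mathrm{int}}\to s_d$ and then seal it by the reverse sequence $O_{e-1},\dots,O_0$, and the same verification that the three invariants survive. Even the subdivision bound $r=\max(k\cdot r_s,3)$ and the appeal to Lemmas~\ref{the:restrict_ebf_finite_chain_of_queries_lemma}, \ref{the:glue_protocols:prover_fails:terminated_vertices_not_reach}, \ref{the:relations_of_carrier}, and~\ref{the:augmented_ebf_no_violation_lemma} match the paper's argument.
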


Before proceeding to the next section, we want to talk about something proposed in \cite{Brusse21}. In the original proof in \cite{Alistarh19}, one of the invariants guarantees that any two vertices terminated with different output values are sufficiently far apart in $S^{t}$, which means that at the end of phase 1 no simplex in $S^{t}$ contains vertices that output different values. In \cite{Brusse21} it is pointed out that it is impossible to maintain this invariant when the prover is allowed to submit assignment queries, since if so, the adversary will tell the prover that it can solve the consensus task. What they do to fix this issue is that the output configurations with two different values are allowed in the interaction of phase 1. The idea behind this solution is not explicitly given in \cite{Brusse21}. We can interpret this solution in a simpler way using our consequence: output values introduced in the finalization after phase 1 have to be and can be used in the interaction during phase 1 when the assignment queries are allowed.

\subsection{Finalization after phase r}
\label{sec:nccondition:finalization_after_phase_r}

We have discussed the conditions for the finalization after the first phase. But will the adversary have to finalize in the latter phases when dealing with more complicated tasks, e.g. tasks more difficult than $(n, 2)$-set agreement? As described in the definition of extension-based proofs, the adaptive protocol specified by $\delta$ can be finalized after any phase $r$.

But we face a new challenge when discussing the finalization after the phase $r$. Before we go directly to finalization after phase $r$, it is better to discuss finalization after phase two to clarify the differences. Take a 3-process task as an example. To keep the settings simple, we assume that the input complex $\mathcal{I}$ contains only one $n$-simplex. The complex $\mathbb{F}_{1}(\{(v_{0}, 0)\})$ is the joining of a 0-simplex $s^{top}$ and the chromatic subdivision of a 1-simplex $s^{bot}$. We denote the three $1$-simplices in this subdivision of $s^{bot}$ by $s_{0}$, $s_{1}$ and $s_{2}$. Suppose that there exists no partial protocol with respect to the 0-simplex $\{(v_{0}, 0)\}$, but a partial protocol whose configurations are reached from each 2-simplex in $\mathbb{F}_{1}(\{(v_{0}, 0)\})$. We say that each such partial protocol is specified by two simplices in $\mathcal{I}$. For example, the partial protocol defined for the joining of $s^{top}$ and $s_{1}$ is specified by $[\{(v_{0}, 0)\}, \{(v_{1}, 0), (v_{2}, 0)\}]$ which are, respectively, the first and second sets of processes to take a step.  In previous sections, we used the canonical neighbors to obtain the $\delta$ value of a vertex with some label $U^{'}$: If some $n$-simplex is reached from some $n$-simplex $C_{1}$ in $Ch(\mathcal{I})$ by a schedule $\alpha$, then its canonical neighbor with the label $U^{'}$ is defined as the $n$-simplex reached from $C_{2}$ in $Ch(\mathcal{I})$ by $\alpha$, where $C_{2}$ is calculated from $C_{1}$ according to some rule in Section \ref{sec:preparations:canonical_neighbor}. This analysis can be used for the simplicial complex consisting of $s_{0}$, $s_{1}$ and $s_{2}$ since this complex is the first chromatic subdivision of $s^{bot}$. We use some notation about canonical neighbors here: Let $C_{1}$ be the $1$-simplex $s_{0}$ and $U^{'}$ be the label $[\{(v_{0}, 0)\}, \{(v_{1}, 0), (v_{2}, 0)\}]$, then $C_{2}$ is the $1$-simplex $s_{1}$.

Let $s^{n}$ be an $n$-simplex reached from the joining of $s^{top}$ and the 1-simplex $C_{1} = s_{0}$ by schedule $\alpha$. It seems quite natural to define the canonical neighbor of $s^{n}$ with the label $[\{(v_{0}, 0)\}, \{(v_{1}, 0), (v_{2}, 0)\}]$ as the simplex reached by the schedule $\alpha$ from the joining of $s^{top}$ and $C_{2} = s_{1}$. However, the third requirement of canonical neighbors may not be satisfied. In fact, this has happened before in the proof of Lemma \ref{the:canonical_neighbor_existence_lemma} in which we solve this problem by assuming $r_{m} \geq 2$. Perhaps we should recall the details of Lemma \ref{the:canonical_neighbor_existence_lemma} and present the idea behind the differences. In $Ch^{1}(\mathcal{I})$, $C_{2}$ is not the canonical neighbor of $C_{1}$ as some processes $P$ will see different sets of input values. But(since we assume that $r_{m} \geq 2$) the second partition of $\Pi$ starts with a set of processes that contains some shared vertex of $C_{1}$ and $C_{2}$ that eliminates the mismatch of input values that the processes in $P$ see. This argument does not work here: not each intersection vertex can help to eliminate the mismatch. More specifically, the vertices in $s^{top}$ cannot help to eliminate the mismatch. In Figure \ref{img:canonical_neighbor_with_cone}, we give a simpler example of a 3-process task that. The canonical neighbor of a 2-simplex of red color satisfies the third requirement, while the canonical neighbor of the 2-simplex of blue color does not.

\begin{figure}[h]
  \centering
  \includegraphics[width=\linewidth]{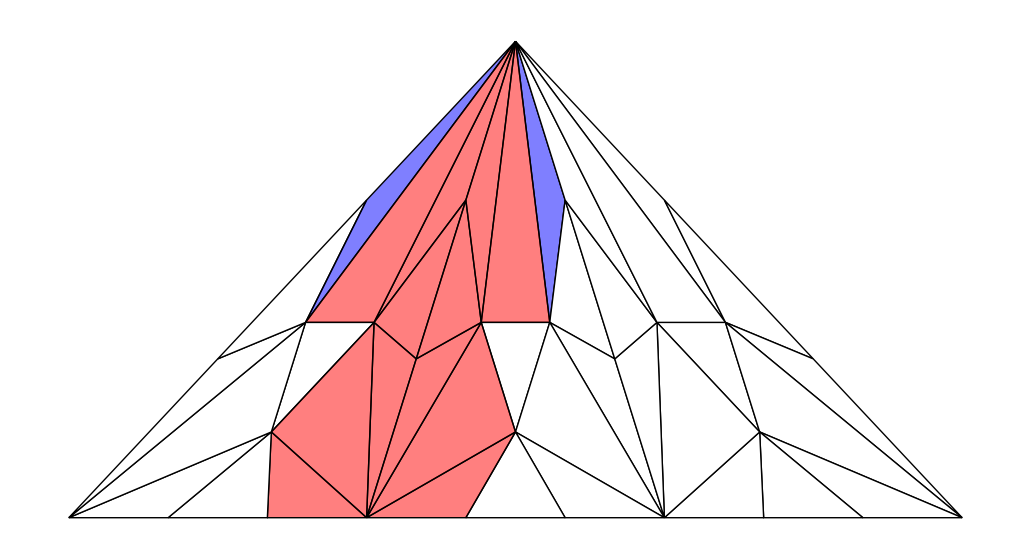}
  \caption{Canonical neighbors in $\mathbb{F}_{2}(\{v_{0}\}, 0)$ of a 3-process task}
  \label{img:canonical_neighbor_with_cone}
\end{figure}

For each simplex $U \in \mathcal{I}$, the complex $\mathbb{F}_{1}(U)$ can be seen as the joining of a $|ids(U)|$-simplex $s^{top}$ and the chromatic subdivision of a $|\Pi - ids(U)|$-simplex. $\mathbb{F}_{1}(U)$ can be divided further into a set of simplicial complexes, each of which corresponds to a subset of $(\Pi - U)$. Suppose that there exists a partial protocol for each such input complex. To distinguish these partial protocols, we assign a label $[U_{1}, U_{2}]$ to each partial protocol, where $ids(U_{2})$ is a subset of the set $(\Pi - ids(U))$ and is the second set of processes to execute. In other words, the label is now an ordered list of simplices in $\mathcal{I}$.

Let $s^{n} \in \mathbb{F}_{r_{m}}([U_{1}, U_{2}])$ be an n-simplex reached from the joining of $s^{top}$ and a $|\Pi - U|$-simplex $C_{1}$ by the schedule $\alpha$.
We define the canonical neighbor of $s^{n}$ with the label $[U_{1}, U^{'}_{2}]$ as an $n$-simplex reached by $\alpha$ from the joining of $s^{top}$ and $C_{2}$ where $C_{2}$ is calculated from $C_{1}$(as we do in the definition of canonical neighbors). An $n$-simplex whose intersection with $\mathbb{F}_{r_{m}}([U_{1}, U_{2}^{'}])$ is in the subdivision of $s^{top}$ may have a canonical neighbor with a larger carrier, as the example shows. Note that using the techniques in the proof of Lemma \ref{the:canonical_neighbor_existence_lemma}, other $n$-simplices will not have this issue. We introduce a rule, which is referred to as the carrier rule in subsequent discussions, for such a type of $n$-simplices whose canonical neighbor has a larger carrier for some processes. 

The vertices in $s^{n}$ can be divided into two types: those in the subdivision of $s^{top}$ and those not. Consider a vertex of the latter type. If the vertex $v$ with the process id $p$ has a smaller carrier compared to the vertex $v^{'}$ with the same process id in the canonical neighbor of $s^{n}$ with the label $[U_{1}, U^{'}_{2}]$, then it is not safe to use $\delta_{[U_{1}, U^{'}_{2}]}(v^{'})$ to terminate the vertices with the process id $p$ in the subdivision of $s^{n}$.
Instead, a vertex $v$ with the process id $p$ in the subdivision of $s^{n}$ will terminate with $\delta_{[U_{1}, U^{'}_{2}]}(v^{''})$, where $v^{''}$ is some vertex of the first type in $s^{n}$. Let $\tau^{'}$ be the carrier of $v$ in $Ch^{r_{m}}(\mathcal{I})$. $v^{''}$ is chosen from the vertices in $\tau^{'}$ that are also in $s^{top}$. There will be no violation of the task specification, since the vertex $v^{'}$ has only seen the input values of the processes in $U$, which all the vertices in $\mathbb{F}_{1}(U_{1})$ have seen, and we have Property \ref{pro:colorless_property}. We have to show that the carrier rule will not cause any conflict of values. Recall that the canonical neighbors of two $n$-simplices sharing some vertex will share the vertex with the same process id. If some other $n$-simplex $s^{n}_{1}$ contains a vertex $v$ terminated by the carrier rule, then by the property of canonical neighbors, the vertex $v$ will also have a larger carrier in the canonical neighbor of $s^{n}_{1}$. This means that $v$ will terminate with the carrier rule, and the terminated value will be identical since the choice of $v^{''}$ will be the same.

So what is the difference of phase 1 when the partial protocol with respect to $U_{1}$ is replaced by a sequence of "smaller" partial protocols for $[U_{1}, U_{2}], [U_{1}, U^{'}_{2}] \cdots$? We have to check whether the previous results can still be used. First, we will show that Lemma \ref{the:canonical_neighbor_existence_lemma} can be used for these partial protocols: if an $n$-simplex $s^{n}$ has an intersection with $\mathbb{F}_{r_{m}}([U_{1}, U_{2}])$, then there exists a canonical neighbor of $s^{n}$ with this label. We use the canonical neighbor $s^{n}_{c}$ of $s^{n}$ with the label $U_{1}$ to construct its canonical neighbor with the label $[U_{1}, U_{2}]$. If $s^{n}_{c}$ is in $\mathbb{F}_{r_{m}}([U_{1}, U_{2}])$, then $s^{n}_{c}$ is the canonical neighbor of $s^{n}$ with the label $[U_{1}, U_{2}]$. Otherwise, $s^{n}_{c}$ is in another protocol complex $\mathbb{F}_{r_{m}}([U_{1}, U_{2}^{'}]$). Let $V_{int}$ be the set of vertices shared by $s^{n}$ and $\mathbb{F}_{r_{m}}([U_{1}, U_{2}])$. Then $s^{n}_{c}$ will also contain $V_{int}$. Therefore, there exists a canonical neighbor of $s^{n}_{c}$ with the label $[U_{1}, U_{2}]$. We will use the canonical neighbor of $s^{n}_{c}$ with the label $[U_{1}, U_{2}]$ as the canonical neighbor of $s^{n}$ with the label $[U_{1}, U_{2}]$.

Next, we prove that there will be no conflict of output values. If the vertex $v$ with process id $p$ is shared by two $n$-simplices having an intersection with $\mathbb{F}_{r_{m}}([U_{1}, U_{2}])$, then the canonical neighbors with $U_{1}$ will share the vertex with the same process id. When the canonical neighbors with the label $U_{1}$ are both in or not in $\mathbb{F}_{r_{m}}([U_{1}, U_{2}])$, the canonical neighbors with the label $[U_{1}, U_{2}]$ will share the vertex with the process id $p$. If only one of the canonical neighbors with label $U_{1}$ denoted by $s^{n}_{c}$ is in $\mathbb{F}_{r_{m}}([U_{1}, U_{2}^{'}])$ for some different $U_{2}^{'}$, then the vertex $v$ will be at the intersection of $\mathbb{F}_{r_{m}}([U_{1}, U_{2}])$ and $\mathbb{F}_{r_{m}}([U_{1}, U_{2}^{'}])$. The canonical neighbor of $s^{n}_{c}$ with the label $[U_{1}, U_{2}]$ will not change the intersection vertices, including the vertex $v$. Therefore, the canonical neighbors with the label $[U_{1}, U_{2}]$ will still share the vertex with the process id $p$. 
Therefore, the partial protocol $\delta_{U_{1}}$ will be replaced by a sequence of partial protocol $\delta_{[U_{1}, U_{2}]}$ while our interaction framework is kept valid.


After the second phase, the first two sets of processes are determined, which means that the adversary can limit its view to $\mathbb{F}_{t}([U_{1}, U_{2}])$. The adversary can use the techniques in Section \ref{sec:nccondition} to construct a partial protocol for $\mathbb{F}_{t}([U_{1}, U_{2}])$. The condition required to connect output values of different labels is that $CEN(s^{k})$, where $s^{k} \in \mathbb{F}_{1}(\mathcal{I})$ has the same output in different partial protocols. The carrier rule will not affect these procedures since the carrier rule will change the values of vertices that are not shared by $s^{n}$ and its canonical neighbor. Only the output values of the shared vertices are used to add layers to the terminated sets.

\begin{theorem}
\label{the:adversary_finalize_after_the_second_phase_theorem}
For a colorless task$(\mathcal{I}, \mathcal{O}, \Delta)$, there exists an adversary that can finalize after the second phase and win against any restricted extension-based prover if and only if there exists a partial protocol with respect to $[U_{1}, U_{2}])$ for each possible sequence $[U_{1}, U_{2}])$  and all partial protocols are compatible.
\end{theorem}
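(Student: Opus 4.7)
The plan is to mirror the structure used for Theorem \ref{the:adversary_finalize_after_the_first_round_theorem}, treating the pair $[U_1, U_2]$ as the new unit of labelling. I will split the proof into necessity and sufficiency, and the bulk of the work will be showing that the machinery developed in Sections \ref{sec:preparations}--\ref{sec:nccondition:augmented_extension_based_proofs} lifts to the refined label format, with one genuinely new ingredient: the \emph{carrier rule} needed to patch the failure of canonical neighbors exhibited in Figure \ref{img:canonical_neighbor_with_cone}.

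For necessity, I would argue as in the phase-$1$ case. Fix any prover strategy that forces the adversary through phase $1$. At the end of phase $1$ the prover chooses a full $1$-round schedule ending the phase; at the end of phase $2$ a particular pair $[U_1, U_2]$ is fixed. If the adversary can finalize after phase $2$ no matter what the prover does, then for every admissible $[U_1, U_2]$ it must be able to specify a complete set of decisions consistent with $\Delta$ on $\mathbb{F}_t([U_1, U_2])$, which is exactly a partial protocol with respect to $[U_1, U_2]$. For compatibility, I reuse the $CEN(s^{k})$ argument: for each $s^{k} \in \mathbb{F}_1([U_1, U_2]) \cap \mathbb{F}_1([U_1, U_2'])$ the prover can, during phase $2$, submit a chain of queries driving $ids(s^{k})$ to termination, then choose at the end of phase $2$ either $[U_1, U_2]$ or $[U_1, U_2']$; if the outputs on $CEN(s^{k})$ disagreed, one of the two continuations would force the adversary to contradict its own response.

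For sufficiency, the key step is to promote the canonical-neighbor apparatus of Section \ref{sec:preparations:canonical_neighbor} to labels of the form $[U_1, U_2]$. Given an $n$-simplex $s^n$ that intersects $\mathbb{F}_{r_m}([U_1, U_2])$, I would first take its canonical neighbor $s^n_c$ with the label $U_1$ (which exists by Theorem \ref{the:canonical_neighbor_existence_theorem}); if $s^n_c$ already lies in $\mathbb{F}_{r_m}([U_1, U_2])$ I use it, otherwise I iterate the construction to produce the canonical neighbor of $s^n_c$ with label $[U_1, U_2]$, as the excerpt indicates. I then verify the four defining properties, paying special attention to the third one. This is the step where the naive lift fails on simplices whose intersection with $\mathbb{F}_{r_m}([U_1, U_2])$ lies inside $s^{top}$: intersection vertices in $s^{top}$ cannot resolve the mismatch in input values seen by other processes. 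I would handle this with the carrier rule: for a non-$s^{top}$ vertex $v$ of $s^n$ whose counterpart $v'$ in the canonical neighbor has a strictly larger carrier, define $\delta(v)$ via $\delta_{[U_1, U_2]}(v'')$ for a vertex $v''$ of $s^{top} \cap s^n$ chosen from the carrier $\tau'$ of $v$ in $Ch^{r_m}(\mathcal{I})$. Safety follows from Property \ref{pro:colorless_property}: $v'$ only sees the input values of $ids(U_1)$, which every vertex in $\mathbb{F}_1(U_1)$ has seen. Consistency across overlapping $n$-simplices follows because canonical neighbors of $n$-simplices sharing a vertex share the corresponding vertex, so the choice of $v''$ is the same from either side.

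Once canonical neighbors with pair-labels are in hand, the whole phase-$1$ machinery ports over: the three invariants of Section \ref{sec:nccondition:adversarial_strategy} are restated with labels ranging over sequences $[U_1, U_2]$; Lemmas \ref{the:glue_protocols:prover_fails:terminated_vertices_not_reach} and \ref{the:glue_protocols:prover_fails:no_abuse} go through verbatim because their proofs only use active-distance doubling and the restriction on rule (2); no-violation (Lemma \ref{the:restrict_ebf_no_violation_of_task_specification}) reduces to safety of canonical neighbors, which is preserved because the carrier rule only affects non-shared vertices and the adding-layers construction of Section \ref{sec:nccondition:finalization_after_phase_1} only consumes output values on \emph{shared} vertices; finite chains (Lemma \ref{the:restrict_ebf_finite_chain_of_queries_lemma}) are unaffected; and compatibility of the set $\{\delta_{[U_1, U_2]}\}$ supplies the matching condition $\delta_{[U_1, U_2]}(CEN(s^k)) = \delta_{[U_1, U_2']}(CEN(s^k))$ that Stage $3$ of the finalization needs. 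After phase $2$ the adversary restricts its view to $\mathbb{F}_t([U_1, U_2])$ and runs the four-stage construction of Section \ref{sec:nccondition:finalization_after_phase_1} to fill in undefined vertices. The hardest part, as anticipated, is proving that the carrier rule is harmless: I would verify it by checking separately that (i) its outputs lie in the allowed output complex thanks to Property \ref{pro:colorless_property}, and (ii) it never disagrees with an output already assigned by the standard canonical-neighbor rule, by showing that if $v$ ever receives a value through both rules then the two values coincide on the common carrier, using an argument parallel to Lemma \ref{the:canonical_neighbor_common_boundary_lemma}. The assignment-query handling of Section \ref{sec:nccondition:augmented_extension_based_proofs} then extends with no further change, yielding the equivalence claimed in the theorem.
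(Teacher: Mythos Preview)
Your proposal is essentially correct and mirrors the paper's approach closely: lift the canonical-neighbor and invariant machinery to labels $[U_1,U_2]$, patch the carrier mismatch with the carrier rule, and rerun the four-stage finalization. Two points deserve tightening.

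First, your necessity argument for compatibility only treats pairs $[U_1,U_2]$ and $[U_1,U_2']$ sharing the same first component, and you locate the $CEN(s^k)$ queries in phase~$2$. But the theorem requires compatibility of \emph{all} partial protocols, including $\delta_{[U_1,U_2]}$ and $\delta_{[U_1',U_2']}$ with $U_1\neq U_1'$. For those pairs the prover must submit the $CEN(s^k)$ chains during \emph{phase~$1$} (since in phase~$2$ it is already confined to $\mathbb{F}(U_1)$), and then branch on the choice of $U_1$ versus $U_1'$ at the end of phase~$1$ before picking the second component. This matters on the sufficiency side too: during phase~$1$ the adversary may terminate vertices inside $\mathbb{F}_t([U_1,U_2])$ with a label $[U_1',U_2']$ where $U_1'\neq U_1$, and Stage~$3$ then needs the cross-$U_1$ compatibility to glue.

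Second, the sentence ``at the end of phase~$1$ the prover chooses a full $1$-round schedule'' is not right; the prover picks a configuration $C'\in\mathcal{A}'(1)$, and the paper's reduction (Section~\ref{sec:preparations:general_tasks_with_partial_input_information}) shows one may assume $\alpha(2)$ consists of a single block $ids(U_1)$. A full $1$-round schedule would already pin down the entire sequence $[U_1,\ldots,U_k]$ and collapse the distinction between Theorems~\ref{the:adversary_finalize_after_the_second_phase_theorem} and~\ref{the:adversary_finalize_theorem}. Also, in your safety justification for the carrier rule it is $v''$ (the vertex in $s^{top}$), not $v'$, that has seen only the inputs of $ids(U_1)$; Property~\ref{pro:colorless_property} is applied to $v''$.
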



The decomposition of the partial protocol with respect to $[U_{1}, U_{2}]$ can continue. Suppose that some protocol complex is labeled by $[U_{1}, \allowbreak ... U_{k}]$ where all $U_{i}$ are subsimplices of an $n$-simplex in $\mathcal{I}$ and the process set $ids(U_{i}) \subseteq \Pi - \cup_{1}^{i - 1}ids(U_{l})$ for each $i$. All the properties we have reviewed above are also satisfied. For each sequence, the protocol complex $\mathbb{F}_{1}(U_{1}, U_{2}\allowbreak... U_{k})$ can be seen as the joining of a $|\cup_{1}^{k} ids(U_{i})|$-simplex $s^{top}$ and the chromatic subdivision of a $|\Pi - \cup_{1}^{k}ids(U_{i})|$-simplex. It can also be divided into a sequence of protocol complexes, each of which corresponds to a simplex $U_{k + 1}$ such that $ids(U_{k + 1}) \in \Pi - \sum_{1}^{k}ids(U_{i})$. Suppose that there exists a partial protocol for each such protocol complex, which is assigned with a label $[U_{1}, \allowbreak... U_{k}, U_{k + 1})]$. For convenience, we will denote the old and new labels as $l_{o}$ and $l_{n}$. Let $s^{n}$ be an $n$-simplex reached from the joining of $s^{top}$ and a $|\Pi - \sum_{1}^{k}ids(U_{i})|$-simplex $C_{1}$ by the schedule $\alpha$. We define the canonical neighbor of $s^{n}$ as the $n$-simplex reached from the joining of $s^{top}$ and $C_{2}$ by the schedule $\alpha$, where $C_{2}$ is generated from $C_{1}$ as defined in the construction of canonical neighbors. The carrier rule will be used to avoid violating the task specification.

Lemma \ref{the:canonical_neighbor_existence_lemma} can be used for the new protocol complexes: if an $n$-simplex $s^{n}$ has an intersection with $\mathbb{F}_{r_{m}}(l_{n})$, then there exists a canonical neighbor of $s^{n}$ with the label $l_{n}$. Suppose that an $n$-simplex $s^{n}$ has some intersection vertices with $\mathbb{F}_{r_{m}}(l_{n})$, denoted by $V_{int}$. According to the induction hypothesis, it will have a canonical neighbor with the label $l_{o}$. If the canonical neighbor $s^{n}_{c}$ with the label $l_{o}$ of $s^{n}$ is in the protocol complex of the label $l_{n}$, then we will use it as the canonical neighbor of $s^{n}$ with the label $l_{n}$. Otherwise, $s^{n}_{c}$ is in another protocol complex, then $s^{n}_{c}$ also contains the vertices set $V_{int}$. So, there exists a canonical neighbor of $s^{n}_{c}$ with the label $l_{n}$. We will use the canonical neighbor of $s^{n}_{c}$ with the label $l_{n}$ as the canonical neighbor of $s^{n}$ with the label $l_{n}$. 
There will be no conflict of output values. If the vertex $v$ with the process id $p$ is shared by two $n$-simplices having an intersection with $\mathbb{F}_{r_{m}}(l_{n})$, then the canonical neighbors with the label $l_{o}$ will share the vertex with the same process id. If the canonical neighbors with the label $l_{o}$ are both in or not in $\mathbb{F}_{r_{m}}(l_{n})$, the canonical neighbors with the label $l_{n}$ will share the vertex with the process id $p$. If only one of the canonical neighbors with the label $l_{o}$ denoted by $s^{n}_{c}$ is {\em not} in $\mathbb{F}_{r_{m}}(l_{n})$, then the vertex $v$ will be at the intersection of $s^{n}_{c}$ and $\mathbb{F}_{r_{m}}(l_{n})$. The canonical neighbor of $s^{n}_{c}$ with the label $l_{n}$ shares the vertex $v$ with $s^{n}$. Therefore, the canonical neighbors with the label $l_{o}$ of two $n$-simplices share the vertex with the process id $p$. 

After phase $k$, the first $k$ process sets are determined which means the adversary can limit itself to $\mathbb{F}_{t}([U_{1}, \allowbreak ... U_{k}])$, a subcomplex of $S^{t}$. The adversary can use the techniques in Section \ref{sec:nccondition} to generate a partial protocol.

\begin{theorem}
\label{the:adversary_finalize_after_the_k_phase_theorem}
For a colorless task $(\mathcal{I}, \mathcal{O}, \Delta)$, there exists an adversary that can finalize after the $k$-th phase and win against any restricted extension-based prover if and only if there exists a partial protocol with respect to $[U_{1}, \allowbreak ... U_{k}]$ for each possible sequence $[U_{1}, \allowbreak ... U_{k}]$ and all partial protocols are compatible.
\end{theorem}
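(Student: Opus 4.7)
The plan is to proceed by induction on $k$, treating Theorem \ref{the:adversary_finalize_after_the_first_round_theorem} as the base case and Theorem \ref{the:adversary_finalize_after_the_second_phase_theorem} as the key illustrative step already worked out in detail. The induction step mirrors the passage from phase $1$ to phase $2$: a partial protocol with respect to $[U_{1}, \ldots, U_{k-1}]$ is replaced by a family of ``smaller'' partial protocols with respect to $[U_{1}, \ldots, U_{k-1}, U_{k}]$, one for each admissible $U_{k}$, and the finalization-after-phase-$k$ question reduces to finalization-after-phase-$1$ restricted to the subcomplex $\mathbb{F}_{t}([U_{1}, \ldots, U_{k-1}])$.

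For necessity, I would repeat the phase-$1$ argument inside the fixed sequence of choices $[U_{1}, \ldots, U_{k}]$: if the adversary can finalize after phase $k$, then for every choice of $\alpha(k+1) = U_{1} \cdots U_{k}$ the adversary must possess, at the end of phase $k$, a wait-free assignment of output values to every vertex reachable under that schedule; by definition this is a partial protocol with respect to $[U_{1}, \ldots, U_{k}]$. Compatibility is forced because during phase $k$ the prover can submit finitely many chains of queries driving each $CEN(s^{j})$ with $s^{j} \in \mathbb{F}_{1}([U_{1}, \ldots, U_{k-1}, U_{k}]) \cap \mathbb{F}_{1}([U_{1}, \ldots, U_{k-1}, U_{k}'])$ to termination, and then choose either label for $\alpha(k+1)$; both candidate partial protocols must therefore agree on these outputs.

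For sufficiency, the core work is to show that the entire apparatus of Sections \ref{sec:preparations} and \ref{sec:nccondition} transports to the new labels. I would verify two ingredients: first, that canonical neighbors with labels $[U_{1}, \ldots, U_{k}]$ exist whenever $s^{n}$ meets $\mathbb{F}_{r_{m}}([U_{1}, \ldots, U_{k}])$, by the bootstrap already used for $k=2$ (take the canonical neighbor with label $[U_{1}, \ldots, U_{k-1}]$, and if it is not already in the required subcomplex apply Lemma \ref{the:canonical_neighbor_existence_lemma} inside that subcomplex); second, that the carrier rule from Section \ref{sec:nccondition:finalization_after_phase_r} is well-defined and safe. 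Well-definedness follows because, as observed for $k=2$, two $n$-simplices sharing a vertex have canonical neighbors sharing a vertex with the same process id, and the choice of replacement vertex $v''$ in $s^{top}$ is made from $carrier(v, Ch^{r_{m}}(\mathcal{I}))$ in a process-id-determined way; safety follows from Property \ref{pro:colorless_property} since $v''$ sees only inputs that every process in the relevant protocol complex also sees. With canonical neighbors and the carrier rule in hand, invariants (1)--(3), Lemma \ref{the:glue_protocols:prover_fails:terminated_vertices_not_reach}, Lemma \ref{the:restrict_ebf_finite_chain_of_queries_lemma}, and the four-stage finalization of Section \ref{sec:nccondition:finalization_after_phase_1} all transfer verbatim.

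The main obstacle I expect is checking that the carrier rule does not interfere with the layer-adding construction in stage 3 of the finalization. Concretely, when bridging $\delta_{l}(s^{n})$ to $\delta_{l'}(N(s^{n}, l'))$ along the pseudomanifold sequence $s_{0}, s_{1}, \ldots, s_{e}$ through $CEN(s^{k})$, the only output values actually used as layer values are those indexed by simplices inside the intersection $Q_{1}(l, l')$, whose vertices live in $s^{top}$ and therefore have identical carriers in $s^{n}$ and its canonical neighbor, and are untouched by the carrier rule. This means the compatibility hypothesis at $CEN(s^{k})$ still supplies a consistent bridge, and the argument of Sections \ref{sec:nccondition:finalization_after_phase_1:stage3}--\ref{sec:nccondition:finalization_after_phase_1:stage4} goes through. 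Finally, as noted in Section \ref{sec:motivation_and_summary:finalization_after_phase_r}, the decomposition terminates at $k \leq n+1$ because once $\mathbb{F}_{1}([U_{1}, \ldots, U_{k}])$ consists of a single simplex, no further refinement is needed, so the induction is well-founded and augmentation queries are handled exactly as in Section \ref{sec:nccondition:augmented_extension_based_proofs}.
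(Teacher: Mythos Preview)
Your proposal is correct and follows the paper's approach: induction on $k$, bootstrapping canonical neighbors with the longer label from those with the shorter label inside $\mathbb{F}_{r_m}([U_1,\ldots,U_{k-1}])$, invoking the carrier rule for safety, and then replaying the phase-$1$ invariants and four-stage finalization verbatim. One point to fix: the intersection simplices $s_i \in Q_{r_m}(l, l')$ that supply the layer values do \emph{not} live in $s^{top}$---they lie in the chromatic-subdivision part of $\mathbb{F}_1([U_1,\ldots,U_{k-1}])$; the correct reason they are untouched by the carrier rule is that they are precisely the vertices shared by $s^n$ and its canonical neighbor (requirement~2 of the definition), hence have identical carriers by construction, which is the paper's one-line justification.
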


Now we will discuss the case where $ids(U_{1}), ids(U_{2}) \cdots ids(U_{k})$ is a full 1-round schedule. We will show that any further division of the protocol complex in this case is unnecessary. Any further division will still imply the existence of a partial protocol with respect to $[U_{1}, \allowbreak ... U_{k}]$. Consider the configurations reached from the schedule that repeats the first partition $ids(U_{1}), ids(U_{2}), \allowbreak... ids(U_{k})$ of $\Pi$. For example, if the protocol complex with respect to $[U_{1}, \allowbreak ... U_{k}]$ is further divided into a sequence of protocol complexes with respect to $[U_{1}, \allowbreak ... U_{k}, U_{k + 1}]$ where $U_{k+1}$ is some subsimplex of the joining of all $U_{i}$, then there is a partial protocol $\delta_{dp}$ for the configurations reached from the schedule which repeats the first partition of $\Pi$ twice. This partial protocol $\delta_{dp}$ can be transformed into a partial protocol $\delta_{p}$ for $[U_{1}, \allowbreak ... U_{k}]$ without violating the carrier map $\Delta$: given a schedule $\beta$, $\delta_{p}$ defines the output values of processes in the configuration reached by $\beta$ as the output values given by $\delta_{dp}$ of processes in the configuration reached by $ids(U_{1}), ids(U_{2}), \allowbreak... ids(U_{k}) \beta$. But this is not enough for our requirements: the partial protocol with respect to $[U_{1}, \allowbreak ... U_{k}]$ shall be compatible with other partial protocols. If we just use $\delta_{dp}$ as the partial protocol with respect to $[U_{1}, \allowbreak ... U_{k}]$, the output values of $CEN(s^{k})$, where $s^{k}$ is some $k$-simplex in $\mathcal{I}$ and $CEN(s^{k})$ is in $\mathbb{F}_{r_{m}}([U_{1}, \allowbreak ... U_{k}])$, will change. We show that we can overcome this problem.

Suppose that there is a sequence of partial protocols(each has a label $[U_{1}, \allowbreak ... U_{k}, U_{k + 1, i}]$) whose protocol complexes(here we mean $\mathbb{F}_{2}$) form a covering of $\mathbb{F}_{2}([U_{1}, \allowbreak ... U_{k}])$. Note that we allow the adversary to use a set of "smaller" partial protocol to pretend it has a partial protocol with respect to a label $[U_{1}, \allowbreak ... U_{k}, U_{k + 1, i}]$. What we will do is to construct a partial protocol with respect to $[U_{1}, \allowbreak ... U_{k}]$ from this sequence of partial protocols . A key observation is that for any $CEN(s^{k})$ that we mentioned in the last paragraph is contained in only one protocol complex with respect to $[U_{1}, \allowbreak ... U_{k}, U_{k + 1, i}]$. And each protocol complex with respect to $[U_{1}, \allowbreak ... U_{k}, U_{k + 1, i}]$ contains such a simplex $CEN(s^{k})$, which is reached from the $n$-simplex $U_{1} * U_{2} \cdots U_{k}$ by a schedule repeating $ids(U_{k + 1, i})$. We denote such a simplex as $s_{i}$. We will build a new partial protocol with respect to each $[U_{1}, \allowbreak ... U_{k}, U_{k + 1, i}]$ using the existing one. Note that the new partial protocol may not terminate after the same number of rounds as the old one. We denote this new value by $r_{m}^{'}$. We do not change the output values of $CEN(s^{k})$ corresponding to this partial protocol, but have the configurations in $\mathbb{F}_{r_{m}^{'}}([U_{1}, \allowbreak ... U_{k}, U_{k + 1, i}]) \cap \mathbb{F}_{r_{m}^{'}}([U_{1}, \allowbreak ... U_{k}, U_{k + 1, j}])$ output the same values for each $i$ and $j$. Therefore, the constructed partial protocols can be merged into a partial protocol with respect to $[U_{1}, \allowbreak ... U_{k}]$.

All processes have seen the input values of the processes in $U_{1}$. Therefore, we can use the output values of the processes in the configuration $s_{c}$ reached by the schedule that repeats $ids(U_{1})$ until termination to connect our constructed partial protocols.

For each simplex $U_{k + 1, i}$, there is a path $P$ of output values(as we construct in the next paragraph) from the output values $s_{i}$ to the output values of $s_{c}$ using the original sequence of partial protocols. Let this path be $o_{0}, o_{1} \cdots o_{e}$. Given an existing partial protocol, we will construct a new partial protocol with respect to $[U_{1}, \allowbreak ... U_{k}, U_{k + 1, i}]$ while keeping the output values of $s_{i}$ unchanged. First, we define the output values of $s_{i}$ with the output values given by the existing partial protocol. Then we terminate each vertex adjacent to the terminated vertices with the value $o_{0}$. Finally, we will add layers corresponding to the path from $o_{1}$ to $o_{e}$. These added layers should be included in $\mathbb{F}_{r_{m}^{'}}([U_{1}, \allowbreak ... U_{k}, U_{k + 1, i}])$(if not, we can increase the round number $r_{m}^{'}$ of our constructed partial protocol with respect to $[U_{1}, \allowbreak ... U_{k}, U_{k + 1, i}]$).

We can construct the path $P$ consisting of output values. At first, we construct a path $P_{v}$ in the union of all protocol complexes (here we mean $\mathbb{F}_{r_{m}}$, not $\mathbb{F}_{r_{m}^{'}}$). There is a path $P_{v}$ in the union of all protocol complexes from $s_{i}$ to $s_{c}$. If some vertex $v_{1}$ in $P_{v}$ is in different protocol complexes at the same time, we replace $v_{1}$ with an order sequence of vertices $v_{1}^{'}, v_{2}^{'} \cdots v_{l}^{'}$, such that all vertices have the same carrier in $Ch(\mathcal{I})$, $v_{1} = v_{1}^{'} = v_{l}^{'}$ and there is some $v_{m}$ that is assigned with the same output value by different partial protocols for some $1 \leq m \leq l$. This is always possible since the original sequence of partial protocols is compatible. Now we construct the path $P$ by defining $P(i)$ as the output value of $P_{v}(i)$ given by a partial protocol containing $P_{v}(i)$.

And the carrier of $s_{i}$ in $Ch(\mathcal{I})$ is the smallest compared to any other vertex in the subdivision of $\mathbb{F}_{r_{m}^{'}}([U_{1}, \allowbreak ... U_{k}, U_{k + 1, i}])$. Therefore, adding layers using output values in the path $P$ can avoid any violation of the task specification. 

Finally, the newly constructed partial protocols can be merged into a single partial protocol with respect to $[U_{1}, \allowbreak ... U_{k}]$ by defining each undefined vertex in $\mathbb{F}_{r_{m}^{'}}([U_{1}, \allowbreak ... U_{k}])$ with the output value of some vertex in $s_{c}$.


\begin{theorem}
\label{the:adversary_finalize_theorem}
For a colorless task$(\mathcal{I}, \mathcal{O}, \Delta)$, there exists an adversary that can win against any extension-based prover if and only if there exists a partial protocol with respect to $[U_{1}, \allowbreak ... U_{k}]$ for each possible sequence $[U_{1}, \allowbreak ... U_{k}]$, where $ids(U_{1}), ids(U_{2}) \cdots ids(U_{k})$ is a full 1-round schedule, and all partial protocols are compatible.
\end{theorem}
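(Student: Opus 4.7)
The plan is to prove this by reducing it to the characterizations already established for earlier phases and combining them with a termination argument that bounds the phase index by $n+1$.

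First, I would invoke the equivalence argued at the start of Section \ref{sec:motivation_and_summary}: the adversary wins against any extension-based prover if and only if it can finalize after some phase $r$. By Theorem \ref{the:assignment_queries_no_more_power}, assignment queries do not increase prover power, so it is enough to consider restricted extension-based proofs. Thus the theorem reduces to characterizing when the adversary can finalize after \emph{some} phase $r$, and I need to show that this is equivalent to the stated condition on full 1-round sequences $[U_1,\ldots,U_k]$.

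For the sufficiency direction, I would adapt the strategy behind Theorem \ref{the:adversary_finalize_after_the_first_round_theorem} inductively. Given a compatible family of partial protocols indexed by full 1-round sequences, during phase $i$ the adversary maintains the analogues of invariants (1)--(3) using the generalized canonical neighbor for list labels $[U_1,\ldots,U_i]$ constructed in Section \ref{sec:nccondition:finalization_after_phase_r} via the canonical neighbor of the shorter label, together with the carrier rule. When phase $i$ ends and the prover commits to a new set $ids(U_{i+1})$ of processes, the adversary focuses on $\mathbb{F}_t([U_1,\ldots,U_{i+1}])$ and proceeds by induction. Once $i = n+1$ so $ids(U_1),\ldots,ids(U_{n+1})$ is a full 1-round schedule, the adversary has a single partial protocol with respect to that sequence and can finalize using the stage 1--4 construction from Section \ref{sec:nccondition:finalization_after_phase_1}, gluing outputs at the $CEN(s^k)$ simplices where compatibility guarantees agreement. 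Verifying that the list-labeled canonical neighbors introduce no safety violation reduces to Property \ref{pro:colorless_property}, since the carrier rule only reassigns values to vertices whose canonical-neighbor counterparts have strictly larger carriers.

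For the necessity direction, I would use the standard chain-of-queries argument generalized to phase $k$. Assume the adversary can win, equivalently finalize after some phase $r$. For any full 1-round sequence $[U_1,\ldots,U_k]$ with $k\le r$, the prover can, during phase $k$, submit chains of queries corresponding to schedules repeating $ids(s^k)$ for each $k$-simplex $s^k$ in $\mathbb{F}_1([U_1,\ldots,U_k,U_{k+1}]) \cap \mathbb{F}_1([U_1,\ldots,U_k,U'_{k+1}])$, forcing the adversary to commit to the output of each such $CEN(s^k)$. The prover may then end phase $k$ by choosing either $U_{k+1}$ or $U'_{k+1}$, so both of the resulting partial protocols after phase $k+1$ must agree with the previously committed outputs on every such $CEN(s^k)$. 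Existence of those partial protocols (for $k < r$) follows from finalization after phase $r$, and their compatibility follows from the prover's freedom to extend in either direction. For the full 1-round case $k = n+1$, the same argument at phase $n$ together with the reduction sketched in Section \ref{sec:nccondition:finalization_after_phase_r} produces a compatible family.

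The main obstacle is the reassembly step showing that any further decomposition of a full 1-round sequence is unnecessary, i.e. that a compatible family of partial protocols with respect to $[U_1,\ldots,U_k,U_{k+1,i}]$ yields one with respect to $[U_1,\ldots,U_k]$. I would handle this by constructing a new partial protocol per index $i$ whose protocol complexes agree on the shared intersection configurations: use the output of the solo configuration $s_c$ reached by repeating $ids(U_1)$ until termination as a common anchor, build a path of output values from each anchor simplex $s_i$ to $s_c$ in the union of protocol complexes, replace any vertex shared across multiple protocol complexes by a short sequence of equivalent vertices using compatibility, and then add layers along this path using Property \ref{pro:colorless_property}. The carrier of $s_i$ in $Ch(\mathcal{I})$ is minimal among the vertices in the relevant subdivision, so every added layer respects $\Delta$. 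The resulting partial protocols merge coherently, closing the induction and establishing that $r \le n+1$ suffices.
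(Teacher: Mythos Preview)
Your proposal is essentially correct and follows the paper's approach: you invoke the finalization equivalence from Section~\ref{sec:motivation_and_summary}, apply Theorem~\ref{the:assignment_queries_no_more_power}, extend the phase-1 strategy to phase $k$ via the list-labeled canonical neighbors and carrier rule of Section~\ref{sec:nccondition:finalization_after_phase_r}, and close with the reassembly argument using the anchor configuration $s_c$ and the path of output values---exactly as the paper does. One minor imprecision: you write ``once $i=n+1$'' as though a full 1-round schedule always has $n+1$ blocks, but in fact such a schedule is any ordered partition of $\Pi$ and can have fewer blocks; the correct bound is that finalization is needed no later than the phase at which the committed prefix becomes a full 1-round schedule, which is at most $n+1$.
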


\fi

\section{Applications}
\label{sec:applications}
We present an application of Theorem \ref{the:adversary_finalize_theorem}: There is an extension-based proof for the impossibility of solving colorless covering tasks using wait-free protocols. \cite{Attiya23_2} gives a similar result for colorless covering tasks using round reduction, a concept proposed in their paper. In terms of techniques, our proofs are completely different from previous proofs. Using our consequence, we can understand this result from a combinatorial perspective.

We adopt the definition in \cite{Attiya23_2}. 
For two connected simplicial complexes $\mathcal{I}$ and $\mathcal{O}$, and for a simplicial map $f: \mathcal{O} \rightarrow \mathcal{I}$, the pair $(\mathcal{O}, f)$ is a covering complex of $\mathcal{I}$ if, for every $\tau \in\mathcal{I} $, $f^{-1}(\tau)$ is a union of pairwise disjoint simplices. These disjoint simplices are called the sheets of $\tau$.

Given a covering complex $(\mathcal{O^{*}}, f)$ of a complex $\mathcal{I}$, the colorless covering task $(\mathcal{I^{*}}, \mathcal{O^{*}}, \Delta^{*})$ is the task where $\Delta^{*}(\tau)$ is defined as $\{\tau \in \mathcal{O^{*}} | f(\tau) \subseteq \tau\}$, for every $\tau \in \mathcal{I^{*}}$. A covering complex is non-trivial if each simplex in $\mathcal{I^{*}}$ has more than one sheet. Note that this is a definition in colorless form. We can use the transformation given in Section \ref{sec:preliminaries} to convert it to a colored form $(\mathcal{I}, \mathcal{O}, \Delta)$.

The Hexagone task $HX$ is one example of colorless covering tasks. The input complex $I^{*}$ is defined as a 3-node circle $\{u_{0}, u_{1}, u_{2}\}$. The output complex is a 6-node circle $\{v_{0}, v_{1} \cdots v_{5}\}$. The map $\Delta$ is defined as follows. For every $i \in \{0, 1, 2\}$, $\Delta(u_{i}) = \{v_{i}, v_{(i + 3) mod 6} \}$ and $\Delta(\{u_{i}, u_{(i + 1) mod 3} \}) = \{\{v_{i}, v_{i + 1} \}, \{v_{i + 3}, v_{(i + 4) mod 6} \} \}$.

\begin{figure}[h]
  \centering
  \includegraphics[width=\linewidth]{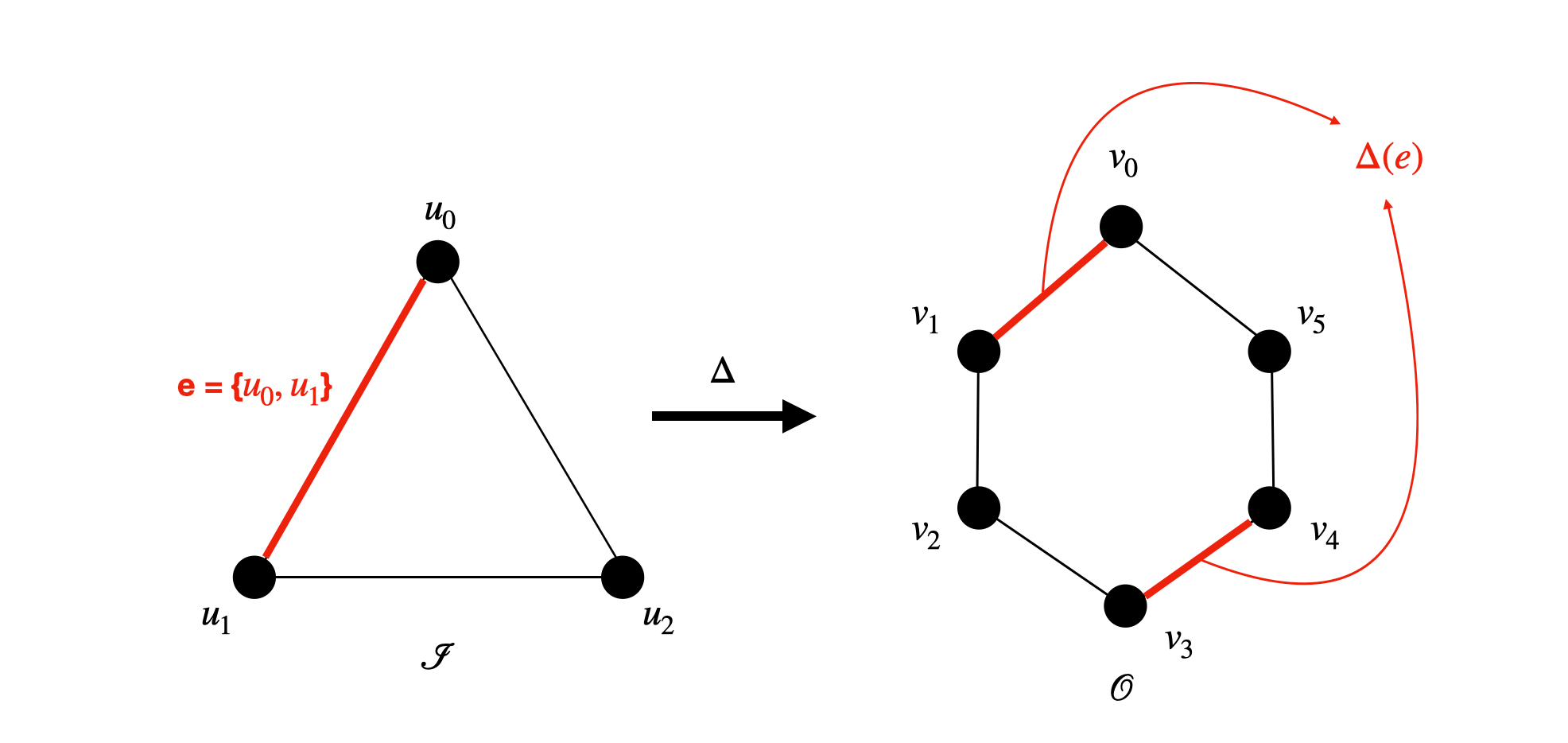}
  \caption{The Hexagone task}
  \label{img:application:Hexagone}
\end{figure}

\begin{lemma}
\label{the:covering_lemma}
    Given any non-trival colorless task $(\mathcal{I}, \mathcal{O}, \Delta)$, there exists a 1-dimensional subcomplex $\mathcal{I}^{'}$ of $\mathcal{I}$ such that $(\mathcal{I}^{'}, \Delta(\mathcal{I}^{'}), \Delta)$ has no wait-free algorithms, when $n \geq 2$.
\end{lemma}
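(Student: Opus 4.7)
The plan is to exploit the covering structure to construct a closed edge-walk in $\mathcal{I}$ whose lift through $f$ is not closed, and then apply the asynchronous computability theorem to derive a contradiction.

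First I would find the walk. Since $\mathcal{I}$ and $\mathcal{O}$ are both connected and the covering is non-trivial, every simplex of $\mathcal{I}$ has at least two sheets; in particular, pick any vertex $v \in \mathcal{I}$ and two distinct preimages $u_{0} \neq u_{1} \in f^{-1}(v)$. By connectedness of $\mathcal{O}$, there is an edge path in the 1-skeleton of $\mathcal{O}$ from $u_{0}$ to $u_{1}$. Its image under $f$ is a closed edge-walk in the 1-skeleton of $\mathcal{I}$ that lifts (starting at $u_{0}$) to an open path in $\mathcal{O}$ terminating at $u_{1}$. I take $\mathcal{I}^{'}$ to be the 1-dimensional subcomplex of $\mathcal{I}$ consisting of the vertices and edges traversed by this closed walk.

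Next, suppose for contradiction that $(\mathcal{I}^{'}, \Delta(\mathcal{I}^{'}), \Delta)$ admits a wait-free protocol. By the asynchronous computability theorem, there is a chromatic subdivision $\sigma(\mathcal{I}^{'})$ together with a color-preserving simplicial map $\mu : \sigma(\mathcal{I}^{'}) \rightarrow \Delta(\mathcal{I}^{'})$ satisfying $\mu(S) \in \Delta(carrier(S, \mathcal{I}^{'}))$ for every simplex $S$. Unfolding the definition of $\Delta$ for colorless covering tasks, this translates into $f(\mu(S)) \subseteq carrier(S, \mathcal{I}^{'})$, and because $\Delta(\tau)$ is the disjoint union of the sheets of $\tau$, the subsimplex $\mu(S)$ in fact lies entirely inside a single sheet of $carrier(S, \mathcal{I}^{'})$.

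Finally I would derive the contradiction. Since $\mathcal{I}^{'}$ is 1-dimensional, $\sigma(\mathcal{I}^{'})$ is combinatorially a subdivision of the graph $\mathcal{I}^{'}$, and the closed edge-walk in $\mathcal{I}^{'}$ pulls back to a closed edge-walk in $\sigma(\mathcal{I}^{'})$. Its image under $\mu$ is a walk in $\mathcal{O}$ whose projection under $f$ retraces the original walk in $\mathcal{I}^{'}$. Because sheets over a common simplex are pairwise disjoint in $\mathcal{O}$, the sheet chosen by $\mu$ on each edge of $\sigma(\mathcal{I}^{'})$ is forced by the vertex values it shares with its neighbors, so the $\mu$-walk is precisely the unique $f$-lift of the projected walk starting at $\mu$ of its initial vertex; consequently it must close up. But by construction the walk in $\mathcal{I}^{'}$ was chosen so that its $f$-lift is open, a contradiction. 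The main obstacle I expect is formalising this discrete path-lifting lemma, i.e.\ showing rigorously that the local sheet-preservation condition $\mu(S) \subseteq$ one sheet of $carrier(S,\mathcal{I}^{'})$ for every simplex $S$ glues into a globally coherent sheet-following along the whole subdivided cycle; this is the simplicial shadow of the classical covering-space lifting lemma, and once it is in hand the rest of the argument is routine.
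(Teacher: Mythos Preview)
Your plan is essentially the paper's proof: pick a closed edge-walk in the base whose lift is open, build $\mathcal{I}'$ from it, and contradict the carrier condition via the decision map. The paper is more concrete on the two points you leave implicit --- it builds $\mathcal{I}'$ as an explicit two-process colored cycle (the walk lives in $\mathcal{I}^{*}$, not $\mathcal{I}$, so ``the subcomplex of $\mathcal{I}$ consisting of the edges traversed'' is not yet a well-defined 1-dimensional colored subcomplex without this step), and in place of your abstract path-lifting lemma it runs a direct minimal-index argument along the solo-execution outputs, which is precisely the local sheet-coherence step you flag as the main obstacle.
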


\begin{proof}
    This proof is greatly inspired by \cite{Attiya23_2}. 
    Let $x$ be a vertex in $\mathcal{I^{*}}$, there are at least two vertices $y$ and $y^{'}$ in $\mathcal{O^{*}}$ such that $f(y) = f(y^{'}) = x$ since the task is not trival. As $\mathcal{O^{*}}$ is connected, there is a path $y_{0} = y, y_{1}, \cdots y_{k} = y^{'}$. The image of this path is a cycle $x_{0} = f(y_{0}), \cdots, x_{k} = f(y_{k})$ in $\mathcal{I^{*}}$ since $f(y_{0}) = f(y_{k})$.

    We build a 1-dimensional subcomplex $\mathcal{I}^{'}$ of $\mathcal{I}$ as follows: two processes $p_{0}$ and $p_{1}$ have inputs $(x_{i}, x_{(i + 1) mod (k + 1)})$, and $(x_{(i + 1) mod (k + 1)}, x_{(i + 1) mod (k + 1)})$ for all $i \leq k$. $\mathcal{I}^{'}$ is therefore a circle $C = (v_{0}, v_{1} \cdots v_{2k  + 1})$.  Assume that $(\mathcal{I}^{'}, \Delta(\mathcal{I}^{'}), \Delta)$ has a wait-free algorithm, this means that there exists a subdivision of $\mathcal{I}^{'}$ and a decision map $m$ that maps each vertex in the subdivision of $\mathcal{I}^{'}$ to $\Delta(\mathcal{I}^{'})$. Let $v$ be a vertex in $\mathcal{I}^{'}$, we use $\{v\}$ to denote the vertex in the subdivision of $\mathcal{I}^{'}$ reached from $v$ by a solo execution of the process $ids(v)$. 

    Then $m(\{v_{0}\}) = m(\{v_{1}\}) = y_{0}$ but $m(\{v_{2k + 1}\}) \neq y_{k}$, then there exist a minimal index $0 \leq i < k$ such that $m(\{v_{2i}\}) = m(\{v_{2i + 1}\}) = y_{i}$ and $m(\{v_{2i + 2}\}) \neq y_{i + 1}$.  We have $f(y_{i}, y_{i + 1}) = (x_{i}, x_{i + 1})$ by construction. The map $m$ has to satisfy the carrier map $\Delta$, so $f(y_{i}, m(\{v_{2i + 2}\})) = (x_{i}, x_{i + 1})$. Hence $f^{-1}(x_{i}, x_{i + 1})$ contains $(y_{i}, y_{i + 1})$ and $(y_{i}, m(\{v_{2i + 2}\}))$, which means that $(x_{i}, x_{i + 1})$ has two intersecting sheets. This is a contradiction to the specification of the covering task.

\end{proof}

\begin{theorem}
    There is an extension-based proof for the impossibility of solving colorless covering tasks $(\mathcal{I}, \mathcal{O}, \Delta)$  using wait-free protocols.
\end{theorem}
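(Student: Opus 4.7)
The plan is to combine Lemma~\ref{the:covering_lemma} with the contrapositive direction of Theorem~\ref{the:adversary_finalize_theorem}. By Lemma~\ref{the:covering_lemma}, every non-trivial colorless covering task $(\mathcal{I}, \mathcal{O}, \Delta)$ contains a one-dimensional cyclic subcomplex $\mathcal{I}' \subseteq \mathcal{I}$ involving only two processes $p_0$ and $p_1$ such that $(\mathcal{I}', \Delta(\mathcal{I}'), \Delta)$ has no wait-free protocol in the NIIS model. By Theorem~\ref{the:adversary_finalize_theorem}, exhibiting an extension-based proof for $(\mathcal{I}, \mathcal{O}, \Delta)$ reduces to showing that its compatibility condition fails, i.e., that there is no compatible family $\{\delta_{[U_1,\ldots,U_k]}\}$ of partial protocols indexed by full 1-round schedules.

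First I would assume, for contradiction, that such a compatible family exists for $(\mathcal{I}, \mathcal{O}, \Delta)$ and restrict it to initial simplices lying in $\mathcal{I}'$. Because $\mathcal{I}'$ involves only $p_0$ and $p_1$, the full 1-round schedules reduce to three types per edge of $\mathcal{I}'$: both processes accessing the first IS object together, $p_0$ alone before $p_1$, and $p_1$ alone before $p_0$. The three corresponding protocol complexes $\mathbb{F}_{r_m}$ together cover a chromatic subdivision of every edge of $\mathcal{I}'$, and their pairwise intersections are concentrated on $0$-simplices representing solo-execution configurations, exactly the configurations at which compatibility forces the three partial protocols to agree via the shared values $\delta_U(CEN(s^k))$. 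Using this, I would glue the three families of output assignments into a single color-preserving simplicial map $\mu : \sigma(\mathcal{I}') \to \Delta(\mathcal{I}')$ on a sufficiently fine chromatic subdivision $\sigma(\mathcal{I}')$, respecting the carrier map.

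By the asynchronous computability theorem, the existence of such a $\mu$ is exactly a wait-free protocol for $(\mathcal{I}', \Delta(\mathcal{I}'), \Delta)$, contradicting Lemma~\ref{the:covering_lemma}. Hence no compatible family can exist for $(\mathcal{I}, \mathcal{O}, \Delta)$, and Theorem~\ref{the:adversary_finalize_theorem} then yields an extension-based proof of impossibility.

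The main obstacle is the gluing step. Compatibility only enforces agreement at the specific configurations $CEN(s^k)$, and is in general strictly weaker than full wait-free solvability---this gap is precisely what makes the paper's characterisation non-trivial. The argument must exploit the rigid combinatorics of the two-process, one-dimensional situation, where each edge of $\mathcal{I}'$ has only three $1$-simplices in its chromatic subdivision and the different protocol complexes meet only at solo-execution vertices, so that compatibility at $CEN$ is strong enough to force the entire gluing to be consistent (after, if necessary, passing to a further chromatic subdivision to absorb the interior pieces of each $\mathbb{F}_{r_m}$).
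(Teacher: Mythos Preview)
Your proposal is correct and follows essentially the same route as the paper: invoke Lemma~\ref{the:covering_lemma} to obtain the one-dimensional two-process subcomplex $\mathcal{I}'$, assume for contradiction a compatible family exists (via Theorem~\ref{the:adversary_finalize_theorem}), restrict to $\mathcal{I}'$, and argue that in dimension one the compatibility condition at the $CEN$ configurations suffices to glue the partial protocols into a genuine wait-free protocol for $(\mathcal{I}',\Delta(\mathcal{I}'),\Delta)$, contradicting the lemma. The paper states the gluing step tersely (``If the dimension of the input complex is just~1, then any two compatible partial protocols can be merged''); your explanation of why this works---that in the two-process case the protocol complexes $\mathbb{F}_{r_m}$ overlap only at solo-execution vertices, which are precisely the $CEN(s^k)$ where compatibility forces agreement---supplies the detail the paper omits.
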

\begin{proof}
    By Lemma \ref{the:covering_lemma}, we know that there exists a 1-dimensional subcomplex $\mathcal{I}^{'}$ of $\mathcal{I}$ such that $(\mathcal{I}^{'}, \mathcal{O}, \Delta)$ has no wait-free algorithms. So the task $(\mathcal{I}, \Delta(\mathcal{I}^{'}), \Delta)$ has no wait-free protocols. 

    Assume that $(\mathcal{I}, \mathcal{O}, \Delta)$ does not have extension-based proofs. By Theorem \ref{the:adversary_finalize_theorem}, we know that there is a set of compatible partial protocols. We restrict these partial protocols to $(\mathcal{I}^{'}, \mathcal{O}, \Delta)$. If the dimension of the input complex is just 1, then any two compatible partial protocols can be merged into a partial protocol whose input complex is the union of input complexes of the two compatible partial protocols. This implies that $(\mathcal{I}^{'}, \Delta(\mathcal{I}^{'}), \Delta)$ has a wait-free protocol. This contradicts Lemma \ref{the:covering_lemma}.
\end{proof}

In fact, this argument works for any 1-dimensional colorless task $(\mathcal{I^{*}}, \mathcal{O^{*}}, \Delta^{*})$ that does not have any wait-free protocol. We can similarly transform it into a colored task $(\mathcal{I}, \mathcal{O}, \Delta)$ and construct a 1-dimensional subcomplex $\mathcal{I}^{'}$(each edge $v_{1} - v_{2}$ in $\mathcal{I^{*}}$ is replaced by two edges $(p_{0}, v_{1}) - (p_{1}, v_{1}) - (p_{0}, v_{2})$ in $\mathcal{I}^{'}$) so that the task $(\mathcal{I^{'}}, \Delta(\mathcal{I}^{'}), \Delta)$ also has no wait-free protocols. In this case, having a set of compatible partial protocols(i.e. having no extension-based proofs) is equal to having a wait-free protocol. Therefore, there exists an extension-based proof for this task.

\section{Conclusions}
\label{sec:conclusions}
In this paper, we solve some open questions proposed in \cite{Brusse21}. An important contribution is that we give a necessary and sufficient condition for colorless tasks to have no extension-based proofs. Previous work is related to some specific tasks, such as the $(n, k)$-set agreement task, and this is the first attempt for colorless tasks. We introduce a more general interaction procedure for a prover and an adversary. The concept of partial protocols can help us understand the local solvable tasks \cite{Attiya23_1}. We have also shown that assignment queries will not give more power to the prover. Different versions of extension-based proofs are equivalent in power for colorless tasks. Our adversarial strategy does not require the condition that the task is colorless before the finalization part, which means that part of our strategy can be generalized to general tasks directly. An open problem that remains is to extend our results to colored tasks or to colored tasks with the minimum requirements. We hope that the concept of canonical neighbors can be used in future research of extension-based proofs and perhaps other questions of distributed computing.




\bibliography{sample-base}

\appendix

\newpage
\section{List of important symbols}
\label{appendix:symbols}

\newcolumntype{Y}{>{\centering\arraybackslash}X}

\begin{table}[h]
    \centering
    \caption{List of important symbols}
    \label{tab:Hg}
    \begin{tabularx}{\textwidth}{|c|Y|c|}
      \toprule
        Symbol & Meaning & Page \\
      \midrule
        $n + 1$ & The number of processes & \pageref{syb:n_plus_1} \\
        $p_{0}, p_{1} \cdots p_{n}$ & processes  & \pageref{syb:processes} \\
        $\Pi$ & the set of all processes & \pageref{syb:pi}\\
        $(\mathcal{I}, \mathcal{O}, \Delta)$ & a tuple(having process ids) that fully determines a task & \pageref{syb:task} \\ 
        $(\mathcal{I^{*}}, \mathcal{O^{*}}, \Delta^{*})$ & a tuple(without process ids) that fully determines a colorless task & \pageref{syb:colorless_task} \\ 
        IS & Immediate snapshot object & \pageref{syb:IS} \\
        NIIS &  The non-uniform iterated immediate snapshot model & \pageref{syb:NIIS} \\
        $\delta$ & A map that specifies an NIIS protocol & \pageref{syb:delta}  \\
        $C$ & A configuration & \pageref{syb:C}\\
        $\alpha$ & A schedule & \pageref{syb:alpha}\\
        
        $S, s, s^{i}$ & A simplex(a superscript represents its dimension) & \pageref{syb:simplex} \\
        $\mathcal{K}$ & A complex & \pageref{syb:complex} \\
        $dim(s), dim(\mathcal{K})$ & The dimension of a simplex $s$ or a complex $\mathcal{K}$ & \pageref{syb:dim} \\
        $lk(S, \mathcal{K})$ & The link of a simplex $S$ in $\mathcal{K}$ & \pageref{syb:link} \\
        $S * T$ & The joining of two simplices $S$ and $T$ & \pageref{syb:joining} \\
        
        $carrier(S, \mathcal{K})$ & The carrier of a simplex $S$ in a complex $\mathcal{K}$ & \pageref{syb:carrier} \\
        $Ch(\mathcal{K})$ & The standard chromatic subdivision of a complex $\mathcal{K}$ & \pageref{syb:Ch} \\
        $\chi(\mathcal{K}, \delta)$ & The non-uniform chromatic subdivision of a complex $\mathcal{K}$ & \pageref{syb:chi} \\

        $U, U_{1}, U_{2} \cdots$ & A simplex in $\mathcal{I}$ & \pageref{syb:U}\\
        $\delta_{U}$ & A partial protocol with respect to $U$ & \pageref{syb:delta_U}\\
        $\mathbb{F}_{t}(U)$ & The $t$-th protocol complex of $\delta_{U}$ & \pageref{syb:mathbb_F_U} \\
        $\delta_{[U_{1} \cdots U_{k}]}$ & A partial protocol with respect to $[U_{1} \cdots U_{k}]$ & \pageref{syb:delta_U_1_2} \\
        $\mathbb{F}_{t}([U_{1} \cdots U_{k}])$ & The $t$-th protocol complex of $\delta_{[U_{1} \cdots U_{k}]}$ & \pageref{syb:mathbb_F_U_1_2} \\

        $\alpha(\varphi)$& The initial schedule of phase $\varphi$ &\pageref{syb:initial_schedule_of_phase_varphi}{} \\
        $\mathcal{A}(\varphi)$& The initial configurations of phase $\varphi$ &\pageref{syb:initial_configurations_of_phase_varphi} \\
        $\mathcal{A}^{'}(\varphi)$& The reached configurations in phase $\varphi$ &\pageref{syb:reached_configurations_in_phase_varphi} \\
 
        $r_{m}$ & The round number of all partial protocols & \pageref{syb:r_m}\\
        $CEN(S)$ & The configuration reached from $S$ until termination by a \allowbreak schedule that repeats $ids(S)$, where $S$ is a simplex in $Ch(\mathcal{I})$  & \pageref{syb:CEN_S} \\

        $t$ & Current complex & \pageref{syb:t} \\
        $S^{0}, S^{1} \cdots S^{t}$ & A sequence of complexes that represent a partially specified protocol & \pageref{syb:complexes} \\
        $N(s^{n}, U)$ & The canonical neighbor of $s^{n}$ with the label $U$, where $s^{n} \in Ch^{r_{m}}(\mathcal{I})$ & \pageref{syb:canonical_neigbhor} \\

        $r_s$ & The interval between two applications of the special rule & \pageref{syb:r_s} \\

        $Q_{i}(U, U^{'})$ & The intersection of $\mathbb{F}_{i}(U)$ and $\mathbb{F}_{i}(U^{'})$ & \pageref{syb:Q} \\
        $SHA(s^{n}, U^{'})$ & The simplex shared by $s^{n}$ and $\mathbb{F}_{r_{m}}(U^{'})$ & \pageref{syb:SHA} \\
        $mc(s^{k}, \mathcal{K})$ & the minimum carrier of $s^{k}$ in the complex $\mathcal{K}$ & \pageref{syb:mc} \\
        $Prj(s^{n}, U^{'})$ & The projection of $s^{n}$ to $\mathbb{F}_{r_{m}}(U^{'})$ & \pageref{syb:prj} \\
        
      \bottomrule
    \end{tabularx}
\end{table}
\newpage

\section{A transformation from  $(\mathcal{I}^{*}, \mathcal{O}^{*}, \Delta^{*})$ to $(\mathcal{I}, \mathcal{O}, \Delta)$}
\label{appendix:transformation}
A {\em pseudosphere} \cite{Her98} is a combinatorial structure in which each process is independently assigned a value from a set of values. Let $s^{n} = (t_{0}, t_{1} \cdots t_{n})$ be an $n$-simplex and $V_{0}, V_{1} \cdots V_{n}$ be a sequence of finite sets of values. In the {\em pseudosphere} $\psi(s^{n}; V_{0}, V_{1} \cdots V_{n})$, each vertex is labeled with a pair $(t_{i}, v_{i})$, where $t_{i}$ is a vertex of $s^{n}$ and $v_{i} \in V_{i}$. Vertices $(t_{i, 0}, v_{i, 0}), \cdots, (t_{i, j}, v_{i, j})$ form a simplex if and only if $t_{i, k}$ for all $k$ are distinct. For an uncolored complex $\mathcal{K}^{*}$ and processes $\Pi = \{p_{0},..., p_{n}\}$, define the {\em colorized complex} $\mathcal{K}$ by replacing each simplex $\tau^{*}$ of $\mathcal{K}^{*}$ with the pseudosphere $\psi(\Pi; \tau^{*}, \tau^{*} \cdots)$.
Let $\pi:\mathcal{K} \rightarrow \mathcal{K}^{*}$ be the map that discards the color component: $\pi(val, color) = val$.
A task $(\mathcal{I}^{*}, \mathcal{O}^{*}, \Delta^{*})$  defined by the colorless form can be transformed into the colored task $(\mathcal{I}, \mathcal{O}, \Delta)$, where $\mathcal{I}$ is the colorized complex of $\mathcal{I}^{*}$, $\mathcal{O}$ is the colorized complex of $\mathcal{O}^{*}$ and $\pi(\Delta(\sigma)) = \Delta^{*}(\pi(\sigma))$.

\section{A geometric proof of Lemma \ref{the:glue_protocols:prover_fails:no_abuse}}
\label{appendix:geometric_proof}

\begin{lemma}
    If a vertex $v$ is terminated by rule (2) with label $U^{'}$ and $v$ is in the subdivision of an $n$-simplex $s^{n}$ of $Ch^{r_{m}}(\mathcal{I})$, then $s^{n}$ has a canonical neighbor with label $U^{'}$.
\end{lemma}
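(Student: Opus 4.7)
The plan is to argue by contradiction: I assume that $s^n$ has no canonical neighbor with label $U'$ (equivalently, by Theorem \ref{the:canonical_neighbor_existence_theorem}, that $s^n \cap \mathbb{F}_{r_m}(U') = \emptyset$), and I derive that no vertex in the iterated subdivision of $s^n$ can ever be terminated with label $U'$, contradicting the existence of $v$. The novelty compared with the proof in the body is that I want to view the obstruction geometrically, as a ``moat'' around $\mathbb{F}_{r_m}(U')$ whose width grows under subdivision faster than rule (2) can step across it.

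First I would isolate the subcomplex $Q \subseteq Ch^{r_m}(\mathcal{I})$ consisting of exactly those $n$-simplices of $Ch^{r_m}(\mathcal{I})$ that share no vertex with $\mathbb{F}_{r_m}(U')$, and note that by definition the graph distance in the $1$-skeleton of $Ch^{r_m}(\mathcal{I})$ from any vertex of $Q$ to any vertex of $\mathbb{F}_{r_m}(U')$ is at least $1$. Next I would exploit the initialization of the adversarial strategy: before any query, the adversary performs $r_a$ extra chromatic subdivisions in which every new vertex is assigned $\delta = \perp$, and I would argue geometrically that each non-uniform chromatic subdivision at least doubles the active distance between two subcomplexes whose vertices are all undefined. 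Concretely, I would look at a geodesic undefined path realizing the active distance before subdivision, and show that between any two consecutive old vertices a new intermediate undefined vertex is forced. After these $r_a$ subdivisions the active distance between $\chi^{r_a}(Q,\delta)$ and $\mathbb{F}_{r_m + r_a}(U')$ is therefore at least $2^{r_a}$, which is made larger than $2$ by choosing $r_a$ appropriately.

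I would then apply Lemma \ref{the:glue_protocols:prover_fails:terminated_vertices_not_reach} with $s^k$ any simplex of $\chi^{r_a}(Q,\delta)$ and with the role of ``label $U'$'' played by $U'$ itself: the hypotheses are satisfied because, at the moment the adversary begins responding to queries, no vertex is yet terminated and the active distance from $s^k$ to every vertex of $\mathbb{F}_{r_m+r_a}(U')$ is at least $2$. The conclusion is that no descendant of $s^k$ in any subsequent complex $S^{t'}$ is ever terminated with label $U'$. In particular, if the vertex $v$ were in the subdivision of some $s^n \in Q$, then $v$ would be a descendant of a simplex of $\chi^{r_a}(Q,\delta)$ and could not carry label $U'$, contradicting the hypothesis. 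Thus $s^n \notin Q$, i.e.\ $s^n$ shares a vertex with $\mathbb{F}_{r_m}(U')$, and Theorem \ref{the:canonical_neighbor_existence_theorem} delivers the canonical neighbor $N(s^n, U')$.

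The main obstacle I anticipate is justifying the geometric doubling step carefully in the non-uniform setting, because the subdivision $\chi$ differs from the uniform $Ch$ precisely on simplices some of whose vertices are already terminated, and the statement of rule (2) together with its $r_s$-round cooldown must be invoked to guarantee that the undefined portions of the complexes under comparison really do get uniformly subdivided. The rest of the argument, converting the metric separation into the impossibility of a terminated rule-(2) path by appealing to Lemma \ref{the:glue_protocols:prover_fails:terminated_vertices_not_reach}, should then be a short bookkeeping exercise.
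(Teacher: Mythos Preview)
Your proposal is correct and follows essentially the same route as the paper's main-body proof of this lemma: define the subcomplex $Q$ of $n$-simplices of $Ch^{r_m}(\mathcal{I})$ disjoint from $\mathbb{F}_{r_m}(U')$, observe that during the $r_a$ initialization rounds every vertex carries $\delta=\perp$ so the active distance from $\chi^{r_a}(Q,\delta)$ to $\mathbb{F}_{r_m+r_a}(U')$ grows to at least $2$, and then invoke Lemma~\ref{the:glue_protocols:prover_fails:terminated_vertices_not_reach}. Your anticipated obstacle about the non-uniform subdivision is in fact a non-issue here, since those first $r_a$ rounds occur before any query is answered, so no vertex is terminated and $\chi$ coincides with the uniform $Ch$; the $r_s$-cooldown plays its role only inside the proof of Lemma~\ref{the:glue_protocols:prover_fails:terminated_vertices_not_reach}, not in the present argument.

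It is worth noting that, despite your ``moat'' language, your argument is the combinatorial active-distance one, not a genuinely geometric one. The paper in fact supplies a second, truly geometric proof (Appendix~\ref{appendix:geometric_proof}) based on the mesh-shrinking property $mesh(Ch(s^k))\le \frac{n}{n+1}\,mesh(s^k)$: it bounds the total Euclidean length of any rule-(2) terminated path by a geometric series $l_m\,(n/(n+1))^{r_a}/(1-(n/(n+1))^{r_s})$ and chooses $r_a,r_s$ so that this is smaller than the minimum Euclidean distance from $\mathbb{F}_{r_m}(U')$ to the simplices of $Q$. That argument avoids Lemma~\ref{the:glue_protocols:prover_fails:terminated_vertices_not_reach} entirely, at the cost of invoking the metric realization; yours (like the body proof) stays purely combinatorial but relies on that auxiliary lemma.
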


\begin{proof}
Invariant (2) guarantees that a vertex terminated with the label $U^{'}$ by rule (2) has a path consisting of terminated vertices to some $k_{2}$-simplex in $\mathbb{F}_{t}(U^{'})$ terminated with label $U^{'}$. This path consists of vertices terminated by rule (2). We will argue that this path is in the subdivision of some $n$-simplices in $Ch^{r_{m}}(\mathcal{I})$, each of which has a canonical neighbor with the label $U^{'}$.

It has been shown that there exists an equivalent geometric definition of the standard chromatic subdivision in each algebraic topology textbook. Let $K$ be a geometric simplicial complex. The mesh of $K$, denoted $mesh(K)$, is the maximum diameter of any of its simplices or, equivalently, the length of its longest edge. The standard chromatic subdivision is a mesh-shrinking subdivision which means that the mesh of a simplex will decrease after such a subdivision. And in \cite{Herlihy99} it is shown that $mesh(Ch(s^{k})) \leq \frac{n}{n+1} mesh(s^{k})$ for any $k$-simplex $s^k$. Let $l_{m}$ be the mesh of $Ch^{r_{m}}(\mathcal{I})$. Since no vertex in $Ch^{r_{m}}(\mathcal{I})$ has terminated, $l_{m}$ is well-defined. Note that the adversary uses the non-uniform chromatic subdivision to construct $S^{r + 1}$ from $S^{r}$ for each $r \leq r_{m} + r_{a}$. The length of an edge between a terminated vertex and an adjacent vertex will not change after a non-uniform chromatic subdivision. But the length of an edge between two active vertices in $S^{r}$ where $r \geq r_{m}$ is at most $(\frac{n}{n+1})^{r - r_{m}}*l_{m}$. Consider the path $v_{0}, v_{1}, v_{2}, \cdots v_{k}$ where $v_{0}$ terminates with one of its possible configurations. When $v_{i}$ where $i \geq 0$ is defined in $S_{r}$, the distance between $v_{i}$ and $v_{i+1}$ will have an upper limit $(\frac{n}{n+1})^{r - r_{m}}*l_{m}$. If $v_{i+1}$ is a vertex adjacent to $v_{i}$ in $S^{r}$, the upper bound is reached. Otherwise, $v_{i+1}$ is defined in a later subdivision and the distance between $v_{i}$ and $v_{i+1}$ is less than the distance between $v_{i}$ and an adjacent vertex in $S_{r}$. Therefore, the total length of the path has a limitation $\lim\limits_{k \rightarrow \infty}\sum_{i = 0}^{k}|v_{i + 1} - v_{i}| \leq l_{m} * (\frac{n}{n+1})^{r_{a}} / (1 - (\frac{n}{n+1})^{r_{s}})$. Let $D$ be the subcomplex of $Ch^{r_{m}}(\mathcal{I})$, which consists of all simplices that do not have an intersection with $U^{'}$. Then there is a minimum distance from $U^{'}$ to $D$. If we choose $r_{a}$ and $r_{s}$ large enough, we can avoid any path consisting of vertices terminated by rule (2) to reach $D$. All vertices terminated by rule (2) are contained in the $n$-simplices in $Ch^{r_{m}}(\mathcal{I})$ that have an intersection with $U^{'}$.
\end{proof}

\end{document}